\documentclass[acmsmall, screen, nonacm]{acmart}

\geometry{twoside=true,
     includeheadfoot, head=13pt, foot=2pc,
     paperwidth=6.8in, paperheight=10in,
     top=58pt, bottom=44pt, inner=46pt, outer=46pt,
     marginparwidth=2pc,heightrounded
   }

\AtBeginDocument{%
  }

\setcopyright{acmcopyright}
\copyrightyear{2018}
\acmYear{2018}
\acmDOI{XXXXXXX.XXXXXXX}

\acmConference[Conference acronym 'XX]{Make sure to enter the correct
  conference title from your rights confirmation emai}{June 03--05,
  2018}{Woodstock, NY}
\acmPrice{15.00}
\acmISBN{978-1-4503-XXXX-X/18/06}



\citestyle{acmauthoryear}

\usepackage{savesym}    
\usepackage{marvosym}
\usepackage{enumerate}
\usepackage{prftree}
\savesymbol{comment}
\usepackage[final,noding]{commenting}
\usepackage{listings}
\usepackage{multicol}
\usepackage{multirow}
\usepackage{tabularray}
\usepackage{wrapfig}
\usepackage{caption}
\usepackage{subcaption}
\usepackage{etoolbox}
\usepackage{mdframed}
\usepackage[inline,shortlabels]{enumitem}
\declareauthor{sr}{Steven}{teal}
\authorcommand{sr}{comment}
\declareauthor{cw}{Charlie}{red}
\authorcommand{cw}{comment}

\usepackage{graphicx}

\makeatletter
\providecommand*{\lmodels}{%
  \mathrel{%
    \mathpalette\@lmodels\models
  }%
}
\newcommand*{\@lmodels}[2]{%
  \reflectbox{$\m@th#1#2$}%
}
\makeatother

\font\stixarr=stix-mathsfit at 10pt


\newcommand{\abbv}[1]{\textsf{\underline{\smash{#1}}}}

\newcommand{\succtm}{\texttt{succ}}
\newcommand{\predtm}{\mathsf{pred}}

\newcommand{\zerotm}{\texttt{zero}}
\newcommand{\projtm}{\mathsf{proj}}

\newcommand{\ifztm}{\mathsf{ifz}}
\newcommand{\iftm}[3]{\mathsf{ifz}\,#1\,\mathsf{then}\,#2\,\mathsf{else}\,#3}
\newcommand{\lettm}[3]{\mathsf{let}\,#1 = #2\,\mathsf{in}\,#3}
\newcommand{\fixtm}[2]{\mathsf{fix}\,#1.\,#2}
\newcommand{\pcfnum}[1]{\underline{#1}}
\newcommand{\pn}[1]{\pcfnum{#1}}
\newcommand{\ped}{\,\ \mathord{\rhd}\,\ }
\newcommand{\pedn}[1]{\,\ \mathord{\rhd}^{#1}\ \,}
\newcommand{\peds}{\,\ \mathord{\rhd}^{\!*}\ \,}
\newcommand{\pedm}{\,\ \mathord{\rhd}_{\scriptscriptstyle\calM}\ \,}

\newcommand{\evals}{\Downarrow}
\newcommand{\diverges}{\Uparrow}
\newcommand{\crash}{\text{\Lightning}}

\newcommand{\divtm}{\abbv{div}}

\newcommand{\Vals}{\mathsf{Vals}}
\newcommand{\FunVals}{\mathsf{FunV}}
\newcommand{\NatVals}{\mathsf{NatV}}
\newcommand{\PairVals}{\mathsf{ProdV}}

\newcommand{\matchtm}[2]{\mathsf{match}\,#1\,\mathsf{with}\,\{#2\}}

\newcommand{\calE}{\mathcal{E}}
\newcommand{\calF}{\mathcal{F}}
\newcommand{\calC}{\mathcal{C}}

\newcommand{\calT}{\mathcal{T}}
\newcommand{\calM}{\mathcal{M}}


\newcommand{\rlnm}[1]{\textsf{\footnotesize(#1)}}


\newcommand{\fv}{\mathsf{FV}}
\newcommand{\abs}[2]{\lambda #1.\,#2}


\newcommand{\types}{\vdash}
\newcommand{\intty}{\mathsf{Nat}}
\newcommand{\okty}{\mathsf{Ok}}

\newcommand{\from}{\mathrel{\text{\stixarr\char"B3}}}
\newcommand{\subtype}{\sqsubseteq}
\newcommand{\distype}{\mathrel{||}}
\newcommand{\subjects}{\mathsf{Subjects}}

\newcommand{\rulediv}[2]{\raisebox{.5ex}{\makebox[\linewidth]{\rule{\dimexpr50pt-#1\relax}{.5pt}\ \raisebox{-.5ex}{\makebox[2\dimexpr#1\relax]{\textsf{\small{#2}}}}\hspace{1ex}\hrulefill}}}
\prfinterspace=4ex


\newcommand{\semBrack}[1]{[\![#1]\!]}
\newcommand{\mng}[1]{\semBrack{#1}}

\newcommand{\ClVals}{\mathsf{Vals}_0}

\newcommand{\fixapprx}[3]{\mathsf{fix}^{#1}#2.\,#3}

\newcommand{\mngT}[1]{\mathcal{T}\mng{#1}}
\newcommand{\mngTb}[1]{\mathcal{T}_{\!\bot}\mng{#1}}
\newcommand{\mngTs}[1]{\mathcal{T}_{\!\bot\,\crash}\mng{#1}}


\newcommand{\constraints}{\mathsf{Con}}
\newcommand{\typings}{\mathsf{Typ}}

\lstset{
  frame=none,
  xleftmargin=2pt,
  belowskip=-.4\baselineskip,
  captionpos=b,
  mathescape=true,
  escapeinside={*'}{'*},
  language=haskell,
  tabsize=2,
  emphstyle={\bf},
  commentstyle=\it,
  stringstyle=\mdseries\rmfamily,
  showspaces=false,
  keywordstyle=\bfseries\sffamily,
  columns=fullflexible,
  basicstyle=\sffamily,
  showstringspaces=false,
  morecomment=[l]\%,
  deletekeywords={String, Int, List, Bool, nub, union, map},
  keepspaces=true,
  literate=
    {->}{$\to{}$}{1}
    {£}{{ }}{1}
}

\renewcommand{\bar}[1]{\vv{#1}}

\newcommand{\ch}[1]{#1}
\newcommand{\chA}[1]{#1}
\newcommand{\chB}[1]{#1}
\newcommand{\chC}[1]{#1}
\newcommand{\chD}[1]{#1}

\theoremstyle{remark}
\newtheorem{remark}{Remark}[section]

\newtoggle{supplementary}
\newtoggle{arxiv}
\toggletrue{arxiv}
\toggletrue{supplementary}

\begin{document} 

\title{Ill-Typed Programs Don't Evaluate}

\author{Steven Ramsay}
\email{steven.ramsay@bristol.ac.uk}
\affiliation{%
  \institution{University of Bristol}
  \country{UK}
}

\author{Charlie Walpole}
\email{op18921@bristol.ac.uk}
\affiliation{%
  \institution{University of Bristol}
  \country{UK}
}


\begin{abstract}
  We introduce two-sided type systems, which are sequent calculi for typing formulas.  Two-sided type systems allow for hypothetical reasoning over the typing of compound program expressions, and the refutation of typing formulas.  By incorporating a type of all values, these type systems support more refined notions of well-typing and ill-typing, guaranteeing both that well-typed programs don't go wrong and that ill-typed programs don't evaluate - that is, reach a value.  This makes two-sided type systems suitable for incorrectness reasoning in higher-order program verification, which we illustrate through an application to precise data-flow typing in a language with constructors and pattern matching.  Finally, we investigate the internalisation of the meta-level negation in the system as a complement operator on types.  This motivates an alternative semantics for the typing judgement, which guarantees that ill-typed programs don't evaluate, but in which well-typed programs may yet go wrong.
\end{abstract}



\keywords{type systems, higher-order program verification, incorrectness}


\maketitle

\section{Introduction}\label{sec:intro}

It is natural to think of a type system as a kind of proof system, whose purpose is reasoning about the behaviour of program expressions.  
In this view, the atomic formulas of the system are typings $M:A$, where $M$ is a term and $A$ a type and, in building a proof of a judgement $\Gamma \types M : A$, we aim to conclude an atomic formula $M:A$ under some assumptions $\Gamma$ on the types of its free variables.
Therefore, built into traditional type systems, but absent from most general proof systems, is a fundamental asymmetry: although we can conclude arbitrary typing formulas $M:A$, we may only make assumptions on variable typing formulas $x:B$.  
In this paper, we argue that type theory is enriched by removing this restriction, and there are interesting practical applications.
Our \emph{two-sided type systems} allow for making assumptions on the typing of arbitrary program expressions.  In fact, they are particular kinds of sequent calculi over formulas of shape $M:A$.  

The ability to reason hypothetically about the behaviour of compound terms gives us a new way to express how programs depend on their inputs.  
For example, we can prove that, if $x + 1$ evaluates to a natural number and $(\abs{fz}{f\,(f\,z)})\,(\abs{x}{x})\,y$ evaluates to a natural number, then $(x,y)$ is necessarily a pair of natural numbers:
\[
  x+1 : \intty,\,(\abs{fz}{f\,(f\,z)})\,(\abs{x}{x})\,y : \intty \types (x,y) : \intty \times \intty
\]

We also remove the restriction that one must conclude exactly one typing formula on the right of the turnstile, with multiple formulas understood disjunctively (as in most sequent calculi).  In particular, one is allowed to derive an empty conclusion $\Gamma \types $, meaning that the assumed typing formulas $\Gamma$ are inconsistent (the unit of disjunction is absurdity).  For example, one can prove:
\[
  (\abs{fz}{f\,(f\,z)})\,(\abs{x}{x})\,(y,\,y) : \intty \types
\]
stating that, no matter the value of $y$, the subject on the left \emph{doesn't} evaluate to a natural number.

Despite this, the term above \emph{does} evaluate, i.e. reduce to a value. By introducing a type $\okty$ to characterise all values, we can use hypothetical judgements like the above to \emph{prove} that terms \emph{do not evaluate} (at all), that is they either diverge or go wrong.  For example, we can prove:
\[
  (\abs{fz}{f\,(f\,z)})\,(\abs{x}{\predtm(x)})\,(y,\,y) : \okty \types
\]
stating that the program that applies the predecessor function twice in succession to a pair will fail to evaluate.  We say that such a term, to which we can assign the type $\okty$ on the left, is \emph{ill-typed}.  \chC{In this paper, ill-typed is not merely a synonym for \emph{untypable}: an ill-typed term can be typed (as one might expect from the suffix), but the type assignment is on the left of the turnstile.}

As well as the familiar rules for concluding typings on the right of the turnstile, our systems also have rules for refuting assumed typings on the left.  
The key to the whole enterprise is a new kind of function type, the necessity arrow $A \from B$, pronounced `$A$ \emph{only to} $B$'.  
The necessity arrow describes functions that produce a $B$ \emph{only if} they are given an $A$ as input.
It is, in a sense that we make precise, the contrapositive of the usual (sufficiency) arrow $A \to B$, which guarantees that a $B$ is produced \emph{if} an $A$ is given as input.
The rule, \rlnm{AbnR}, for deducing that a function $\abs{x}{M}$ is of type $A \from B$, has a premise that is symmetrical to the usual rule for $A \to B$:

  \[
    \prftree[l]{\rlnm{AppL}}{\Gamma \types M : A \from B,\,\Delta}{\Gamma,\,N:A \types \Delta}{\Gamma,\,(M\,N):B \types \Delta}
    \qquad 
    \prftree[l]{\rlnm{AbnR}}{\Gamma,\,M:B \types x:A,\,\Delta}{\Gamma \types (\abs{x}{M}):A \from B,\,\Delta}
  \]

On the other hand, the rule \rlnm{AppL} shows us how to decompose an application on the left, using necessity.  As is typical in sequent calculi, it is instructive to think of a left rule in terms of refuting its principal formula.  To \emph{refute} that an application $M\,N$ evaluates to an $B$, it suffices to \emph{affirm} that $M$ is a function that produces a $B$ only if given an $A$, and then to \emph{refute} that $N$ is an $A$.

The ability to refute that terms evaluate becomes very useful when we consider more sophisticated type systems, designed for verifying stronger, behavioural properties of terms.  In such systems, one faces the same difficulties as in general program verification, namely that proofs are difficult to construct automatically and, as a consequence, false positives -- that is, where a perfectly good program is not well-typed\footnote{A simple example that is rejected by many systems is: $\mathsf{if}\,\mathsf{true}\,\mathsf{then}\,1\,\mathsf{else}\,(1,1)$.} -- are common.  This is problematic because, in practice, a lot of the value of program verification tools derives from their use as a means to find bugs, i.e. true positives.  

One might argue that false positives are already a feature of the mainstream type systems, as e.g. used in compilers, and are thus evidently not too onerous (though proponents of untyped programming may forcefully disagree).  However, a significant difference is that, by and large, programmers understand enough of those systems to be able to predict false positives, or at least quickly diagnose and program around them.  By contrast, type systems for program verification are often significantly more complex and the success of automated tools is bound up with the efficacy of \emph{ad hoc} heuristics.  

With two-sided typing, we can obtain a `true positives' theorem: \emph{ill-typed programs don't evaluate} in addition to the usual `true negatives' theorem: \emph{well-typed programs don't go wrong}. 
Thus, like O'Hearn's Incorrectness Logic for imperative programs \cite{ohearn-popl2019}, our two-sided typing provides some foundation for the use of type systems to predict erroneous program behaviours.  Indeed, one of the original motivations for this work was to understand the basis for Erlang's highly effective Success Typing \cite{lindahl-sagonas-ppdp2006}.

We illustrate this in a two-sided, constrained type system designed for reasoning precisely about the shape of data structures.  Following \citet*{aiken-et-al-popl1994}, our system has a type constructor for every datatype constructor, so that, for example, $[]$ is both the term constructor for the empty list and the type of the empty list, and $A::B$ is the type of all non-empty (cons $::$) lists whose head element has type $A$ and whose tail is of type $B$.  Using the system we can prove, for example, that the head function requires a cons with head element of type $A$ in order to return an $A$, and that the map function requires non-emptiness of its list argument in order to provide a non-empty output:
\[
  \mathsf{head} : \forall a.\, (a::\okty) \from a 
  \qquad \mathsf{map} : \forall a\,b.\, (a \from b) \to (a::\okty) \from (b::\okty)
\] 
More precisely, the combination of \emph{to} and \emph{only to} arrows in $\mathsf{map}$'s type says that when given an $a$ only to $b$ function, it is guaranteed to behave like a function that necessarily requires a cons with element type $a$ in order to produce a cons with element type $b$.  Since our system can refute that the empty list is non-empty, taking the head of a list obtained by mapping the empty list is \emph{ill-typed}: 
\[
  \mathsf{head}\,(\mathsf{map}\, (\abs{x}{x})\, []) : \okty \types
\]
This program, which typically shows as a \emph{false negative} in mainstream type systems for functional programming (i.e. is reported as well-typed but actually goes wrong), is proven by our system to be a true positive: the term is guaranteed to either diverge or go wrong.

\begin{figure}
  \begin{mdframed}[topline=false,innertopmargin=-.84ex,innerleftmargin=-0.1ex,innerrightmargin=0ex,linewidth=.5pt,skipbelow=2ex]
  \chA{
    \rulediv{38pt}{Pure Calculus Rules}\\
  \[
    \begin{tblr}{Q[c,18em]Q[c,18em]}
      \SetCell[c=2]{c}\prftree[l]{\rlnm{Id}}{\Gamma,\,x:A \types x:A,\,\Delta}\\[5mm]
      \prftree[l]{\rlnm{AppL}}{\Gamma \types M : B \from A,\,\Delta}{\Gamma,\,N:B \types \Delta}{\Gamma,\,M\,N:A \types \Delta}
      &
      \prftree[l]{\rlnm{AppR}}{\Gamma \types M : B \to A,\,\Delta}{\Gamma \types N : B,\,\Delta}{\Gamma \types M\,N : A,\,\Delta}
      \\[5mm]
      \prftree[l]{\rlnm{FixsR}}{\Gamma,\,f:A \to B,\,x:A \types M : B,\,\Delta}{\Gamma \types \fixtm{f(x)}{M} : A \to B,\,\Delta}
      &
      \prftree[l]{\rlnm{FixnR}}{\Gamma,\,f:A \from B,\,M:B \types x : A,\,\Delta}{\Gamma \types \fixtm{f(x)}{M} : A \from B,\,\Delta}
    \end{tblr}
  \]\\[-2mm]
  }
  \end{mdframed}
  \caption{Typing rules for the pure calculus.}\label{fig:core-rules}
\end{figure}

\subsection{Contributions}
\chA{
We introduce the notion of \emph{two-sided type systems}, which are sequent calculi whose only formulas are typings $M:A$.  We are particularly interested in two-sided systems for call-by-value $\lambda$-calculi with fixpoints, for which we also introduce the necessity function type.

\paragraph{Two two-sided type systems.}
To show that the idea has both theoretical and practical interest, we formulate two distinct two-sided type systems.  Both systems have at their core the two-sided rules of Figure~\ref{fig:core-rules} for call-by-value (CBV), pure $\lambda$-calculus with fixpoints, but we add constants according to the topic of interest.  Note: throughout this paper, we will take fixpoint abstractions $\fixtm{f(x)}{M}$ as basic, and define $\abs{x}{M}$ as an abbreviation for the former in case $f$ does not occur free in $M$\footnote{We are grateful to one of the anonymous reviewers for this suggestion.}.
\begin{itemize}[left=2.5ex]
  \item In Sections~\ref{sec:prog-lang}-\ref{sec:complements} our goal is to illustrate the new kind of type-theoretic reasoning that two-sided systems enable, and to develop some of the metatheory.  To this end, we take the rules for typing pure terms with fixpoints and add constants amounting to a simple, PCF-like language.  
  \item In Section~\ref{sec:constrained-typing}, our goal is to show the practical value of two-sided systems, and  type inference.  To this end, we introduce a two-sided \emph{constrained} type system, which can be understood as a two-sided variant of certain systems in the literature that enable the automatic verification of pattern-match safety\footnote{That is, that no execution of a given term can trigger a pattern match non-exhaustiveness exception at runtime.}.  Starting from the typing rules for pure terms with fixpoints, we add top-level definitions, algebraic data types, and pattern matching.  To make type inference as simple as possible, we formulate this system as an \emph{intuitionistic} sequent calculus, in the sense that we restrict the consequent $\Delta$ in the rules to have size at most $1$.
\end{itemize}
}

\paragraph{The meaning of the necessity function type, and semantic soundness}
We introduce a CBV semantics for the two-sided type system for PCF.  \chD{This semantics highlights an important difference between the meaning of the necessity arrow and all the other type operators in the system.  In the absence of the necessity arrow, all of the types of the system can be thought of as defining safety properties (in a sense we make precise).  However, there are choices of $A$ and $B$ for which $A \from B$ is not a safety property (Theorem~\ref{thm:safety-cexs}).  By restricting the use of function types on the right of the necessity arrow, we obtain a fragment in which all types again describe safety properties (Theorem~\ref{thm:types-are-safety}).  For this fragment we can prove \emph{semantic soundness}: every provable judgement is true (Theorem~\ref{thm:strong-soundness}).  As a corollary, for this fragment we obtain both that \emph{well-typed programs don't go wrong} and also that \emph{ill-typed programs don't evaluate}.}

\paragraph{Complements, negation, and the success semantics}
A natural question is the extent to which a two-sided judgement can be simulated using a one-sided judgement with a type-level negation.  To answer this, we introduce a complement operator on types, $\_^c$, and show that it is inadequate, in the sense that $M:A \types$ and $\types M:A^c$ are not equivalent.  This motivates an alternative, \emph{success semantics} for two-sided judgements, in which showing that a term is typable does not preclude that the term goes wrong.  Under the success semantics, from our two-sided system with complements, we can derive a one-sided (traditional) type system, which actually subsumes the original (Theorem~\ref{thm:two-sided-subsumed}).  A preservation argument gives \emph{syntactic soundness} for the one-sided system, which implies that ill-typed programs don't evaluate, though well-typed programs may yet go wrong (Theorem \ref{thm:failures-soundness}).

\paragraph{Application to pattern-match safety, type inference, and syntactic soundness.}
The second of our two-sided systems allows for precise and automatic reasoning about the shape of data structures through expressive type inclusion constraints.  For this two-sided system, we describe a fully automatic type inference algorithm which, given a term and two environments as input, computes a type assignment that is principal in a suitable sense (Theorem~\ref{thm:inference-correctness}).  This algorithm exploits the fact that our two-sided system has similar characteristics to traditional one-sided systems that admit computable type inference.  By adapting the technique of \citet{wright-felleisen-infcomp1994} to two-sided systems, we prove \emph{syntactic soundness}, from which we obtain that well-typed programs don't go wrong and ill-typed programs don't evaluate (Theorem~\ref{thm:weak-soundness}).

\subsection{Outline}

The paper continues by introducing the basic ideas of two-sided typing through a simple, PCF-like programming language in Sections~\ref{sec:prog-lang} and \ref{sec:okay}, before considering the semantics of such systems in Section~\ref{sec:semantics}.  The relationship between the two-sided judgement and complement types is dealt with in Section \ref{sec:complements}, which introduces the success semantics and derives a one-sided system with a true positives theorem.  Then, in Section~\ref{sec:constrained-typing}, we then turn to an application of two-sided typing, namely automatic and precise reasoning in a functional programming language with constructors and pattern matching.  Finally, Section~\ref{sec:conclusion} concludes with a review of related work.

\section{A Language and its Two-sided Typing}\label{sec:prog-lang}

\iftoggle{arxiv}{We start by introducing a PCF-like, call-by-value programming language.}{We start by introducing a PCF-like, call-by-value programming language which we shall use to illustrate some basic considerations of two-sided typing.}

\begin{definition}[Terms]
  Assume a denumerable set of term variables $x$, $y$, $z$.  
  We consider a simple functional programming language, whose \emph{values}, typically $V$, $W$, and whose \emph{terms}, typically $M$, $N$, $P$, $Q$, are defined by the following grammar:
    \[
        \begin{array}{rcl}
            V,\,W &\Coloneqq& x \mid \succtm^n(\zerotm) \mid (V,\,W) \mid \ch{\fixtm{f(x)}{M}}\\[1mm]
          M,N,P,Q &\Coloneqq& \zerotm \mid \succtm(M) \mid \predtm(M) \mid  \iftm{M}{N}{P} \mid\\[.75mm]
                   &        & \qquad x \mid M\,N \mid \ch{\fixtm{f(x)}{M}} \mid (M,N) \mid \lettm{(x,y)}{M}{N}
        \end{array}
    \]
    We consider terms identified up to renaming of bound variables and adopt the usual conventions regarding their treatment.  A term with no free variables is said to be \emph{closed}.
    A \emph{term substitution}, typically $\theta$, is finite map from term variables to terms.  We say that a substitution is \emph{closed} just if the terms in its range are closed.  We will often write concrete substitutions explicitly as a list of maplets as in $[M_1/x_1,\ldots,M_n/x_n]$, or $[\vv{M_i/x_i}]$ when indices are clear.

\end{definition}

\paragraph{Abbreviations}
It will be helpful to define some abbreviations.  For each $n\in\mathbb{N}$, we write $\pn{n}$ for the \emph{numeral} $\succtm^n\,(\zerotm)$.  \ch{We write $\abs{x}{M}$ for $\fixtm{f(x)}{M}$ when $f$ does not occur free in $M$}.  The let construct is used to obtain the two components of a pair by matching but, in examples, we will typically always use it on the argument of an abstraction.  \ch{Hence, it is helpful to define $\fixtm{f(x,y)}{M}$ as an abbreviation for $\fixtm{f(z)}{\lettm{(x,y)}{z}{M}}$.  We write $\divtm$ as an abbreviation for $(\fixtm{f(x)}{f\,x})\,\pn{0}$, and $\abbv{id}$ for $\abs{x}{x}$.}

The language is essentially an applied CBV $\lambda$-calculus with fixpoints.  Terms $M$ can be tested for equality with $\zerotm$ using the if-zero expression $\iftm{M}{N}{P}$, and can be deconstructed as a pair using the let expression $\lettm{(x,y)}{M}{N}$.  Closed values are numerals, pairs, and (fixpoint) abstractions.



\begin{definition}[Reduction]
    The \emph{evaluation contexts} are defined by the following grammar:
    \[
      \arraycolsep=1.4pt
  \begin{array}{rcl}
    \mathcal{E},\,\mathcal{F} &\Coloneqq& \Box \mid \ch{(\fixtm{f(x)}{M})\,\mathcal{E}} \mid \mathcal{E}\,N \mid \succtm(\mathcal{E}) \mid \predtm(\calE) \mid \\[1mm] 
    && \qquad  (\mathcal{E},M) \mid (V,\mathcal{E}) \mid \lettm{(x,y)}{\calE}{N} \mid \iftm{\mathcal{E}}{N}{P} 
  \end{array} 
  \]
  Given a context $\calE$, we write $\calE[M]$ for the term obtained by replacing the hole $\Box$ by $M$.  The \emph{one-step reduction relation}, written $M \ped N$, is the binary relation on (possibly open) terms obtained as the closure of the following schema under evaluation contexts.\\
  \[
    \arraycolsep=1.4pt
  \begin{array}{ccrcl}
    \rlnm{IfZ} & \phantom{woo} & \iftm{\pn{0}}{N}{P} &\ped& N \\[2mm]
    \rlnm{IfS} && \iftm{\succtm(\pn{n})}{N}{P} &\ped& P\\[2mm]   
    \rlnm{Let} && \lettm{(x,y)}{(V,W)}{M} &\ped& M[V/x,W/y] \\[2mm]
    \rlnm{PredZ} && \predtm(\pn{0}) &\ped& \pn{0}\\[2mm] 
    \rlnm{PredS} && \predtm(\pn{n+1}) &\ped& \pn{n}\\[2mm]
    \rlnm{Fix$\beta$} && \ch{(\fixtm{f(x)}{M})\,V} &\ped& \ch{M[V/x,\fixtm{f(x)}{M}/f]}
  \end{array}
\]\\
  The terms on the left-hand side of the above schema are said to be \emph{redexes}.
  We write $M \peds N$ for the reflexive transitive closure of the one-step relation. 
  A term $M$ that cannot make a step is said to be in \emph{normal form}.  A term with no normal form is said to \emph{diverge}.
  We write $M \evals V$ just if closed term $M \peds V$ and we say that $M$ \emph{evaluates}.
  A term is said to be \emph{stuck} just if it is a normal form that is not a value.  A term is said to \emph{get stuck} (or \emph{go wrong}) just if it reduces to a stuck term.
\end{definition}

\begin{figure}\vspace{1ex}
\begin{mdframed}[topline=false,innertopmargin=-.84ex,innerleftmargin=-.1ex,innerrightmargin=0ex,linewidth=.5pt]
\noindent
\rulediv{33pt}{Structural Rules}\\[3mm]
\[
      \prftree[l]{\rlnm{Id}}{\Gamma,\,x:A \types x:A,\,\Delta}
\]\\[2mm]
\rulediv{35pt}{Right Typing Rules}\\[3mm]
\[
  \begin{tblr}{Q[c,16em]Q[c,18em]}
    \prftree[l]{\rlnm{ZeroR}}{\Gamma \types \zerotm : \intty,\,\Delta}
    &
    \prftree[l,r]{$c \in \left\{\!\!\begin{array}{c}\succtm\\\predtm\end{array}\!\!\right\}$}{\rlnm{cR}}{\Gamma \types M : \intty,\,\Delta}{\Gamma \types c(M) : \intty,\,\Delta}
    \\[5mm]
    \ch{\prftree[l]{\rlnm{FixsR}}{\Gamma,\,f:A \to B,\,x:A \types M : B,\,\Delta}{\Gamma \types \fixtm{f(x)}{M} : A \to B,\,\Delta}}
    &
    \ch{\prftree[l]{\rlnm{FixnR}}{\Gamma,\,f:A \from B,\,M:B \types x : A,\,\Delta}{\Gamma \types \fixtm{f(x)}{M} : A \from B,\,\Delta}}
    \\[5mm] 
    \prftree[l]{\rlnm{AppR}}{\Gamma \types M : B \to A,\,\Delta}{\Gamma \types N : B,\,\Delta}{\Gamma \types M\,N : A,\,\Delta}
    &
    \prftree[l]{\rlnm{PairR}}{\Gamma \types M : A,\,\Delta}{\Gamma \types N : B,\,\Delta}{\Gamma \types (M,\,N) : A \times B,\,\Delta}
    \\
    \prftree[l]{\rlnm{LetR}}{\begin{array}{c}\\\Gamma \types M : B \times C,\,\Delta\\\Gamma,\,x:B,\, y: C \types N : A,\,\Delta\end{array}}{\Gamma \types \lettm{(x,y)}{M}{N} : A,\,\Delta}
    &
    \prftree[l]{\rlnm{IfzR}}{\begin{array}{c}\\\Gamma \types M:\intty,\,\Delta\end{array}}{\!\!\!\!\begin{array}{c}\Gamma \types P:A,\,\Delta\\\Gamma \types N:A,\,\Delta\end{array}}{\Gamma \types \iftm{M}{N}{P} : A,\,\Delta}
  \end{tblr}
\]\\[7mm]
\rulediv{35pt}{Left Typing Rules}\\[5mm]
\[
  \begin{tblr}{Q[c,16em]Q[c,18em]}
    \prftree[l,r]{$A \distype B$}{\rlnm{Dis}}{\Gamma \types M:B,\,\Delta}{\Gamma,\,M:A \types \Delta}
    &
    \prftree[l,r]{$i\in\{1,2\}$}{\rlnm{PairL}}{\Gamma,\,M_i:A_i \types \Delta}{\Gamma,\,(M_1,\,M_2) : A_1 \times A_2 \types \Delta}
    \\[5mm]
    \prftree[l]{\rlnm{AppL}}{\Gamma \types M : B \from A,\,\Delta}{\Gamma,\,N:B \types \Delta}{\Gamma,\,M\,N:A \types \Delta}
    &
      \prftree[l,r]{$c \in \left\{\!\!\begin{array}{c}\succtm\\\predtm\end{array}\!\!\right\}$}{\rlnm{cL}}{\Gamma,\,M:\intty \types \Delta}{\Gamma,\,c(M) : \intty \types \Delta}
    \\[5mm]
      \prftree[l]{\rlnm{IfzL1}}{\Gamma,\,M:\intty \types \Delta}{\Gamma,\,\iftm{M}{N}{P} : A \types \Delta}
    &
      \prftree[l]{\rlnm{IfzL2}}{\Gamma,\,N:A \types \Delta}{\Gamma,\,P:A \types \Delta}{\Gamma,\,\iftm{M}{N}{P} : A \types \Delta} 
    \\[5mm]
      \prftree[l]{\rlnm{LetL1}}{\Gamma,\,N:A \types \Delta}{\Gamma,\,\lettm{(x,y)}{M}{N} : A \types \Delta}
    &
    \prftree[l]{\rlnm{LetL2}}{\begin{array}{c}\Gamma,\,M:B_1 \times B_2 \types \Delta\\\Gamma,\,N:A\types x_i:B_i,\,\Delta\;(\forall i)\end{array}}{\Gamma,\,\lettm{(x_1,x_2)}{M}{N} : A \types \Delta}
  \end{tblr}
\]\\[5mm]
\end{mdframed}
\caption{Two-sided type assignment.}\label{fig:pcf-typing}
\end{figure}

\subsection{Two-Sided Type System}
We introduce a simple two-sided type system for this language.  \chA{Starting from the rules for pure terms of Figure~\ref{fig:core-rules}, we add rules for assigning types to terms headed by the constants of our language, and a rule allowing for reasoning by disjointness (which is discussed further in Section~\ref{subsec:discussion-one}).}  The types of the system are either the base type $\intty$, or constructed using product or the two arrows.

\chD{
\begin{definition}[Types]\label{def:simple-types}
    The \emph{types}, typically $A$, $B$ and so on, are defined by the following grammar: 
    \[
        \begin{array}{rcl}
             A,B &\Coloneqq& \intty \mid A \times B \mid A \to B \mid A \from B 
        \end{array}
    \]
\end{definition}
}

We will label the usual arrow type, $A \to B$, the \emph{sufficiency arrow} to distinguish it from the \emph{necessity arrow}, $A \from B$.  We pronounce the first of these types as `$A$ to $B$' and the second as `$A$ only to $B$'\footnote{Strictly speaking, for the English to be correct, one should pronounce it as `$A$ only, to $B$', but we do not recommend it.}.   We assume all arrows associate to the right and products bind tighter than the other operators.  

The idea of $A \from B$, which will be made precise in Section~\ref{sec:semantics}, is to classify those functions which return a $B$ \emph{only if} an $A$ was supplied as input (hence the pronunciation).  An example is the function $\abs{x}{(x,\pn{0})}$ which can be assigned the type $\intty \from \intty \times \intty$. If this function returns a pair of numerals, its input must have been a numeral.  On the other hand, $\abs{x}{(\pn{0},\pn{1})}$ cannot be assigned this type, because if it returns a pair of numerals, it is \emph{not necessary} that a numeral was input.


We shall not consider subtyping for this system, but it is useful to define a notion of type \emph{disjointness}.  \chB{We will discuss the motivation for this further in Section~\ref{subsec:discussion-one} but, intuitively, it is a weak axiomatisation of the assertion that two types do not have any values in common.}

\begin{definition}[Disjointness]\label{def:disjointness}
  We say that two types $A$ and $B$ are \emph{disjoint}, and write $A \distype B$, just if either (i) one is $\intty$ and the other is not, or (ii) one is an arrow and the other is not, or (iii) one is a product and the other is not.
\end{definition}

Now, we come to define the two-sided type system.  The idea is that the system is a simple kind of sequent calculus whose only formulas are typings $M:A$.  

\begin{definition}[Type Assignment]
  A \emph{typing formula}, or just \emph{typing}, is a pair $M:A$ of a term $M$ and type $A$.  The term $M$ is said to be the \emph{subject} of the formula.  
  A \emph{typing judgement} is a pair of finite sets of typings, written $\Gamma \types \Delta$.
  A judgement is said to be \emph{provable} (or \emph{derivable}) according to the rules in Figure~\ref{fig:pcf-typing}.  We make some additional requirements on the occurrences of free variables that are omitted from the rules for typesetting reasons: in every use of \rlnm{AbsR}, \rlnm{AbnR}, \rlnm{LetR}, \rlnm{LetL1}, \rlnm{LetL2} and \rlnm{FixR}, we require that the bound variables displayed do not occur freely in $\Gamma$ or $\Delta$.
\end{definition}

In a traditional, one-sided type system, a judgement like $x:A,\,y:B \types P:C$ is conventionally understood as `if x has type $A$ and $y$ has type $B$, then $P$ has type $C$'.  Here, we generalise this in two ways.  First, we allow arbitrary terms on the left and not only variables.  However, the judgement can be read in the same way.  So $M:A,\,N:B \types P:C$ should be understood as:
\begin{quote}
   `if $M$ has type $A$ and $N$ has type $B$, then $P$ has type $C$'.
\end{quote}  
Second, we allow for multiple typings (including none) on the right of the turnstile.  These are understood disjunctively, so  $M:A \types P:C,\,Q:D$ should be read as:
\begin{quote} 
  `if $M$ has type $A$, then \emph{either} $P$ has type $C$ \emph{or} $Q$ has type $D$'.  
\end{quote}
In particular, $M:A,\,N:B \types $, which has an empty conclusion, should be understood as `if $M$ has type $A$ and $N$ has type $B$ then false', i.e. either $M$ cannot have type $A$ or $N$ cannot have type $B$.

However, we must take care to say something about the meaning of `$M$ has type $A$' within a judgement.  Since our language is call-by-value, there is an asymmetry between the left- and right-hand side.  On the left-hand side of the turnstile `$M$ has type $A$' means \emph{$M$ evaluates to an $A$}, but on the right-hand side, it means \emph{either $M$ diverges or $M$ evaluates to an $A$}.  This is made precise in Section~\ref{sec:semantics}, but it is useful to have some intuitions already.

For example, suppose we have some term $\abbv{add}$ and we know that $\abbv{add} : \intty \times \intty \to \intty$.  Then, we should expect that the following judgement is provable:
\[
  \succtm(\succtm(z)) : \intty,\; \iftm{y}{\pn{0}}{\pn{1}} : \intty \types \abbv{add}\,(y,z) : \intty
\]
Intuitively, if $\succtm(\succtm(z))$ is a numeral, it must be that $z$ was a numeral, and if the term $\iftm{y}{\pn{0}}{\pn{1}}$ evaluates to a numeral, it must be that $y$ is a numeral too (recall that the conditional of the language requires the guard to be a numeral).  Therefore, $\abbv{add}\,(y,z)$ should  evaluate to a numeral.  This is the kind of reasoning that we are trying to capture with the rules of Figure~\ref{fig:pcf-typing}.

The rules have been divided into three kinds, \emph{structural rules}, \emph{right rules} and \emph{left rules}.  Typically, the rules are used bottom-to-top in order to construct a derivation.  The right rules should be familiar from simple type systems, and are for concluding typings on the right of the turnstile; they only differ in that they allow for multiple conclusions.  One can think of the purpose of left rules as for refuting typings on the left of the turnstile.  However, in practice, they are more often used as a means to extract information from assumptions.  

\paragraph{Structural Rules} The rule \rlnm{Id} should be familiar.  The only difference is that now we allow for multiple formulas in the conclusion, but since they are understood disjunctively, all is well.

\paragraph{Left Rules} The \rlnm{Dis} rule expresses a key mechanism for refuting that a term $M$ evaluates to an $A$, namely by affirming that $M$ either diverges or evaluates to a value of some disjoint type $B$.  \chB{The motivation for this rule will be further discussed in Section~\ref{subsec:discussion-one}.}

The rule \rlnm{SuccL} allows for refuting that a term of shape $\succtm(M)$ has type $\intty$.  The only way that $\succtm(M)$ may \emph{not} evaluate to a numeral is if $M$ does not evaluate to a numeral, so refuting $\succtm(M) : \intty$ reduces to refuting $M : \intty$.  The rule \rlnm{PredL} is similar.  

The rule \rlnm{AppL} says that, to refute that $M\,N$ evaluates to an $A$, we can do the following.  First show that $M$ either diverges or requires an input of type $B$ in order to produce an $A$, then refute that $N$ evaluates to a $B$.  Notice the combination of affirmation and refutation required in this rule.  

The rule \rlnm{PairL} says that to refute that $(M_1,M_2)$ evaluates to a value of type $A_1 \times A_2$ we need only refute that \emph{one} component of the pair evaluates accordingly.  

The rules \rlnm{LetL1} and \rlnm{LetL2} allow for reasoning about let expressions on the left.  To refute that $\lettm{(x,y)}{M}{N}$ evaluates to an $A$, one can refute that the body evaluates to an $A$ (independently of the values of $x$ and $y$); this is \rlnm{LetL1}.  Alternatively, one can refute that $M$ evaluates to a pair $(V,W)$ of the appropriate kind to allow $N[V/x,W/y]$ to evaluate to an $A$.  We use the `if ... then ...' reading of the judgement, in order to explain how the rule implements this strategy.  The first two premises of \rlnm{LetL2} say that, if $N$ evaluates to an $A$ then it must be that $x$ was a $B$ and that $y$ was a $C$.  In other words, in showing these two premises, we are showing that for $N$ to evaluate to an $A$ as in the conclusion, $(x,y)$ is necessarily a value of $B \times C$.  Then in the third premise we refute that $M$ evaluates to a pair of this form. 

Finally, the rules \rlnm{IfZL1} and \rlnm{IfZL2} are used for reasoning about the if-zero expression on the left.  To refute that $\iftm{M}{N}{P}$ evaluates to an $A$, we can refute that $M$ evaluates to a numeral, as in \rlnm{IfZL1}.  Alternatively, we can show that neither branch evaluates to an $A$, which is \rlnm{IfZL2}.

\paragraph{Right Rules.} The rule \rlnm{FixnR} allows for affirming the necessity function type of a fixpoint abstraction.  Otherwise, the right rules are mostly standard, excepting that they allow for the more general kind of judgement.  

\chB{
In sequent calculi defined over sequences or multisets of formulas, it is sometimes necessary to add structural rules such as exchange, weakening and contraction.  However, since our sequents consist of two \emph{sets} of formulas, only weakening is relevant. 
We do not have explicit weakening rules, but since all the rules allow for arbitrary contexts $\Gamma$ and $\Delta$ on the left and right respectively, weakening is admissible.

\begin{lemma}[Weakening]\label{lem:structural-admissibility}
  If $\Gamma \types \Delta$, then $\Gamma,\,M:A \types \Delta$ and $\Gamma \types M:A,\,\Delta$.
\end{lemma}
}



\paragraph{Abstractions} \ch{Recall that we treat $\abs{x}{M}$ as an abbreviation for $\fixtm{f(x)}{M}$, when $f$ does not occur free in $M$.  In light of Lemma~\ref{lem:structural-admissibility}, it is easy to see that the following typing rules are admissible: 
\[
  \prftree[l]{\rlnm{AbsR}}{\Gamma,\,x:B \types M:A,\,\Delta}{\Gamma \types \abs{x}{M}:B \to A,\,\Delta}
    \qquad
  \prftree[l]{\rlnm{AbnR}}{\Gamma,\,M:A \types x:B,\,\Delta}{\Gamma \types \abs{x}{M}:B \from A,\,\Delta}
\]
Each requires that $x$ does not occur free in $\Gamma$ or $\Delta$.
Recall that we use $\fixtm{f(x,y)}{M}$ as an abbreviation for $\fixtm{f(z)}{\lettm{(x,y)}{z}{M}}$.  In the following it will be efficient to use the derived typing rule:
\[
  \prftree[l]{\rlnm{PFixnR}}{\Gamma,\,f:\,B \times C \from A,\,M:A \types x : B,\,\Delta}{\Gamma,\,f:B \times C \from A,\,M:A \types y:C,\,\Delta}{\Gamma \types \fixtm{f(x,y)}{M} : B \times C \from A,\,\Delta}
\]
}

\begin{example}\label{ex:add-typing}
We can define addition $\abbv{add} = \fixtm{f(y,z)}{\iftm{y}{z}{\succtm\,(f\,(\predtm(y),z))}}$.
To return a numeral, both its arguments must be numerals.  First observe that, in the `else' branch of the conditional, if we know that the recursive call requires a pair of numerals in order to produce a numeral, then $z$ must be a numeral, we use $\Gamma$ to abbreviate $\{\, f : \intty \times \intty \from \intty \,\}$ in:
\[
  \prftree[l]{\rlnm{SuccL}}{
            \prftree[l]{\rlnm{AppL}}{
              \prftree[l]{\rlnm{Id}}{\Gamma \types f:\intty \times \intty \from \intty,\,z:\intty}
            }
            {
              \prftree[l]{\rlnm{PairL}}{
                \prftree[l]{\rlnm{Id}}{\Gamma,\,z:\intty \types z : \intty}
              }
              {
                \Gamma,\,(\predtm(y),z) : \intty \times \intty \types z : \intty
              }
            }
            {
               \Gamma,\,f\,(\predtm(y),z) : \intty \types z : \intty
            }
          }
          {
            \Gamma,\,\succtm(f\,(\predtm(y),z)) : \intty \types z : \intty
          }
\]
The same can be said in the `then' branch of the conditional, since the `then' branch actually returns $z$.  So we can use \rlnm{IfZL2} to conclude that $z$ is certainly a numeral whenever the conditional expression evaluates to a numeral:
\[
  \prftree[l]{\rlnm{IfZL2}}{
    \prftree[l]{\rlnm{Id}}{\Gamma,\,z:\intty \types z:\intty}
  }
  {
    \prftree{\cdots}{\Gamma,\,\succtm(f\,(\predtm(y),z)) : \intty \types z : \intty}
  }
  {
    \Gamma,\,\iftm{y}{z}{\succtm(f\,(\predtm(y),z))} : \intty \types z : \intty
  }
\]
Abbreviating the `else' branch for brevity, we can obtain $y:\intty$ from the fact that the conditional branches on $y$, using \rlnm{IfZ1}, and conclude using our derived rule \rlnm{PFixnR} for recursive functions requiring pairs as their input:
\[
    \prftree[l]{\rlnm{PFixnR}}{
    \prftree[l]{\rlnm{IfZL1}}{
      \prftree[l]{\rlnm{Id}}{
        \Gamma,\,y : \intty  \types y : \intty
      }
    }
    {
      \Gamma,\,\iftm{y}{z}{\ldots} : \intty  \types y : \intty
    }
  }
  {
    \prftree{\cdots}{\Gamma,\,\iftm{y}{z}{\ldots} : \intty  \types z : \intty}
  }
  {
    \types \abbv{add} : \intty \times \intty \from \intty
  }
\]
\end{example}

\begin{example}\label{ex:twice-typing}
We can define the higher-order combinator $\abbv{twice} = \abs{fx}{f\,(f\,x)}$.  It has the following property.  For any type $A$, it guarantees to map all functions of type $A \from A$ to an output function of the same type.  We abbreviate $\Gamma = \{f:A \from A\}$ in the proof:
\[
  \prftree[l]{\rlnm{AbsR}}{
    \prftree[l]{\rlnm{AbnR}}{
      \prftree[l]{\rlnm{Id}}{
        \prftree[l]{\rlnm{Id}}{\Gamma \types f : A \from A,\,x:A}
      }
      {
        \prftree[l]{\rlnm{AppL}}
        {
          \prftree[l]{\rlnm{Id}}{\Gamma \types f : A \from A,\,x:A}
        }
        {
          \prftree[l]{\rlnm{Id}}{\Gamma,\,x:A \types x : A}
        }
        {
          \Gamma,\,f\,x:A \types x : A
        }
      }
      {
        \Gamma,\,f\,(f\,x):A \types x : A
      }
    }
    {
      \Gamma \types \abs{x}{f\,(f\,x)} : A \from A
    }
  }
  {
    \types \abs{fx}{f\,(f\,x)} : (A \from A) \to A \from A
  }
\]
\end{example}

\begin{example}\label{ex:twice-pred-nat}
We can prove that applying predecessor twice to a pair does \emph{not} evaluate to a natural number.  From the foregoing example, we know that $\abbv{twice}\,(\abs{x}{\predtm(x)})$ is guaranteed to behave like a function that, to produce a numeral, requires a numeral as input:  
\[
  \prftree[l]{\rlnm{AppR}}{
    \prftree{
      \cdots
    }
    {
      \types \abbv{twice} : (\intty \from \intty) \to \intty \from \intty
    }
  }
  {
    \prftree[l]{\rlnm{AbnR}}{
      \prftree[l]{\rlnm{PredL}}{
        \prftree[l]{\rlnm{Id}}{x:\intty \types x:\intty}
      }
      {
        \predtm(x) : \intty \types x : \intty
      }
    }
    {
      \types (\abs{x}{\predtm(x)}) : \intty \from \intty
    }
  }
  {
    \types \abbv{twice}\;(\abs{x}{\predtm(x)}) : \intty \from \intty
  }
\]
Note, since our predecessor is a constant of fixed arity, we need to wrap it in an abstraction to provide it as an input. 
However, a pair is not a numeral, so we can use disjointness:  
\[
  \prftree[l]{\rlnm{AppL}}{
    \prftree{\cdots}{\types \abbv{twice}\;(\abs{x}{\predtm(x)}) : \intty \from \intty}
  }
  {
    \prftree[l]{\rlnm{Dis}}{
      \prftree{\cdots}{\types (\pn{0},\,\pn{1}) : \intty \times \intty}
    }
    {
      (\pn{0},\,\pn{1}) : \intty \types 
    }
  }
  {
    \abbv{twice}\;(\abs{x}{\predtm(x)})\,(\pn{0},\,\pn{1}) : \intty \types
  }
\]
\end{example}

In fact, not only does $\abbv{twice}\;(\abs{x}{\predtm(x)})\,(\pn{0},\,\pn{1})$ not evaluate a numeral, it does not evaluate at all -- it goes wrong.  In the next section we enrich our system with the means to express this

\chB{
\subsection{Further discussion of the rules}\label{subsec:discussion-one}
}

\begin{figure}
  \chB{
  \[
    \begin{tblr}{Q[c,18em]Q[c,18em]}
    \prftree[l,r]{$A \neq \intty$}{\rlnm{ZeroL}}{\Gamma,\,\zerotm : A \types \Delta}
    &
    \prftree[l,r]{$\begin{array}{l}A \neq \intty\\c \in \{\succtm,\predtm\}\end{array}$}{\rlnm{cL2}}{\Gamma,\,c(M) : A \types \Delta}
    \\[5mm]
    \prftree[l,r]{$A$ not a product}{\rlnm{PairL2}}{\Gamma,\,(M,N) : A \types \Delta}
    &
    \prftree[l,r]{$A$ not an arrow}{\rlnm{FixL}}{\Gamma,\,\fixtm{f(x)}{M} : A \types \Delta}
    \end{tblr}
  \]
  }
  \caption{Alternative disjointness rules.}\label{fig:disjointness-rules}
\end{figure}

\chB{
Notice that the \rlnm{ZeroR} rule and the two fixpoint rules \rlnm{FixsR} and \rlnm{FixnR} have no corresponding rule on the left-hand side, the left rules \rlnm{predL} and \rlnm{succL} only concern the case when the type is $\intty$, and the left rule \rlnm{PairL} only concerns the case when the type is a product\footnote{This can be considered unsatisfactory because, for example, unlike when we are affirming $\succtm(M) : A$ in the consequent, to refute $\succtm(M) : A$ (i.e. on the left), it need not be the case that $A=\intty$.  Similar remarks apply for $\predtm$ and pair values.}.  

To understand the situation better, let us consider each in turn.  To refute $\zerotm : A$, it must be that $A \neq \intty$, so we might expect a rule like \rlnm{ZeroL} in Figure~\ref{fig:disjointness-rules}.  Similarly, to refute a formula $c(M) : A$ (for $c \in \{\succtm,\,\predtm\}$), either (i) $A = \intty$ and then it must be that we can refute that $M$ is a number, as in the existing rule \rlnm{cL}, or (ii) it must be that $A \neq \intty$.  Hence, we appear to be missing a rule such as \rlnm{cL2} in Figure~\ref{fig:disjointness-rules} above.  A similar analysis leads us to conclude that we are missing a rule for refuting formulas $(M,N) : A$ when $A$ is not a product type, such as \rlnm{PairL2} of Figure~\ref{fig:disjointness-rules}.    

All the rules from Figure~\ref{fig:disjointness-rules} that we have noted as missing  express a form of reasoning based on disjointness of types: one can refute $\zerotm : A$, $\succtm(M) : A$ and $\predtm(M) : A$ whenever $A$ is a type disjoint from $\intty$, one can refute $(M,N) : A$ whenever $A$ is a type disjoint from any product type.  It just happens that, in this system, $A$ being disjoint from $\intty$ simply means that $A \neq \intty$.  

However, reasoning based on disjointness of types makes sense for all terms $M$, and not only those that are introduction forms.  In principle, it should be possible to refute any typing $M : A$ on the basis of affirming a typing $M : B$ for some type $B$ disjoint from $A$.  Indeed, this form of reasoning is closely related to that embodied in the success typing paradigm \cite{lindahl-sagonas-ppdp2006}.  In success typing, if a function $M$ can be assigned the \emph{success type} $A \to B$, then it means that the domain of $f$ is wholly contained within $A$.  Therefore, we can be sure that $f$ fails whenever it is applied to a value whose type is disjoint from $A$.

Thus, we formulated our system with a single rule \rlnm{Dis}, expressing a general notion of reasoning based on disjointness, instead of a number of special cases for each introduction form, as in Figure~\ref{fig:disjointness-rules}.  However, in expressive power, the two are actually incomparable.  On the one hand, the rules of Figure~\ref{fig:disjointness-rules} are applicable even when the subject $M$ of the principal formula would not be well-typed.  For example, $\succtm(\abbv{id}) : \intty \times \intty$ can be refuted using the disjointness rules of Figure~\ref{fig:disjointness-rules}, but not using \rlnm{Dis}.  On the other hand, \rlnm{Dis} allows for reasoning based on disjointness even when the subject of the principal formula is not an introduction form.  For example, $(\abs{xy}{x})\,2\,3 : \intty \times \intty$ can be refuted using \rlnm{Dis}, but not if we were to take the rules of Figure~\ref{fig:disjointness-rules} instead of \rlnm{Dis}.  Of course, if one is interested in the most powerful system possible, then one can throw all the rules in together.

Now, let us consider refuting a typing of shape $\fixtm{f(x)}{M} : A$.  By a similar analysis, we should expect that 
either (i) $A$ is not an arrow at all; or (ii) $A$ is a sufficiency arrow, say $B \to C$, but then there is a $B$-value $V$ such that $M[V/x]$ is not a $C$; or (iii) $A$ is a necessity arrow, say $B \from C$, but then there is a non-$B$ value $V$ such that $M[V/x]$ is a $C$.  For (i), it seems we should expect a rule like \rlnm{FixL} of Figure~\ref{fig:disjointness-rules}, but instead we use the \rlnm{Dis} rule, as discussed above.  For case (ii),  it seems we should have a rule:
\[
  \prftree{
    \Gamma \types V : B,\,\Delta
  }
  {
    \Gamma,\,f: B \to C,\,M[V/x] : C \types \Delta
  }
  {
    \Gamma,\,\fixtm{x}{M} : B \to C \types \Delta
  }
\]
However, this rule would be unsound.  Suppose we have some $B$ value $V$ such that $M[V/x]$ does not evaluate to a $C$, and suppose, for simplicity, that $M$ does not contain $f$ (so the fixpoint abstraction is just an ordinary $\lambda$-abstraction).  This rule would allow us to conclude that $\abs{x}{M}$ is not a $B \to C$ function.  However, it is not enough to know that $M[V/x]$ does not evaluate to a $C$, because this includes the possibility that $M[V/x]$ diverges, and this is a perfectly acceptable outcome for any $B \to C$ function.  For example, using this rule we could refute $\abs{x}{\divtm} : \intty \to \intty$, i.e. prove the judgement $\abs{x}{\divtm} : \intty \to \intty \types$, but this is absurd because $\abs{x}{\divtm}$ \emph{is} a $\intty \to \intty$ function, according to the usual understanding of this type (and our semantics in Section~\ref{sec:semantics}).   The situation for refuting typings of shape $\fixtm{f(x)}{M} : B \from C$, i.e. case (iii) above, is analogous.  

Finally, although we formulated our rule \rlnm{Id} rule with principal formula $x:A$ on both sides, it would be sound to take the more general $\Gamma,\,M:A \types M:A,\,\Delta$ as axiom instead.  This would be more powerful, e.g. $x\ y : A \types x\ y: A$, but it takes us further away from traditional one-sided type systems than we want to go in this paper.}\\





 
\section{The type of values and Ill Typedness}\label{sec:okay}
\chA{One may wonder whether there is any use in being able to reason hypothetically about typing.  A compelling answer can be given once we introduce a new type $\okty$, with meaning `evaluates'.}

\begin{definition}\label{defn:ill-typing-system}
  We extend the grammar of types with the constant $\okty$ and extend the typing rules to include those given in Figure~\ref{fig:ok-rules}.
\end{definition}

The rule \rlnm{OkVarR} expresses the fact that, in call-by-value, we may assume that all variables denote values.  The rules \rlnm{OkL} and \rlnm{OkR} express the fact that $\okty$ is the type of all values.  To conclude that $M:\okty$, it suffices to show $M:A$ since every type $A$ represents some non-empty set of values. To refute that $M$ evaluates to an $A$, it suffices to refute that it evaluates at all.  Rules \rlnm{OkApL1} and \rlnm{OkApL2} express the fact that, for an application $M\,N$ to evaluate, requires that $M$ evaluates to a function (see Remark~\ref{rmk:ok-semantics-val-sets}) and that $N$ evaluates.  Rules \rlnm{OkSL} and \rlnm{OkPL} express the fact that the successor and predecessor are only defined on numerals.  In combination with \rlnm{OkL}, these rules subsume \rlnm{SuccL} and \rlnm{PredL} respectively.  Finally, \rlnm{OkPrL} allows to refute that $(M_1,M_2)$ evaluates by refuting that one of the components evaluates.

\begin{remark}\label{rmk:ok-semantics-val-sets}
  Since $\okty$ is the set of all values, the set of all (fixpoint) abstractions can be represented by any type of shape $\okty \from A$.  Indeed, using \rlnm{FixnR} and \rlnm{OkVarR} we can prove $\types \fixtm{f(x)}{M} : \okty \from A$. 
\end{remark}

For example, if we use \rlnm{OkPL} instead of \rlnm{PredL} in the first derivation of Example~\ref{ex:twice-pred-nat} we can show that $\types \abbv{twice}\,(\abs{x}{\predtm(x)}) : \intty \from \okty$.  Then, using \rlnm{AppL} and \rlnm{Dis}, we can conclude $\abbv{twice}\,(\abs{x}{\predtm(x)})\,(\pn{0},\pn{1}) : \okty \types$.  

In Example~\ref{ex:add-typing}, we proved that \abbv{add} requires both its arguments to be numerals in order to return a numeral.  However, we cannot prove that it requires both its arguments to be numerals in order to evaluate.  To see why, suppose the body of the function, $\iftm{y}{z}{\succtm\,(f\,(\predtm(y),z))}$, evaluates for some actual parameters $y$ and $z$.
Clearly, $y$ is required to be a numeral since it appears in the guard.  However, $z$ is \emph{not} required to be a numeral by the branches of the conditional.  In particular, the `then' branch yields $z:\okty \not\types z:\intty$.  However, there is a good reason, it is possible for an application of the function to successfully evaluate even when the second argument is not a numeral!  For example, $\abbv{add}(\pn{0},\abs{x}{x}) \evals \abs{x}{x}$.
The best we can do is prove $\types \abbv{add} : \intty \times \okty \from \okty$.

\begin{figure}
  \begin{mdframed}[topline=false,innertopmargin=-.84ex,innerleftmargin=-0.1ex,innerrightmargin=0ex]
    \rulediv{40pt}{Typing Rules for Ok}\\[2mm]
    \[
      \begin{tblr}{ccc}
      \prftree[l]{\rlnm{OkVarR}}{\Gamma \types x : \okty,\,\Delta}
      &
      \prftree[l]{\rlnm{OkL}}{\Gamma,\,M:\okty \types \Delta}{\Gamma,\,M : A \types \Delta}
      &
      \prftree[l]{\rlnm{OkR}}{\Gamma \types M : A,\,\Delta}{\Gamma \types M : \okty,\,\Delta}
      \\[5mm]
      \SetCell[c=3]{c} \prftree[l]{\rlnm{OkApL1}}{\Gamma,\,M:\okty \from A \types \Delta}{\Gamma,\, M\,N : \okty \types \Delta}
      \qquad
      \prftree[l]{\rlnm{OkApL2}}{\Gamma,\,N:\okty \types \Delta}{\Gamma,\,M\,N:\okty \types \Delta}
      \\[5mm]
      \prftree[l]{\rlnm{OkSL}}{\Gamma,\,M:\intty \types \Delta}{\Gamma,\,\succtm(M) : \okty \types \Delta}
      &
      \prftree[l]{\rlnm{OkPL}}{\Gamma,\,M:\intty \types \Delta}{\Gamma,\,\predtm(M) : \okty \types \Delta}
      &
      \prftree[l]{\rlnm{OkPrL}}{\Gamma,\,M_i:\okty \types \Delta}{\Gamma,\,(M_1,\,M_2) : \okty \types \Delta}          
    \end{tblr} 
    \]\\
  \end{mdframed}
    \caption{Type Assignment with $\okty$}\label{fig:ok-rules}
  \end{figure}

\paragraph{Ill-typed}
Some authors use the phrase `ill-typed' as synonymous with untypable, but we will use it to mean that a term is provably not ok (as is suggestive of the suffix \emph{-typed}).
\begin{definition}[Ill-typed and Well-typed]
  Suppose $M$ is a closed term.
  \begin{itemize}
    \item We say that $M$ is \emph{ill-typed} just if $M : \okty \types$.
    \item We say that $M$ is \emph{well-typed} just if $\types M : \okty$.    
  \end{itemize}
\end{definition}

Of course, due to the \rlnm{OkR} rule, the definition of well-typed subsumes the usual one.  Note that $M$ being well-typed does not imply that $M$ evaluates, only that $M$ does not get stuck: having formula $M:A$ on the right of the turnstile only means that $M$ either diverges or $M$ evaluates to an $A$ (see Section~\ref{sec:semantics} for details).  On the other hand, proving that a closed term $M$ is ill-typed can be understood as proving that $M$ \emph{does not} evaluate: $M : \okty \types$ means that $M$ does not evaluate to a value in $\okty$, so $M$ either diverges or goes wrong.  
The following example is ill-typed only because it will diverge.  
We use $\Gamma$ to abbreviate $\{\,f:\intty \from \okty\,\}$:
\[
  \prftree[l]{\rlnm{AppL}}{
    \prftree[l]{\rlnm{FixnR}}
    {
        \prftree[l]{\rlnm{AppL}}{
          \prftree[l]{\rlnm{Id}}{\Gamma \types f:\intty \from \okty}
        }
        {
          \prftree[l]{\rlnm{Id}}{\Gamma,\,x:\intty \types x:\intty}
        }
        {
          \Gamma,\,f\,x : \okty \types x:\intty
        }
    }
    {
      \types \fixtm{f(x)}{f\,x} : \intty \from \okty
    }
  }
  {
    \prftree[l]{\rlnm{Dis}}
    {
      \prftree[l]{\rlnm{AbsR}}{
        \prftree[l]{\rlnm{Id}}{
          y:\intty \types y:\intty
        }
      }
      {
        \types \abs{y}{y} : \intty \to \intty
      }
    }
    {
      \abs{y}{y} : \intty \types
    }
  }
  {
    (\fixtm{f(x)}{f\,x})\,(\abs{y}{y}) : \okty \types
  }
\]\\[-4ex]



\section{Semantics of type assignment}\label{sec:semantics}

In Sections~\ref{sec:prog-lang} and \ref{sec:okay}, we gave some intuitions about how we want to understand the \emph{meaning} of types and the typing judgement.  We now make this precise.  

\begin{definition}[Semantics of Types]\label{def:type-semantics}
  Let $\ClVals$ be the set of all closed values.
  We interpret types as certain sets of closed, well-behaved terms:
  \[
    \begin{array}{rcl}
      \mng{\okty} &=& \ClVals\\
      \mng{\intty} &=& \{\;\pn{0},\,\pn{1},\,\pn{2},\,\ldots\;\}\\
      \mng{A \times B} &=& \{\;(V,W) \mid V \in \mng{A},\,W \in \mng{B}\;\}\\
      \mng{A \to B} &=& \ch{\{\; \fixtm{f(x)}{M} \mid \forall V \in \ClVals.\, V \in \mng{A} \Rightarrow M[V/x][\fixtm{f(x)}{M}/f] \in \mngTb{B} \;\}}\\
      \mng{A \from B} &=& \ch{\{\; \fixtm{f(x)}{M} \mid \forall V \in \ClVals.\, M[V/x][\fixtm{f(x)}{M}/f] \in \mngT{B} \Rightarrow V \in \mng{A}\;\}}\\[3mm]
      \mngT{A} &=& \{\;M \mid M \evals V,\, V \in \mng{A}\;\}\\
      \mngTb{A} &=& \{\;M \mid M \in \mngT{A} \vee M \diverges\;\}
    \end{array}
  \]
\end{definition}

The idea is that $\mng{A}$ is the set of all closed values of type $A$, $\mngT{A}$ is the set of all terms that would evaluate to a value of type $A$ and $\mngTb{A}$ is the set of all terms that either evaluate to a value of type $A$ or diverge.
It follows that $\mngTb{\okty}$ is the set of closed terms that do not go wrong.  As usual, the meaning of the sufficiency arrow $A \to B$ is as the set of all functions that guarantee to map each $A$ value to a $B$ value or diverge in the process.  The meaning of the necessity arrow $A \from B$ is as the set of all functions that require a value from $A$ in order to eventually return a value from $B$.

\begin{remark}\label{rmk:arrows-contrapositive}
It seems there is an asymmetry between the semantics of the two arrows, but this is only because of the bias towards values induced by CBV.  The two implications are equivalent to:
\[
  \begin{array}{rcl}
    V \in \mngT{A} &\Rightarrow& M[V/x][\fixtm{f(x)}{M}/f] \in \mngTb{B}\\
    M[V/x][\fixtm{f(x)}{M}/f] \in \mngT{B} &\Rightarrow& V \in \mngTb{A}
  \end{array}
\]
The relationship between the two function types will become clearer in Section~\ref{sec:complements}, once we introduce the success semantics. \chB{We discuss the choice of CBV semantics in Section~\ref{sec:semantics-discussion}.}
\end{remark}

A consequence of the semantics of necessity is that it can be a little subtle to express properties of interest.  Given a function $\abs{x}{M}$, its membership in some type $\mng{A \from B}$ becomes trivial whenever there is no value $V$ such that $M[V/x] \in \mngT{B}$.  This can occur, for example, when $M$ simply never produces an $F$.  Thus we have:  $\abs{x}{\predtm(x)} \in \mng{\intty \times \intty \from \intty \times \intty}$, because $\abs{x}{\predtm(x)}$ does not return pairs.  This phenomenon can often be exploited in the proof system, using the disjointness rule.  For example, using \rlnm{Dis} we can prove $\types \abs{x}{\predtm(x)} : \intty \times \intty \from \intty \times \intty$.

Now, we turn to the semantics of the typing judgement.  \chB{In CBV, when analysing the body $M$ of an abstraction $\abs{x}{M}$, one can assume that any actual parameter that gets bound to $x$ must have evaluated.  Hence, in the denotational semantics of  CBV programming languages, it is standard to interpret a judgement $x:A \types M:B$ as a function from \emph{values} of type $A$ to \emph{computations} of type $B$ -- where the latter may diverge \cite{winskel-1993}.  In our two-sided system we can have terms on the left-hand side that are not themselves values, so types cannot be interpreted as sets of values, but we do want to retain this essential asymmetry of CBV.  In fact, as we discuss further in Section~\ref{sec:semantics-discussion}, the soundness of much of the system depends on it.  Thus, we will define the semantics so that $M:A$ on the left means that $M$ is a term that evaluates to an $A$, and $M:A$ on the right means that $M$ is a term that either evaluates to an $A$ or diverges.} 

\begin{definition}[Semantics of Judgements]
  A \emph{valuation}, $\theta$, is just a closed substitution.
  \begin{itemize}
    \item Given an atomic formula $M:A$, we say that a valuation $\theta$ \emph{satisfies the formula on the left} just if $M\theta \in \mngT{A}$.  We say that $\theta$ \emph{satisfies the formula on the right} just if $M\theta \in \mngTb{A}$.  
    \item We say that a valuation $\theta$ \emph{satisfies a set of formulas} $\Gamma$ \emph{on the left}, just if $\theta$ satisfies each formula in $\Gamma$ on the left.  A valuation $\theta$ \emph{satisfies a set of formulas} $\Delta$ \emph{on the right}, just if there is a formula in $\Delta$ that is satisfied on the right.
    \item We say that a judgement $\Gamma \types \Delta$ is \emph{true}, written $\Gamma \models \Delta$, just if all valuations $\theta$ that satisfy $\Gamma$ on the left, satisfy $\Delta$ on the right.
  \end{itemize}
  To keep the development smooth, whenever we talk about the satisfaction of formulas by some valuation $\theta$, we will assume that $\theta$ closes all the terms involved (i.e. its domain is sufficiently large).
\end{definition}

Notice from the first of the three clauses that divergence is always allowed on the right, but never on the left.  So, for example, valuation $[\pn{1}/x]$ satisfies $\iftm{x}{\pn{0}}{\divtm} : \intty$ on the right but not on the left.  Then, notice also the asymmetry in the semantics of the judgement: $\Gamma \models \Delta$ just if all valuations that satisfy \emph{all} formulas in $\Gamma$ on the left, satisfy \emph{some} formula from $\Delta$ on the right.  

\subsection{Semantic Soundness for Safety Properties}

By considering the semantics of types, we can draw a sharp distinction between the necessity arrow and all the other (more usual) types of the system.
To see this, we will need to look in more detail at the behaviour of fixpoints in the system, and so we introduce the following \emph{fixpoint approximants} \chC{(see, e.g. \citet{bradfield-walukiewicz-2018} for the use of approximants in mu-calculus)}:
\begin{definition}
  \ch{Given a fixpoint term $\fixtm{f(x)}{M}$, the following family of terms, indexed by $n\in\mathbb{N}$, are its \emph{fixpoint approximants}.
  \begin{align*}
    \fixapprx{0}{f(x)}{M} &= \abs{x}{\divtm} \\
    \fixapprx{n+1}{f(x)}{M} &= \abs{x}{M[\fixapprx{n}{f(x)}{M}/f]}
  \end{align*}}
\end{definition}
Note: we are not introducing a new piece of syntax, but merely an abbreviation that will be useful in the sequel.  Each approximant is simply a finite unfolding of the possibly infinite fixpoint computation.  The idea is to think of a term $C[\fixtm{f(x)}{M}]$ containing a fixpoint as being finitely approximated by each member of the family $C[\fixapprx{n}{f(x)}{M}]$.  

\chD{
As is common, we think of types as defining properties, with a closed term $M$ satisfying a property $A$ whenever $M \in \mngTb{A}$.  In the theory of program verification, it is standard to distinguish between two kinds of properties, safety and liveness.  The former is typically characterised by (i) being downwards closed in the termination order -- that is, if a program satisfies a safety property, then so does any program that terminates less often, and (ii) the existence of finite counterexamples -- whenever a program does not satisfy a safety property, there is a finite trace to act as witness.  We make precise this idea for our type system in the following: 

\begin{definition}
  For closed terms $N$ and $P$, write $P \lesssim N$ just if $P$ normalises implies $N$ normalises and then they have the same normal form.  Then we say that a type $A$ defines a \emph{safety property} just if the following are both true:
  \begin{enumerate}[({S}1)]
    \item If $\calC[N] \in \mngTb{A}$ and $P \lesssim N$, then $\calC[P] \in \mngTb{A}$.
    \item If $C[\fixtm{f(x)}{M}] \notin \mngTb{A}$, then there is some $k$ for which $C[\fixapprx{k}{f(x)}{M}] \notin \mngTb{A}$.
  \end{enumerate}
\end{definition}
}

It is possible to show that, if we exclude the necessity arrow, then all types of the system define safety properties.  However, with the necessity arrow, this is no longer the case.  In fact, we can find counterexamples to both aspects of safety.

\begin{theorem}\label{thm:safety-cexs}
  There are types $A$ and $B$ such that $A \from B$ is not a safety property.  In particular:
  \begin{itemize}
  \item The type $\intty \from \intty \to \intty$ does not satisfy (S1).
  \item The type $\intty \from \intty \to \intty \from \intty \to \intty$ does not satisfy (S2).
  \end{itemize}
\end{theorem}

However, by inspection of these proofs, it is evident that allowing function types on the right of the necessity arrow is a key feature.  We will now define a fragment of our type system in which types are restricted so as to always define safety properties.

\chD{
\begin{definition}[Safety Fragment]
  The \emph{safety fragment} consists of restricting the two-sided system to judgements only involving the following types:
  \[
        \begin{array}{rcl}
             F &\Coloneqq& \intty \mid \okty \mid F_1 \times F_2\\
             A,B &\Coloneqq& \intty \mid \okty \mid A \times B \mid A \to B \mid A \from F 
        \end{array}
  \]
\end{definition}

We call types $F$ the \emph{finitely verifiable} types, because we can show that whenever a term inhabits a finitely verifiable type, there is already a finite approximation of the term that inhabits the type.  In the safety fragment, only finitely verifiable types are allowed on the right-hand side of necessity arrows.  This is enough to recover safety:
}

\chD{
\begin{theorem}[Safety properties]\label{thm:types-are-safety}
  Every type in the safety fragment defines a safety property.
\end{theorem}
}

The fact that types define safety properties is very helpful for the purpose of analysing the system.  The contrapositive of (S2) gives us a principle that we can use in order to show that term satisfies a type: if $\fixapprx{n}{f(x)}{M} \in \mngTb{A}$ for all $n$, then it follows that $\fixtm{f(x)}{M} \in \mngTb{A}$.  We use this principle to show soundness, i.e. that every provable judgement is true.

\chD{
\begin{theorem}[Semantic Soundness]\label{thm:strong-soundness}
  In the safety fragment, if $\Gamma \types \Delta$ then $\Gamma \models \Delta$.
\end{theorem}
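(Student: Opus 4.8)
The plan is to prove the theorem by induction on the derivation of $\Gamma \types \Delta$, establishing for each rule that truth of the premises entails truth of the conclusion. Two ingredients must be in place before the induction. The first is a \emph{disjointness lemma}: if $A \distype B$ then $\mng{A} \cap \mng{B} = \emptyset$, which follows by a routine case analysis on Definition~\ref{def:disjointness} against the clauses of Definition~\ref{def:type-semantics} (the three cases $\intty$/not-$\intty$, arrow/not-arrow, product/not-product each exhibit disjoint value shapes). The second is a small package of operational facts linking $\ped$ to the type semantics: determinism of reduction; the \emph{value facts}, namely that for a value $V$ we have $V \in \mng{A} \iff V \in \mngT{A} \iff V \in \mngTb{A}$ (a value is a normal form, so it evaluates only to itself and cannot diverge); and \emph{decomposition lemmas} for the call-by-value evaluation contexts, e.g.\ that $(M\,N) \evals V$ forces $M \evals \abs{x}{P}$, $N \evals W$ and $P[W/x] \evals V$, with the analogous statements for pairs, $\lettm{(x,y)}{-}{-}$, $\succtm$, $\predtm$ and $\iftm{-}{-}{-}$. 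These let the induction pass between the semantics of a compound subject and that of its immediate subterms.

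For the induction proper, each case fixes a valuation $\theta$ satisfying $\Gamma$ on the left and argues that $\theta$ satisfies $\Delta$ on the right, almost always via the dichotomy: \emph{either} $\theta$ already satisfies some formula of $\Delta$ on the right, in which case we are done, \emph{or} it does not, in which case we must establish the principal succedent formula. The structural rule \rlnm{Id} is immediate, and \rlnm{Dis} uses the disjointness lemma: a subject satisfying $M:A$ on the left evaluates to a value which cannot also inhabit the disjoint $\mng{B}$, so the induction hypothesis on $\Gamma \types M:B,\Delta$ forces the $\Delta$ disjunct. The arrow rules are the characteristic cases. In \rlnm{AbsR} and \rlnm{AbnR} one extends $\theta$ by binding $x$ to a fresh value $V$; the side condition that $x$ is not free in $\Gamma$ or $\Delta$ ensures $\theta[V/x]$ still satisfies $\Gamma$ and fails $\Delta$ exactly when $\theta$ does, so the induction hypothesis on the premise yields the universally quantified implication defining $\mng{B\to A}$ (respectively $\mng{B \from F}$), with the value facts used to move $V$ between $\mng{B}$, $\mngT{B}$ and $\mngTb{B}$. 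In \rlnm{AppL} the first premise gives $M\theta \in \mngTb{B \from A}$; decomposing $(M\,N)\theta \in \mngT{A}$ exposes the function value $\abs{x}{P}$ and argument value $W$ with $P[W/x] \in \mngT{A}$, whence the defining implication of the necessity arrow yields $W \in \mng{B}$, i.e.\ $N\theta \in \mngT{B}$, discharging the second premise. The remaining compound rules (\rlnm{AppR}, the pair and let rules, the conditional rules, and the $\okty$ rules, the latter using $\mng{A}\subseteq\ClVals=\mng{\okty}$ and hence $\mngT{A}\subseteq\mngT{\okty}$, $\mngTb{A}\subseteq\mngTb{\okty}$) follow the same schema, pairing a decomposition lemma with the induction hypotheses.

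The main obstacle is \rlnm{FixR}: the induction hypothesis on $\Gamma, x:A \types M:A, \Delta$ constrains a single unfolding, whereas $(\fixtm{x}{M})\theta$ may unfold arbitrarily often before reaching a value. I would dispatch it using the fact that every type denotes a \emph{safety} property (Theorem~\ref{thm:types-are-safety})—which is exactly what the restriction to finitely verifiable types $F$ on the right of $\from$ secures. Concretely, assume $\theta$ fails $\Delta$ (otherwise the conclusion is immediate), approximate the fixpoint by its finite unfoldings $\fixapprx{0}{x}{M} = \divtm$ and $\fixapprx{n+1}{x}{M} = M[\fixapprx{n}{x}{M}/x]$, and show by an inner induction on $n$—invoking the \rlnm{FixR} premise with $\theta$ extended by the current approximant at $x$—that every approximant lies in $\mngTb{A}$. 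Safety then transfers membership from the approximants to $(\fixtm{x}{M})\theta$: should the fixpoint reach a value outside $\mng{A}$, it would do so after finitely many unfoldings, contradicting membership of the corresponding approximant, so the fixpoint itself is in the safety property $\mngTb{A}$. This is the sole appeal to the safety theorem, and it is precisely where the finitely-verifiable side condition earns its keep—Theorem~\ref{thm:safety-cexs} confirms that the argument fails once higher types are admitted on the right of $\from$.
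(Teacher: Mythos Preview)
Your overall plan—induction on derivations, with the non-\rlnm{FixR} cases handled via operational decomposition and the disjointness lemma—matches the paper's proof. The gap is in your \rlnm{FixR} step. You propose to run the inner induction by ``invoking the \rlnm{FixR} premise with $\theta$ extended by the current approximant at $x$'', but this does not typecheck against the semantics of judgements: satisfaction of $x:A$ on the \emph{left} requires $x\theta' \in \mngT{A}$, i.e.\ the substituted term must \emph{evaluate} to a value in $\mng{A}$. Already at $n=0$ the approximant is $\divtm$, which diverges, so $\theta[\divtm/x]$ does not satisfy $\Gamma,\,x:A$ on the left and the premise truth cannot be invoked. In general your inner IH only places the approximant in $\mngTb{A}$, not $\mngT{A}$, so the step from $n$ to $n+1$ does not go through as written.

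The paper closes this gap via part~(i) of Theorem~\ref{thm:types-are-safety} (downward closure), not just part~(ii). From the premise one obtains, for every \emph{value} $V \in \mng{A}$, that $(M\theta)[V/x] \in \mngTb{A}$; downward closure then upgrades this to: for every \emph{term} $N \in \mngTb{A}$, $(M\theta)[N/x] \in \mngTb{A}$ (take $N \lesssim V$ where $V$ is the value $N$ reaches, or any inhabitant of $\mng{A}$ when $N$ diverges). That strengthened statement is exactly what the inner induction on approximants needs, and the paper packages it as Theorem~\ref{thm:types-fix-closed}. Your final transfer from approximants to the fixpoint via finite counterexamples (part~(ii)) is fine, but as written you are deploying only half of the safety theorem; the other half is where the inductive step actually lives.
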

}

\chB{
This formulation of soundness is standard for a sequent calculi \cite{ebbinhaus-flum-thomas-1994}, but there is another, more syntactical notion of soundness that is standard in the literature for type systems, namely `progress and preservation' \cite{wright-felleisen-infcomp1994}.  
Hence, we call this result \emph{semantic soundness} and the latter kind, which we will obtain in Sections \ref{sec:complements} and \ref{sec:constrained-typing}, \emph{syntactic soundness}.}  As corollary, we obtain:
\begin{corollary}
  Suppose $M$ is a closed term.  In the safety fragment:
  \begin{itemize}
    \item If $M$ is well typed, then $M$ does not go wrong.
    \item If $M$ is ill typed, then $M$ does not evaluate.
  \end{itemize}
\end{corollary}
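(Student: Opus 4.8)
The plan is to derive both statements directly from semantic soundness (Theorem~\ref{thm:strong-soundness}) by unwinding the semantics of the type $\okty$. Recall that $\mng{\okty} = \ClVals$, so $\mngT{\okty}$ is exactly the set of closed terms that evaluate to a value, while $\mngTb{\okty}$ is exactly the set of closed terms that either evaluate or diverge, i.e. those that do not go wrong. The whole argument is then a matter of reading off the definition of judgement truth at these two interpretations, taking care over the asymmetric treatment of the empty left- and right-hand sides of the turnstile.

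For the first clause, suppose $M$ is well-typed, i.e. $\types M:\okty$. By Theorem~\ref{thm:strong-soundness} we obtain $\models M:\okty$, meaning that every valuation satisfying the empty set of antecedents on the left satisfies $M:\okty$ on the right. Every valuation vacuously satisfies the empty left-hand side, so any closing valuation $\theta$ gives $M\theta \in \mngTb{\okty}$. Since $M$ is closed, $M\theta = M$, hence $M \in \mngTb{\okty}$, which is precisely the assertion that $M$ does not go wrong.

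For the second clause, suppose $M$ is ill-typed, i.e. $M:\okty \types$. Soundness yields $M:\okty \models$, with an empty right-hand side. The key observation is that a valuation can never satisfy an empty disjunction on the right, since satisfaction on the right requires \emph{some} formula in $\Delta$ to be satisfied and here there is none. Consequently, for the implication defining truth to hold, there can be no valuation satisfying $M:\okty$ on the left; that is, $M\theta \notin \mngT{\okty}$ for every closing $\theta$. As $M$ is closed this gives $M \notin \mngT{\okty}$, i.e. $M$ does not evaluate.

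Neither direction presents a genuine obstacle once soundness is in hand; the only point demanding care is the deliberate asymmetry between the two sides of the turnstile --- an empty left context is satisfied by everything whereas an empty right context is satisfied by nothing --- together with the trivial but essential remark that closedness of $M$ makes the choice of valuation immaterial. All of the real work, in particular the treatment of recursion, has already been discharged in Theorems~\ref{thm:types-are-safety} and \ref{thm:types-fix-closed} en route to Theorem~\ref{thm:strong-soundness}.
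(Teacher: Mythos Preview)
Your proof is correct and takes essentially the same approach as the paper, which states the result as an immediate corollary of semantic soundness without giving an explicit argument. Your careful unwinding of the asymmetric treatment of empty left- and right-hand sides, together with the observation that closedness of $M$ makes the valuation irrelevant, is exactly what is needed to fill in the omitted details.
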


\chD{
At the time of writing, we do not have a proof of semantic soundness for the full system.  However, it seems clear that one could prove a syntactic soundness result for the full system, as we do for its extension with complements in the success semantics (Section~\ref{sec:complements}) and for the constrained type system (Section~\ref{sec:constrained-typing}).  From syntactic soundness (progress and preservation), one also obtains that well-typed programs don't go wrong and that ill-typed programs don't evaluate, but we note that these statements only concern closed terms $M$, whereas semantic soundness concerns the truth of judgements in general.
}

\chC{
  It is perhaps unsurprising to note that the system is incomplete: there are true judgements that are not provable.  For example, the term $\iftm{\pn{0}}{\pn{1}}{\abbv{id}}$ is neither well-typed nor ill-typed.  Semantically, this term evaluates to $\pn{1}$, but we cannot prove $\types \iftm{\pn{0}}{\pn{1}}{\abbv{id}} : \okty$ because our rules for conditionals do not allow for a sufficiently precise case analysis.  On a related note, one can observe that there are terms that are \emph{both} well-typed and ill-typed: namely certain terms that diverge.  The subject of the derivation at the end of Section~\ref{sec:okay} is one such example.
}

\chB{
\subsection{Further discussion of the semantics}\label{sec:semantics-discussion}

One may wonder if the theory would be better developed using a semantics that has more symmetrical judgements, such as call-by-name (CBN).  In a CBN semantics, in order for the standard (right) typing rules to be sound, we are forced to interpret $M:A$ on the left as `$M$ either evaluates to an $A$ \emph{or diverges}' rather than, as in CBV, `$M$ evaluates to an $A$'.  Thus, the meaning of a formula in the premise and the meaning in the consequent become identical, which may seem desirable.

However, $M:A$ as a premise in CBN is a much weaker assumption, and reasoning on the left becomes very weak as a result.  Consider a simple judgement of the form:
\[
  \iftm{M}{\succtm(x)}{\succtm(x)} : \intty \types x : \intty
\]
In CBV, this judgement is true independent of the choice of $M$, and is provable using the \rlnm{IfZL2} rule -- both branches require $x$ to be a $\intty$, so necessarily the whole term requires $x$ to be a $\intty$.  The reason that this is sound depends crucially on knowing that the if-expression terminates.  In CBV, $\iftm{M}{\succtm(x)}{\succtm(x)} : \intty$ on the left means that we can assume the if-expression evaluated to a $\intty$ (i.e. terminated), and hence one of the two branches evaluated to a $\intty$, and whichever branch that was, it used $x$ at type $\intty$, and so the conclusion follows.  

By contrast, in CBN, $\iftm{M}{\succtm(x)}{\succtm(x)} : \intty$ on the left means that we can assume only that the if-expression either evaluated to a $\intty$ \emph{or diverged}.  Therefore, we do not even know that execution of the guard $M$ necessarily terminated.  In the case that the guard $M$ diverged, \emph{neither} of the branches was ever executed.  In this situation, it can be that no execution of the program actually uses $x$ at type $\intty$, and so it is not true that $x:\intty$ \emph{necessarily} follows.  For example, the judgement $\iftm{\abbv{div}}{\succtm(x)}{\succtm(x)} : \intty \types x: \intty$ is \emph{not} true in CBN, a counterexample is $x := \abbv{id}$.  Therefore, the \rlnm{IfZL2} rule for reasoning about cases on the left is unsound in CBN.

In fact, in CBN, when we assume $\iftm{M}{P}{Q} : A$ on the left, we can conclude almost nothing about the necessary behaviour of $P$ and $Q$!  This prevents us from reasoning about almost any interesting program.  We do not know that either of $P$ and $Q$ is ever executed, so we can only use the behaviour of the guard $M$ to justify some conclusion.  The same is true of an application $MN : A$ on the left -- we only know that $MN$ either yielded an $A$ \emph{or diverged} and thus we cannot conclude anything about the necessary behaviour of $N$ or $M[N/x]$, neither of which may have been executed if the execution of $M$ did not terminate.  Hence, \rlnm{AppL} is unsound.  To restore soundness it seems likely that one would have to equip the system with an ability to reason about termination explicitly.  An approach like this was taken by \citet{Vazou:2014:RTH:2628136.2628161} to repair a similar unsoundness when adapting the CBV \emph{Liquid Types} system for Haskell.
}

\section{Complements and the Success Semantics}\label{sec:complements}

The two-sided type systems of this paper provide a form of negation at the meta level: to establish $M \notin \mngT{A}$ we can prove $M:A \types$.  \chA{Since there are already a number of one-sided (traditional) type systems that incorporate a negation or complement operator on types, a natural question regards the extent to which two-sided reasoning can be simulated in a one-sided system with complements; in other words if one can internalise the set of judgements of shape $M:A \types$ by some type $A^c$.}



\begin{definition}
  We extend the grammar of types in Definition~\ref{def:simple-types} by a complement operator, $A \Coloneqq \cdots \mid A^c$, and we extend the system of rules with the two below.
\end{definition}

\begin{wrapfigure}[5]{l}{8cm}
  \vspace{-4mm}
  \begin{mdframed}[topline=false,innertopmargin=-.84ex,innerleftmargin=-.1ex,innerrightmargin=0ex]
  \rulediv{40pt}{Complement Rules}\\[-2mm]
  \begin{center}
  $
  \prftree[l]{\rlnm{CompL}}{\Gamma \types M : A,\,\Delta}{\Gamma,\,M:A^c \types \Delta}
  \qquad
    \prftree[l]{\rlnm{CompR}}{
      \Gamma,\, M:A \types \Delta
    }
    {
      \Gamma \types M:A^c,\,\Delta
    }
  $\vspace{2mm}
  \end{center}
  \end{mdframed}
\end{wrapfigure}

\noindent 
The correspondence between these two rules and those for negation in sequent calculi supporting propositional logic should be evident.
Intuitively, the semantics of the new operator is to take the complement of the type with respect to the set of closed values.
\chB{
Rule \rlnm{CompL}, is closely related to our disjointness rule, and \rlnm{Dis} can be derived from \rlnm{CompL} if the system is augmented with a sufficiently strong subtyping theory, because $A \cap B = \emptyset$ iff $A \subseteq B^c$.   
}
This rule is sound for our call-by-value semantics, but rule \rlnm{CompR} is \emph{unsound}.  
The problem is that when $M:A^c$ is not satisfied on the right by some $\theta$, it does not imply that $M:A$ will necessarily be satisfied by $\theta$ on the left.  

\begin{wrapfigure}[7]{r}{4.5cm}
  \begin{center}\vspace{-4ex}
  $\prftree[l]{\rlnm{CompR}}{
    \prftree[l]{\rlnm{OkPL}}{
      \prftree[l]{\rlnm{Dis}}{
        \prftree{\cdots}{\types \abbv{id} : \intty \to \intty}
      }
      {
        \abbv{id} : \intty \types
      }
    }
    {
      \predtm(\abbv{id}) : \okty \types
    }
  }
  {
    \types \predtm(\abbv{id}) : \okty^c 
  }
$
  \end{center}
\end{wrapfigure}

For example, consider the derivation to the right.
If this were sound in the call-by-value semantics, $\predtm(\abbv{id})$ would be guaranteed to \emph{diverge}, since we would have that $\predtm(\abbv{id}) \in \mngTb{\okty^c}$, and $\mng{\okty^c} = \ClVals \setminus \ClVals = \emptyset$.  This is clearly absurd, because $\predtm(\abbv{id})$ goes wrong.  Thus our semantics of Section~\ref{sec:semantics} does not support \rlnm{CompR}.

\subsection{The success semantics}

\chC{
The key issue is that, if we are internalise the judgements of shape $M:A \types$ by some type $A^c$, in the way that we have indicated, then we should expect that $M:A \types$ and $\types M : A^c$ have the same meaning.  However, $M:A \types$ can hold of some term $M$ that gets stuck (goes wrong); but this is impossible for $M$ that satisfies $\types M : A^c$, because the semantics of the judgement ensures that any such term either diverges or evaluates to an $A$ value.
}
Thus, we have only the backward direction of the desired equivalence: $M \notin \mngT{A}$ \emph{iff} $M \in \mngTb{A^c}$.


Our approach will be to weaken the semantics of typing on the right of the judgement, so that a formula $M:A$ on the right \emph{can (always)} be satisfied by a term $M$ that goes wrong.
\begin{definition}
  The \emph{success semantics} is defined as follows.  First, we define, for each type $A$, the set of closed terms that may go wrong, diverge, or evaluate to a value in $\mng{A}$.
  \[
    \mngTs{A} = \{\; M \mid \text{$M \in \mngT{A}$ or $M$ diverges or $M$ goes wrong}  \;\}
  \]
  We redefine the meaning of the sufficiency arrow type to allow the body to go wrong when executing on an argument:
  \[
    \mng{A \to B} = \ch{\{\; \fixtm{f(x)}{M} \in \ClVals \mid \forall V \in \mng{A}.\ M[V/x][\fixtm{f(x)}{M}/f] \in \mngTs{B} \;\}}
  \]
  Finally, we redefine satisfaction on the right by saying that a formula $M:A$ is \emph{satisfied by $\theta$ on the right} just if $M\theta \in \mngTs{A}$.
\end{definition}

The success semantics has the strong point that the judgements are now symmetrical in the sense of the above equivalence.
Thus both \rlnm{CompL} and \rlnm{CompR} are sound for the success semantics.
In fact, we will show that all the typing rules presented in Sections~\ref{sec:prog-lang} and \ref{sec:okay} are (syntactically) sound for the success semantics too.


A considerable disadvantage of the success semantics is that we lose the true negatives theorem \emph{well-typed programs don't go wrong}.  Now, proving $\types M : \okty$ means that $M$ may either evaluate, diverge or go wrong.  In other words, it means nothing at all!  We do however retain \emph{ill-typed programs don't evaluate}, so this system is exclusively for proving that programs behave badly.  This puts this system in the same territory as Erlang's celebrated \emph{success types} \citep{lindahl-sagonas-ppdp2006}, which similarly provide no guarantees for terms that are well-typed.  

However, the symmetry in the system allows for a considerable saving.  Under the success semantics, $\mng{A \from B} = \mng{A^c \to B^c}$ -- the conditions on membership in these types are the contrapositive of each other.  Thus, there is the potential to simply define $A \from B$ as an abbreviation for $A^c \to B^c$.  Then, the symmetry between \rlnm{AppL} and \rlnm{AppR}, and the symmetry between \rlnm{FixsR} and \rlnm{FixnR} can be exploited to derive necessity from sufficiency + complement:
\[
  \prftree[l]{\rlnm{CompL}}{
    \prftree[l]{\rlnm{AppR}}{
        \Gamma \types M : B \from A,\,\Delta
    }
    {
      \prftree[l]{\rlnm{CompR}}{
        \Gamma,\,N:B \types \Delta
      }
      {
        \Gamma \types N : B^c,\,\Delta
      }
    }
    {
      \Gamma \types M\,N : A^c,\,\Delta
    }
  }
  {
    \Gamma,\,M\,N:A \types \Delta
  }
  \qquad
  \prftree[l]{\rlnm{FixsR}}{
    \prftree[l]{\rlnm{CompL}}{
       \prftree[l]{\rlnm{CompR}}{
         \Gamma,\,f:B \from A,\,M:B \types x:A,\,\Delta
       }
       {
        \Gamma,\,f:A \from B \types M:B^c,\,x:A,\,\Delta
       }
    }
    {
      \Gamma,\,f:A \from B,\,x:A^c \types M : B^c,\,\Delta
    }
  }
  {
    \Gamma \types \fixtm{f(x)}{M} : A \from B,\,\Delta
  }
\]
Thus, in the success system, the sufficiency arrow in combination with complements provides a complete treatment of the necessity arrow, and we could dispense with \rlnm{FixnR} and \rlnm{AppL} if desirable.  

\subsection{The one-sided system}

We can go further by exploiting a key symmetry of the typing rules that we have presented so far.  

\begin{definition}[Variable typings and type environments.]\label{def:type-envs}
  Let us say that a typing formula $M:A$ is a \emph{variable typing} just if $M$ is a variable, and otherwise it is a \emph{non-variable typing}.  Let us say that an environment $\Gamma$ is a \emph{type environment} just in case every formula therein is a variable typing.  
\end{definition}  

In each of the rules we have introduced in Sections~\ref{sec:prog-lang} and \ref{sec:okay}, and so far in \ref{sec:complements}, observe the following.  If the conclusion has at most one non-variable typing formula, then each of the hypotheses will have at most one non-variable typing formula too.
For example, one can see by inspection of Examples~\ref{ex:add-typing} and \ref{ex:twice-typing} of Section~\ref{sec:prog-lang} that, in each judgement of the respective proof trees, at most one typing formula is non-variable.  


When a judgment contains at most one non-variable typing then, in the success semantics, it is equivalent to a judgement in which a non-variable typing is the only typing formula on the right-hand side of the turnstile.  Thus, in the success type system, we are able to \emph{normalise} the typing rules by using \rlnm{CompL} and \rlnm{CompR} to exchange the positions of formulas between the two sides, until they form a traditional, one-sided type system.  In fact, the use of complements allows us to eliminate further redundancies, and so we obtain a smaller system.

\begin{figure}
  \[
    \begin{array}{c}
    \prftree[l]{\rlnm{Ok}}{\Gamma \types M : \okty}
    \qquad
    \prftree[l]{\rlnm{OkC1}}{\Gamma,\,x:\okty^c \types M : A}
    \qquad
    \prftree[l]{\rlnm{OkC2}}{\Gamma \types M : \okty^c}{\Gamma \types M : A}
    \\[5mm]
    \prftree[l]{\rlnm{Contra}}{\Gamma,\,x:A,\,x:A^c \types M : A}
    \qquad
    \prftree[l]{\rlnm{Var}}{\Gamma,\,x:A \types x:A}
    \qquad
    \prftree[l,r]{$B \distype A$}{\rlnm{Disj}}{\Gamma \types M : B}{\Gamma \types M : A^c}
    \\[5mm]
    \prftree[l]{\rlnm{Zero}}{\Gamma \types \zerotm : \intty}
    \qquad
    \prftree[l]{\rlnm{Succ1}}{\Gamma \types M : \intty}{\Gamma \types \succtm(M) : \intty}
    \qquad
    \prftree[l]{\rlnm{Succ2}}{\Gamma \types M : \intty^c}{\Gamma \types \succtm(M) : A}
    \\[5mm] 
    \prftree[l]{\rlnm{Pred1}}{\Gamma \types M : \intty}{\Gamma \types \predtm(M) : \intty}
    \qquad
    \prftree[l]{\rlnm{Pred2}}{\Gamma \types M : \intty^c}{\Gamma \types \predtm(M) : A}
    \\[5mm]
    \ch{\prftree[l]{\rlnm{Fix}}{\Gamma,\,f:A \to B,\,x:A \types M : B}{\Gamma \types \fixtm{f(x)}{M} : A \to B}}
    \qquad
    \prftree[l]{\rlnm{Let3}}{\Gamma \types N : A}{\Gamma \types \lettm{(x,y)}{M}{N} : A}
    \\[5mm]
    \prftree[l]{\rlnm{Let2}}{\begin{array}{c}\Gamma \types M:(B_1 \times B_2)^c\\\Gamma,\,x_i:B_i^c \types N :A\;(\forall i)\end{array}}{\Gamma \types \lettm{(x_1,x_2)}{M}{N} : A}
    \qquad
    \prftree[l]{\rlnm{Let1}}{\begin{array}{c}\Gamma \types M : B \times C\\\Gamma,\,x_1:B,\,x_2:C \types N:A\end{array}}{\Gamma \types \lettm{(x_1,x_2)}{M}{N} : A}
    \\[5mm]
    \prftree[l]{\rlnm{App1}}{\Gamma \types M : B \to A}{\Gamma \types N : B}{\Gamma \types M\,N : A}
    \qquad
    \prftree[l]{\rlnm{App2}}{\Gamma \types M : (\okty^c \to A)^c}{\Gamma \types M\,N : A}
    \qquad
    \prftree[l]{\rlnm{App3}}{\Gamma \types N : \okty^c}{\Gamma \types M\,N : A}
    \\[5mm]
    \prftree[l]{\rlnm{Pair1}}{\Gamma \types M : A}{\Gamma \types N : B}{\Gamma \types (M,\,N) : A \times B}
    \qquad
    \prftree[l]{\rlnm{Pair2}}{\Gamma \types M_i : \okty^c}{\Gamma \types (M_1,M_2) : A}
    \qquad
    \prftree[l]{\rlnm{Pair3}}{\Gamma \types M_i : A_i^c}{\Gamma \types (M_1,M_2) : (A_1 \times A_2)^c}
    \\[5mm]
    \prftree[l]{\rlnm{IfZ1}}{\Gamma \types M : \intty^c}{\Gamma \types \iftm{M}{N}{P} : A}
    \qquad
    \prftree[l]{\rlnm{IfZ2}}{\Gamma \types N:A}{\Gamma \types P:A}{\Gamma \types \iftm{M}{N}{P} : A}
    \end{array}
  \]
  \caption{One-Sided Type Assignment}\label{fig:one-sided-ta}
\end{figure}

\begin{definition}[One-sided type system] 
  A \emph{judgement} of the one-sided system is a pair $\Gamma \types M:A$ of a type environment $\Gamma$ and a typing $M:A$.  Provability is defined by the rules of Figure~\ref{fig:one-sided-ta}.  In rules \rlnm{Let1}, \rlnm{Let2}, \rlnm{Let3}, \rlnm{Abs} we require that the bound variables are not mentioned in $\Gamma$.
\end{definition}

In many cases, the rules of Figure~\ref{fig:one-sided-ta} derive from their two-sided counterparts simply by normalising the judgements involved, possibly by introducing complements.  Rule \rlnm{Disj} arises from \rlnm{Dis} this way.  Normalisation often introduces a complemented typing $M:A^c$ in the conclusion, but in most cases it is sound to generalise to $M:A$.  An example is deriving \rlnm{IfZ1} from \rlnm{IfZL1}.

The rule \rlnm{IfZ2} arises in this way, and it subsumes both \rlnm{IfZR} and \rlnm{IfZL2}.  However, the soundness of this one-sided rule is not obvious at first glance.  It is almost the same as the right side rule \rlnm{IfZR}, except that \rlnm{IfZR} has a third premise.  In \rlnm{IfZ2} this premise is absent, and we need only show that each branch has type $A$ in order to conclude that the whole conditional has type $A$ -- \emph{whether or not the guard is of type $\intty$}.  However, this \emph{is sound} for the success semantics, because if the guard normalises to something other than a numeral, then the conditional will go wrong, and thus satisfies any success type.

The rules \rlnm{Ok}, \rlnm{OkC} and \rlnm{Contra} represent structural features of the success semantics; namely that every term satisfies $\mngTs{\okty}$, that no term satisfies $\mngT{\okty^c}$ and that $\mngT{A} \cap \mngT{A^c}$ is empty for every type $A$.  
Rule \rlnm{Contra} is necessary in order to account for the possibility that a two-sided judgment is provable due to a contradiction among the variable typings.  For example, the two-sided judgment $x:\intty,\,M:A \types x:\intty$ is clearly provable using \rlnm{Id}, but when normalised to a one-sided judgement, it becomes $x:\intty,\,x:\intty^c \types M:A$ and \rlnm{Var} is not applicable.


\begin{theorem}\label{thm:two-sided-subsumed}
  Suppose $\Gamma$ and $\Delta$ are type environments.
  \begin{itemize}
    \item If $\Gamma,\,M:A \types \Delta$ in the two-sided system, then $\Gamma \cup \Delta^c \types M:A^c$ in the one-sided system.
    \item If $\Gamma \types M:A,\,\Delta$ in the two-sided system, then $\Gamma \cup \Delta^c \types M:A$ in the one-sided system.
  \end{itemize}
\end{theorem}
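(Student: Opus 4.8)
The plan is to prove both clauses \emph{simultaneously} by structural induction on the two-sided derivation. The two clauses are genuinely mutually recursive: a right rule such as \rlnm{AbnR} has a premise whose only non-variable formula sits on the \emph{left}, while a left rule such as \rlnm{AppL} has a premise whose non-variable formula sits on the \emph{right}, so neither clause can be established in isolation. The structural fact that makes the induction well-founded is the invariant already noted before the theorem: if the conclusion of a rule carries at most one non-variable typing, then so does every premise. Since $\Gamma$ and $\Delta$ are type environments, the only possibly-non-variable formula in any judgement we meet is the designated $M:A$; hence every premise again has the shape $\Sigma,\,N:B \types \Pi$ or $\Sigma \types N:B,\,\Pi$ with $\Sigma,\Pi$ type environments, to which the induction hypothesis applies.

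The dictionary underlying both clauses is uniform: keep $\Gamma$ as the one-sided context, move each right-hand variable typing $y:C$ into the context as its complement $y:C^c$ (the effect of $\Delta \mapsto \Delta^c$), and record the surviving non-variable typing as the one-sided goal — uncomplemented, $M:A$, when it stood on the right, and complemented, $M:A^c$, when it stood on the left. With this dictionary almost every two-sided rule maps to a single one-sided rule by instantiation. For the second clause the goals are uncomplemented and the images are the evident rules (\rlnm{AppR} to \rlnm{App1}, \rlnm{AbsR} to \rlnm{Abs}, \rlnm{PairR} to \rlnm{Pair1}, \rlnm{IfZR} to \rlnm{IfZ2} with its guard premise discarded); for the first clause the goals are complemented and land on the complemented rules (\rlnm{Dis} to \rlnm{Disj}, \rlnm{PairL} to \rlnm{Pair3}, \rlnm{SuccL} and \rlnm{OkSL} both to \rlnm{Succ2} at $\intty^c$ and $\okty^c$ respectively, \rlnm{IfZL1} to \rlnm{IfZ1}, \rlnm{IfZL2} to \rlnm{IfZ2} at $A^c$, \rlnm{OkL} to \rlnm{OkC2}). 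Two recurring ingredients deserve mention. First, the necessity arrow must be read as a sufficiency arrow over complements, rewriting $B \from A$ as $B^c \to A^c$ — precisely the reading used in the paper's own derivations of \rlnm{AppL} and \rlnm{AbnR} — so that \rlnm{AppL} collapses onto \rlnm{App1} and \rlnm{AbnR} onto \rlnm{Abs}. Second, \rlnm{Id} splits into two subcases: when the identified variable supplies the goal it closes with \rlnm{Var}, but when it instead occurs on both sides of the two-sided sequent the translation places $x:B$ and $x:B^c$ together in the context, and \rlnm{Contra} is exactly what discharges this — which is the reason \rlnm{Contra} is present. The base axioms \rlnm{OkVarR} and \rlnm{OkR} both map to \rlnm{Ok}, with the two-sided premise simply dropped.

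I expect the genuine difficulty to lie with the rules that force a complement to be \emph{manipulated} rather than merely introduced, and with \rlnm{OkApL1} above all. Here the two-sided premise is $M : \okty \from A$ for an $A$ that never appears in the conclusion $M\,N : \okty$, whereas its one-sided image \rlnm{App2} threads one type through premise and conclusion. Concretely, the hypothesis delivers a one-sided derivation of $M : (\okty^c \to A^c)^c$, but to reach the required goal $M\,N : \okty^c$ the rule \rlnm{App2} needs $M : (\okty^c \to \okty^c)^c$; instantiating \rlnm{App2} at $A^c$ instead yields only the strictly weaker $M\,N : A^c$. What rescues the case is semantic — for every $A$ the type $\okty \from A$ denotes all abstractions, so $(\okty \from A)^c$ is the set of non-functions \emph{independently of $A$} — but to use this inside a syntactic subsumption argument one must first prove a small uniformity lemma: one-sided derivability of $M : (\okty^c \to A)^c$ entails that of $M : (\okty^c \to A')^c$ for every $A'$. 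Granting this re-instantiation, \rlnm{App2} applies at $\okty^c$ and the case closes.

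Accordingly, I would begin by isolating and dispatching the handful of complement-algebra lemmas that the dictionary quietly relies on — complement involution $A^{cc} \equiv A$ (needed so that \rlnm{CompL} and \rlnm{CompR} translate to near-identities) and the uniformity-in-$A$ of $(\okty \from A)^c$ just described — since these are the only non-mechanical ingredients. The remaining cases then reduce to checking, rule by rule, that the translated premises are exactly the premises of the matching one-sided rule under the stated instantiation, which is routine given the invariant. The main obstacle is thus not the breadth of the case analysis but the reconciliation of the optimised, merged one-sided rules (especially the $\okty$-application rule) with two-sided rules in which a bound type is embedded inside a function type and so fails to surface in the conclusion.
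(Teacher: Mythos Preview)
Your plan is exactly the paper's: mutual induction on the two-sided derivation with the rule-by-rule dictionary you describe, and your treatment of the routine cases (including the \rlnm{Id}/\rlnm{Contra} split) matches theirs.

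You are also right to isolate \rlnm{OkApL1} as the delicate case --- sharper, in fact, than the paper, which simply writes ``the result follows from \rlnm{App2}'' without resolving the very type mismatch you spell out. But the uniformity lemma you propose --- that one-sided derivability of $M:(\okty^c\to A)^c$ is independent of $A$ --- is not provable in the one-sided system as given, and in fact fails: with context $\{x:(\okty^c\to\intty^c)^c\}$, \rlnm{Var} yields $x:(\okty^c\to\intty^c)^c$, yet nothing yields $x:(\okty^c\to\okty^c)^c$ unless \rlnm{Disj} is read as firing when the derived type $B$ is a complement. Whether it does depends on how Definition~\ref{def:disjointness} extends to the enlarged type language; the literal reading (a complement is ``not an arrow'', hence disjoint from every arrow) would validate your lemma but also makes \rlnm{Disj} unsound --- it licenses $\intty\distype(\intty\times\intty)^c$ and thus $\pn{0}:(\intty\times\intty)^{cc}$. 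Under the restricted reading, the lemma fails, and the theorem with it: the two-sided sequent $x\,\pn{0}:\okty\types x:\okty\from\intty$ is derivable by \rlnm{OkApL1} over \rlnm{Id}, yet its translation $\{x:(\okty\from\intty)^c\}\types x\,\pn{0}:\okty^c$ is not derivable one-sidedly (\rlnm{App2} at $\intty^c$ gives only $x\,\pn{0}:\intty^c$; \rlnm{App1} and \rlnm{App3} are inapplicable). The clean repair is to the rule, not the proof: decouple the two type occurrences in \rlnm{App2} so that the premise reads $M:(\okty^c\to B)^c$ for a $B$ independent of the conclusion's type --- this remains sound in the success semantics and closes the case directly. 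Your side-remark on complement involution for \rlnm{CompL} flags a second, milder gap of the same flavour that the paper likewise passes over as ``immediate''.
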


\begin{example}
Consider again the judgement $\types \abbv{twice} : (A \from A) \to A \from A$, which we proved as Example~\ref{ex:twice-typing} of Section~\ref{sec:prog-lang}.  As in that example, we use $\Gamma$ to abbreviate $\{f:A \from A\}$:
\[
  \prftree[l]{\rlnm{Abs}}{
    \prftree[l]{\rlnm{Abs}}{
      \prftree[l]{\rlnm{Var}}{
        \prftree[l]{\rlnm{Var}}{\Gamma,\,x:A^c \types f : A \from A}
      }
      {
        \prftree[l]{\rlnm{App}}
        {
          \prftree[l]{\rlnm{Var}}{\Gamma,\,x:A^c \types f : A \from A}
        }
        {
          \prftree[l]{\rlnm{Var}}{\Gamma,\,x:A^c \types x : A^c}
        }
        {
          \Gamma,\,x:A^c \types f\,x:A^c
        }
      }
      {
        \Gamma,\,x:A^c \types f\,(f\,x):A^c
      }
    }
    {
      \Gamma \types \abs{x}{f\,(f\,x)} : A \from A
    }
  }
  {
    \types \abs{fx}{f\,(f\,x)} : (A \from A) \to A \from A
  }
\]
\end{example}

We prove syntactic soundness for the one-sided system using a progress and preservation argument.  In fact, progress as usually understood is not necessary -- since we allow going wrong on the right, there is no requirement to show that a subject in normal form is a value, merely that values have sensible types.  We get the syntactic soundness of the two-sided system under the success semantics as a corollary via Theorem~\ref{thm:two-sided-subsumed}.  

Let us say that a closed term $M$ is \emph{ill-typed} in the one-sided system just if $\types M : \okty^c$.
\begin{theorem}[One-Sided Syntactic Soundness]\label{thm:failures-soundness}
  If $M$ is ill-typed, then $M$ does not evaluate.
\end{theorem}
\section{A Constrained Type System}\label{sec:constrained-typing}

\chA{
We now depart from the study of our PCF-like language, to introduce a two-sided \emph{constrained} type system.
Our aim is to illustrate how one can take the core of an existing one-sided system, which is useful in practice, and extend it to a two-sided system, whilst retaining its good properties (such as its true negatives theorem\footnote{For this reason, we will use a standard CBV semantics and not the success semantics of the previous section.} and computable type inference).

Starting from the two-sided rules for pure terms with fixpoints of Figure~\ref{fig:core-rules}, we add type constraints, top-level definitions, term constants for algebraic datatypes and pattern-matching, and types corresponding to datatype constructors.  This gives us a two-sided system whose features are similar to those of \citet{aiken-et-al-popl1994,jones-ramsay-popl2021,marlow-wadler-icfp1997}, which are capable of precise reasoning about datatypes and matching.
}


\begin{definition}\label{def:pat-match-terms}
  We assume a denumerable set of \emph{top-level identifiers}, written typically as $\mathsf{f},\mathsf{g}$, and a denumerable set of \emph{local variables}, typically $f,g,x,y,z$.  We also fix a finite signature $\calC$ of ranked datatype \emph{constructors}, typically $c$.  We will always assume that the binary pair constructor, written mixfix as $(\,,\,)$ is in the signature.  We consider various kinds of program expressions:
  \[
    \begin{array}{rcl}
      p,\,q &\Coloneqq& c(x_1,\,\ldots,\,x_m) \\
      M,\,N,\,P,\,Q &\Coloneqq& x \mid c(M_1,\,\ldots,\,M_m) \mid M\,N \mid \ch{\fixtm{f(x)}{M}} \mid \matchtm{M}{|_{i=1}^k (p_i \mapsto P_i)}\\
      V,\,W &\Coloneqq& x \mid c(V_1,\ldots,V_n)\\
      \mathcal{M} &\Coloneqq& \epsilon \mid f = M;\,\mathcal{M}
    \end{array}
  \]
  As usual, we identify terms up to renaming of bound variables.
  We consider a term to be \emph{closed} just if it contains no free variables (but it may contain top-level function identifiers).
  \ch{As before, we consider $\abs{x}{M}$ as an abbreviation for $\fixtm{f(x)}{M}$ when $f$ is not free in $M$.}

  We make the following additional requirements: (i) in a pattern $c(x_1,\ldots,x_n)$ the arity of $c$ is $n$ and the $x_i$ are pairwise distinct, (ii) in a pattern-match term $\matchtm{M}{|_{i=1}^k (p_i \mapsto P_i)}$, the family $(p_i)_{i=1}^k$ is orthogonal (i.e. for every distinct $i$ and $j$, $p_i$ and $p_j$ are headed by distinct constructors) and \emph{all} bound variables throughout the family have distinct names, (iii) in module $f_1=M_1;\,\ldots;\,f_n=M_n$,  each $f_i$ is distinct (there are not two definitions for the same top-level identifier).  
  
  Thus, we will feel free to index pattern families by the set of constructors, say $I$, that head their alternatives $|_{c \in I}\, c(\bar{x_c})$, and sometimes treat a module $\mathcal{M}$ as a partial map from identifiers to terms.
  \end{definition}

  The first category are the \emph{patterns}, which are required to be \emph{simple} in the sense of being shallow and constructor headed.  
  The second category are the \emph{terms}.  As in our PCF-like language, we have an applied  $\lambda$-calculus with fixpoints, but this time we are interested in datatype constructors and pattern matching and thus we have constants accordingly.  Thirdly, we have \emph{modules} $\calM$, which are just sequences of definitions for top-level identifiers.  Finally, \emph{values} are either variables, fixpoint abstractions or datatype constructor terms built entirely from values.

  \begin{remark}\label{rem:simple-matching}
  The requirement for simple matching is to ease the presentation.  We could allow arbitrary patterns at the expense of more complex subtype and consistency checking.  Since this is not the focus of our work and is anyway well-covered elsewhere, we will be content with the simple case.  Note also that the rank (arity) of a constructor is built into the syntax, so we will never consider the possibility of supplying a constructor with an inappropriate number of arguments since it is not considered a well-formed term.
  \end{remark}

  \begin{example}\label{ex:head-map-defns}
  Let us assume a constructor signature $\calC$, which, in addition to the binary pair constructor $(\,,\,)$, contains the binary list constructor `cons', written infix $::$, and the nullary empty list constructor `nil', written $[]$.  We can define the \textsf{head} and \textsf{map} functions by:
  \[
    \renewcommand{\arraystretch}{1.1}
    \begin{array}{rcl}
      \mathsf{head} &=& \abs{xs}{\matchtm{xs}{y::ys \mapsto y}}\\
      \mathsf{map} &=& \fixtm{m(f)}{\abs{xs}{\matchtm{xs}{[] \mapsto [] \mid y::ys \mapsto f\,y :: m\,f\,ys}}}
    \end{array}
  \]
  \end{example}

  As before, we adopt a call-by-value reduction strategy and reuse the notation\footnote{Actually the one-step relation is parametrised by a module $\calM$, but we leave this implicit.} $M \ped N$.  The sequence of arguments to a datatype constructor is evaluated from left to right.  Due to the requirement for shallow and orthogonal patterns, matching can be executed by simply indexing into the pattern family.  A full definition is available in \iftoggle{supplementary}{Appendix~\ref{apx:syntactic-soundness}}{the long version of the paper \cite{ramsay2023illtyped}}.

\subsection{Types and type constraints}

Our type system is designed for reasoning about the shape of datatype constructions.  It is a \emph{constrained type system} (see e.g. \cite{odersky-sulzmann-wehr-TSPOS1999}), so each judgement is parametrised by a set of type constraints $C$, which restrict the possible instantiations of type variables.  

Our types are a cut-down version of the types defined in the constrained type system of \citet{aiken-et-al-popl1994}, in that for each datatype constructor $c$ of arity $n$, we can form a corresponding constructor type $c(A_1,\ldots,A_n)$.  Intuitively, this type represents the set of all values of shape $c(V_1,\ldots,V_n)$, where each $V_i$ has type $A_i$.  We allow for universal polymorphism through the construction of \emph{constrained type schemes}, $\forall \vv{a}.\,C \Rightarrow A$.  Intuitively, a top-level function that has a such a scheme is guaranteed to behaves like $A[\vv{B}/\vv{a}]$ for each instantiation $\vv{B}$ of the type variables $\vv{a}$ that satisfy the constraints $C$.

  \begin{definition}
    We assume a denumerable collection of \emph{type variables} $a$, $b$ and so on.  The types are stratified as follows:
  \[
    \begin{array}{rrcl}
    \rlnm{Sum Types} & K &\Coloneqq& \Sigma_{i=1}^k c_i(\bar{A_i}) \\
    \rlnm{Monotypes} & A,\,B &\Coloneqq& a \mid K \mid \okty \mid A \to B \mid A \from B\\
    \rlnm{Type Schemes} & S &\Coloneqq& A \mid \forall \bar{a}.\,C \Rightarrow A
    \end{array}
  \]
  Here $C$ (and sometimes $D$) is a finite set of \emph{type constraints} (also \emph{subtype formulas}), each of shape $A \subtype B$ (i.e. between monotypes).  We write $A \equiv B$ as an abbreviation for the two constraints $A \subtype B$ and $B \subtype A$.  We identify type schemes up to the renaming of bound type variables and we assume that arrows associate to the right.  We require that type schemes are \emph{closed} in the sense that they have no free type variables.  

  We consider the sum type $\Sigma_{i=1}^k c_i(\bar{A_i})$ as a finite set $\{c_1(\bar{A_1}),\ldots,c_k(\bar{A_k})\}$, and we require that the elements are orthogonal in the sense that $c_i = c_j$ implies $i=j$.  See also Remark~\ref{rem:simple-matching}.  In examples, We will often write a particular sum type $\Sigma_{i=1}^3\,c_i(\vv{A_i})$ as a list of summands $c_1(\vv{A_1}) + c_2(\vv{A_2}) + c_3(\vv{A_3})$.  In particular, when the sum is a singleton $\Sigma_{i=1}^1 c_i(\vv{A_i})$, which is quite typical, we just write $c_1(\vv{A_1})$.
  \end{definition}

  We don't have an explicit notion of recursive types with which to type recursively defined data structures.  However, as is well known, our constructor types together with (recursive) type constraints can capture the same notion implicitly.  For example, our \textsf{map} function from Example~\ref{ex:head-map-defns} can be assigned the type:
  \[
    \begin{array}{l}
      \mathsf{map} : \forall a\,b\,\ell_a\,\ell_b.\, C \Rightarrow (a \to b) \to \ell_a \to \ell_b\\
      \qquad \text{where}\; C = \{\; \ell_a \equiv [] + (a::\ell_a),\,\ell_b \equiv [] + (b :: \ell_b)  \;\}
    \end{array}
  \]
  This type says that $\mathsf{map}$ takes a function from $a$ to $b$ and a list of $a$ and returns a list of $b$.  Intuitively, the type `list of $a$' is described by the type variable $\ell_a$ under the constraint that $\ell_a \equiv [] + (a::\ell_a)$ and similarly for `list of $b$' with $\ell_b$ constrained so that $\ell_b \equiv [] + (b :: \ell_b)$.

  \begin{figure}
    \[
      \begin{array}{c}
        \prftree[l]{\rlnm{IdS}}{C,\,A \subtype B \types A \subtype B}
      \qquad
        \prftree[l]{\rlnm{TrS}}{C \types A_1 \subtype A_2}{C \types A_2 \subtype A_3}{C \types A_1 \subtype A_3}
      \\[5mm]
        \prftree[l]{\rlnm{ToS}}{C \types A' \subtype A}{C \types B \subtype B'}{C \types A \to B \subtype A' \to B'}
      \qquad
        \prftree[l]{\rlnm{FrS}}{C \types A \subtype A'}{C \types B' \subtype B}{C \types A \from B \subtype A' \from B'}
      \\[5mm]
      \prftree[l]{\rlnm{OkS}}{C \types A \subtype \okty}
      \qquad
      \prftree[l,r]{$I \subseteq J$}{\rlnm{SmS}}{C \types A_{i,c} \subtype B_{i,c} \; (\forall c \in I, \forall i \in [1..n_c])}{C \types \Sigma_{c \in I}\,c(A_{1,c},\ldots,A_{n_c,c}) \subtype \Sigma_{d \in J}\,d(B_{1,d},\ldots,B_{n_b,b})}
      \end{array}
    \]
    \caption{Constrained subtyping.}\label{fig:constrained-sub}
  \end{figure}

  In a constrained type system, it is usual to define subtyping with respect to a context containing subtyping formulas, and so we have the following.

  \begin{definition}
    A \emph{subtyping judgement} is a triple $C \types A \subtype B$ in which $C$ is a set of type constraints and $A \subtype B$ is a type constraint.  Provability is defined using the rules of Figure~\ref{fig:constrained-sub}.  We extend the notion of provability to sets of constraints, writing $C \types C'$ just if $C \types A \subtype B$ for every $A \subtype B \in C'$.
  \end{definition}

  The rule \rlnm{IdS} allows for justification with respect to the context, and the rule \rlnm{TrS} ensures closure under transitivity of subtyping.  Rule \rlnm{ToS} gives the familiar relationship between sufficiency arrow types and \rlnm{FrS} describes the dual relationship between necessity arrow types.  Rule \rlnm{OkS} puts $\okty$ at the top of the subtype ordering.  Finally, the rule \rlnm{SmS} allows for subtyping between sum types.  It says that a sum type $K_1$ is a subtype of $K_2$ just if whenever $c(A_1,\ldots,A_n)$ is a summand of $K_1$, then there is a summand of shape $c(B_1,\ldots,B_n)$ in $K_2$ and, moreover, the argument types are covariantly related.  For example, using the constructors from Example~\ref{ex:head-map-defns}, we have $\types [] \subtype [] + (\okty :: \okty)$ and $a \subtype b \types (a::[]) \subtype [] + (b :: [])$.

  \begin{figure}\vspace{1ex}
    \begin{mdframed}[topline=false,innertopmargin=-.84ex,innerleftmargin=-.1ex,innerrightmargin=0ex]
    \rulediv{20pt}{Structural}\\[1mm]
    \[
      \begin{tblr}{Q[c,10em]Q[c,13em]Q[c,13em]}
        \prftree[l]{\rlnm{VarK}}{\Gamma \types x : \okty}
        &
        \SetCell[c=2]{c} \prftree[l, r]{$\Gamma \types C[\bar{B}/\bar{a}]$}{\rlnm{GVar}}{\Gamma,\,f:\forall \bar{a}. C \Rightarrow \,A \types f:A[\bar{B}/\bar{a}]}
      \\[4mm]
      \prftree[l]{\rlnm{Id}}{\Gamma,\,x:A \types x:A}
      &
        \prftree[l,r]{$\Gamma \types A \subtype B$}{\rlnm{SubL}}{\Gamma,\,M:B \types \Delta}{\Gamma,\,M:A \types \Delta}
      &
        \prftree[l,r]{$\Gamma \types B \subtype A$}{\rlnm{SubR}}{\Gamma \types M : B}{\Gamma \types M : A}
      \end{tblr}
    \]
    \\[4mm]
    \rulediv{20pt}{Functions}\\[1mm]
    \[
      \begin{tblr}{Q[c,17em]Q[c,17em]}
        \ch{\prftree[l]{\rlnm{FixsR}}{\Gamma,\,f:A \to B,\,x:A \types M : B}{\Gamma \types \fixtm{f(x)}{M} : A \to B}}
      &
        \ch{\prftree[l]{\rlnm{FixnR}}{\Gamma,\,f:A \from B,\,M:B \types x:A}{\Gamma \types \fixtm{f(x)}{M}:A \from B}}
      \\[5mm]
        \prftree[l]{\rlnm{AppL}}{\Gamma \types M : B \from A}{\Gamma,\,N:B \types \Delta}{\Gamma,\,M\,N:A \types \Delta}
      &
        \prftree[l]{\rlnm{AppR}}{\Gamma \types M : B \to A}{\Gamma \types N : B}{\Gamma \types M\,N : A}
      \end{tblr}
    \]
    \\[4mm]
    \rulediv{25pt}{Constructors}\\[2mm]
    \[
      \begin{array}{c}
        \prftree[l]{\rlnm{CnsL}}{\Gamma,\, M_i : A_i \types \Delta}{\Gamma,\,c(M_1,...\,,M_m) : c(A_1,...\,,A_m) + K \types \Delta}
      \qquad
        \prftree[l]{\rlnm{CnsR}}{\Gamma \types M_i : A_i \;(\forall i)}{\Gamma \types c(M_1,...\,,M_m) : c(A_1,...\,,A_m)}
      \end{array}
    \]
    \\[3mm]
    \rulediv{20pt}{Matching}\\[2mm]
    \[
      \begin{array}{c}
        \prftree[l]{\rlnm{MchR}}{\Gamma \types M : \Sigma_{i=1}^k\, p_i[\vv{B_x/x}]}{\Gamma \cup \{x:B_x \mid x \in \fv(p_i)\} \types P_i : A \; (\forall i)}{\Gamma \types \matchtm{M}{|_{i=1}^k (p_i \mapsto P_i)} : A}
        \\[5mm]
        \prftree[l]{\rlnm{MchL}}{\Gamma,\,P_i : A \types x : B_x \; (\forall i.\forall x\in\fv(p_i))}{\Gamma,\,(M, P_i):(p_i[\vv{B_x/x}], A) \types \Delta \; (\forall i)}{\Gamma,\,\matchtm{M}{|_{i=1}^k (p_i \mapsto P_i)} : A \types \Delta}
      \end{array}
    \]
    \\[4mm]
    \rulediv{20pt}{Evaluation}\\[2mm]
    \[
      \prftree[l]{\rlnm{CnsK}}{\Gamma,\,M_i:\okty \types \Delta}{\Gamma,\,c(M_1,...\,,M_n) : \okty \types \Delta}
      \qquad
      \prftree[l]{\rlnm{FunK}}{\Gamma,\,M:\okty \from A \types \Delta}{\Gamma,\,M\,N:\okty \types \Delta}
    \]
    \\[3mm]
    \rulediv{24pt}{Disjointness}\\[1mm]
    \[
      \prftree[l,r]{\begin{tabular}{l}$A$ an arrow, or shape\\ $\Sigma_{d\in I}\,d(...)$ with $c \notin I$\end{tabular}}{\rlnm{CnsDL}}{\Gamma,\,c(M_1,...\,,M_n) : A \types \Delta}
      \quad
      \ch{\prftree[l]{\rlnm{FixDL}}{\Gamma,\,\fixtm{f(x)}{M} : K \types \Delta}}
    \]\\[-2mm]
    \end{mdframed}
    \caption{Constrained type assignment.}\label{fig:constrained-ta-terms}
  \end{figure}
  
  

  \subsection{Type assignment}
  Finally, we have the two-sided type system itself.  In the interests of making the syntactical soundness proof and type inference as straightforward as possible (by making the system as close to a traditional type system as possible), we present the type system as an \emph{intuitionistic} sequent calculus, in the sense that there will be allowed \emph{at most one} formula on the right hand side.  We choose not to have an explicit component in the judgement for constraints in order to simplify the notation.

  \begin{definition}[Typing Formulas and Type Assignment]
  A \emph{typing formula} is a pair of shape either $M : A$ or $\mathsf{f} : S$, with $M$ a term, $A$ a monotype, $\mathsf{f}$ a top-level identifier and $S$ a type scheme.
  A \emph{typing judgment} of the system consists of a pair $\Gamma \types \Delta$ in which $\Gamma$ is a finite set of typing and subtype formulas and $\Delta$ is a finite set of typing formulas of shape $M:A$ and whose size is at most 1.  For brevity, by some abuse, we will write $\Gamma$ even for the subset $\{A \subtype B \mid A \subtype B \in \Gamma\}$ of subtype constraints contained therein.
  
  The rules of the type system are given in Figure~\ref{fig:constrained-ta-terms}.  We additionally require that: (i) in the rules \rlnm{FixsR}, \rlnm{FixnR}, \rlnm{MchL}, \rlnm{MchR} and \rlnm{FixR}, the bound variables in the principal subject of the conclusion do not occur in $\Gamma$ or $\Delta$; and (ii) in the rule \rlnm{GVar}, the vector of types $\vv{B}$ has the same length as the vector of type variables $\vv{a}$; and (iii) the rules \rlnm{CnsK} and \rlnm{CnsL} require that $n > 0$ and $1 \leq i \leq n$.
\end{definition}

Many of the typing rules of Figure~\ref{fig:constrained-ta-terms} are recognisable from Sections~\ref{sec:prog-lang} and \ref{sec:okay}, so we will just comment on new aspects.  First, we have separate typing rules for local variables (introduced by abstractions and pattern-match cases) \rlnm{Id}, and top-level identifiers \rlnm{GVar}.  Top-level identifiers can be assumed to have polymorphic type schemes $\forall \vv{a}.\,C \Rightarrow A$, so the \rlnm{GVar} rule allows for the instantiation of quantified type variables $\vv{a}$ by a vector of monotypes $\vv{B}$, subject to the requirement that each of the constraints in $C$ is already derivable from the type constraints assumed in $\Gamma$.
The \rlnm{CnsL}, \rlnm{CnsR} and \rlnm{CnsK} rules allow for refuting and affirming the types of constructor-headed terms.  
\chB{
  Note that the \rlnm{CnsR} rule allows for concluding the singleton sum $c(A_1,...,A_m)$, whereas the \rlnm{CnsL} rule allows a more general sum of shape $c(A_1,...,A_m) + K$.  This is because the former may yet be generalised to the larger type $c(A_1,\ldots,A_m) + K$ using \rlnm{SubR} (which would be impossible on the left, since \rlnm{SubL} only allows for concluding smaller types from larger ones).
}

The rules \rlnm{MchL} and \rlnm{MchR} are used for typing pattern-matching on the left and right respectively.  The \rlnm{MchR} rule is relatively standard: one must choose a typing $B_x$ for each pattern-bound variable $x$ (recall from additional requirement (ii) of Definition~\ref{def:pat-match-terms}, that we assume all bound variables throughout the pattern-match term to have distinct names), in such a way that the scrutinee $M$ is inside the sum of the induced pattern types.  These are obtained by taking each pattern case $c(x_1,\ldots,x_n)$ and replacing the free variables $x_i$ by the corresponding type $B_{x_i}$, giving $c(B_{x_1},\ldots,B_{x_n})$. Then one must show that every branch can guarantee an $A$.  

In \rlnm{MchL}, one must first derive a necessary type $B_x$ for each pattern-bound variable $x$, based on how it is used in the corresponding branch of the match.  These types then give rise to a type for the scrutinee, as above.  Then one must show that, for each branch $i$, either the desired conclusion $\Delta$ follows from the fact that the body of the branch $P_i$ has type $A$, or already from the fact that the scrutinee has type $p_i[\vv{B_x/x}]$.  This disjunction is encoded by asserting that the pair $(M,P_i)$ has type $(p_i[\vv{B_x/x}],A)$ so as to maintain the invariant that at most one non-trivial typing is introduced in each premise.  An example of how this is used is at the end of this subsection.

\chB{
Instead of a single disjointness rule \rlnm{Dis}, in this system we have disjointness rules \rlnm{CnsDL} and \rlnm{FixDL} specific to each introduction form (see Section~\ref{subsec:discussion-one} for further discussion of these alternatives).  We make this choice here to simplify type inference: we avoid the non-syntax directed rule \rlnm{Dis}, and we avoid the need to infer disjointness constraints in addition to type constraints.
}

\begin{definition}[Top-level Function Typing]
  Top-level functions are typed according to the rule:
  \[
    \prftree[r]{$\bar{a} = \fv(C) \cup \fv(A)$}{
      \Gamma \cup C \types \calM(\mathsf{f}) : A
    }
    {
      \Gamma \types \mathsf{f} : \forall \vv{a}.\,C \Rightarrow A
    }
  \]
  We write $\types \calM : \Gamma$ just if, (i) every top-level function of $\calM$ appears as a subject of $\Gamma$, and (ii) every top-level function typing $\mathsf{f}:S \in \Gamma$ is properly justified $\Gamma \types \mathsf{f} : S$.
\end{definition}

Since each rule of the system allows for an arbitrary context $\Gamma$ on the left, it satisfies weakening on the left.  Consequently, rules \rlnm{AbsR} and \rlnm{AbnR} of Section~\ref{sec:prog-lang} are admissible.
Consider the \textsf{head} function of Example~\ref{ex:head-map-defns}.  We show that $\mathsf{head}$ requires a cons with an element of type $a$ as input in order to produce an $a$, i.e. $\mathsf{head} : \forall a.\,(a::\okty) \from a$.  The derivation starts:
\[
  \prftree[l]{\rlnm{AbnR}}{
    \matchtm{xs}{y::ys \mapsto y} : a \types xs : (a :: \okty)
  }
  {
    \types \abs{xs}{\matchtm{xs}{y::ys \mapsto y}} : (a::\okty) \from a
  }
\]
Then, according to \rlnm{MchL} we must derive types for the bound variables $y$ and $ys$ based on how they were necessarily used to obtain type $a$ in their branch.  In this case, their branch body is just $y$ and so we can assign $y$ the type $a$ and $ys$, which is not used in the branch, must be given $\okty$.
Thus the scrutinee must have type $a::\okty$.  According to \rlnm{MchL}, we must show that the conclusion, $xs: (a::\okty)$, either follows from this or from the body of the branch.  In this case, it is clear that it follows already from the type of the scrutinee.  This reasoning is captured as:
\[
  \prftree[l]{\rlnm{MchL}}{
    \prftree[l]{\rlnm{Id}}{y:a \types y:a}
  }
  {
    \prftree[l]{\rlnm{VarK}}{y:a \types ys:\okty}
  }
  {
    \prftree[l]{\rlnm{CnsL}}{
      \prftree[l]{\rlnm{Id}}{xs: (a::\okty) \types xs:(a::\okty)}
    }
    {
      (xs,y) : (a::\okty,a) \types xs:(a::\okty)
    }
  }
  {
    \matchtm{xs}{y::ys \mapsto y} : a \types xs : (a :: \okty)
  }
\]

Consider the \textsf{map} function from Example~\ref{ex:head-map-defns}.  We will show that, when \textsf{map} is given a function that requires an $a$ to produce a $b$, then to produce a list of $b$ requires it be given a list of $a$:
\[
  \begin{array}{l}
  \mathsf{map} : \forall a\,b\,\ell_a\,\ell_b.\, C \Rightarrow (a \from b) \to \ell_a \from \ell_b\\
  \qquad\text{where}\;\{\; [] + (a::\ell_a) \subtype \ell_a,\,\ell_b \subtype [] + (b::\ell_b) \;\}
  \end{array}
\]
The derivation begins as we have seen previously using \rlnm{FixsR} and \rlnm{AbnR}.
The key part is to show that, under $\Gamma = C \cup \{ f:a \from b,\,m:(a \from b) \to \ell_a \from \ell_b\}$:
\[
  \Gamma, \matchtm{xs}{[] \mapsto [] \mid y::ys \mapsto f\,y::m\,f\,ys} : \ell_b \types xs : \ell_a
\]
To use the \rlnm{MchL} rule, we first derive types that were necessary for $y$ and $ys$ to produce a value of type $\ell_b$ in the cons branch (we omit some standard right-side reasoning in the latter).  
\[
  \prftree[l]{\rlnm{SubL}}{
    \prftree[l]{\rlnm{CnsL}}{
      \prftree[l]{\rlnm{AppL}}{
        \prftree[l]{\rlnm{Id}}{\Gamma  \types f : a \from b}
      }
      {
        \prftree[l]{\rlnm{Id}}{\Gamma,y:a \types y:a}
      }
      {
        \Gamma,\,f\,y : b \types y : a
      }
    }
    {
     \Gamma,\,(f\,y::m\,f\,ys) : [] + (b::\ell_b) \types y : a
    }
  }
  {
    \Gamma,\,(f\,y::m\,f\,ys) : \ell_b \types y : a
  }
  \quad
  \prftree[l]{\rlnm{SubL}}{
    \prftree[l]{\rlnm{CnsL}}{
      \prftree[l]{\rlnm{AppL}}{
        \prftree{\cdots}
        {  \Gamma \types m\,f : \ell_a \from \ell_b}
      }
      {
        \prftree[l]{\rlnm{Id}}{\Gamma,\,ys:\ell_a \types ys:\ell_a}
      }
      {
        \Gamma,\,m\,f\,ys : \ell_b \types ys : \ell_a
      }
    }
    {
     \Gamma,\,(f\,y::m\,f\,ys) : [] + (b::\ell_b) \types y : \ell_a
    }
  }
  {
    \Gamma,\,(f\,y::m\,f\,ys) : \ell_b \types ys : \ell_a
  }
\]
Then, we show that $xs:\ell_a$ is a necessary consequence of each branch being of type $\ell_b$.  In fact, it follows directly from the induced type of the scrutinee in each (we omit the \rlnm{CnsL} at the root).
\[
    \prftree[l]{\rlnm{SubR}}{
      \prftree[l]{\rlnm{Id}}{\Gamma,\,xs: (a::\ell_a) \types xs : (a::\ell_a)}
    }
    {
      \Gamma,\,xs: (a::\ell_a) \types xs : \ell_a
    }
  \qquad\qquad
    \prftree[l]{\rlnm{SubR}}{
      \prftree[l]{\rlnm{Id}}{\Gamma,\,xs: [] \types xs : []}
    }
    {
      \Gamma,\,xs: [] \types xs : \ell_a
    }
\]

As one final example, we show that $\mathsf{map}$, given a function that requires an $a$ to obtain a $b$, returns a cons with element of type $b$ only if given a cons with element of type $a$, i.e:  
\[
  \mathsf{map} : \forall a\,b. (a \from b) \to (a::\okty) \from (b::\okty)
\]
The key part of the derivation is again the \rlnm{MchL} rule.  Under $\Gamma = \{ f:a \from b,\,m:(a \from b) \to (a::\okty) \from (b::\okty)\}$, we find, as above, that the cons case requires $y$ to be an $a$ (left), but we don't require anything special of $ys$ (right):
\[
  \prftree{\cdots}{\Gamma,\,(f\,y::m\,f\,ys) : \ell_b \types y : a}
  \qquad\qquad
  \prftree[l]{\rlnm{VarK}}{\Gamma,\,(f\,y::m\,f\,ys) : \ell_b \types ys : \okty}
\]
When showing that the desired conclusion follows from the two cases, we need to use the choice offered by the pair $(M,P_i) : (p_i[\vv{B_x/x}],A)$ on the left of the second family of premises of \rlnm{MchL}.  By using \rlnm{CnsL} to choose one component of the pair with which to continue the proof, we can effectively ignore irrelevant cases.
Here, the nil case is excluded by the type $b::\okty$ of the output (left), and the result follows from the type of the scrutinee in the cons case (abbreviating $M = f\,y::m\,f\,ys$ and $\Delta = \{\;xs: (a::\okty)\;\}$):
\[
  \prftree[l]{\rlnm{CnsL}}{
    \prftree[l]{\rlnm{CnsDL}}{
      \Gamma,\,[] : (b::\okty) \types \Delta
    }
  }
  {
    \Gamma,\,(xs,[]) : ([],b::\okty) \types \Delta
  }
  \qquad
  \prftree[l]{\rlnm{CnsL}}{
    \prftree[l]{\rlnm{Id}}{
      \Gamma,\,xs : (a::\okty) \types \Delta
    }
  }
  {
    \Gamma,\,(xs,M) : (a::\okty,b::\okty) \types \Delta
  }
\]
We can now prove the example of ill-typedness in the introduction, the term that takes the head of a list obtained by mapping over the empty list will not evaluate:  $\mathsf{head}\,(\mathsf{map}\, (\abs{x}{x})\, []) : \okty \types$.

\subsection{Constrained Type Inference}
For the purpose of inferring types and, in the next part, proving soundness, we are going to restrict ourselves to judgements that are as close to those in a traditional type system as possible: they will have at most one typing whose subject is not a variable.   The following definition is a variation of Definition~\ref{def:type-envs} suitable for our constrained type system.
\begin{definition}[Type Environments]
  A finite set of formulas $\Gamma$ is said to be a \emph{type environment} just if (i) all the typing formulas contained therein are variable typings $x:A$ and top-level identifier typings $\mathsf{f}:A$, and (ii) if $M:A \in \Gamma$ and $M:B \in \Gamma$ then $A=B$.  A type environment is said to be a \emph{top-level type environment} just if all the typing formulas are top-level identifier typings.
\end{definition}

Inference for the two-sided constrained type system works, broadly, like Hindley-Milner constrained type inference (see e.g. \citep{odersky-sulzmann-wehr-TSPOS1999}), but with two notable differences. The first is that most terms will have multiple `principal' types.  This is because there are several shapes of proof tree with the same subject in the conclusion, but whose inferred types are not necessarily related.  Hence, type inference infers sets of types\footnote{One can observe the same phenomenon in, e.g. intersection type systems.}.  
The other main difference is that our algorithm requires both the left environment $\Gamma$ and the right environment $\Delta$ as input when inferring the types of a term on the left.  
Hence, inference is split into two procedures, $\mathsf{InferL}$ and $\mathsf{InferR}$:
\begin{align*}
    &\mathsf{InferL} \in  \mathcal{P}(\mathsf{Variable} \times \mathsf{Type}) \to \mathsf{Term} \to \mathcal{P}(\mathsf{Variable} \times \mathsf{Type}) \to  \mathcal{P}(\mathsf{Judgement})\\
    &\mathsf{InferR} \in \mathcal{P}(\mathsf{Variable} \times \mathsf{Type}) \to \mathsf{Term} \to  \mathcal{P}(\mathsf{Judgement})
\end{align*}

The call $\mathsf{InferL}(\Gamma)(M)(\Delta)$ returns a finite set of judgements\footnote{We formalise it this way for convenience, but in practice it need only return $C$ and $A$.} $C \cup \Gamma,\,M:A \types \Delta$ that are principal in the sense that all provable judgements with the same left-environment $\Gamma$, subject $M$ and right-environment $\Delta$ arise as a type-substitution instance of one of those in the set, but with a possibly stronger set of assumed type constraints.  The call $\mathsf{InferR}(\Gamma)(M)$ acts similarly on the right.

\newcommand{\eff}{\mathsf{f}}

The algorithms terminate because, modulo subtyping, every proof tree has a maximum height that is determined by the shape of the term.  As usual, subtyping can be absorbed into the other rules in a so-called algorithmic system, and then inference consists of (implicitly) constructing the finite set of principal proof trees for the given term and environment(s).  Inferring on the right for the term $\abs{x}{\eff\,x}$ in the environment $\{\eff : [] \from \okty\}$, we will obtain:
\[
  \begin{array}{rcl}
    \okty \subtype a_1,\,a_1 \from a_2 \subtype a_3,\,\eff: [] \from \okty &\types& \abs{x}{\eff\,x} : a_3\\
    \okty \subtype a_1,\,[] \from \okty \subtype a_3,\,a_3 \subtype a_2 \from a_1,\,a_1 \from a_4 \subtype a_5,\,\eff: [] \from \okty &\types& \abs{x}{\eff\,x} : a_5\\
    a_2 \subtype a_1,\,[] \from \okty \subtype a_3,\,a_3 \subtype a_2 \from a_4,\,a_1 \from a_4 \subtype a_5,\,\eff: [] \from \okty &\types& \abs{x}{\eff\,x} : a_5\\
    a_1 \subtype a_2,\, [] \from \okty \subtype a_3,\, a_3 \subtype a_2 \to a_4,\,a_1 \to a_4 \subtype a_5,\, \eff: [] \from \okty &\types& \abs{x}{\eff\,x} : a_5
  \end{array}
\]

The first three correspond, modulo subtyping inferences, to proof trees rooted at \rlnm{AbnR}.  The first corresponds to the case when the tree is then immediately closed by the \rlnm{VarK} axiom.  The second corresponds to proceeding by \rlnm{AppL} before closing with \rlnm{GVar} in the left branch and \rlnm{VarK} in the right branch.  The third is similar but corresponds to closing the right branch with \rlnm{Id} instead.  The fourth corresponds to a tree rooted at \rlnm{AbsR}, after which the shape is completely determined (modulo subtyping).  However, notice that the constraints returned in this last case are inconsistent, we have $[] \from \okty \subtype a_3$ and $a_3 \subtype a_2 \to a_4$, but the former function type contains functions that go wrong after being given an input, whereas the latter function type does not.  Indeed, one cannot assign a sufficiency arrow type to this term in the given environment under consistent subtyping assumptions.
The full definition is given in \iftoggle{supplementary}{Appendix \ref{apx:algo}}{the long version \cite{ramsay2023illtyped}}.
\begin{theorem}[Correctness of the Type Inference Algorithm]\label{thm:inference-correctness}
    Let $\Gamma, \Delta$ be type environments without type constraints, $C$ and $C'$ be sets of  constraints, $M$ be a term, and $A$, $A'$ be types. Then:
    \begin{description}
        \item[\rlnm{Left Soundness}] If $(C \cup \Gamma,\,M : A \types \Delta) \in \mathsf{InferL}(\Gamma)(M)(\Delta)$, then $C \cup \Gamma,\,M : A \types \Delta$ is provable.
        \item[\rlnm{Left Completeness}] If $C' \cup \Gamma,\,M : A \types \Delta$ is provable, then there exists a type substitution $\sigma$ and a judgement $(C \cup \Gamma,\,M : A' \types \Delta)\in\mathsf{InferL}(\Gamma)(M)(\Delta)$ such that $A = A'\sigma$ and $C' \types C\sigma$.
      \end{description}
      \begin{description}
        \item[\rlnm{Right Soundness}] If $(C \cup \Gamma \types M : A)\in\mathsf{InferR}(\Gamma)(M)$, then $C \cup \Gamma \types M : A$ is provable.
        \item[\rlnm{Right Completeness}] If $C' \cup \Gamma \types M : A$ is provable, then there exists a type substitution $\sigma$ and a judgement $(C \cup \Gamma \types M : A')\in\mathsf{InferR}(\Gamma)(M)$ such that $A = A'\sigma$ and $C' \types C\sigma$.
    \end{description}
\end{theorem}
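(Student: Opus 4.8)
The plan is to prove all four statements simultaneously by induction on the structure of the term $M$, exploiting the mutual recursion between $\mathsf{InferL}$ and $\mathsf{InferR}$: the left procedure invokes the right one (for instance on the function position of an application, via \rlnm{AppL}) while the right procedure invokes the left one (on the body of an abstraction typed with a necessity arrow, via \rlnm{AbnR}), so the two soundness claims and the two completeness claims have to be carried through a single induction rather than proved independently. As a preliminary I would establish a \emph{subtyping normalisation} lemma showing that the declarative system of Figure~\ref{fig:constrained-ta-terms} is equivalent to an algorithmic presentation in which the rules \rlnm{SubL} and \rlnm{SubR} are absorbed into the remaining rules and applied only at the points where the algorithm introduces fresh variables. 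This is what justifies treating ``a proof tree modulo subtyping'' as an object with only finitely many possible shapes per term, as claimed in the text, and it is the fact that makes both directions tractable.

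For the two soundness statements the argument is the easier direction. Proceeding by cases on the outermost syntactic form of $M$, I would read off how the algorithm recursively calls the sub-procedures on the immediate subterms, introduces fresh type variables, and accumulates the constraint set $C$; then, using the soundness induction hypotheses for the subterms, I would reconstruct an actual derivation in the declarative system, discharging each generated constraint by an application of \rlnm{SubL} or \rlnm{SubR} justified against $C$ by \rlnm{IdS}. The only mild subtleties are the axiomatic leaves --- \rlnm{VarK}, \rlnm{LVar}, \rlnm{GVar}, and the disjointness rules \rlnm{CnsDL} and \rlnm{AbsDL} --- which one must show the algorithm emits only when the corresponding declarative rule genuinely applies, together with the correct threading of the right environment $\Delta$ through the left rules.

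The completeness statements are where the real work lies. For each form of $M$ I would prove a \emph{generation (inversion) lemma} characterising, up to subtyping, the possible roots of a derivation of $\Gamma,\,M:A \types \Delta$ (respectively of $\Gamma \types M:A$); the finiteness of these root choices is exactly the bounded-shape property above. Given a provable instance with constraints $C'$, the generation lemma exposes subderivations to which I apply the completeness induction hypotheses, yielding for each immediate subterm an algorithm output together with a substitution witnessing that the subderivation is an instance of it under which $C'$ entails the instantiated constraints. The crux is then to \emph{combine} these per-subterm substitutions into a single $\sigma$ for $M$ and to show that the constraint set $C$ accumulated by the algorithm is \emph{most general}, i.e.\ that $C' \types C\sigma$. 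This relies on the fact that the fresh variables introduced at each node can be consistently specialised --- in effect a most-general-solution/matching property for the constraint language --- and on careful bookkeeping of variable freshness so that the substitutions coming from distinct subterms do not interfere.

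I expect this combination step to be the main obstacle, compounded by the existential-over-a-set shape of the claim: each provable judgement need match only \emph{one} element of the returned finite set, reflecting the multiple-principal-types phenomenon noted in the text, so the induction must track which shape is being matched. The cases for \rlnm{AppL}/\rlnm{AppR} and for \rlnm{MchL}/\rlnm{MchR} will be the most delicate --- the latter because of the pair-encoded disjunction over branches and the need to reconcile the types $B_x$ inferred for the pattern-bound variables $x \in \fv(p_i)$ across all alternatives while preserving most-generality of the resulting constraints.
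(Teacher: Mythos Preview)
Your proposal is essentially correct and identifies all the right ingredients, but the paper's decomposition is different and somewhat more modular. Rather than a single simultaneous induction backed by a preliminary ``subtyping normalisation lemma'', the paper introduces an explicit intermediate \emph{algorithmic} judgement $\types_2$ (in which \rlnm{SubL}/\rlnm{SubR} are absorbed as side-conditions on the remaining rules) and factors the theorem into two independent stages: (i) soundness and completeness of $\types_2$ with respect to the declarative system $\types$, and (ii) soundness and completeness of $\mathsf{InferL}/\mathsf{InferR}$ with respect to $\types_2$. What this buys is a clean separation of the two difficulties you identify. Stage~(i) handles subtyping absorption once and for all: its completeness direction is proved by induction on the declarative derivation under a \emph{strengthened} hypothesis quantifying over all $\Gamma' \sqsubseteq \Gamma$ and $\Delta \sqsubseteq \Delta'$, which is precisely the device that makes the \rlnm{SubL}/\rlnm{SubR} cases trivial --- your per-constructor generation lemmas would otherwise have to replicate this closure argument locally. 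Stage~(ii) is then purely about fresh-variable generation and substitution combination, with soundness by induction on the term and completeness by induction on the (now syntax-directed) $\types_2$ derivation; the disjointness of fresh variables across recursive calls is exactly what lets the per-subterm substitutions be unioned, as you anticipate. Your identification of \rlnm{MchL}/\rlnm{MchR} as the most delicate cases, and of the existential-over-a-finite-set shape of the conclusion as the organising principle, matches the paper.
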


\subsection{Syntactic Soundness}\label{sec:constrained-soundness}
Rather than proving semantic soundness, we take the opportunity to show how one can prove a syntactic soundness result in the sense of \citet{wright-felleisen-infcomp1994}, but first we need to generalise the definition of well-typed and ill-typed to account for top-level identifiers.


\begin{definition}
  Suppose $M$ is closed and $\Gamma$ is a consistent top-level type environment with $\types \calM : \Gamma$.
  \begin{itemize}
    \item We say that $M$ is \emph{well-typed in} $\Gamma$ just if $\Gamma \types M : \okty$.
    \item We say that $M$ is \emph{ill-typed in} $\Gamma$ just if $\Gamma,\,M:\okty \types$.
  \end{itemize}
\end{definition}

The force of the qualifier \emph{consistent} is to require that the type constraints in a type environment are not contradictory.  Several definitions are possible, and we use an adaptation of the syntactical definition given by \citet{eifrig-et-al-oopsla1995}.  Since it is orthogonal to the two-sided aspect of the work, the definition has been relegated to \iftoggle{supplementary}{Appendix~\ref{apx:syntactic-soundness}}{the long version of this paper \cite{ramsay2023illtyped}}.

An appropriate formulation of \emph{progress} for two-sided systems must take into account a non-trivial subject on the left of the turnstile as well as the right.  On the right, as usual, we have that terms can either make a step or are already values.  However, in a typing $M:A$ on the left, it is possible that $M$ can make a step, already be stuck or even be a value.  However, in the latter case, the value must not be in $A$.  We can state this succinctly as follows:

\begin{theorem}[Progress]\label{thm:progress}
  Suppose $M$ is closed and $\Gamma$ is a consistent, top-level type environment.
  \begin{itemize}
    \item If $\Gamma \types M:A$ then either $M$ can make a step, or $M$ is a value
    \item If $\Gamma,\,M:A \types$ then either $M$ can make a step, or $\Gamma \not\types M : A$
  \end{itemize}
\end{theorem}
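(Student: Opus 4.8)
The plan is to prove the two clauses relative to a \emph{canonical forms} lemma for the right-hand side, which itself rests on a handful of subtyping inversion properties that crucially exploit the consistency of $\Gamma$. First I would record the relevant subtyping facts: for a consistent constraint set one cannot derive an arrow type below a sum type, nor a sum type below an arrow, nor $\Sigma_{c\in I}\,c(\ldots) \subtype \Sigma_{d\in J}\,d(\ldots)$ unless $I\subseteq J$ with the shared arguments covariantly related. These follow by inverting the derivations built from \rlnm{IdS}, \rlnm{TrS}, \rlnm{ToS}, \rlnm{FrS}, \rlnm{OkS} and \rlnm{SmS}, discarding the cross-shape cases via consistency (even those routed through type variables). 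From these I would derive canonical forms: if $V$ is a closed value and $\Gamma\types V:A$ with $A$ equivalent under $\Gamma$ to an arrow, then $V$ is an abstraction; and if $A$ is a sum type containing constructor $c$, then $V$ is headed by $c$. The point is that the only right rules typing a value are \rlnm{AbsR}/\rlnm{AbnR}, \rlnm{CnsR} and \rlnm{SubR}, so pushing \rlnm{SubR} to the root exposes a single ``direct'' type whose shape is pinned down by the inversion facts.

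For the first clause I would argue by structural induction on the closed term $M$ with inversion on the last rule. The introduction forms are immediate: a closed $M$ is never a bare variable, abstractions and fully-evaluated constructor terms are already values, fixpoints reduce by unfolding, and a constructor term with a non-value argument steps inside the corresponding evaluation context. The only delicate cases are the elimination forms. For $M=M_1\,M_2$ typed by \rlnm{AppR}, if $M_1$ steps we step in context, and if $M_1$ is a value then canonical forms makes it an abstraction, so once $M_2$ is a value the application is a redex. For $M=\matchtm{M_0}{\ldots}$ typed by \rlnm{MchR}, if $M_0$ steps we step in context, and if $M_0$ is a value then canonical forms places it inside the scrutinee's sum type, so $M_0$ is headed by one of the pattern constructors and the match reduces. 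Since evaluation contexts never descend under binders, every redex exposed this way is closed, so the induction hypothesis always applies.

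For the second clause, suppose $\Gamma,\,M:A\types$ and $M$ cannot step, so $M$ is a normal form, hence either stuck or a value. If $M$ is stuck (a normal form that is not a value) then the contrapositive of the first clause immediately gives $\Gamma\not\types M:A$. The remaining, genuinely two-sided, case is $M=V$ a value, for which I would prove the key \emph{coherence} lemma: for consistent $\Gamma$ and closed value $V$, the judgements $\Gamma\types V:A$ and $\Gamma,\,V:A\types$ are never both provable. The proof is by structural induction on $V$ with inversion on both sides. When $V=\abs{x}{N}$, right provability forces $A$ to be a supertype of an arrow, while left provability---whose only non-subtyping rule for an abstraction is \rlnm{AbsDL}---forces $A$ to be a subtype of a sum; transitivity then yields an arrow below a sum, contradicting consistency. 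When $V=c(V_1,\ldots,V_n)$, right provability gives $\Gamma\types c(\vec A)\subtype A$ with $\Gamma\types V_i:A_i$, while left provability, after pushing \rlnm{SubL} to the root, closes with \rlnm{CnsDL}, \rlnm{CnsL} or \rlnm{CnsK}. The \rlnm{CnsDL} case contradicts consistency as before; the \rlnm{CnsL} and \rlnm{CnsK} cases produce, via sum-subtyping inversion, some component $i$ with both $\Gamma\types V_i:B_i$ and $\Gamma,\,V_i:B_i\types$ (taking $B_i=\okty$ in the \rlnm{CnsK} case, using that every closed value is easily typable on the right with $\okty$), which the induction hypothesis refutes.

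The main obstacle is the coherence lemma together with the subtyping inversion it depends on. Everything hinges on extracting, from a consistent $\Gamma$, that no derivation can force a value's direct type to be simultaneously a supertype of an arrow and a subtype of a sum---or to relate two sums headed by disjoint constructors---even through intervening type variables and transitivity. Making this precise is exactly where the Eifrig-style definition of consistency (deferred to the appendix) must be shown to do its job; once the inversion properties are in hand, both inductions are routine.
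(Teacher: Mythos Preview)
Your proposal is correct and, for the first clause, essentially matches the paper: both argue by induction (the paper on the derivation, you on the term with inversion, which amounts to the same thing once \rlnm{SubR} is absorbed) and both appeal to canonical forms backed by the consistency of $\Gamma$ at the elimination cases \rlnm{AppR} and \rlnm{MchR}.

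For the second clause you take a genuinely different route. The paper proves it directly by a second induction on the derivation of $\Gamma,\,M{:}A \types$, handling each left rule (\rlnm{SubL}, \rlnm{AppL}, \rlnm{CnsL}, \rlnm{MchL}, \rlnm{CnsK}, \rlnm{FunK}, and the disjointness axioms) in turn; the observation that stuck terms are untypable on the right is used only in passing, inside individual cases. You instead front-load that observation into a global case split on the normal form of $M$, reducing the whole second clause to your \emph{coherence lemma} for closed values. That lemma is exactly the content the paper's induction extracts when the subject happens to be a value, so the two arguments are doing the same work at the leaves; your decomposition just names and isolates it. What this buys you is a cleaner separation of concerns (the coherence lemma is reusable and its proof is a short structural induction on $V$), at the cost of needing the auxiliary fact that every closed value is typable on the right with $\okty$ for the \rlnm{CnsK} branch---which the paper also notes but uses elsewhere. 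The paper's approach is more uniform (both clauses by the same style of induction) and avoids stating the coherence lemma explicitly.
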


The formulation uses a fact about typing in our two-sided systems that is peculiar to systems with the necessity arrow: all closed values are typable on the right.  We already remarked in Section~\ref{sec:okay} that every (fixpoint) abstraction is typable on the right with $\okty \from A$, and a simple induction shows that every constructor-headed term has a corresponding constructor-headed type.  Thus, $\Gamma \not\types M : A$ above in particular implies that $M$ is not a value of type $A$.

For \emph{preservation}, it is typical to prove some preliminary lemmas that show closure under well-typed substitutions.  In the two-sided case, there are many more of these substitution lemmas, since one has to take account of the possibilities that the substituted-for variable occurs on the opposite side of the turnstile to the subject (as usual), on the same side, or not at all (which also relies on the fact that closed values are typable).  However, preservation then follows directly.

\begin{theorem}[Preservation]
  Suppose $M$ is a closed term and $\Gamma$ is a consistent, top-level type environment in which all top-level identifiers have justified typings $\types \calM : \Gamma$.
  \begin{itemize}
    \item If $\Gamma \types M : A$ and $M \ped N$ then $\Gamma \types N : A$.
    \item If $\Gamma,\,M:A \types$ and $M \ped N$ then $\Gamma,\,N:A \types$.
  \end{itemize}
\end{theorem}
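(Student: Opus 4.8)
The plan is to follow the Wright--Felleisen recipe: first establish a battery of substitution lemmas, and then prove both clauses together by analysing the step $M \ped N$. Because reduction is closed under evaluation contexts and the rules of Figure~\ref{fig:constrained-ta-terms} are syntax-directed on the subject, I would organise the analysis as an induction on the given derivation (of $\Gamma \types M : A$ for the first clause, and of $\Gamma,\,M:A \types$ for the second). The \emph{congruence} cases, in which the redex lies strictly inside $M$ and hence inside a subject typed by one of the premises, are discharged immediately by the induction hypothesis; the subtyping rules \rlnm{SubL} and \rlnm{SubR}, which leave the subject unchanged, are handled by applying the hypothesis to the premise and reapplying the rule. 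Only the cases in which $M$ is itself a redex require genuine work, and these are dispatched by inverting the last few rules and invoking the appropriate substitution lemma.

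The substitution lemmas are where the two-sided setting asks for more than usual. For a value $V$ I would prove, by induction on the derivation being transformed, statements covering the three situations flagged earlier: the substituted variable occurs in a subject on the side opposite to where $V$ is typed (the familiar case, giving for instance $\Gamma \types M[V/x]:A$ from $\Gamma,\,x:B \types M:A$ and $\Gamma \types V:B$); it occurs on the same side; or it does not occur, in which case one merely weakens, using that every closed value is typable on the right (as observed after Theorem~\ref{thm:progress}) to meet \rlnm{VarK}- and \rlnm{AbsDL}/\rlnm{CnsK}-style leaves. The recurring subtlety is that a \emph{value} typed on the right may be used as a \emph{left} hypothesis: a value does not diverge, so a right typing $\Gamma \types V:B$ already witnesses the stronger, left reading, and this is exactly what lets the argument of a $\beta$-redex discharge the assumption introduced by \rlnm{AbsR} or \rlnm{AbnR}. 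Since the subject may be transformed by left rules such as \rlnm{AppL} and \rlnm{CnsL}, I would take care to track how occurrences of the substituted variable migrate between the two sides of the turnstile as one moves up a derivation.

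For the redex cases I would proceed as follows. For the $\beta$-step $(\abs{x}{M'})\,V \ped M'[V/x]$ on the right, invert \rlnm{AppR} and then \rlnm{AbsR} and apply the opposite-side lemma. On the left, invert \rlnm{AppL} to obtain $\Gamma \types \abs{x}{M'}:B \from A$ and $\Gamma,\,V:B \types$, invert \rlnm{AbnR} to get $\Gamma,\,M':A \types x:B$, substitute $V$ for the right-occurring $x$ to reach $\Gamma,\,M'[V/x]:A \types V:B$, and combine this with $\Gamma,\,V:B \types$ by a cut on the value formula $V:B$ (admissible because, for values, right-typability coincides with usability as a left hypothesis) to conclude $\Gamma,\,M'[V/x]:A \types$. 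For a pattern-match step I would invert \rlnm{MchR} (respectively \rlnm{MchL}), use orthogonality and the distinct-names condition of Definition~\ref{def:pat-match-terms} to select the unique matching branch, and substitute the constructor's value arguments for the pattern variables via the same family of lemmas. For the unfolding of a top-level identifier $\mathsf{f} \ped \calM(\mathsf{f})$, invert \rlnm{GVar} and appeal to the hypothesis $\types \calM : \Gamma$ together with a type-substitution lemma, discharging the instantiated constraints $C[\bar{B}/\bar{a}]$ exactly as the side condition of \rlnm{GVar} demands.

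The step I expect to be the main obstacle is the fixpoint unfolding $\fixtm{x}{M'} \ped M'[\fixtm{x}{M'}/x]$, because the term substituted for $x$ is \emph{not} a value, so none of the value-substitution lemmas apply and the usual call-by-value discipline is broken. Inverting \rlnm{FixR} gives $\Gamma,\,x:A \types M':A$, while the redex is typable only on the right, $\Gamma \types \fixtm{x}{M'}:A$; the difficulty is the apparent mismatch between the left assumption $x:A$ and a replacement that may diverge. My approach would be a dedicated fixpoint substitution lemma establishing that a right typing of the replacement already suffices, on the grounds that the \emph{ambient} assumption $x:A$ on a bare variable is only ever \emph{consumed} to re-affirm that variable on the right (directly by \rlnm{LVar}, possibly after an adjustment by \rlnm{SubR}), since no left rule decomposes a bare variable. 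The delicate bookkeeping is to separate this ambient assumption from the fresh left hypotheses introduced locally by \rlnm{AppL} or \rlnm{CnsL} (which are discharged by the still-applicable rule in the transformed derivation, not by the global typing of the replacement), and to verify that the ambient assumption is never required at an incompatible type. This is the syntactic counterpart of the semantic safety and approximation reasoning (Theorem~\ref{thm:types-fix-closed}, via the approximants $\fixapprx{k}{x}{M'}$) that justified \rlnm{FixR} in the first place.
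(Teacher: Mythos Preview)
Your overall plan is right and matches the paper's, but two pieces of the execution are off.

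First, your $\beta$-on-the-left argument does not go through as decomposed. The intermediate judgement $\Gamma,\,M'[V/x]:A \types V:B$ is not in general derivable: the only rule that can conclude a bare subject on the right is \rlnm{LVar}, which requires that subject to be a \emph{variable}, so the base case $M'=x$ collapses once you replace $x$ by a non-variable value. Your stated justification for the subsequent cut (``for values, right-typability coincides with usability as a left hypothesis'') is a semantic observation and does not establish proof-theoretic admissibility; in any case the missing step~2 already blocks the route. The paper avoids this by fusing your two steps into a single lemma, proved by induction on the \emph{term} $M'$: from $\Gamma,\,M':A \types x:B$ and $\Gamma,\,V:B \types \Delta$ conclude directly $\Gamma,\,M'[V/x]:A \types \Delta$. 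In the base case $M'=x$ one has $\Gamma \types A \subtype B$ by left inversion, and the second hypothesis plus \rlnm{SubL} gives the result---no intermediate judgement with $V$ on the right is needed. (You also omit the \rlnm{FunK} alternative when inverting an application on the left; the paper dispatches it by observing that an abstraction on the left at $\okty \from A$ is only possible when $\Delta$ has shape $z{:}\okty$.)

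Second, you are fighting a phantom in the fixpoint case. The very insight you isolate---that a bare-variable hypothesis $x{:}B$ on the left is only ever consumed by \rlnm{LVar} to affirm $x{:}B$ on the right---is exactly what lets the paper state its right-substitution lemma for an \emph{arbitrary} term $N$ with $\Gamma \types N:B$, not just for values: at that leaf one simply plugs in the given right typing of $N$ (followed by \rlnm{SubR}). The \rlnm{Fix} unfolding is then a one-line application of this general lemma with $N = \fixtm{x}{M'}$, rather than a separate ``dedicated'' result. So the obstacle you anticipate dissolves once you phrase the substitution lemma at the right generality; the paper's organisation (induction on the evaluation context, with inversion lemmas replacing your induction on the typing derivation) otherwise lines up with yours.
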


\noindent
Finally, syntactic soundness follows immediately.

\begin{theorem}[Syntactic Soundness]\label{thm:weak-soundness}
  Suppose $M$ is a closed term and $\Gamma$ is a consistent, top-level type environment in which all top-level identifiers have justified typings $\types \calM : \Gamma$.
  \begin{itemize}
    \item If $M$ is well-typed in $\Gamma$, then $M$ does not go wrong.
    \item If $M$ is ill-typed in $\Gamma$, then $M$ does not evaluate.
  \end{itemize}
\end{theorem}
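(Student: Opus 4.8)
The plan is to follow the standard recipe of \citet{wright-felleisen-infcomp1994}: derive each clause from Progress (Theorem~\ref{thm:progress}) and Preservation by a routine induction on the length of the reduction sequence. Throughout I assume the hypotheses of the statement, so that $M$ is closed, $\Gamma$ is consistent, and $\types \calM : \Gamma$; these are exactly the side conditions that Progress and Preservation require, so they transport unchanged along any reduction.

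For the first clause, suppose towards a contradiction that $M$ is well-typed, i.e.\ $\Gamma \types M : \okty$, but goes wrong, say $M \peds M'$ with $M'$ a stuck normal form. Iterating the right-hand case of Preservation along this reduction sequence (a trivial induction on its length) yields $\Gamma \types M' : \okty$. The right-hand case of Progress then gives that $M'$ either can make a step or is a value; but $M'$ is stuck, so it can do neither, a contradiction. Hence no such $M'$ exists and $M$ does not go wrong.

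For the second clause, suppose $M$ is ill-typed, i.e.\ $\Gamma,\,M:\okty \types$, but evaluates, say $M \peds V$ with $V$ a value. Iterating the left-hand case of Preservation gives $\Gamma,\,V:\okty \types$. The left-hand case of Progress then says that either $V$ can make a step or $\Gamma \not\types V:\okty$; since $V$ is a value it cannot step, so we must have $\Gamma \not\types V:\okty$. But $V$ is a closed value, and, as noted in the discussion following Progress, every closed value is typable on the right at $\okty$: an abstraction receives $\okty \from A$ by \rlnm{AbnR} and \rlnm{VarK} and then $\okty$ by \rlnm{SubR} with \rlnm{OkS}, while a constructor term $c(V_1,\ldots,V_n)$ receives $c(\okty,\ldots,\okty)$ by \rlnm{CnsR} on its (inductively typable) components and then $\okty$ again by \rlnm{SubR}. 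Thus $\Gamma \types V:\okty$, contradicting $\Gamma \not\types V:\okty$. Hence $M$ does not evaluate.

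With Progress and Preservation already established, there is little remaining difficulty: both clauses are the usual induction-on-reduction-length contradictions. The one point deserving care---and the only place where the two-sided, necessity-arrow character of the system is essential---is the final step of the second clause, where the contradiction rests on the fact that closed values are \emph{always} typable on the right at $\okty$. This is precisely what forbids an ill-typed term from reducing to a value, and it is exactly the content that distinguishes the left-hand Progress alternative $\Gamma \not\types M:A$ from the conventional `$M$ is a value of type $A$'.
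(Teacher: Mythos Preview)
Your proof is correct and follows essentially the same approach as the paper's own proof: both clauses are obtained from Progress and Preservation, with the second clause closed off by the observation that every closed value is typable on the right at $\okty$. The paper's argument is phrased directly (if $M$ normalises, the normal form must be a value / must not be a value) rather than by contradiction, but the logical content is identical.
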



\section{Conclusion and Related Work}\label{sec:conclusion}

We have introduced \emph{two-sided type systems} which are sequent calculi for typing formulas, made possible through a new function type $A \from B$.  We have shown several ways in which these calculi can be considered sound, and illustrated how left-sided rules can be added to a constrained type system.  We also investigated the internalisation of negation, and a one-sided system without a true negatives theorem.  
Many of the basic constants of the programming languages we use today have necessary requirements on their inputs.  However, traditional type systems largely ignore this basic aspect of their behaviour.  Two-sided systems can use this additional dimension, and enable simple reasoning about program incorrectness as well as program correctness.

\paragraph{Typing contexts beyond variables.}
Many type systems for higher-order program verification generalise the typing context to allow logical formulae for the purpose of recording path conditions.  For example, Liquid Types \cite{rondon-et-al-pldi2008,vazou-et-al-esop2013,vazou-et-al-icfp2015} and the systems of \citet{terauchi-popl2010} and \citet{unno-kobayashi-PPDP2009} place the Boolean valued expression $M$ (or an equivalent formula) into the context when proving that the `then'-branch of a conditional, $\mathsf{if}\ {M}\ \mathsf{then}\ {N}\ \mathsf{else}\ {P}$, is correctly typed.  In a two-sided system, this can be managed very naturally by introducing separate types for $\mathsf{true}$ and $\mathsf{false}$ (as we have in our system of Section~\ref{sec:constrained-typing}) and assuming $M:\mathsf{true}$.
A rather different use of the typing context was made by \citet{curien-herbelin-icfp2000}.  Their system is a kind of sequent calculus in which the formulas on the left of the turnstile comprise variable typings and a stack of terms (cf. Krivine machines), which is thought of as an evaluation context.   

\paragraph{Constrained type systems.}
Constrained type systems and inclusion constraints go back at least to the work of \citet{mitchell-popl1984}.  A unified treatment is given by \citet{odersky-sulzmann-wehr-TSPOS1999}.  Our system takes inspiration from the one-sided system of \citet{aiken-et-al-popl1994} (see also \citet{marlow-wadler-icfp1997}), though their constraints are far more expressive.  Our approach to consistency of subtype constraints is an adaptation of the purely syntactic approach of \citet{eifrig-et-al-oopsla1995}.  It is not clear how well our algorithm would scale in practice, ideally one would like some guarantees on the size of types or the efficiency of inference, such as features in more recent works by \citet{dolan-mycroft-popl} and \citet{jones-ramsay-popl2021}.  None of these systems has a true positives theorem.  

\paragraph{Success Typing.}
Our original motivation was to try to better understand the very effective \emph{success typing} paradigm for Erlang \cite{lindahl-sagonas-ppdp2006}.  Work by \citet{jakob-thiemann-nasa2015} provides a useful view through their falsification type system, in which one constructs a logical formula used to describe inputs that guarantee to crash the function.  The formalisation of \citet{Lopez-Fraguas-et-al-lpar2018} is also enlightening and they moreover extend the system with polymorphism.

Success Types are perhaps the best known example of a type-based true positives theorem, `ill-typed programs always fail' \citep{sagonas-flp2010}.  They are so named because they over-approximate the successes of expressions: if $A$ is a success type and $M:A$ then $M$ may succeed in evaluating to an $A$, but may also go wrong or diverge.  Hence why we named our alternative semantics after this paradigm.  Note, the `ill-typed' of the slogan really means untypable in our sense.

A key feature is a bespoke function type.  It is defined in \citep{lindahl-sagonas-ppdp2006} as follows: \begin{quote}\emph{A success typing of a function $f$ is a type signature $(\bar{\alpha}) \to \beta$, such that whenever an application $f(\bar{p})$ reduces to a value $v$, then $v \in \beta$ and $\bar{p} \in \bar{\alpha}$}.\end{quote}

\noindent
Thus, the \emph{success arrow}, let us write $\to_s$, can be understood as having the right rule shown to the side.  That is, it contains a component of necessity, but restricted to $\okty$ on the right. Hence, with only this
\begin{wrapfigure}[4]{l}{6.5cm}\vspace{-1ex}
  \begin{center}
    $
      \prftree{
        \Gamma,\,M:\okty \types x : \okty 
      }
      {
        \Gamma,\,M:\okty \types M : B
      }
      {
        \Gamma \types \abs{x}{M} : A \to_s B,\,\Delta
      }
    $
  \end{center}
\end{wrapfigure} 
arrow, one can refute that an application $MN$ has a type $A$ only when $A=\okty$.
Yet, despite this, success types appear to be highly effective in practice.  One reason could be the following: if an application $MN$ fails to evaluate to an $A$, then there are three possibilities, it could go wrong, it could diverge, or it could evaluate to something else.  In the first two cases, one can attempt to refute that it has type $\okty$, and in the third, one can attempt to affirm that it has a type that is disjoint from $A$ instead.  Thus, it is likely that one can get quite far with only this form of necessity. 

\paragraph{Logics for incorrectness.}
Incorrectness logic of \citet{ohearn-popl2019} has sparked a new interest in systems for reasoning about program behaviour that enjoy true positives theorems.  Subsequent work has extended its scope \citep{raad-et-al-popl2022}, and demonstrated its real-world effectiveness \citep{le-et-al-oopsla2022}.  A key feature of these logics is \emph{under-approximation} as a means to achieve true positives, and this allows for a more specific notion of `positive', which excludes divergence.  In incorrectness logic, states in a postcondition must be \emph{reachable}.  This leads to the need to reason about termination, but it also allows for an analysis to dynamically drop disjunctions in order to scale well.  A interesting alternative is the Outcome Logic of \citet{zilberstein-et-al-oopsla2023}.  Here there is a distinction between Boolean disjunction and outcome disjunction.  The former has true as annihilator, but the latter does not, and this allows for the dropping of outcomes for efficiency.

\paragraph{Necessity.}
The foundation of our system is not under-approximation, but necessity.  When $M : A$ is provable, type $A$ is an \emph{over-approximation} of $M$.  
\chC{
  This makes the system closer to the works of, on the one hand, \citet{cousot-et-al-vmcai2011,coutsot-et-al-vmcai2013}, and on the other, \citet{mackay-et-al-oopsla2022}.  The former considers the problem of inferring \emph{necessary preconditions} in order to detect contract violations that will lead to assertion failures.  Specifications are given as code assertions and  various analyses based on abstract interpretation are showcased for efficient inference.  The latter uses necessity as a means to specify the \emph{robustness} of module specifications in object-oriented programming.  Three novel necessity operators are added to an assertion language equipped with certain object capabilities, and a logic is developed for proving that a module satisfies its necessity specifications.
}


\paragraph{Type systems with complement.}
A complement operator appears in the system of \citet{aiken-et-al-popl1994} and a negation in the work of \citet{parreaux-et-al-oopsla2022}, though neither has a true positives theorem.
In the highly expressive dependent type system of \citet{unno-et-al-popl2018}, none of the programs can go wrong, but the system is nevertheless very well-equipped for incorrectness reasoning, and includes sophisticated means for reasoning about recursion and tracking divergence.  Since their system can express the complements of arbitrary types, it should be possible to encode a version of our necessity arrow.  One can think of our one-sided system of Section~\ref{sec:complements} as a kind of sweet spot for partial correctness, in which the theory and automation become particularly straightforward.

\begin{acks}
  The first author is very grateful for an award from Meta Research.  This paper was much improved due to  the advice of the POPL'24 anonymous reviewers, and we especially thank our shepherd Sophia Drossopoulou.  The work has also benefited from many useful discussions with colleagues in the Programming Languages Research Group at the University of Bristol.  
\end{acks}

\bibliographystyle{ACM-Reference-Format}
\bibliography{references}


\begin{thebibliography}{31}


\ifx \showCODEN    \undefined \def \showCODEN     #1{\unskip}     \fi
\ifx \showDOI      \undefined \def \showDOI       #1{#1}\fi
\ifx \showISBNx    \undefined \def \showISBNx     #1{\unskip}     \fi
\ifx \showISBNxiii \undefined \def \showISBNxiii  #1{\unskip}     \fi
\ifx \showISSN     \undefined \def \showISSN      #1{\unskip}     \fi
\ifx \showLCCN     \undefined \def \showLCCN      #1{\unskip}     \fi
\ifx \shownote     \undefined \def \shownote      #1{#1}          \fi
\ifx \showarticletitle \undefined \def \showarticletitle #1{#1}   \fi
\ifx \showURL      \undefined \def \showURL       {\relax}        \fi
\providecommand\bibfield[2]{#2}
\providecommand\bibinfo[2]{#2}
\providecommand\natexlab[1]{#1}
\providecommand\showeprint[2][]{arXiv:#2}

\bibitem[Aiken et~al\mbox{.}(1994)]%
        {aiken-et-al-popl1994}
\bibfield{author}{\bibinfo{person}{Alexander Aiken}, \bibinfo{person}{Edward~L.
  Wimmers}, {and} \bibinfo{person}{T.~K. Lakshman}.}
  \bibinfo{year}{1994}\natexlab{}.
\newblock \showarticletitle{Soft Typing with Conditional Types}. In
  \bibinfo{booktitle}{\emph{Conference Record of POPL'94: 21st {ACM}
  {SIGPLAN-SIGACT} Symposium on Principles of Programming Languages, Portland,
  Oregon, USA, January 17-21, 1994}}. \bibinfo{pages}{163--173}.
\newblock
\urldef\tempurl%
\url{https://doi.org/10.1145/174675.177847}
\showDOI{\tempurl}


\bibitem[Bradfield and Walukiewicz(2018)]%
        {bradfield-walukiewicz-2018}
\bibfield{author}{\bibinfo{person}{Julian Bradfield} {and}
  \bibinfo{person}{Igor Walukiewicz}.} \bibinfo{year}{2018}\natexlab{}.
\newblock \bibinfo{booktitle}{\emph{Handbook of Model Checking}}.
\newblock \bibinfo{publisher}{Springer International Publishing}, Chapter The
  mu-calculus and model checking, \bibinfo{pages}{871--919}.
\newblock
\urldef\tempurl%
\url{https://doi.org/10.1007/978-3-319-10575-8_26}
\showDOI{\tempurl}


\bibitem[Cousot et~al\mbox{.}(2013)]%
        {coutsot-et-al-vmcai2013}
\bibfield{author}{\bibinfo{person}{Patrick Cousot}, \bibinfo{person}{Radhia
  Cousot}, \bibinfo{person}{Manuel F{\"a}hndrich}, {and}
  \bibinfo{person}{Francesco Logozzo}.} \bibinfo{year}{2013}\natexlab{}.
\newblock \showarticletitle{Automatic Inference of Necessary Preconditions}. In
  \bibinfo{booktitle}{\emph{Verification, Model Checking, and Abstract
  Interpretation}}, \bibfield{editor}{\bibinfo{person}{Roberto Giacobazzi},
  \bibinfo{person}{Josh Berdine}, {and} \bibinfo{person}{Isabella Mastroeni}}
  (Eds.). \bibinfo{publisher}{Springer Berlin Heidelberg},
  \bibinfo{address}{Berlin, Heidelberg}, \bibinfo{pages}{128--148}.
\newblock
\showISBNx{978-3-642-35873-9}
\urldef\tempurl%
\url{https://doi.org/10.1007/978-3-642-35873-9_10}
\showDOI{\tempurl}


\bibitem[Cousot et~al\mbox{.}(2011)]%
        {cousot-et-al-vmcai2011}
\bibfield{author}{\bibinfo{person}{Patrick Cousot}, \bibinfo{person}{Radhia
  Cousot}, {and} \bibinfo{person}{Francesco Logozzo}.}
  \bibinfo{year}{2011}\natexlab{}.
\newblock \showarticletitle{Precondition Inference from Intermittent Assertions
  and Application to Contracts on Collections}. In
  \bibinfo{booktitle}{\emph{Verification, Model Checking, and Abstract
  Interpretation}}, \bibfield{editor}{\bibinfo{person}{Ranjit Jhala} {and}
  \bibinfo{person}{David Schmidt}} (Eds.). \bibinfo{publisher}{Springer Berlin
  Heidelberg}, \bibinfo{address}{Berlin, Heidelberg},
  \bibinfo{pages}{150--168}.
\newblock
\showISBNx{978-3-642-18275-4}
\urldef\tempurl%
\url{https://doi.org/10.1007/978-3-642-18275-4_12}
\showDOI{\tempurl}


\bibitem[Curien and Herbelin(2000)]%
        {curien-herbelin-icfp2000}
\bibfield{author}{\bibinfo{person}{Pierre{-}Louis Curien} {and}
  \bibinfo{person}{Hugo Herbelin}.} \bibinfo{year}{2000}\natexlab{}.
\newblock \showarticletitle{The duality of computation}. In
  \bibinfo{booktitle}{\emph{Proceedings of the Fifth {ACM} {SIGPLAN}
  International Conference on Functional Programming {(ICFP} '00), Montreal,
  Canada, September 18-21, 2000}}, \bibfield{editor}{\bibinfo{person}{Martin
  Odersky} {and} \bibinfo{person}{Philip Wadler}} (Eds.).
  \bibinfo{publisher}{{ACM}}, \bibinfo{pages}{233--243}.
\newblock
\urldef\tempurl%
\url{https://doi.org/10.1145/351240.351262}
\showDOI{\tempurl}


\bibitem[Dolan and Mycroft(2017)]%
        {dolan-mycroft-popl}
\bibfield{author}{\bibinfo{person}{Stephen Dolan} {and} \bibinfo{person}{Alan
  Mycroft}.} \bibinfo{year}{2017}\natexlab{}.
\newblock \showarticletitle{Polymorphism, Subtyping, and Type Inference in
  MLsub}. In \bibinfo{booktitle}{\emph{Proceedings of the 44th ACM SIGPLAN
  Symposium on Principles of Programming Languages}} (Paris, France)
  \emph{(\bibinfo{series}{POPL '17})}. \bibinfo{publisher}{Association for
  Computing Machinery}, \bibinfo{address}{New York, NY, USA},
  \bibinfo{pages}{60–72}.
\newblock
\showISBNx{9781450346603}
\urldef\tempurl%
\url{https://doi.org/10.1145/3009837.3009882}
\showDOI{\tempurl}


\bibitem[Ebbinghaus et~al\mbox{.}(1994)]%
        {ebbinhaus-flum-thomas-1994}
\bibfield{author}{\bibinfo{person}{H.-D. Ebbinghaus}, \bibinfo{person}{J.
  Flum}, {and} \bibinfo{person}{W. Thomas}.} \bibinfo{year}{1994}\natexlab{}.
\newblock \bibinfo{booktitle}{\emph{Mathematical Logic}}.
\newblock \bibinfo{publisher}{Springer}.
\newblock
\urldef\tempurl%
\url{https://doi.org/10.1007/978-1-4757-2355-7}
\showDOI{\tempurl}


\bibitem[Eifrig et~al\mbox{.}(1995)]%
        {eifrig-et-al-oopsla1995}
\bibfield{author}{\bibinfo{person}{Jonathan Eifrig}, \bibinfo{person}{Scott
  Smith}, {and} \bibinfo{person}{Valery Trifonov}.}
  \bibinfo{year}{1995}\natexlab{}.
\newblock \showarticletitle{Sound Polymorphic Type Inference for Objects}. In
  \bibinfo{booktitle}{\emph{Proceedings of the Tenth Annual Conference on
  Object-Oriented Programming Systems, Languages, and Applications}} (Austin,
  Texas, USA) \emph{(\bibinfo{series}{OOPSLA '95})}.
  \bibinfo{publisher}{Association for Computing Machinery},
  \bibinfo{address}{New York, NY, USA}, \bibinfo{pages}{169–184}.
\newblock
\showISBNx{0897917030}
\urldef\tempurl%
\url{https://doi.org/10.1145/217838.217858}
\showDOI{\tempurl}


\bibitem[Jakob and Thiemann(2015)]%
        {jakob-thiemann-nasa2015}
\bibfield{author}{\bibinfo{person}{Robert Jakob} {and} \bibinfo{person}{Peter
  Thiemann}.} \bibinfo{year}{2015}\natexlab{}.
\newblock \showarticletitle{A Falsification View of Success Typing}. In
  \bibinfo{booktitle}{\emph{NASA Formal Methods}},
  \bibfield{editor}{\bibinfo{person}{Klaus Havelund}, \bibinfo{person}{Gerard
  Holzmann}, {and} \bibinfo{person}{Rajeev Joshi}} (Eds.).
  \bibinfo{publisher}{Springer International Publishing},
  \bibinfo{address}{Cham}, \bibinfo{pages}{234--247}.
\newblock
\showISBNx{978-3-319-17524-9}
\urldef\tempurl%
\url{https://doi.org/10.1007/978-3-319-17524-9_17}
\showDOI{\tempurl}


\bibitem[Jones and Ramsay(2021)]%
        {jones-ramsay-popl2021}
\bibfield{author}{\bibinfo{person}{Eddie Jones} {and} \bibinfo{person}{Steven
  Ramsay}.} \bibinfo{year}{2021}\natexlab{}.
\newblock \showarticletitle{Intensional Datatype Refinement: With Application
  to Scalable Verification of Pattern-Match Safety}.
\newblock \bibinfo{journal}{\emph{Proc. ACM Program. Lang.}}
  \bibinfo{volume}{5}, \bibinfo{number}{POPL}, Article \bibinfo{articleno}{55}
  (\bibinfo{date}{jan} \bibinfo{year}{2021}), \bibinfo{numpages}{29}~pages.
\newblock
\urldef\tempurl%
\url{https://doi.org/10.1145/3434336}
\showDOI{\tempurl}


\bibitem[Le et~al\mbox{.}(2022)]%
        {le-et-al-oopsla2022}
\bibfield{author}{\bibinfo{person}{Quang~Loc Le}, \bibinfo{person}{Azalea
  Raad}, \bibinfo{person}{Jules Villard}, \bibinfo{person}{Josh Berdine},
  \bibinfo{person}{Derek Dreyer}, {and} \bibinfo{person}{Peter~W. O'Hearn}.}
  \bibinfo{year}{2022}\natexlab{}.
\newblock \showarticletitle{Finding Real Bugs in Big Programs with
  Incorrectness Logic}.
\newblock \bibinfo{journal}{\emph{Proc. ACM Program. Lang.}}
  \bibinfo{volume}{6}, \bibinfo{number}{OOPSLA1}, Article
  \bibinfo{articleno}{81} (\bibinfo{date}{apr} \bibinfo{year}{2022}),
  \bibinfo{numpages}{27}~pages.
\newblock
\urldef\tempurl%
\url{https://doi.org/10.1145/3527325}
\showDOI{\tempurl}


\bibitem[Lindahl and Sagonas(2006)]%
        {lindahl-sagonas-ppdp2006}
\bibfield{author}{\bibinfo{person}{Tobias Lindahl} {and}
  \bibinfo{person}{Konstantinos Sagonas}.} \bibinfo{year}{2006}\natexlab{}.
\newblock \showarticletitle{Practical Type Inference Based on Success Typings}.
  In \bibinfo{booktitle}{\emph{Proceedings of the 8th ACM SIGPLAN International
  Conference on Principles and Practice of Declarative Programming}} (Venice,
  Italy) \emph{(\bibinfo{series}{PPDP '06})}. \bibinfo{publisher}{Association
  for Computing Machinery}, \bibinfo{address}{New York, NY, USA},
  \bibinfo{pages}{167–178}.
\newblock
\showISBNx{1595933883}
\urldef\tempurl%
\url{https://doi.org/10.1145/1140335.1140356}
\showDOI{\tempurl}


\bibitem[L\'opez-Fraguas et~al\mbox{.}(2018)]%
        {Lopez-Fraguas-et-al-lpar2018}
\bibfield{author}{\bibinfo{person}{Francisco~J. L\'opez-Fraguas},
  \bibinfo{person}{Manuel Montenegro}, {and} \bibinfo{person}{Gorka
  Su\textbackslash{}'arez-Garc\textbackslash{}'ia}.}
  \bibinfo{year}{2018}\natexlab{}.
\newblock \showarticletitle{Polymorphic success types for Erlang}. In
  \bibinfo{booktitle}{\emph{LPAR-22. 22nd International Conference on Logic for
  Programming, Artificial Intelligence and Reasoning}}
  \emph{(\bibinfo{series}{EPiC Series in Computing},
  Vol.~\bibinfo{volume}{57})}, \bibfield{editor}{\bibinfo{person}{Gilles
  Barthe}, \bibinfo{person}{Geoff Sutcliffe}, {and} \bibinfo{person}{Margus
  Veanes}} (Eds.). \bibinfo{publisher}{EasyChair}, \bibinfo{pages}{515--533}.
\newblock
\showISSN{2398-7340}
\urldef\tempurl%
\url{https://doi.org/10.29007/w2m2}
\showDOI{\tempurl}


\bibitem[Mackay et~al\mbox{.}(2022)]%
        {mackay-et-al-oopsla2022}
\bibfield{author}{\bibinfo{person}{Julian Mackay}, \bibinfo{person}{Susan
  Eisenbach}, \bibinfo{person}{James Noble}, {and} \bibinfo{person}{Sophia
  Drossopoulou}.} \bibinfo{year}{2022}\natexlab{}.
\newblock \showarticletitle{Necessity Specifications for Robustness}.
\newblock \bibinfo{journal}{\emph{Proc. ACM Program. Lang.}}
  \bibinfo{volume}{6}, \bibinfo{number}{OOPSLA2}, Article
  \bibinfo{articleno}{154} (\bibinfo{date}{oct} \bibinfo{year}{2022}),
  \bibinfo{numpages}{30}~pages.
\newblock
\urldef\tempurl%
\url{https://doi.org/10.1145/3563317}
\showDOI{\tempurl}


\bibitem[Marlow and Wadler(1997)]%
        {marlow-wadler-icfp1997}
\bibfield{author}{\bibinfo{person}{Simon Marlow} {and} \bibinfo{person}{Philip
  Wadler}.} \bibinfo{year}{1997}\natexlab{}.
\newblock \showarticletitle{A Practical Subtyping System for Erlang}. In
  \bibinfo{booktitle}{\emph{Proceedings of the Second ACM SIGPLAN International
  Conference on Functional Programming}} (Amsterdam, The Netherlands)
  \emph{(\bibinfo{series}{ICFP '97})}. \bibinfo{publisher}{Association for
  Computing Machinery}, \bibinfo{address}{New York, NY, USA},
  \bibinfo{pages}{136–149}.
\newblock
\showISBNx{0897919181}
\urldef\tempurl%
\url{https://doi.org/10.1145/258948.258962}
\showDOI{\tempurl}


\bibitem[Mitchell(1984)]%
        {mitchell-popl1984}
\bibfield{author}{\bibinfo{person}{John~C. Mitchell}.}
  \bibinfo{year}{1984}\natexlab{}.
\newblock \showarticletitle{Coercion and Type Inference}. In
  \bibinfo{booktitle}{\emph{Proceedings of the 11th ACM SIGACT-SIGPLAN
  Symposium on Principles of Programming Languages}} (Salt Lake City, Utah,
  USA) \emph{(\bibinfo{series}{POPL '84})}. \bibinfo{publisher}{Association for
  Computing Machinery}, \bibinfo{address}{New York, NY, USA},
  \bibinfo{pages}{175–185}.
\newblock
\showISBNx{0897911253}
\urldef\tempurl%
\url{https://doi.org/10.1145/800017.800529}
\showDOI{\tempurl}


\bibitem[Odersky et~al\mbox{.}(1999)]%
        {odersky-sulzmann-wehr-TSPOS1999}
\bibfield{author}{\bibinfo{person}{Martin Odersky}, \bibinfo{person}{Martin
  Sulzmann}, {and} \bibinfo{person}{Martin Wehr}.}
  \bibinfo{year}{1999}\natexlab{}.
\newblock \showarticletitle{Type Inference with Constrained Types}.
\newblock \bibinfo{journal}{\emph{{TAPOS}}} \bibinfo{volume}{5},
  \bibinfo{number}{1} (\bibinfo{year}{1999}), \bibinfo{pages}{35--55}.
\newblock
\urldef\tempurl%
\url{https://doi.org/10.1002/(SICI)1096-9942(199901/03)5:1<35::AID-TAPO4>3.0.CO;2-4}
\showDOI{\tempurl}


\bibitem[O'Hearn(2019)]%
        {ohearn-popl2019}
\bibfield{author}{\bibinfo{person}{Peter~W. O'Hearn}.}
  \bibinfo{year}{2019}\natexlab{}.
\newblock \showarticletitle{Incorrectness Logic}.
\newblock \bibinfo{journal}{\emph{Proc. ACM Program. Lang.}}
  \bibinfo{volume}{4}, \bibinfo{number}{POPL}, Article \bibinfo{articleno}{10}
  (\bibinfo{date}{dec} \bibinfo{year}{2019}), \bibinfo{numpages}{32}~pages.
\newblock
\urldef\tempurl%
\url{https://doi.org/10.1145/3371078}
\showDOI{\tempurl}


\bibitem[Parreaux and Chau(2022)]%
        {parreaux-et-al-oopsla2022}
\bibfield{author}{\bibinfo{person}{Lionel Parreaux} {and}
  \bibinfo{person}{Chun~Yin Chau}.} \bibinfo{year}{2022}\natexlab{}.
\newblock \showarticletitle{MLstruct: Principal Type Inference in a Boolean
  Algebra of Structural Types}.
\newblock \bibinfo{journal}{\emph{Proc. ACM Program. Lang.}}
  \bibinfo{volume}{6}, \bibinfo{number}{OOPSLA2}, Article
  \bibinfo{articleno}{141} (\bibinfo{date}{oct} \bibinfo{year}{2022}),
  \bibinfo{numpages}{30}~pages.
\newblock
\urldef\tempurl%
\url{https://doi.org/10.1145/3563304}
\showDOI{\tempurl}


\bibitem[Raad et~al\mbox{.}(2022)]%
        {raad-et-al-popl2022}
\bibfield{author}{\bibinfo{person}{Azalea Raad}, \bibinfo{person}{Josh
  Berdine}, \bibinfo{person}{Derek Dreyer}, {and} \bibinfo{person}{Peter~W.
  O'Hearn}.} \bibinfo{year}{2022}\natexlab{}.
\newblock \showarticletitle{Concurrent Incorrectness Separation Logic}.
\newblock \bibinfo{journal}{\emph{Proc. ACM Program. Lang.}}
  \bibinfo{volume}{6}, \bibinfo{number}{POPL}, Article \bibinfo{articleno}{34}
  (\bibinfo{date}{jan} \bibinfo{year}{2022}), \bibinfo{numpages}{29}~pages.
\newblock
\urldef\tempurl%
\url{https://doi.org/10.1145/3498695}
\showDOI{\tempurl}


\bibitem[Rondon et~al\mbox{.}(2008)]%
        {rondon-et-al-pldi2008}
\bibfield{author}{\bibinfo{person}{Patrick~Maxim Rondon}, \bibinfo{person}{Ming
  Kawaguchi}, {and} \bibinfo{person}{Ranjit Jhala}.}
  \bibinfo{year}{2008}\natexlab{}.
\newblock \showarticletitle{Liquid types}. In
  \bibinfo{booktitle}{\emph{Proceedings of the {ACM} {SIGPLAN} 2008 Conference
  on Programming Language Design and Implementation, Tucson, AZ, USA, June
  7-13, 2008}}. \bibinfo{pages}{159--169}.
\newblock
\urldef\tempurl%
\url{https://doi.org/10.1145/1375581.1375602}
\showDOI{\tempurl}


\bibitem[Sagonas(2010)]%
        {sagonas-flp2010}
\bibfield{author}{\bibinfo{person}{Konstantinos Sagonas}.}
  \bibinfo{year}{2010}\natexlab{}.
\newblock \showarticletitle{Using Static Analysis to Detect Type Errors and
  Concurrency Defects in Erlang Programs}. In
  \bibinfo{booktitle}{\emph{Functional and Logic Programming}},
  \bibfield{editor}{\bibinfo{person}{Matthias Blume}, \bibinfo{person}{Naoki
  Kobayashi}, {and} \bibinfo{person}{Germ{\'a}n Vidal}} (Eds.).
  \bibinfo{publisher}{Springer Berlin Heidelberg}, \bibinfo{address}{Berlin,
  Heidelberg}, \bibinfo{pages}{13--18}.
\newblock
\showISBNx{978-3-642-12251-4}
\urldef\tempurl%
\url{https://doi.org/10.1007/978-3-642-12251-4_2}
\showDOI{\tempurl}


\bibitem[Terauchi(2010)]%
        {terauchi-popl2010}
\bibfield{author}{\bibinfo{person}{Tachio Terauchi}.}
  \bibinfo{year}{2010}\natexlab{}.
\newblock \showarticletitle{Dependent types from counterexamples}. In
  \bibinfo{booktitle}{\emph{Proceedings of the 37th {ACM} {SIGPLAN-SIGACT}
  Symposium on Principles of Programming Languages, {POPL} 2010, Madrid, Spain,
  January 17-23, 2010}}. \bibinfo{pages}{119--130}.
\newblock
\urldef\tempurl%
\url{https://doi.org/10.1145/1706299.1706315}
\showDOI{\tempurl}


\bibitem[Unno and Kobayashi(2009)]%
        {unno-kobayashi-PPDP2009}
\bibfield{author}{\bibinfo{person}{Hiroshi Unno} {and} \bibinfo{person}{Naoki
  Kobayashi}.} \bibinfo{year}{2009}\natexlab{}.
\newblock \showarticletitle{Dependent type inference with interpolants}. In
  \bibinfo{booktitle}{\emph{Proceedings of the 11th International {ACM}
  {SIGPLAN} Conference on Principles and Practice of Declarative Programming,
  September 7-9, 2009, Coimbra, Portugal}}. \bibinfo{pages}{277--288}.
\newblock
\urldef\tempurl%
\url{https://doi.org/10.1145/1599410.1599445}
\showDOI{\tempurl}


\bibitem[Unno et~al\mbox{.}(2017)]%
        {unno-et-al-popl2018}
\bibfield{author}{\bibinfo{person}{Hiroshi Unno}, \bibinfo{person}{Yuki
  Satake}, {and} \bibinfo{person}{Tachio Terauchi}.}
  \bibinfo{year}{2017}\natexlab{}.
\newblock \showarticletitle{Relatively Complete Refinement Type System for
  Verification of Higher-Order Non-Deterministic Programs}.
\newblock \bibinfo{journal}{\emph{Proc. ACM Program. Lang.}}
  \bibinfo{volume}{2}, \bibinfo{number}{POPL}, Article \bibinfo{articleno}{12}
  (\bibinfo{date}{dec} \bibinfo{year}{2017}), \bibinfo{numpages}{29}~pages.
\newblock
\urldef\tempurl%
\url{https://doi.org/10.1145/3158100}
\showDOI{\tempurl}


\bibitem[Vazou et~al\mbox{.}(2015)]%
        {vazou-et-al-icfp2015}
\bibfield{author}{\bibinfo{person}{Niki Vazou}, \bibinfo{person}{Alexander
  Bakst}, {and} \bibinfo{person}{Ranjit Jhala}.}
  \bibinfo{year}{2015}\natexlab{}.
\newblock \showarticletitle{Bounded refinement types}. In
  \bibinfo{booktitle}{\emph{Proceedings of the 20th {ACM} {SIGPLAN}
  International Conference on Functional Programming, {ICFP} 2015, Vancouver,
  BC, Canada, September 1-3, 2015}}. \bibinfo{pages}{48--61}.
\newblock
\urldef\tempurl%
\url{https://doi.org/10.1145/2784731.2784745}
\showDOI{\tempurl}


\bibitem[Vazou et~al\mbox{.}(2013)]%
        {vazou-et-al-esop2013}
\bibfield{author}{\bibinfo{person}{Niki Vazou}, \bibinfo{person}{Patrick~Maxim
  Rondon}, {and} \bibinfo{person}{Ranjit Jhala}.}
  \bibinfo{year}{2013}\natexlab{}.
\newblock \showarticletitle{Abstract Refinement Types}. In
  \bibinfo{booktitle}{\emph{Programming Languages and Systems - 22nd European
  Symposium on Programming, {ESOP} 2013, Held as Part of the European Joint
  Conferences on Theory and Practice of Software, {ETAPS} 2013, Rome, Italy,
  March 16-24, 2013. Proceedings}}. \bibinfo{pages}{209--228}.
\newblock
\urldef\tempurl%
\url{https://doi.org/10.1007/978-3-642-37036-6_13}
\showDOI{\tempurl}


\bibitem[Vazou et~al\mbox{.}(2014)]%
        {Vazou:2014:RTH:2628136.2628161}
\bibfield{author}{\bibinfo{person}{Niki Vazou}, \bibinfo{person}{Eric~L.
  Seidel}, \bibinfo{person}{Ranjit Jhala}, \bibinfo{person}{Dimitrios
  Vytiniotis}, {and} \bibinfo{person}{Simon Peyton-Jones}.}
  \bibinfo{year}{2014}\natexlab{}.
\newblock \showarticletitle{Refinement Types for Haskell}. In
  \bibinfo{booktitle}{\emph{Proceedings of the 19th ACM SIGPLAN International
  Conference on Functional Programming}} (Gothenburg, Sweden)
  \emph{(\bibinfo{series}{ICFP '14})}. \bibinfo{pages}{269--282}.
\newblock
\showISBNx{978-1-4503-2873-9}


\bibitem[Winskel(1993)]%
        {winskel-1993}
\bibfield{author}{\bibinfo{person}{Glynn Winskel}.}
  \bibinfo{year}{1993}\natexlab{}.
\newblock \bibinfo{booktitle}{\emph{The Formal Semantics of Programming
  Languages: An Introduction}}.
\newblock \bibinfo{publisher}{MIT Press}.
\newblock
\urldef\tempurl%
\url{https://doi.org/10.7551/mitpress/3054.001.0001}
\showDOI{\tempurl}


\bibitem[Wright and Felleisen(1994)]%
        {wright-felleisen-infcomp1994}
\bibfield{author}{\bibinfo{person}{A.K. Wright} {and} \bibinfo{person}{M.
  Felleisen}.} \bibinfo{year}{1994}\natexlab{}.
\newblock \showarticletitle{A Syntactic Approach to Type Soundness}.
\newblock \bibinfo{journal}{\emph{Information and Computation}}
  \bibinfo{volume}{115}, \bibinfo{number}{1} (\bibinfo{year}{1994}),
  \bibinfo{pages}{38--94}.
\newblock
\urldef\tempurl%
\url{https://doi.org/10.1006/inco.1994.1093}
\showDOI{\tempurl}


\bibitem[Zilberstein et~al\mbox{.}(2023)]%
        {zilberstein-et-al-oopsla2023}
\bibfield{author}{\bibinfo{person}{Noam Zilberstein}, \bibinfo{person}{Derek
  Dreyer}, {and} \bibinfo{person}{Alexandra Silva}.}
  \bibinfo{year}{2023}\natexlab{}.
\newblock \showarticletitle{Outcome Logic: A Unifying Foundation for
  Correctness and Incorrectness Reasoning}.
\newblock \bibinfo{journal}{\emph{Proc. ACM Program. Lang.}}
  \bibinfo{volume}{7}, \bibinfo{number}{OOPSLA1}, Article
  \bibinfo{articleno}{93} (\bibinfo{date}{apr} \bibinfo{year}{2023}),
  \bibinfo{numpages}{29}~pages.
\newblock
\urldef\tempurl%
\url{https://doi.org/10.1145/3586045}
\showDOI{\tempurl}


\end{thebibliography}

\iftoggle{supplementary}{%
\clearpage

\appendix

\section{Additional Material in Support of Section~\ref{sec:prog-lang}}

\subsection{Proof of Lemma~\ref{lem:structural-admissibility}}

\chB{
We prove the following result:
\begin{quote}    
Let $\Gamma$ and $\Delta$ be sets of formulas, $M$ be a term, and $A$ be a type. If $\Gamma \types \Delta$ then both, $\Gamma,\, M : A \types \Delta$ and $\Gamma \types M : A,\, \Delta$.
\end{quote}
\begin{proof}
    Let $\Gamma, \Delta, M$, and $A$ be as above, such that $\Gamma \types \Delta$. we then proceed by induction on the derivation of $\types$.\\
    \begin{description}
        \item[$\rlnm{ID}$] {
            We have $\Gamma,\, x : A \types x : A,\, \Delta$. Then, clearly, by $\rlnm{ID}$ we also have, $\Gamma,\, M : A',\, x : A \types x : A,\, \Delta$, and by the same reasoning, $\Gamma,\, x : A \types x : A,\, M : A',\, \Delta$ as required.
            }
        \item[$\rlnm{ZeroR}$] {
            We have $\Gamma \types \zerotm : \intty,\, \Delta$. Then, clearly, by $\rlnm{ZeroR}$ we also have, $\Gamma,\, M : A \types \zerotm : \intty,\, \Delta$ and $\Gamma \types \zerotm : \intty,\, M : A,\, \Delta$.
            }
        \item[$\rlnm{cR}$] {
            We have $\Gamma \types c(N) : \intty,\, \Delta$ and $\Gamma \types N : \intty,\, \Delta$ for $c\in\{\succtm,\, \predtm\}$. By the inductive hypothesis, we have $\Gamma,\, M : A \types N : \intty,\, \Delta$ and $\Gamma \types N : \intty,\, M : A,\, \Delta$. Then, by $\rlnm{cR}$, we have $\Gamma,\, M : A \types c(N) : \intty,\, \Delta$ and $\Gamma \types c(N) : \intty,\, M : A,\, \Delta$ as required.
            }
        \item[$\rlnm{FixsR}$] {
            We have $\Gamma \types \fixtm{f(x)}{N} : A \to B,\, \Delta$ and $\Gamma,\, f : A \to B,\, x : A \types N : B,\, \Delta$. By the inductive hypothesis, we have $\Gamma,\, M : A',\, f : A \to B,\, x : A \types N : B,\, \Delta$ and $\Gamma,\, f : A \to B,\, x : A \types N : B,\, M : A',\, \Delta$. Then, by $\rlnm{FixsR}$, we have $\Gamma,\, M : A' \types \fixtm{f(x)}{N} : A \to B,\, \Delta$ and $\Gamma \types \fixtm{f(x)}{N} : A \to B,\, M : A',\, \Delta$ as required.
            }
        \item[$\rlnm{FixnR}$] {
            We have $\Gamma \types \fixtm{f(x)}{N} : A \from B,\, \Delta$ and $\Gamma,\, f : A \from B,\, x : A \types N : B,\, \Delta$. By the inductive hypothesis, we have $\Gamma,\, M : A',\, f : A \from B,\, x : A \types N : B,\, \Delta$ and $\Gamma,\, f : A \from B,\, x : A \types N : B,\, M : A',\, \Delta$. Then, by $\rlnm{FixsR}$, we have $\Gamma,\, M : A' \types \fixtm{f(x)}{N} : A \from B,\, \Delta$ and $\Gamma \types \fixtm{f(x)}{N} : A \from B,\, M : A',\, \Delta$ as required.
            }
        \item[$\rlnm{AppR}$] {
            We have $\Gamma \types P\ Q : B,\, \Delta$ and $\Gamma \types P : A \to B,\, \Delta$ and $\Gamma \types Q : A,\, \Delta$. By the inductive hypothesis, we have $\Gamma,\, M : A' \types P : A \to B,\, \Delta$, $\Gamma,\, M : A' \types Q : A,\, \Delta$, $\Gamma \types P : A \to B,\, M : A' \Delta$ and $\Gamma \types Q : A,\, M : A' \Delta$. Then, by $\rlnm{AppR}$, we have $\Gamma,\, M : A' \types P\ Q : B,\, \Delta$ and $\Gamma \types P\ Q : B,\, M : A' \Delta$ as required.
            }
        \item[$\rlnm{PairR}$] {
            We have $\Gamma \types (P, Q) : A \times B,\, \Delta$ and $\Gamma \types P : A,\, \Delta$ and $\Gamma \types Q : B,\, \Delta$. By the inductive hypothesis, we have $\Gamma,\, M : A' \types P : A,\, \Delta$, $\Gamma,\, M : A' \types Q : B,\, \Delta$, $\Gamma \types P : A,\, M : A' \Delta$ and $\Gamma \types Q : B,\, M : A' \Delta$. Then, by $\rlnm{PairR}$, we have $\Gamma,\, M : A' \types (P, Q) : A \times B,\, \Delta$ and $\Gamma \types (P, Q) : A \times B,\, M : A' \Delta$ as required.
            }
        \item[$\rlnm{LetR}$] {
            We have $\Gamma \types \lettm{(x,y)}{P}{Q} : A,\, \Delta$, $\Gamma \types P : B \times C,\, \Delta$, and $\Delta,\, x : B,\, y : C \types Q : A,\, \Delta$. By the inductive hypothesis, we have $\Gamma,\, M : A' \types P : B \times C,\, \Delta$, $\Delta,\, M : A',\, x : B,\, y : C \types Q : A,\, \Delta$, $\Gamma \types P : B \times C,\, M : A',\, \Delta$, and $\Delta,\, x : B,\, y : C \types Q : A,\, M : A',\, \Delta$. Then by $\rlnm{LetR}$, we have $\Gamma,\, M : A' \types \lettm{(x,y)}{P}{Q} : A,\, \Delta$ and $\Gamma \types \lettm{(x,y)}{P}{Q} : A,\, M : A',\, \Delta$ as required.
            }
        \item[$\rlnm{IfzR}$] {
            We have $\Gamma \types \ifztm{R}{P}{Q} : A,\, \Delta$, $\Gamma \types R : \intty,\, \Delta$, $\Gamma \types P : A,\, \Delta$, and $\Gamma \types Q : A,\, \Delta$. By the inductive hypothesis, we have $\Gamma,\, M : A' \types R : \intty,\, \Delta$, $\Gamma,\, M : A' \types P : A,\, \Delta$, $\Gamma,\, M : A \types Q : A,\, \Delta$, $\Gamma \types R : \intty,\, M : A',\, \Delta$, $\Gamma \types P : A,\, M : A',\, \Delta$, and $\Gamma \types Q : A,\, M : A',\, \Delta$. Then by $\rlnm{IfzR}$, we have $\Gamma,\, M : A' \types \ifztm{R}{P}{Q} : A,\, \Delta$ and $\Gamma \types \ifztm{R}{P}{Q} : A,\, M : A',\, \Delta$ as required.
            }
        \item[$\rlnm{Dis}$] {
            We have $\Gamma,\, N : A \types \Delta$, $\Gamma \types N : B,\, \Delta$, and $A \distype B$. By the inductive hypothesis, we have $\Gamma,\, M : A',\, \types N : B,\, \Delta$ and $\Gamma \types N : B,\, M : A',\, \Delta$. By $\rlnm{Dis}$, we have $\Gamma,\, M : A'\, N : A \types \Delta$ and $\Gamma,\, N : A \types M : A',\, \Delta$.
            }
        \item[$\rlnm{PairL}$] {
            We have $\Gamma,\, (M_1, M_2) : B_1 \times B_2 \types \Delta$ and $\Gamma,\, M_i : B_i \types \Delta$ for some $i \in \{1, 2\}$. By the inductive hypothesis, we have $\Gamma,\, M : A,\, M_i : B_i \types \Delta$ and $\Gamma,\, M_i : B_i \types M : A,\, \Delta$. Then, by $\rlnm{PairL}$, we have $\Gamma,\, M : A,\, (M_1, M_2) : B_1 \times B_2 \types \Delta$ and $\Gamma,\, (M_1, M_2) : B_1 \times B_2 \types M : A,\, \Delta$ as required.
            }
        \item[$\rlnm{AppL}$] {
            We have $\Gamma,\, P\ Q : B \types \Delta$, $\Gamma \types P : A \from B,\, \Delta$, and $\Gamma,\, Q : A \types \Delta$. By the inductive hypothesis, we have $\Gamma,\, M : A' \types P : A \from B,\, \Delta$, $\Gamma,\, M : A',\, Q : A \types \Delta$, $\Gamma \types P : A \from B,\, M : A',\, \Delta$, and $\Gamma,\, Q : A \types M : A',\, \Delta$. Then by $\rlnm{AppL}$, we have $\Gamma,\, M : A',\, P\ Q : B \types \Delta$ and $\Gamma,\, P\ Q : B \types M : A',\, \Delta$ as required.
            }
        \item[$\rlnm{cL}$] {
            We have $\Gamma,\, c(N) : \intty \types \Delta$ and $\Gamma,\, N : \intty \types \Delta$ for $c\in\{\succtm,\, \predtm\}$. By the inductive hypothesis, we have $\Gamma,\, M : A,\, N : \intty \types \Delta$ and $\Gamma,\, N : \intty \types M : A,\, \Delta$. Then, by $\rlnm{cL}$, we have $\Gamma,\, M : A,\, c(N) : \intty \types \Delta$ and $\Gamma,\, c(N) : \intty \types M : A,\, \Delta$ as required.
            }
        \item[$\rlnm{IfzL1}$] {
            We have $\Gamma,\, \ifztm{R}{P}{Q} : A \types \Delta$ and $\Gamma,\, R : \intty \types \Delta$. By the inductive hypothesis, we have $\Gamma,\, M : A',\, R : \intty \types \Delta$ and $\Gamma,\, R : \intty \types M : A',\, \Delta$. Then by $\rlnm{IfzL1}$ we have $\Gamma,\, M : A',\, \ifztm{R}{P}{Q} : A \types \Delta$ and $\Gamma,\, \ifztm{R}{P}{Q} : A \types M : A',\, \Delta$ as required.
            }
        \item[$\rlnm{IfzL2}$] {
            We have $\Gamma,\, \ifztm{R}{P}{Q} : A \types \Delta$, $\Gamma,\, P : A \types \Delta$, and $\Gamma,\, Q : A \types \Delta$. By the inductive hypothesis, we have $\Gamma,\, M : A',\, P : A \types \Delta$, $\Gamma,\, M : A',\, Q : A \types \Delta$, $\Gamma,\, P : A \types M : A',\, \Delta$, and $\Gamma,\, Q : A \types M : A',\, \Delta$. Then by $\rlnm{IfzL2}$, we have $\Gamma,\, M : A',\, \ifztm{R}{P}{Q} : A \types \Delta$ and $\Gamma,\, \ifztm{R}{P}{Q} : A \types M : A',\, \Delta$ as required.
            }
        \item[$\rlnm{LetL1}$] {
            We have $\Gamma,\, \lettm{(x, y)}{P}{Q} : A \types \Delta$ and $\Gamma,\, Q : A \types \Delta$. By the inductive hypothesis, we have $\Gamma,\, M : A',\, Q : A \types \Delta$ and $\Gamma,\, Q : A \types M : A',\, \Delta$. Then by $\rlnm{LetL1}$, we have $\Gamma,\, M : A',\, \lettm{(x, y)}{P}{Q} : A \types \Delta$ and $\Gamma,\, \lettm{(x, y)}{P}{Q} : A \types M : A',\, \Delta$ as required.
            }
        \item[$\rlnm{LetL2}$] {
            We have $\Gamma,\, \lettm{(x_1, x_2)}{P}{Q} : A \types \Delta$, $\Gamma,\, P : B_1 \times B_2 \types \Delta$, and $(\forall i)\, \Gamma,\, Q : A \types x_i : B_i,\, \Delta$. By the inductive hypothesis, we have $\Gamma,\, M : A',\, P : B_1 \times B_2 \types \Delta$, $(\forall i)\, \Gamma,\, M : A',\, Q : A \types x_i : B_i,\, \Delta$, $\Gamma,\, P : B_1 \times B_2 \types M : A',\, \Delta$, and $(\forall i)\, \Gamma,\, Q : A \types x_i : B_i,\, M : A',\, \Delta$. Then, by $\rlnm{LetL2}$, we have $\Gamma,\, \lettm{(x_1, x_2)}{P}{Q} : A,\, M : A' \types \Delta$ and $\Gamma,\, \lettm{(x_1, x_2)}{P}{Q} : A \types M : A',\, \Delta$ as required.
            }
    \end{description}
\end{proof}

\begin{corollary}
    Both left and right weakening hold.
\end{corollary}
}

\section{Additional Material in Support of Section~\ref{sec:semantics}}\label{sec:apx-cbv}

First, we give an explicit characterisation of getting stuck.

\begin{definition}[Getting Stuck]
  We separate the values into different classes:
    \begin{multicols}{2}
      \begin{itemize}
        \item $\Vals$ -- all values
        \item $\FunVals$ -- those values of shape \ch{$\fixtm{f(x)}{M}$}
        \item $\NatVals$ -- those values of shape $\pn{n}$
        \item $\PairVals$ -- those values of shape $(V,\,W)$
      \end{itemize}
    \end{multicols}
    A term is said to be a \emph{stuckex} if it is of one of the following forms:
    \begin{multicols}{2}
    \begin{itemize}
      \item $\succtm(V)$ or $\predtm(V)$ with $V \notin \NatVals$
      \item $V\,M$ with $V \notin \FunVals$
      \item $\iftm{V}{N}{P}$ with $V \notin \NatVals$
      \item $\lettm{(x,y)}{V}{M}$ with $V \notin \PairVals$
    \end{itemize}
  \end{multicols}
    A term is said to be \emph{stuck} just if it has shape $\calE[M]$ with $M$ a stuckex.  A term $M$ is said to \emph{get stuck} just if $M \peds N$ with $N$ stuck.  It is easy to see that a term in normal form is either stuck or is a value.  Hence, every term either diverges, gets stuck or evaluates.
\end{definition}

Next, we have a series of lemmas building up some basic results about the combinatorics of reduction.  The overall goal is to show that replacing a subterm by another that either reduces to the same normal form or diverges leads to an essentially similar outcome.

In the following lemma, we use the phrase `$M$ is not blocked by $z$' to mean that $M$ does not have shape $\calE[z]$ for any evaluation context $\calE$.

\begin{lemma}\label{lem:contexts-mod-subs}
  If $M[N/z] = \calE[P]$ and $M$ is not blocked by $z$, then there is $\calF$ and $P'$ such that $M = \calF[P']$ and $\calE = \calF[N/z]$ and $P=P'[N/z]$.
\end{lemma}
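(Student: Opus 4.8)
The plan is to prove Lemma~\ref{lem:contexts-mod-subs} by structural induction on the evaluation context $\calE$. In the base case $\calE = \Box$ we have $M[N/z] = P$, and it suffices to take $\calF = \Box$ and $P' = M$: the three required equations $M = \calF[P']$, $\calE = \calF[N/z]$ and $P = P'[N/z]$ then hold immediately (and the blocking hypothesis is not needed here).

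For the inductive step, assume $\calE \neq \Box$, so $\calE$ has a top-level context former. I first dispose of $M$ being a variable: it cannot be $z$, since $z = \Box[z]$ would make $M$ blocked by $z$, contrary to hypothesis; and if $M = y \neq z$ then $M[N/z] = y$ is a variable, which admits no decomposition $\calE[P]$ with $\calE \neq \Box$. Hence $M$ is compound, and since $M \neq z$ the substitution $[N/z]$ leaves the top-level constructor of $M$ unchanged; as $\calE[P]$ and $M[N/z]$ agree at the root, this constructor must be the one exposed by $\calE$. For the ``easy'' context formers $\succtm(\calE_0)$, $\predtm(\calE_0)$, $\iftm{\calE_0}{N_1}{N_2}$, $\lettm{(x,y)}{\calE_0}{N_1}$, $(\calE_0, N_1)$ and $\calE_0\,N_1$, peeling one layer gives $M = \mathsf{c}(\ldots, M_0, \ldots)$ with $M_0[N/z] = \calE_0[P]$, and $M_0$ is again not blocked by $z$ — for if $M_0 = \calG[z]$ then prefixing the peeled layer (e.g. $\succtm(\calG)$, $\calG\,N_1$, $(\calG, N_1)$, which are themselves evaluation contexts) would exhibit $M$ as blocked by $z$. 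The induction hypothesis applied to $\calE_0$ supplies $\calF_0$ and $P'$, and I reassemble $\calF$ by substituting $\calF_0$ for $\calE_0$ in the peeled layer; the equations $M = \calF[P']$, $\calE = \calF[N/z]$, $P = P'[N/z]$ follow by direct computation, invoking the Barendregt convention ($x,y \notin \fv(N) \cup \{z\}$) whenever a binder is crossed.

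The two remaining context formers, $(V, \calE_0)$ and $(\abs{x}{M_1})\,\calE_0$, carry a side condition: the material left of the hole must be a value, respectively an abstraction. This is where ``$M$ not blocked by $z$'' does real work. Writing $M = (M_1, M_2)$ (resp. $M = M_1\,M_2$), I have $M_1[N/z] = V$ (resp. $\abs{x}{M_1'}$), and I must show $M_1$ is itself a value (resp. abstraction) so that the reassembled $\calF = (M_1, \calF_0)$ (resp. $(\abs{x}{M_1''})\,\calF_0$) is a legal evaluation context. The abstraction case is easy: $M_1 \neq z$, since otherwise $M = z\,M_2 = (\Box\,M_2)[z]$ is blocked; then substitution preserves the top constructor, forcing $M_1$ to be an abstraction. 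I expect the value case to be the main obstacle, because being a value is not determined by the top constructor alone — for instance $\succtm(z)$ is not a value yet $\succtm(z)[\pn{3}/z] = \pn{4}$ is. I will isolate this as an auxiliary claim: \emph{if $M_1[N/z]$ is a value and $M_1$ is not blocked by $z$, then $M_1$ is a value}, proved by a separate structural induction on $M_1$, the crux being that a non-value whose active (leftmost-reducible) position is not the variable $z$ keeps its non-value status under substitution, whereas if that position \emph{is} $z$ then $M_1$ is blocked by $z$. Since a pair evaluates its left component first, $M$ not blocked together with $M_1$ a supposed non-value would force $M_1$ not blocked, so the claim applies and yields that $M_1$ is a value. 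With the left component established as a value (resp. abstraction), the right component $M_2$ is seen to be not blocked by $z$ exactly as in the easy cases, the induction hypothesis applies to $\calE_0$, and reassembly completes the proof.
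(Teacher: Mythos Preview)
Your proposal is correct and follows the same inductive structure as the paper's proof, case by case on $\calE$. You are in fact more careful than the paper in the $(V,\calE_0)$ case: the paper simply writes ``$M$ must be of shape $(V',Q)$ and $Q$ is again not blocked by $z$'' and sets $\calF = (V',\calE'')$ without ever verifying that $V'$ is a value (which is needed both for $(V',\calE'')$ to be a legitimate evaluation context and for the preceding ``$Q$ not blocked'' step to go through), whereas your auxiliary claim---that $M_1[N/z]$ a value together with $M_1$ not blocked by $z$ forces $M_1$ to be a value---is exactly what fills this gap, and your sketch of its proof by structural induction on $M_1$ is sound.
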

\begin{proof}
  By induction on $\calE$.
  \begin{itemize}
    \item Suppose $\calE = \Box$ and $M$ is not blocked on $z$.  Then $M[N/z] = P$ so let $\calE' = \Box$ and $P' = M$.
    \item Suppose $\calE$ is of shape $\succtm(\calE')$ and $M$ is not blocked by $z$, so $M[N/z]$ is of shape $\succtm(\calE'[P])$.  Since $M$ is not blocked by $z$, it must have shape $\succtm(Q)$ for some $Q$ again not blocked by $z$, also $Q[N/z] = \calE'[P]$.  Hence, it follows from the induction hypothesis that there is some $\calE''$ and $P'$ such that $P'[N/z] = P$ and $\calE' = \calE''[N/z]$.  Hence, let $\calF := \succtm(\calE'')$.
    \item Suppose $\calE$ is of shape $\predtm(\calE')$, we reason analogously to the previous case.
    \item Suppose $\calE$ is of shape $(\calE',Q_2)$ and $M$ is not blocked by $z$, so $M[N/z]$ is of shape $(\calE'[P],Q_2)$.  Since $M$ is not blocked by $z$, it has shape $(Q_1,Q_2')$ and $Q_1$ is not blocked by $z$ and $Q_1[N/z] = \calE'[P]$ and $Q_2'[N/z] = Q_2$.  It follows by induction that there is $\calE''$ and $P'$ such that $\calE''[N/z] = \calE'$ and $P'[N/z] = P$.  Hence, let $\calF := (\calE'',\,Q_2')$.
    \item Suppose $\calE = (V,\calE')$ and $M$ is not blocked by $z$.  So, $M[N/z]$ is of shape $(V,\calE'[P])$.  Since $M$ is not blocked by $z$, $M$ must be of shape $(V',Q)$ and $Q$ is again not blocked by $z$, $V'[N/z] = V$ and $Q[N/z] = \calE'[P]$.  Then it follows from the induction hypothesis that there are $\calE''$ and $P'$ such that $P'[N/z] = P$ and $\calE''[N/z] = \calE'$.  Hence, let $\calF$ be $(V',\,\calE'')$ and the result is immediate.
  
    \item Suppose $\calE$ is of shape $\lettm{(x,y)}{calE'}{Q}$ and $M$ is not blocked by $z$. Then $M[N/z]$ is of shape $\lettm{(x,y)}{calE'[P]}{Q}$. Since $M$ is not blocked by $z$, it must be that $M$ has shape $\lettm{(x,y)}{R}{Q'}$ and $R[N/z] = \calE'[P]$ and $Q'[N/z] = Q$ and $R$ is not blocked by $z$.  Hence, it follows from the induction hypothesis that there are $P'$ and $\calE''$ such that $R=\calE''[P']$ and $P'[N/z] = P$ and $\calE''[N/z] = \calE'$.  Thus, let $\calF$ be $\lettm{(x,y)}{\calE''}{Q'}$.
    
    \item Suppose $\calE$ is of shape $\iftm{\calE'}{Q_1}{Q_2}$ and $M$ is not blocked by $z$.  Then $M[N/z] = \iftm{\calE'[P]}{Q_1}{Q_2}$. Since $M$ is not blocked by $z$, it must be that $M$ has shape\\ $\iftm{Q_0}{Q_1'}{Q_2'}$ and $Q_0[N/z] = \calE'[P]$, $Q_1'[N/z] = Q_1$, $Q_2'[N/z] = Q_2$.  Therefore, it follows from the induction hypothesis that there is $P'$ and $\calE''$ such that $P'[N/z] = P$ and $\calE''[N/z] = \calE'$.  Thus, take $\calF$ to be $\iftm{\calE''}{Q_1'}{Q_2'}$ and the result follows.
    
    
    \item \ch{Suppose $\calE$ is of shape $(\fixtm{f(x)}{Q})\,\calE'$ and $M$ is not blocked by $z$.  Then $M[N/z]$ is equal to $(\fixtm{f(x)}{Q})\,\calE'[P]$.  Since $M$ is not blocked by $z$, $M$ must have shape $(\fixtm{f(x)}{Q'})\,R$, with $Q'[N/x] = Q$, $R[N/z] = \calE'[P]$ and $R$ not blocked by $z$.  Hence, it follows from the induction hypothesis that there is $\calE''$ and $P'$ such that $R = \calE''[P']$, $\calE''[N/z] = \calE'$ and $P'[N/z] = P$.  Therefore, take witness $\calF = (\fixtm{f(x)}{Q'})\,\calE''$ and the result follows.}
    
    \item Suppose $\calE$ is of shape $\calE'\,Q$ and $M$ is not blocked by $z$. Then $M[N/z] = \calE'[P]\,Q$.  Then, since $M$ is not blocked by $z$, $M$ must have shape $R\,Q'$ with $R$ not blocked by $z$ and $R[N/z] = \calE'[P]$ and $Q'[N/z]= Q$.  Then it follows from the induction hypothesis that there is $P'$ and $\calE''$ with $P'[N/z] =P$ and $\calE''[N/z] = \calE'$.  Thus, take $\calF$ to be $\calE''\,Q'$ and the result follows.
  \end{itemize}
\end{proof}

\begin{lemma}\label{lem:one-step-inv-subs}
  If $M[N/z] \ped P$ then $M$ is blocked by $z$ or there is $M'$ such that $M \ped M'$ and $P=M'[N/z]$
\end{lemma}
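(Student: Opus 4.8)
The plan is to reduce the claim to the analysis of a single redex, reusing the context-factoring already established in Lemma~\ref{lem:contexts-mod-subs}. First I would unfold the reduction step: since $\ped$ is the closure of the redex schema under evaluation contexts, $M[N/z] \ped P$ means there are an evaluation context $\calE$, a redex $R$ and its contractum $R'$ (so that $R \ped R'$ instantiates one of \rlnm{IfZ}, \rlnm{IfS}, \rlnm{Let}, \rlnm{Fix}, \rlnm{PZ}, \rlnm{PS}, \rlnm{Beta}) with $M[N/z] = \calE[R]$ and $P = \calE[R']$. If $M$ is blocked by $z$ we are immediately in the left disjunct, so I assume it is not. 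Then Lemma~\ref{lem:contexts-mod-subs}, applied to $M[N/z] = \calE[R]$, supplies a context $\calF$ and a subterm $S$ with $M = \calF[S]$, $\calE = \calF[N/z]$ and $S[N/z] = R$. It now suffices to show that $S$ is itself a redex with $S \ped S'$ and $R' = S'[N/z]$; for then $M = \calF[S] \ped \calF[S']$, and setting $M' = \calF[S']$ we get $M'[N/z] = \calF[N/z][\,S'[N/z]\,] = \calE[R'] = P$, as required.

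The substantive work is to show, by cases on the shape of the redex $R$, that $S$ must be a redex of the same form. The recurring danger is that substituting the closed value $N$ for $z$ may turn a non-value into a value, and could thereby appear to create a redex in $M[N/z]$ that was not present in $M$. To exclude this I would first prove an auxiliary fact by structural induction: if $T$ is not a value but $T[N/z]$ is a value, then $T$ is blocked by $z$. This is established by inspecting the shape of the value $T[N/z]$ and propagating blocking outward through the evaluation contexts $\succtm(\calE)$, $(\calE, M)$ and $(V, \calE)$, with base case $T = z$.

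Given this, each redex case is routine. Throughout, $S \neq z$ (otherwise $M = \calF[z]$ would be blocked by $z$). In the \rlnm{Beta} case $R = (\abs{x}{M_1})\,V$, so $S = S_f\,S_a$ with $S_f[N/z] = \abs{x}{M_1}$ and $S_a[N/z] = V$; if $S_f$ or $S_a$ were a non-value it would be blocked by $z$ by the auxiliary fact, and if $S_f$ were $z$ then $S$ would be blocked via $\Box\,S_a$---each contradicting that $M$ is not blocked. Hence $S_f = \abs{x}{S_1}$ and $S_a$ is a value, so $S$ is a \rlnm{Beta} redex. The cases \rlnm{IfZ}/\rlnm{IfS}, \rlnm{PZ}/\rlnm{PS} and \rlnm{Let} are analogous, pushing blocking out through the guard, argument or scrutinee positions to force those positions to be the expected values, while \rlnm{Fix} needs only $S \neq z$. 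To verify $R' = S'[N/z]$ in the cases that contract by substitution (\rlnm{Beta}, \rlnm{Let}, \rlnm{Fix}) I would use the substitution-composition identity $(S_1[U/x])[N/z] = (S_1[N/z])[\,U[N/z]/x\,]$, valid here because $N$ is closed and $x \neq z$ by the bound-variable convention.

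I expect the main obstacle to be exactly the auxiliary value/blocking fact together with the bookkeeping in the \rlnm{Beta} and \rlnm{Let} cases: one must check that every way in which the substitution could manufacture a fresh redex forces $z$ into an evaluation-context position of $M$, and is therefore ruled out by the non-blocked hypothesis. The remainder is a direct unwinding of Lemma~\ref{lem:contexts-mod-subs} and the reduction schema.
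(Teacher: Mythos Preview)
Your proposal is correct but organises the argument differently from the paper. The paper proves the lemma by a direct induction on the evaluation context $\calE$ in the decomposition $M[N/z] = \calE[P_1]$, with the base case $\calE = \Box$ doing the redex-by-redex analysis and each inductive case peeling off one layer of context. You instead invoke Lemma~\ref{lem:contexts-mod-subs} as a black box to obtain $\calF$ and $S$ at once, and then carry out only the redex analysis on $S$. This is the more modular choice: the paper effectively reproves the content of Lemma~\ref{lem:contexts-mod-subs} inside its induction, while your route reuses it. What you gain is a shorter argument and an explicit statement of the value/blocking fact (if $T$ is not a value but $T[N/z]$ is, then $T$ is blocked by $z$), which the paper leaves tacit---it writes, e.g.\ in the \rlnm{Beta} base case, ``since $M$ is not blocked by $z$, it must be that $M$ has shape $(\abs{x}{Q'})\,V'$'' without spelling out why $V'$ is already a value. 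Your auxiliary fact is exactly what fills that step, and indeed the same fact is needed (and similarly glossed over) in the $(V,\calE')$ and $(\abs{x}{M})\,\calE'$ cases of Lemma~\ref{lem:contexts-mod-subs} itself.

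Two small remarks. First, you call $N$ ``the closed value'', but the lemma statement assumes neither; your argument does not actually rely on closedness, since the substitution-composition identity you need holds under the bound-variable convention (choose $x \notin \fv(N) \cup \{z\}$). Second, the $\succtm$ case of your auxiliary fact needs a little more care than ``propagating blocking outward'': when $T = \succtm(T')$ with $T'[N/z]$ a numeral, $T'$ may be a value that is not a numeral (namely $T' = z$), and in that case the induction hypothesis does not fire; you must handle $T' = z$ directly, observing $\succtm(z)$ is blocked via $\succtm(\Box)$. With that caveat the induction goes through.
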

\begin{proof}
  We prove the following by induction on $\calE$.
  \begin{quote}
    For all $\calE$, for all $M$, if $M[N/z]$ is of shape $\calE[P_1]$ and $P_1 \ped P_2$ and $M$ is not blocked by $z$, then there is some $M'$ such that $M \ped M'$ and $M'[N/z] = \calE[P_2]$.
  \end{quote}
  \begin{itemize}
    \item Suppose $\calE = \Box$.  Then assume $M[N/z]$ decomposes to $\calE[P_1]$ and $P_1 \ped P_2$ and $M$ is not blocked by $z$.  Then we analyse cases on the axiom $P_1 \ped P_2$.
      \begin{itemize}
        \item (PZ). Then $M[N/z]$ is of shape $\predtm(\pn{0})$ and $P_2 = \pn{0}$.  Since $M$ is not blocked by $z$, it must have shape $\predtm(\pn{0})$, so let $M'$ be $\pn{0}$.
        \item (PS). Then $M[N/z]$ is of shape $\predtm(\pn{k+1})$ and $P_2 = \pn{k}$.  Since $M$ is not blocked by $z$, it must have shape $\predtm(\pn{k+1})$, so let $M'$ be $\pn{k}$.
        \item (IfZ).  Then $M[N/z]$ is of shape $\iftm{\pn{0}}{Q_1}{Q_2}$ and $P_2 = Q_1$.  Since $M$ is not blocked by $z$, it must have shape $\iftm{\pn{0}}{Q_1'}{Q_2'}$ with $Q_i=Q_i'[N/z]$.  So, let $M'$ be $Q_1'$ and the result follows immediately.
        \item (IfS).  Then $M[N/z]$ is of shape $\iftm{\pn{n+1}}{Q_1}{Q_2}$ and $P_2=Q_2[\pn{n}/x]$.  Since $M$ is not blocked by $z$, $M$ must have shape $\iftm{\pn{n+1}}{Q_1'}{Q_2'}$ and $Q_i=Q_i'[N/z]$.  Let $M'$ be $Q_2'[\pn{n}/x]$.  Then $M'[N/z] = (Q_2'[\pn{n}/x])[N/z] = (Q_2'[N/z])[\pn{n}/x]$, as required.
        \item (Fix$\beta$).  \ch{Then $M[N/z]$ is of shape $(\fixtm{f(x)}{Q})\,V$ and $P_2 = Q[V/x,\fixtm{f(x)}{Q}/f]$.  Since $M$ is not blocked by $z$, it follows that $M$ is of shape $(\fixtm{f(x)}{Q'})\,V'$ and $Q=Q'[N/z]$ and $V=V'[N/z]$.  Let $M'$ be $Q'[V'/x,\fixtm{f(x)}{Q'}/f]$.  Then it follows that: 
        \[
          M'[N/z] = (Q'[N/z])[V'[N/z]/x,\fixtm{f(x)}{Q'[N/z]}/f]
        \] }
        \item (Let).  Then $M[N/z]$ is of shape $\lettm{(x,y)}{(V_1,V_2)}{Q}$ and $P_2=Q[V_1/x,V_2/y]$.  Since $M$ is not blocked by $z$, it follows that $M$ has shape $\lettm{(x,y)}{(V_1',V_2')}{Q'}$ with $V_i=V_i'[N/z]$ and $Q = Q'[N/z]$.  Then let $M'$ be $Q'[V_1'/x,V_2'/y]$ and the result follows immediately.
      \end{itemize}
    \item Suppose $\calE$ is of shape $\succtm(\calE')$ and $M[N/z]$ is of shape $\succtm(\calE'[P_1])$ and $P_1 \ped P_2$ and $M$ is not blocked by $z$.  Thus $M$ must have shape $\succtm(Q)$ and $Q$ is again not blocked by $z$, also $Q[N/z] = \calE'[P_1]$.  Hence, it follows from the induction hypothesis that there is some $Q'$ such that $Q \ped Q'$ and $Q'[N/z] = \calE'[P_2]$.  Hence, let $M'$ be $\succtm(Q')$.
    \item Suppose $\calE$ is of shape $\predtm(\calE')$, then we reason analogously to the previous case.
    \item Suppose $\calE$ is of shape $(\calE',Q_2)$ and $M[N/z]$ is of shape $(\calE'[P_1],Q_2)$ and $P_1 \ped P_2$ and $M$ is not blocked by $z$.  Thus $M$ has shape $(Q_1,Q_2')$ and $Q_1$ is not blocked by $z$ and $Q_1[N/z] = \calE'[P_1]$ and $Q_2'[N/z] = Q_2$.  It follows by induction that there is $Q_1'$ and $Q_1 \ped Q_1'$ and $Q_1'[N/z] = \calE'[P_2]$.  So, let $M'$ be $(Q_1',Q_2')$.
    \item Suppose $\calE$ is of shape $(V,\calE')$ and $M[N/z]$ is of shape $(V,\calE'[P_1])$ and $P_1 \ped P_2$ and $M$ is not blocked by $z$.  So, $M$ must be of shape $(V',Q)$ and $Q$ is again not blocked by $z$, $V'[N/z] = V$ and $Q[N/z] = \calE'[P_1]$.  Then it follows from the induction hypothesis that there is $Q'$ such that $Q \ped Q'$ and $\calE'[P_1]=Q'[N/z]$.  Hence, let $M'$ be $(V',\,Q')$ and the result is immediate.
    \item Suppose $\calE$ is of shape $\projtm_i(\calE')$ and $M[N/z]$ is of shape $\projtm_i(\calE'[P_1])$ and $P_1 \ped P_2$ and $M$ is not blocked by $z$. So, it must be that $M$ has shape $\projtm_i(Q)$ and $Q[N/z] = \calE'[P_1]$ and $Q$ is not blocked by $z$.  Hence, it follows from the induction hypothesis that there is $Q'$ and $Q \ped Q'$ and $Q'[N/z] = \calE'[P_2]$.  Thus, let $M'$ be $\projtm_i(Q')$.
    \item Suppose $\calE$ is of shape $\iftm{\calE'}{Q_1}{Q_2}$, $M[N/z] = \iftm{\calE'[P_1]}{Q_1}{Q_2}$ and $P_1 \ped P_2$ and $M$ is not blocked by $z$. So, it must be that $M$ has shape $\iftm{Q_0}{Q_1'}{Q_2'}$ and $Q_0[N/z] = \calE'[P_1]$, $Q_1'[N/z] = Q_1$, $Q_2'[N/z] = Q_2$.  Therefore, it follows from the induction hypothesis that there is $Q_0'$ such that $Q_0 \ped Q_0'$ and $Q_0'[N/z] = \calE'[P_2]$.  Thus, take $M'$ to be $\iftm{Q_0'}{Q_1'}{Q_2'}$ and the result follows.
    \item \ch{Suppose $\calE$ has shape $(\fixtm{f(x)}{Q})\,\calE'$ and $M[N/z] = (\fixtm{f(x)}{Q})\,\calE'[P_1]$ and $P_1 \ped P_2$ and $M$ is not blocked by $z$. So, $M$ must have shape $(\fixtm{f(x)}{Q'})\,R$ and $Q'[N/z] = Q$ and $R[N/z] = \calE'[P_1]$ and $R$ is not blocked by $z$.  Thus, it follows from the induction hypothesis that there is $R'$ such that $R \ped R'$ and $R'[N/z] = \calE'[P_2]$.  Then let $M'$ be $(\fixtm{f(x)}{Q'})\,R'$ and the result follows.}
    \item Suppose $\calE$ is of shape $\calE'\,Q$ and $M[N/z] = \calE'[P_1]\,Q$ and $P_1 \ped P_2$ and $M$ is not blocked by $z$.  Then $M$ must have shape $R\,Q'$ with $R$ not blocked by $z$ and $R[N/z] = \calE'[P_1]$ and $Q'[N/z]= Q$.  Then it follows from the induction hypothesis that there is $R'$ with $R \ped R'$ and $R'[N/z] = \calE'[P_2]$.  Thus, take $M'$ to be $R'\,Q'$ and the result follows.
  \end{itemize}
\end{proof}

\begin{lemma}\label{lem:peds-mod-sub}
  If $M[N/x] \peds P$ with $P$ a normal form and $Q \lesssim N$, then either $M[Q/x]$ diverges or there is some $M'$ such that $P=M'[N/x]$ and $M[Q/x] \peds M'[Q/x]$.
\end{lemma}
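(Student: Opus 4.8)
The plan is to prove the statement by well-founded induction on the pair $(\ell, n)$ ordered lexicographically, where $\ell$ is the length of the reduction $M[N/x] \peds P$ and $n$ is the number of free occurrences of $x$ in $M$ (recall $N$ and $Q$ are closed). If $\ell = 0$ then $M[N/x] = P$ and we simply take $M' = M$. For the inductive step I would inspect the first reduction $M[N/x] \ped R$ through Lemma~\ref{lem:one-step-inv-subs}, which tells us that either $M$ is blocked by $x$, or the step descends from a step $M \ped M_1$ of $M$ itself with $R = M_1[N/x]$.

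In the non-blocked case the contracted redex comes from $M$, and since $M$ is not blocked by $x$, none of the value-arguments of that redex (nor the value-positions of its surrounding evaluation context) contain an accessible occurrence of $x$; any occurrence of $x$ there lies underneath a $\lambda$. Hence substituting the possibly divergent $Q$ preserves value-hood in all the relevant positions, and a routine substitution calculation yields $M[Q/x] \ped M_1[Q/x]$ in a single step. As $M_1[N/x] = R \peds P$ is strictly shorter, I would apply the induction hypothesis to $M_1$ and prepend the step, with divergence of $M_1[Q/x]$ propagating to divergence of $M[Q/x]$.

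The crux is the blocked case $M = \calE[x]$. Here I would choose $\calE$ to expose the \emph{leftmost} accessible occurrence of $x$; by minimality no value-position of $\calE$ contains an accessible $x$, so, via Lemma~\ref{lem:contexts-mod-subs}, both $\calE[N/x]$ and $\calE[Q/x]$ are genuine evaluation contexts, with $N$ respectively $Q$ in the hole. I would then split on whether $N$ is a value. If $N$ is not a value, the hole is the active position, so convergence of $M[N/x]$ forces $N$ to reduce to its (closed) normal form $\hat{N}$ in some $k \geq 1$ steps; on the $Q$-side, either $Q$ diverges, in which case $M[Q/x]$ diverges because $Q$ occupies the active position, or $Q$ converges and, by $Q \lesssim N$, to the very same normal form $\hat{N}$. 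In the latter case $M[Q/x] \peds (\calE[\hat{N}])[Q/x]$ while $M[N/x] \peds (\calE[\hat{N}])[N/x]$ with a continuation of length $\ell - k < \ell$, so the induction hypothesis applies to $M_1 = \calE[\hat{N}]$.

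The genuinely delicate case is when $N$ is already a value. Synchronising $Q \peds N$ (divergence of $Q$ being handled as before) gives $M[Q/x] \peds (\calE[N])[Q/x]$, but now $(\calE[N])[N/x] = M[N/x]$, so the reduction length $\ell$ has \emph{not} decreased. This is exactly where the second component of the measure earns its keep: replacing the hole's $x$ by the closed value $N$ strictly decreases the number of free occurrences of $x$, so $(\ell, n-1)$ is lexicographically smaller and the induction hypothesis still applies to $M_1 = \calE[N]$. I expect this value case, together with the bookkeeping that keeps $\calE[N/x]$ and $\calE[Q/x]$ evaluation contexts even when $Q$ is not a value, to be the main obstacle; the rest is propagation of divergence and concatenation of reductions.
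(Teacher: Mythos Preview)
Your proposal is correct and follows essentially the same approach as the paper: lexicographic induction on $(\ell, n)$, with the blocked case handled by normalising the hole occurrence and the non-blocked case handled via Lemma~\ref{lem:one-step-inv-subs}. The paper treats the blocked case uniformly (both $N$ and $Q$ normalise to a common $N'$, and one appeals to $m \leq k$ together with the drop in the $x$-count), whereas you split on whether $N$ is already a value; and you are more explicit than the paper about choosing the leftmost $\calE$ so that $\calE[N/x]$ and $\calE[Q/x]$ remain evaluation contexts, and about why $M[Q/x] \ped M_1[Q/x]$ holds in the non-blocked case---points the paper leaves implicit.
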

\begin{proof}
  By lexicographic induction on the length of the sequence and the number of free occurrences of $x$ in $M$.
  \begin{itemize}
    \item Suppose $M[N/x] = P$.  Then take $M':=M$.
    \item Suppose $M[N/x] \pedn{k} P$ for some $k>0$.  If $M$ is $\calE[x]$, then $M[Q/x] = (\calE[Q/x])[Q]$.  If $Q$ diverges then so does $M[Q/x]$.  Otherwise, $N$ and $Q$ normalise to the same normal form, say $N'$.  Hence, $M[N/x] \peds (\calE[N/x])[N']$ and $M[Q/x] \peds (\calE[Q/x])[N']$, then $(\calE[N/x])[N'] \pedn{m} P$ with $m \leq k$ and the number of free occurrences of $x$ in $\calE[N']$ is decreased by 1.  Hence, it follows from the induction hypothesis that $(\calE[Q/x])[N']$ either diverges or there is some $M'$ such that $P=M'[N/x]$ and $(\calE[Q/x])[N'] \peds M'[Q/x]$.  The result follows immediately. Otherwise, we may assume $M''$ is not blocked by $z$ and we can apply Lemma~\ref{lem:one-step-inv-subs} to obtain some $M''$ such that $M \ped M''$.  Then $M''[N/x] \pedn{k-1} P$, and it follows from the induction hypothesis that either $M''[P/x]$ diverges, or there is some $M'$ such that $P=M'[N/x]$ and $M''[Q/x] \peds M'[Q/x]$.  Then the result follows immediately since $M[Q/x] \ped M''[Q/x]$.
  \end{itemize}
\end{proof}


\begin{lemma}\label{lem:vals-mod-sub}
  If $M[N/z]$ reduces to a value $V$ and $P \lesssim N$, then either $M[P/z]$ diverges or $M[P/z]$ reduces to a value $W$ and all of the following are true: 
  \begin{enumerate}[(a)]
    \item if $V$ does not contain a \ch{fixpoint abstraction}, then $W=V$.
    \item if $V$ has shape $(V_1,V_2)$, then $W$ has shape $(W_1[P/z],W_2[P/z])$, $W_1[N/z] = V_1$ and $W_2[N/z] = V_2$.
    \item \ch{if $V$ has shape $\fixtm{f(x)}{P}$, then $W$ has shape $\fixtm{f(x)}{Q[P/z]}$ with $Q[N/z] = P$}.
  \end{enumerate}
\end{lemma}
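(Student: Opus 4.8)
The plan is to bootstrap from Lemma~\ref{lem:peds-mod-sub}. Since a value is a normal form, applying that lemma to the reduction $M[N/z] \peds V$ with $P \lesssim N$ yields one of two outcomes: either $M[P/z]$ diverges, in which case the first disjunct of the conclusion holds immediately; or there is a term $M'$ with $V = M'[N/z]$ and $M[P/z] \peds M'[P/z]$. In the latter case it remains to show that $M'[P/z]$ either diverges or reduces to a value $W$ satisfying (a)--(c); divergence of $M'[P/z]$ again discharges the first disjunct, and a reduction $M'[P/z] \peds W$ composes with $M[P/z] \peds M'[P/z]$ to give the second. So the whole lemma reduces to a sub-claim about a \emph{residual} $M'$ satisfying the equation $M'[N/z] = V$.

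I would prove this sub-claim by induction on the structure of the value $V$. First I would observe that, since $M[P/z]$ is closed and reduction preserves closedness, $M'$ has at most $z$ free, and since $V$ is closed it is not a bare variable; hence either $M' = z$ (so that $N = V$) or $M'$ is a non-variable whose outermost constructor, being unchanged by substitution, agrees with that of $V$. The case $M' = z$ is uniform across all shapes of $V$: here $M'[P/z] = P$, and from $P \lesssim N = V$ we get that either $P$ diverges (first disjunct) or $P$ normalises, in which case $P$ and $V$ share the normal form $V$, so $P \peds V$ and we may take $W = V$ (writing the required component witnesses, or the abstraction body, as the closed subterms of $V$ to meet (b) or (c)).

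For the structural cases I would match $M'$ to $V$'s head. If $V$ is a numeral then $M'$ is a corresponding chain of $\succtm$'s bottoming out at $\zerotm$ or at $z$, and an inner appeal to the induction hypothesis (case (a), since numerals contain no abstraction) propagates divergence outward through $\succtm(\cdot)$ or reconstructs $V$ exactly. If $V = (V_1,V_2)$ then $M' = (M_1',M_2')$ with $M_i'[N/z] = V_i$; I would apply the induction hypothesis to each component, evaluating left-to-right as the pair evaluation contexts $(\calE,M)$ and $(V,\calE)$ dictate, so that divergence of either component diverges the pair, and otherwise $M'[P/z] \peds (W_1,W_2)$. Reassembling the component witnesses (taking them to be $V_i$ when $V_i$ has no abstraction, $\abs{x}{Q_i}$ when $V_i$ is an abstraction, and a pair recursively otherwise) yields a $W$ of the form $(W_1[P/z],W_2[P/z])$ with $W_i[N/z]=V_i$, giving (b). The abstraction case is the easiest: if $V = \abs{x}{V_0}$ and $M' = \abs{x}{Q}$ then $M'[P/z] = \abs{x}{Q[P/z]}$ is already a value, so no recursion is needed and (c) holds with $Q[N/z] = V_0$ by construction -- this is exactly the point that free occurrences of $z$ buried inside an abstraction body are inert.

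The main obstacle I anticipate is the bookkeeping that distinguishes \emph{value positions}, where an occurrence of $z$ forces $N$ (and hence, via $\lesssim$, $P$) to be a genuine value, from abstraction-body positions, where $z$ is harmless and no constraint on $P$ is obtained. Getting the disjunction right -- in particular threading the ``$P$ may diverge'' possibility through every appeal to $\lesssim$ and through every component, and respecting the call-by-value left-to-right order so that divergence is reported rather than silently skipped -- is the delicate part, whereas the reconstruction of the witness terms in clauses (b) and (c) is routine once the closedness of $V$'s subterms is noted.
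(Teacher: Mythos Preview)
Your proposal is correct and follows essentially the same approach as the paper: invoke Lemma~\ref{lem:peds-mod-sub} to obtain a residual $M'$ with $M'[N/z]=V$, then argue by induction on $V$ with a uniform treatment of the case $M'=z$ and a structural case analysis otherwise. Your account of clause~(b) is in fact more careful than the paper's, which only explicitly discharges clause~(a) in the pair case and leaves the witness construction for~(b) implicit.
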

\begin{proof}
  The proof is by induction on $V$.  Spp $M[N/z] \peds V$.  Then, by the previous result, $M[P/z]$ either diverges or there is some value $U$ such that $V = U[N/z]$ and $M[P/z] \peds U[P/z]$.  If $U = z$, then $N = V$ and $M[P/z] \peds P$.  If $P$ diverges, then $M[P/z]$ diverges, as required.  Otherwise $P \peds V$ and $W=V$, (a), (b) and (c) follow immediately. Otherwise $U \neq z$ and we proceed by inspecting the cases.
  \begin{itemize}
    \item If $V$ is a variable $x$, then since $U \neq z$, $x$ must be different from $z$ and $U=x$.
    \item If $V$ is $\zerotm$, then since $U \neq z$, $U = \zerotm$.
    \item If $V$ is of shape $\succtm(\pn{n})$, then since $U \neq z$, $U$ has shape $\succtm\,W'$, and $W'[N/z] = \pn{n}$.  Therefore, it follows from the induction hypothesis that $W'[P/z]$ reduces to $\pn{n}$ and the conclusion follows.
    \item If $V$ is of shape $(V_1,V_2)$, then since $U \neq z$, $U$ is of shape $(M_1,M_2)$ and $M_1[N/z] = V_1$ and $M_2[N/z] = V_2$.  Therefore, $M_1$ and $M_2$ are themselves values.  Moreover, if $V$ did not contain an abstraction, then neither do $V_1$ or $V_2$ and the induction hypothesis yields that $M_1[P/z]=V_1$ and $M_2[P/z]=V_2$.   Thus $W = V$.
    \item \ch{If $V$ is of shape $\fixtm{f(x)}{P}$, then since $U \neq z$, $U$ has shape $\fixtm{f(x)}{Q}$, with $Q[N/z] = P$ as required.}
  \end{itemize}
\end{proof}

\begin{lemma}\label{lem:stuck-mod-sub}
  If $M[N/z]$ gets stuck and $P \lesssim N$, then either $M[P/z]$ diverges or $M[P/z]$ gets stuck.
\end{lemma}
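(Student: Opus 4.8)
The plan is to mirror the two-stage structure already used for the value case (Lemma~\ref{lem:vals-mod-sub}): first pull the whole computation back to a single stuck term using the reduction combinatorics, then analyse that stuck term by a case split on whether $z$ occupies its active position. Concretely, since $M[N/z]$ gets stuck we have $M[N/z] \peds S$ with $S$ a stuck term, hence a normal form. Lemma~\ref{lem:peds-mod-sub} then yields that either $M[P/z] \diverges$, in which case we are done, or there is some $M'$ with $S = M'[N/z]$ and $M[P/z] \peds M'[P/z]$. In the latter case it suffices to show $M'[P/z]$ diverges or gets stuck, since both conclusions are preserved under prefixing the reduction $M[P/z] \peds M'[P/z]$. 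So I may assume without loss of generality that $M[N/z]$ is itself stuck, say $M[N/z] = \calE[R]$ with $R$ a stuckex, and must prove that $M[P/z]$ diverges or gets stuck.

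If $M$ is blocked by $z$, say $M = \calF[z]$ for an evaluation context $\calF$, then $M[N/z] = (\calF[N/z])[N]$ and $M[P/z] = (\calF[P/z])[P]$. Since $N$ sits at the active position of $\calF[N/z]$ and $(\calF[N/z])[N]$ is a normal form, $N$ must itself be a normal form (any step of $N$ would lift to a step of the whole term). If $P$ diverges, then, $P$ occupying the active position of $\calF[P/z]$, so does $M[P/z]$. If instead $P$ normalises, then since $P \lesssim N$ and $N$ is already normal we get $P \peds N$, whence $M[P/z] \peds (\calF[P/z])[N]$; and this is stuck, because the leftmost--outermost stuckex of $(\calF[N/z])[N]$ is determined only by $N$ and the shape of the spine of $\calF$, neither of which changes when $P$ is substituted off the spine instead of $N$.

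The remaining case is where $M$ is not blocked by $z$. Here Lemma~\ref{lem:contexts-mod-subs} gives $M = \calF[R']$ with $\calE = \calF[N/z]$ and $R = R'[N/z]$, and $R' \neq z$ (otherwise $M$ would be blocked). I then case on the four shapes of stuckex: in each, a value $V$ sits in a critical position with $V$ outside one of $\NatVals$, $\FunVals$, $\PairVals$, and correspondingly $R'$ has the matching shape with a subterm $V'$ satisfying $V'[N/z] = V$ (for instance $R = \succtm(V)$ forces $R' = \succtm(V')$, and $R = V\,M_0$ forces $R' = V'\,M_0'$). Applying Lemma~\ref{lem:vals-mod-sub} to $V'$, which reduces to $V$ in zero steps, either $V'[P/z]$ diverges---and then $M[P/z]$ diverges, as that subterm occupies the active position---or $V'[P/z]$ reduces to a value $W$. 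Clauses (a)--(c) of Lemma~\ref{lem:vals-mod-sub} ensure $W$ has the same outermost shape class as $V$ (numeral, pair, abstraction or variable), so membership in $\NatVals$, $\FunVals$, $\PairVals$ is preserved; hence $W$ still violates the relevant side condition, and $M[P/z] \peds (\calF[P/z])[R'[P/z]]$ reaches a term whose active part is the corresponding stuckex.

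I expect the main obstacle to be the class-preservation bookkeeping in the unblocked case: for each of the four stuckexes one must confirm that the condition $V \notin \NatVals$ (respectively $\FunVals$, $\PairVals$) transfers to $W$, which is precisely where clauses (a)--(c) are needed, together with the observation that $V \notin \NatVals$ already forces $V$ to be a pair, an abstraction or a variable. A secondary subtlety, in the blocked case, is justifying that substituting off the evaluation spine cannot introduce an earlier redex or stuckex; this rests on the uniqueness of the evaluation-context decomposition, i.e. that the active position is fixed by the spine alone.
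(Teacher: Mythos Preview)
Your overall strategy matches the paper's: use Lemma~\ref{lem:peds-mod-sub} to reduce to the case where $M[N/z]$ is itself stuck, decompose via Lemma~\ref{lem:contexts-mod-subs}, split on whether the stuckex originates from the substituted $N$, and in the remaining case invoke Lemma~\ref{lem:vals-mod-sub} on the offending value inside the stuckex. Your not-blocked case is sound; in fact the reason it works is precisely that when $M$ is not blocked, $z$ cannot occur bare in any off-spine value position of $\calF$ (otherwise that occurrence would itself witness blocking), so $\calF[P/z]$ really is an evaluation context and the shape-class transfer via clauses (a)--(c) goes through.

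The blocked case, however, has a gap that your ``secondary subtlety'' underestimates. You claim $M[P/z] \peds (\calF[P/z])[N]$ and that this term is stuck because the stuckex depends only on $N$ and the spine of $\calF$. But when $z$ also occurs bare in an off-spine value position of $\calF$---take $\calF = ((z,\pn{0}),\Box)$, so $M = ((z,\pn{0}),z)$, and $N = \predtm(\abs{x}{x})$---then $\calF[P/z] = ((P,\pn{0}),\Box)$ is not an evaluation context once $P$ is not a value, and the active position of $M[P/z] = ((P,\pn{0}),P)$ is the \emph{left} copy of $P$, not the one at your hole. Your reduction does not hold (evaluation never reaches the right $P$ before the left one), and $(\calF[P/z])[N] = ((P,\pn{0}),N)$ is not stuck: it still reduces in the left component. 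The lemma's conclusion survives---reducing the left $P$ to $N$ yields $((N,\pn{0}),P)$, which is stuck at $N$---but your path to it is wrong. The paper's handling of the analogous case ($Q'=z$, where it writes ``$R[P/z] \peds (\calF[P/z])[N']$ for some stuckex $N'$'') is comparably terse and arguably shares this soft spot, so this is less a divergence from the paper than a shared informality that neither argument fully discharges.
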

\begin{proof}
  Suppose $M[N/z]$ gets stuck.  Then $M[N/z] \peds P$ with $P$ stuck.  It follows that there is some $R$ such that $P=R[N/z]$ and $M[P/z]$ either diverges or reduces to $R[P/z]$.  Then $R[N/z]$ has shape $\calE[Q]$ for some stuckex $Q$.  By the forgoing lemma, $R$ has shape $\calF[Q']$ with $\calF[N/z] = \calE$ and $Q'[N/z] = Q$.  If $Q'=z$, then $Q=N$ and by $P \lesssim N$ either $R[P/z]$ diverges or $R[P/z] \peds (\calF[P/z])[N']$ for some stuckex $N'$, and thus also gets stuck.  Otherwise, $Q' \neq z$ and we show in each case that $Q'[P/z]$ is a stuckex.
  \begin{itemize}
    \item If $Q$ is of shape $x$, then since $Q' \neq z$, it must be that $x \neq z$ and $Q'=x$.
    \item If $Q$ is of shape $V\,R$ with $V \notin \FunVals$, then since $Q' \neq z$, it must be that $Q'$ has shape $R_1\,R_2$ with $R_1[N/z] = V$.  Hence, it follows from the previous lemma that either $R_1[P/z]$ diverges, in which case $M[P/z]$ diverges, or $R_1[P/z]$ is also a value not in $\FunVals$.
  \end{itemize}
  The remaining cases are similar.
\end{proof}

\subsection{Proof of Theorem~\ref{thm:types-are-safety}}

Now we show the first part of Theorem\ref{thm:types-are-safety}.

\begin{theorem}\label{thm:types-down-closed}
  Let $M[N/z]$ be closed, $Q \lesssim N$.
  \begin{enumerate}[(i)]
    \item If $M[N/z] \in \mngTb{A}$, then $M[Q/z] \in \mngTb{A}$.
    \item If $M[N/z] \notin \mngT{F}$, then $M[Q/z] \notin \mngT{F}$.
  \end{enumerate}
\end{theorem}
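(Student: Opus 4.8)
The plan is to reduce the statement to two ingredients: the \emph{downward} value correspondence already packaged in Lemmas~\ref{lem:vals-mod-sub} and~\ref{lem:stuck-mod-sub}, and a complementary \emph{upward} reflection of normalisation, after which both clauses follow by a single induction on the structure of the type. The two clauses must be proved simultaneously, because the necessity-arrow case of (i) reasons about $\mngT{F}$ on the domain side and so appeals to (ii) at the smaller type $F$, while the sufficiency-arrow case of (i) appeals to (i) at the smaller codomain.

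First I would isolate the one fact that the stated lemmas do \emph{not} give, since they all assume that the more-defined term $M[N/z]$ reaches a normal form. Call it the \textbf{Reflection Lemma}: if $Q \lesssim N$ and $M[Q/z]$ reaches a normal form, then $M[N/z]$ reaches a normal form too; equivalently, divergence of $M[N/z]$ forces divergence of $M[Q/z]$. I would prove this by induction on the length of the reduction $M[Q/z] \peds P$ to a normal form, using Lemma~\ref{lem:one-step-inv-subs} to classify the first step. If the step takes place inside $M$ (i.e. $M$ is not blocked by $z$, so $M \ped M'$ with $P_1 = M'[Q/z]$), the same step lifts to $M[N/z] \ped M'[N/z]$ and I recurse on $M'$. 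If $M = \calE[z]$ is blocked, the active position holds the substituted term; since $M[Q/z]$ normalises, the occurrence of $Q$ there either gets stuck --- whence, as $Q \lesssim N$ forces $N$ to reach the same normal form, $M[N/z]$ is stuck in the same way --- or reduces to a value $V$, in which case $N$ also reduces to $V$, both terms reduce to $\calE[V]$ with the remaining substitution in place, and I recurse on the strictly shorter derivation starting from $\calE[V]$. This lockstep argument is the dual of Lemma~\ref{lem:peds-mod-sub} and is where the approximation order $\lesssim$ really does the work; I expect it to be the main obstacle.

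With Reflection in hand, clause (i) goes by induction on $A$. Suppose $M[N/z] \in \mngTb{A}$. If $M[Q/z]$ diverges we are done. Otherwise $M[Q/z]$ reaches a normal form, so by Reflection $M[N/z]$ normalises, and being in $\mngTb{A}$ it must evaluate to some $V \in \mng{A}$. Lemma~\ref{lem:vals-mod-sub} then rules out $M[Q/z]$ being stuck and yields $M[Q/z] \evals W$ with $W$ structurally related to $V$; it remains to show $W \in \mng{A}$. For $A = \intty$ the value carries no abstraction, so $W = V$; for $A = \okty$ any closed value suffices; for $A = A_1 \times A_2$ the components are related and I apply the induction hypothesis of (i) at $A_1,A_2$; for $A = A_1 \to A_2$ I use that $W = \abs{x}{R[Q/z]}$ with $R[N/z]$ the body of $V$, and push an arbitrary argument $U \in \mng{A_1}$ through (i) at the smaller type $A_2$ applied to $R[U/x]$; and for $A = A_1 \from F$ I reason contrapositively, turning a hypothetical $\mngT{F}$-membership of the $Q$-instantiated body into one of the $N$-instantiated body by (ii) at $F$, then invoking the defining property of $V$.

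Finally, clause (ii) is the contrapositive statement that $\mngT{F}$-membership is reflected from $Q$ up to $N$ for finitely verifiable $F$. Assuming $M[Q/z] \evals W \in \mng{F}$, the term $M[Q/z]$ normalises, so Reflection gives that $M[N/z]$ normalises; Lemma~\ref{lem:stuck-mod-sub} excludes $M[N/z]$ being stuck, so $M[N/z] \evals V$, and Lemma~\ref{lem:vals-mod-sub} together with determinism of reduction relates $V$ to $W$. A short induction on $F$ --- the numeral case forcing $V = W$, the product case recursing through (ii) at the component types --- shows $V \in \mng{F}$, so $M[N/z] \in \mngT{F}$, contradicting the hypothesis. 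The restriction to finitely verifiable $F$ is essential here: it guarantees that $W$, and hence the matched $V$, contains no abstraction, so that equality of first-order values can be propagated, whereas an arrow type on the right could never be pinned down from the finite information available in the approximant.
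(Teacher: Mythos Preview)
Your proposal is correct and follows the paper's overall strategy: induction on the type, using Lemmas~\ref{lem:vals-mod-sub} and~\ref{lem:stuck-mod-sub} to relate the behaviour of $M[N/z]$ and $M[Q/z]$. The main difference is that you explicitly isolate and prove the Reflection Lemma (divergence of $M[N/z]$ forces divergence of $M[Q/z]$), whereas the paper simply asserts this in passing --- e.g.\ ``in the former case, also $M[Q/z]$ diverges'' --- without proof or lemma reference; you have filled a real gap in the paper's exposition. Your organisation also differs slightly: for (i) you case-split on the behaviour of $M[Q/z]$ rather than $M[N/z]$, and for (ii) you argue via the contrapositive, which is cleaner than the paper's direct treatment (whose product case is in fact imprecise, claiming $M[Q/z]$ ``evaluates to $V$'' where the correct argument needs the inductive appeal to (ii) at the component types that you spell out).
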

\begin{proof}
  The proof is by induction on $A$.
  \begin{itemize}
    \item When $A = \intty$, we reason as follows.  
      \begin{enumerate}[(i)]
        \item Suppose $M[N/z] \in \mngTb{\intty}$, so either $M[N/z]$ diverges or $M[N/z] \peds \pn{n}$.  Then it follows from Lemma~\ref{lem:vals-mod-sub} that either $M[Q/z]$ diverges or $M[Q/z]$ evaluates to $\pn{n}$.
        \item Suppose $M[N/z] \notin \mngT{\intty}$, so either (i) $M[N/z] \peds V$ and $V \notin \mng{\intty}$ or (ii) $M[N/z]$ gets stuck or (iii) $M[N/z]$ diverges.  In cases (i) it follows from Lemma~\ref{lem:vals-mod-sub} that $M[Q/z]$ either diverges or reduces to the same kind of value.  In case (ii), it follows from Lemma~\ref{lem:stuck-mod-sub} that $M[Q/z]$ either diverges or gets stuck.  In case (iii), $M[Q/z]$ also diverges.  Hence, $M[Q/z] \notin \mngT{\intty}$.
      \end{enumerate}
      \item When $A$ is of shape $B \times C$:
      \begin{enumerate}[(i)]
        \item Suppose $M[N/z] \in \mngTb{B \times C}$, so either $M[N/z]$ diverges, or $M[N/z] \peds (V_1,V_2)$ with $V_1 \in \mng{B}$ and $V_2 \in \mng{C}$.  In the former case, also $M[Q/z]$ diverges.  Otherwise, it follows from Lemma~\ref{lem:vals-mod-sub} that either $M[Q/z]$ diverges or evaluates to $(W_1[Q/z],W_2[Q/z])$ with $W_1[N/z] = V_1$ and $W_2[N/z] = V_2$.  Hence, we have $W_1[N/z] \in \mng{B}$ and thus also $\in \mngTb{B}$; similarly $W_2[N/x] \in \mngTb{C}$.  Consequently, it follows from the induction hypotheses that $W_1[Q/z] \in \mngTb{B}$ and $W_2[Q/z] \in \mngTb{C}$.  If either diverge, then $M[Q/z]$ diverges, as required.  Otherwise, $W_1[Q/z]$ evaluates to a value in $\mng{B}$ and $W_2[Q/z]$ evaluates to a value in $\mng{C}$, as required.
        \item Suppose $M[N/z] \notin \mngT{B \times C}$.  Here we can further assume that $B$ and $C$ are finitely verifiable.   So either (i) $M[N/z] \peds V$ with $V \notin \mng{B \times C}$ or (ii) $M[N/z]$ gets stuck, or (iii) $M[N/z]$ diverges.  In case (i), it follows from Lemma~$\ref{lem:vals-mod-sub}$ that either $M[Q/z]$ diverges or it evaluates to $V$.  In case (ii) it follows from Lemma~\ref{lem:stuck-mod-sub} that $M[Q/z]$ either diverges or gets stuck.  In case (iii) it follows that $M[Q/z]$ diverges.  Hence, $M[Q/z] \notin \mngT{B \times C}$.
      \end{enumerate}
        \item When $A$ is of shape $B \to C$, we need only prove (i).  Suppose $M[N/z] \in \mngTb{B \to C}$, so either: 
          \begin{enumerate}[(i)]
            \item $M[N/z]$ diverges 
            \item \ch{or $M[N/z] \peds \fixtm{f(x)}{P}$ with $P[V/x,\fixtm{f(x)}{P}/f] \in \mngTb{C}$ for all $V \in \mng{B}$.}
          \end{enumerate}
          In case (i), $M[Q/z]$ diverges.  \ch{In case (ii), it follows from Lemma~\ref{lem:vals-mod-sub} that either $M[Q/z]$ diverges or there is some $Q'$ such that $M[Q/z] \peds \fixtm{f(x)}{Q'[Q/z]}$ and $Q'[N/z] = P$.
          Let $V \in \mng{B}$.  We have $(Q'[V/x,\fixtm{f(x)}{Q'}/f])[N/z] = (Q'[N/z])[V/x,\fixtm{f(x)}{Q'[N/z]}] \in \mngTb{C}$ ($V$ is closed) and so it follows from the induction hypothesis that we also have $(Q'[V/x,\fixtm{f(x)}{Q'}/f])[Q/z] \in \mngTb{C}$ too.}
        \item When $A$ is of shape $B \from F$, we need only prove (i).  \ch{Suppose $M[N/z] \in \mngTb{B \from F}$, so either $M[N/z]$ diverges or $M[N/z] \peds \fixtm{f(x)}{P}$ with $P[V/x,\fixtm{f(x)}{P}/f] \notin \mngT{F}$ if $V \notin \mng{B}$.  In the former case, $M[Q/z]$ diverges.  Otherwise, it follows from Lemma~\ref{lem:vals-mod-sub} that either $M[Q/z]$ diverges or there is $Q'$ such that $M[Q/z] \peds \fixtm{f(x)}{Q'[Q/z]}$ and $Q'[N/z] = P$.  Let $V$ be a closed value and $V \notin \mng{B}$.  Hence, as in the previous case, $(Q'[V/x,\fixtm{f(x)}{Q'}/f])[N/z] = (Q'[N/z])[V/x,\fixtm{f(x)}{Q'[N/z]}] \notin \mngT{F}$ and it follows from the induction hypothesis that therefore $(Q'[Q/z])[V/x] \notin \mngT{F}$ either.  }
        \item When $A$ is of shape $\okty$ we reason as follows.  For (i) suppose $M[N/z] \in \mngTb{\okty}$, then we reason as above, according to whichever value $M[N/z]$ reduces to.  For (ii) suppose $M[N/z] \notin \mngT{\okty}$, then either $M[N/z]$ gets stuck or $M[N/z]$ diverges.  It follows from Lemma~\ref{lem:stuck-mod-sub} that, in both cases, $M[N'/z] \notin \mng{T}$ too.
  \end{itemize}
\end{proof}

And then the second part of Theorem~\ref{thm:types-are-safety}.

\ch{
\begin{theorem}\label{thm:types-finitely-refutable}
  For all types $A$ and finitely verifiable types $F$:
  \begin{enumerate}[(i)]
    \item If $C[\fixtm{f(x)}{M}] \notin \mngTb{A}$, then there is some $n_0$, such that for all $n \geq n_0$: $C[\fixapprx{n}{f(x)}{M}] \notin \mngTb{A}$.
    \item If $C[\fixtm{f(x)}{M}] \in \mngT{F}$, then there is some $n_0$, such that for all $n \geq n_0$: $C[\fixapprx{n}{f(x)}{M}] \in \mngT{F}$.
  \end{enumerate}
\end{theorem}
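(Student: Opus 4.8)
The plan is to reduce both parts to a single, type-independent \emph{finite realisation} lemma about reduction, and then to peel off the structure of the type by induction. Throughout write $Y = \fixtm{x}{M}$ and $Y_n = \fixapprx{n}{x}{M}$, and recall that the approximants are finite unfoldings lying beneath the fixpoint, with $Y_0 = \divtm \lesssim Y$ and $Y_n \lesssim Y$ for every $n$ (so that the existing downward-closure result, Theorem~\ref{thm:types-down-closed}, is the converse-direction companion to what is proved here). To make the induction go through cleanly I would prove the statement in a strengthened, \emph{multi-hole} form: for a context $C$ with $h$ holes, all filled by the same fixpoint, if $C[Y,\ldots,Y] \notin \mngTb{A}$ then there is $n_0$ such that for every tuple $(m_1,\ldots,m_h)$ with each $m_i \geq n_0$ one has $C[Y_{m_1},\ldots,Y_{m_h}] \notin \mngTb{A}$, and dually for part~(ii) with $\mngT{F}$. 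Allowing the indices at distinct holes to vary independently is exactly what lets the inductive step recurse without keeping the residual approximants synchronised.

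The crux is the finite realisation lemma: if $C[Y,\ldots,Y] \peds W$ with $W$ a normal form, then for all sufficiently large indices $C[Y_{m_1},\ldots,Y_{m_h}]$ reduces to a normal form of the same outermost shape as $W$, in which each frozen fixpoint of $W$ is replaced by an approximant whose index still tends to infinity as the $m_i$ do. I would prove this by a simulation on the given deterministic reduction $C[Y,\ldots,Y] = T_0 \ped \cdots \ped T_\ell = W$. Let $r$ be the number of \rlnm{Fix} steps in the sequence. Running the approximant term alongside, I maintain the invariant that every descendant of an original hole is an approximant whose index is its starting $m_i$ minus the number of unfoldings along its ancestral chain: every non-\rlnm{Fix} step is matched one-for-one, while an original \rlnm{Fix} step $Y \ped M[Y/x]$ is matched by \emph{no} step, since the corresponding $Y_k = M[Y_{k-1}/x]$ is already in unfolded form. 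As at most $r$ unfoldings occur along any chain, choosing every $m_i > r$ guarantees that no \emph{active} copy is ever decremented to $Y_0 = \divtm$, so the approximant reduction proceeds in lock-step to the image of $W$. The only copies reaching index $0$ are frozen ones; but in call-by-value the only subterms protected inside a value lie under a $\lambda$, and $\abs{z}{Y_0} = \abs{z}{\divtm}$ is still a value, so these are harmless, and their indices are bounded below by $\min_i m_i - r$, which tends to infinity.

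With the realisation lemma in hand, both parts follow by induction on the type. For part~(ii) the residual value lies in $\mng{F}$ with $F$ finitely verifiable, hence is a numeral or a tuple of numerals containing \emph{no} frozen fixpoint, so the lemma yields $C[Y_{m_1},\ldots,Y_{m_h}] \peds W$ verbatim and the term is in $\mngT{F}$. For part~(i), $C[Y,\ldots,Y] \notin \mngTb{A}$ means it reduces to a normal form $W$ that is either stuck or a value outside $\mng{A}$. Stuckness is decided by the outermost stuckex, which the lemma preserves, so that case is immediate. If $W$ is a value outside $\mng{A}$ I recurse on $A$: at $\intty$ and at a product the failure is a shape mismatch (preserved) or a component failure handled by the hypothesis at the smaller type; at $A = B \to C$ a witness $V_0 \in \mng{B}$ gives $\text{(body)}[V_0/x] \notin \mngTb{C}$, an instance of part~(i) at the subtype $C$ applied to the context of the frozen fixpoints; and at $A = B \from F$ a witness $V_0 \notin \mng{B}$ with $\text{(body)}[V_0/x] \in \mngT{F}$ reduces, via part~(ii) at the finitely verifiable $F$, to the claim at a smaller type. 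In each recursive call the frozen approximants sit at indices bounded below by $\min_i m_i - r$, so taking the $m_i$ large enough to clear the inner threshold closes the induction.

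The main obstacle is the simulation underlying the realisation lemma: the bookkeeping for duplicated fixpoint copies, created by \rlnm{Beta}, \rlnm{Let} and \rlnm{IfS} substituting values that contain $Y$ as well as by the unfolding itself, must track a separate index per copy and certify the uniform bound $r$ on the unfolding depth of any single chain. A secondary delicate point is the interface between the outer simulation, which produces residual indices that differ from hole to hole, and the inductive hypothesis; this is precisely why I would state the theorem for independently-indexed multi-hole contexts, so that the correlated indices produced by the simulation are merely one admissible instance of the hypothesis.
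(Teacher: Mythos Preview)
Your proposal is correct and follows essentially the same strategy as the paper: induction on the type, using finiteness of the witnessing reduction sequence to bound the number of \rlnm{Fix} steps, then recursing at the smaller type on a context built from the residual fixpoint occurrences, and combining the bounds additively. The one refinement worth noting is that your explicit multi-hole formulation with independently varying indices, and the separate ``finite realisation lemma'', make precise exactly what the paper leaves informal when it says things like ``we can write $P[V/y]$ as $D[\fixtm{x}{M}]$, in which all descendants of occurrences of $\fixtm{x}{M}$ in $C[\fixtm{x}{M}]$ are replaced by holes in $D$ (for example, by introducing labelled subterms)''; your framework is the clean way to cash out that hand-wave, and it is needed to make the recursive calls type-check.
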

\begin{proof}
  The proof is by induction on $A$.
  \begin{itemize}
    \item When $A$ is $\intty$, we reason as follows.
    \begin{enumerate}[(i)]
      \item Suppose $C[\fixtm{f(x)}{M}] \notin \mngTb{A}$.  Either $C[\fixtm{f(x)}{M}]$ evaluates to a value that is not a numeral or $C[\fixtm{f(x)}{M}]$ crashes.  In both cases, it requires only finitely many reduction steps and, among them, only finitely many applications of fix$\beta$ reduction, say $k$.  Then we can take $n_0 := k$.
      \item Suppose $C[\fixtm{f(x)}{M}] \in \mngT{A}$.  Then $C[\fixtm{f(x)}{M}]$ evaluates to a numeral, in finitely many steps.  Among them, a certain number, say $k$, are applications of the fix$\beta$ reduction.  Therefore, for any $m \geq k$, $C[\fixapprx{m}{f(x)}{M}] \in \mngT{A}$.
    \end{enumerate} 
    \item When $A$ is $\okty$, we reason analogously.
    \item When $A$ is of shape $B \times C$ we reason as follows.  
    \begin{enumerate}[(i)]
      \item Suppose $C[\fixtm{f(x)}{M}] \notin \mngTb{B \times C}$.  Then either (i) $C[\fixtm{f(x)}{M}]$ evaluates to a value that is not a pair, or (ii) $C[\fixtm{f(x)}{M}]$ gets stuck, or (iii) $C[\fixtm{f(x)}{M}]$ evaluates to a pair value $(V_1,V_2)$, but either $V_1 \notin \mng{B}$ or $V_2 \notin \mng{C}$.  In all cases, the reduction sequences involved are finite, and so contain only finitely many applications of the fix$\beta$ reduction.  Hence, we may obtain the required bound as above. In cases (i) and (ii) the result follows immediately.  In case (iii), first observe that this evaluation used only finitely many applications of the fix$\beta$ reduction, say $k_0$, and so for any $k \geq k_0$, $C[\fixapprx{k}{f(x)}{M}]$ will reduce to a pair value $(W_1,W_2)$.  Now let us assume that it is $V_1 \notin \mng{B}$ because the other case is symmetrical.  Note that, since $V_1$ is a value, we know that $V_1 \notin \mngTb{B}$.  Next, write $V_1$ as $D[\fixtm{f(x)}{M}]$, for some context $D$, in which all descendants of occurrences of $\fixtm{f(x)}{M}$ in $C[\fixtm{f(x)}{M}]$ are replaced by holes in $D$ (for example, by introducing labelled subterms).  Then it follows from the induction hypothesis that there is $m_0$ and for all $m \geq m_0$, $D[\fixapprx{m}{f(x)}{M}] \notin \mngTb{B}$.  Since each $D[\fixapprx{m}{f(x)}{M}]$ is a value, it follows that $D[\fixapprx{m}{x}{M}] \notin \mng{B}$.  Hence, for all $n \geq m_0+k_0$, $C[\fixapprx{n}{f(x)}{M}] \notin \mngTb{B \times C}$.
      \item In this case, we may assume that $B$ and $C$ are finitely verifiable.  Suppose $C[\fixtm{f(x)}{M}] \in \mngT{B \times C}$.  Then $C[\fixtm{f(x)}{M}]$ evaluates to a pair $(V_1,V_2)$ (and this pair must not contain any abstraction, since $B$ and $C$ are finitely verifiable).  Then this reduction sequence is finite, and applies the fix$\beta$ reduction rule only a certain number, say $n_0$ of times.  We can use this as the bound required by the conclusion.
    \end{enumerate}
    \item When $A$ is $B \to C$, we need only consider (i).  So suppose $C[\fixtm{f(x)}{M}] \notin \mngTb{B \to C}$.  Then either (i) $C[\fixtm{f(x)}{M}]$ evaluates to a value that is not a fixpoint abstraction, or (ii) $C[\fixtm{f(x)}{M}]$ gets stuck, or (iii) $C[\fixtm{f(x)}{M}]$ evaluates to a fixpoint abstraction $\fixtm{g(y)}{P}$, but there is a value $V \in \mng{B}$ such that $P[V/y,\fixtm{g(y)}{P}/g] \notin \mngTb{C}$.  In cases (i) and (ii) we can reason as above to obtain the bound.  In case (iii), we can first observe that evaluating to an abstraction requires only finitely many applications of the fix$\beta$ reduction rule, say $m_0$.  So, for any $m \geq m_0$, $C[\fixapprx{m}{f(x)}{M}]$ will also evaluate to an abstraction, say $\fixtm{g(y)}{Q}$.  Then we note that we can write $P[V/y,\fixtm{g(y)}{P}/g]$ as $D[\fixtm{f(x)}{M}]$, in which all descendants of occurrences of $\fixtm{f(x)}{M}$ in $C[\fixtm{f(x)}{M}]$ are replaced by holes in $D$ (for example, by introducing labelled subterms).  Then it follows from the induction hypothesis that there is some $p_0$ such that, for all $p \geq p_0$, $D[\fixapprx{p}{f(x)}{M}] \notin \mngTb{C}$.  Hence, by construction, we have that, for all $n \geq m_0+p_0$, $C[\fixapprx{n}{f(x)}{M}] \notin \mngTb{B \to C}$.
    \item When $A$ is $B \from F$, we need only consider (i).  So suppose $C[\fixtm{f(x)}{M}] \notin \mngTb{B \from F}$.  Then either (i) $C[\fixtm{f(x)}{M}]$ evaluates to a value that is not a fixpoint abstraction, or (ii) $C[\fixtm{f(x)}{M}]$ gets stuck, or (iii) $C[\fixtm{f(x)}{M}]$ evaluates to a fixpoint abstraction $\fixtm{g(y)}{P}$, but there is a value $V$ such that $V \notin \mng{B}$ and yet $P[V/y,\fixtm{g(y)}{P}/g] \in \mngT{F}$.  In cases (i) and (ii) we can find the bound as above.  In case (iii), we first observe that evaluating to an abstraction requires only finitely many applications of the fixpoint reduction, say $m_0$.  So, for any $m \geq m_0$,  $C[\fixapprx{m}{f(x)}{M}]$ will also evaluate to an fixpoint abstraction.  Next, note that we can write $P[V/y,\fixtm{g(y)}{P}/g]$ as $D[\fixtm{f(x)}{M}]$ by factoring out all descendants of $\fixtm{f(x)}{M}$ in the original term.  Then, it follows from the induction hypothesis that there is some $k_0$ such that, for all $k \geq k_0$, $D[\fixapprx{k}{f(x)}{M}] \in \mngT{F}$.  Then, it follows from this construction that, for all $n \geq m_0+k_0$, $C[\fixapprx{n}{f(x)}{M}] \notin \mngTb{B \from F}$. 
  \end{itemize}
\end{proof}
}

Note, the contrapositive of part (i) of this result gives us the following principle: if $C[\fixapprx{n}{f(x)}{M}] \in \mngTb{A}$ for all $n$, then $C[\fixtm{f(x)}{M}] \in \mngTb{A}$.

\paragraph{Theorem~\ref{thm:types-are-safety}} follows from Theorems~\ref{thm:types-down-closed} and \ref{thm:types-finitely-refutable} by observing that, since $N$ and $P$ are closed in the statement, we can always find some fresh variable $z$ in order to write $C[N]$ as $(C[z])[N/z]$.\\

\begin{theorem}[Fixpoint Closure.]\label{thm:types-fix-closed}
  \ch{In the safety fragment: 
  \begin{itemize}
    \item if $\abs{fx}{M} \in \mng{(A \to B) \to A \to B}$, then $\fixtm{f(x)}{M} \in \mngTb{A \to B}$.
    \item if $\abs{fx}{M} \in \mng{(A \from B) \to A \from B}$, then $\fixtm{f(x)}{M} \in \mngTb{A \from B}$.
  \end{itemize}
  }
\end{theorem}

\ch{
\begin{proof}
  Suppose $\abs{fx}{M} \in \mng{(A \to B) \to A \to B}$.  Then, by definition, for all $V \in \mng{A \to B}$, $\abs{x}{M[V/f]} \in \mngTb{A \to B}$ (*).  
  We show that $\fixtm{f(x)}{M} \in \mngTb{A \to B}$ using the above principle, by induction on $n \in \mathbb{N}$.
  \begin{itemize}
    \item When $n=0$, $\fixapprx{n}{f(x)}{M} = \abs{x}{\divtm}$ and $\abs{x}{\divtm} \in \mngTb{A \to B}$ by definition.
    \item When $n$ is of shape $k+1$, we assume $\fixapprx{k}{x}{M} \in \mngTb{A \to B}$.  Since this is a value, it follows from (*) that $\abs{x}{M[\fixapprx{k}{x}{M}/x]} \in \mngTb{A \to B}$.  By definition, therefore $\fixapprx{k+1}{x}{M} \in \mngTb{A \to B}$, as required.
  \end{itemize}
  The case when $\abs{fx}{M} \in \mng{(A \to B) \to A \to B}$ is analogous, since also $\abs{x}{
  \divtm} \in \mngTb{A \from B}$.
\end{proof}
}

\subsection{Proof of Theorem~\ref{thm:strong-soundness}}

Finally, we can show semantic soundness, Theorem~\ref{thm:strong-soundness}.

\begin{proof}
  The proof is by induction on the derivation.  
  \begin{description}
    \item[\rlnm{Id}] Suppose $\theta$ satisfies $\Gamma,\,x:A$ on the left, then $x\theta \in \mngT{A}$.  Hence, $x\theta \in \mngTb{A}$ and the conclusion follows.
    \item[\rlnm{Dis}] Suppose $A \distype B$ and that $\theta$ satisfies $\Gamma,\,M:A$ on the left, so that $M\theta \in \mngT{A}$.  It follows that $M\theta \notin \mngTb{B}$ (*).   Then it follows from the induction hypothesis that $\theta$ satisfies $M:B,\,\Delta$ on the right, but by (*), $\theta$ does not satisfy $M:B$ on the right.  Therefore, $\theta$ must satisfy $\Delta$ on the right, as required.
    \item[\rlnm{ZeroR}] By definition $\zerotm\theta = \zerotm \in \mngTb{\intty}$.
    \item[\rlnm{SuccR}] Suppose $\theta$ satisfies $\Gamma$ on the left, then it follows from the induction hypothesis that $\theta$ satisfies $M:\intty\,\Delta$ on the right.  Suppose $\theta$ does not satisfy $\Delta$ on the right.  Then $\theta$ satisfies $M:\intty$ on the right, so either $M\theta$ diverges or $M\theta$ evaluates to a number.  Hence $\theta$ satisfies $\succtm(M):\intty$ on the right.
    \item[\rlnm{PredR}] Analogous to the previous case.
    \item[\rlnm{FixsR}] \ch{Suppose $\Gamma,\,f:A \to B,\,x:A \models M : A \to B,\,\Delta$ and let $\theta$ be a valuation satisfying $\Gamma$ on the left.  To see that $\theta$ will satisfy $\fixtm{f(x)}{M}:A \to B,\,\Delta$ on the right we suppose $\theta$ does not satisfy $\Delta$ on the right, and show that $\theta$ satisfies $\fixtm{f(x)}{M}:A \to B$ on the right, i.e. $\fixtm{f(x)}{M\theta} \in \mngTb{A \to B}$.  By this assumption and our original supposition, we have that, for all $V \in \mng{A \to B}$, for all $W \in \mng{A}$, $(M\theta)[V/f,W/x] \in \mngTb{A \to B}$, so $\abs{fx}{M\theta} \in \mng{(A \to B) \to A \to B}$.  Then the result follows since typing is closed under taking fixpoints.}
    \item[\rlnm{FixnR}] \ch{Suppose $\Gamma,\,f:A \from B,\,x:A \models M : A \from B,\,\Delta$ and let $\theta$ be a valuation satisfying $\Gamma$ on the left.  To see that $\theta$ will satisfy $\fixtm{f(x)}{M}:A \from B,\,\Delta$ on the right we suppose $\theta$ does not satisfy $\Delta$ on the right, and show that $\theta$ satisfies $\fixtm{f(x)}{M}:A \from B$ on the right, i.e. $\fixtm{f(x)}{M\theta} \in \mngTb{A \from B}$.  By this assumption and our original supposition, we have that, for all $V \in \mng{A \from B}$, for all $W \in \mng{A}$, $(M\theta)[V/f,W/x] \in \mngTb{A \from B}$, so $\abs{fx}{M\theta} \in \mng{(A \from B) \to A \from B}$.  Then the result follows since typing is closed under taking fixpoints.}
    \item[\rlnm{LetR}] Suppose $\theta$ satisfies $\Gamma$ on the left, then it follows from the induction hypotheses that $\theta$ satisfies $M:B \times C,\,\Delta$ on the right.  Suppose $\theta$ does not satisfy $\Delta$.  Hence, $\theta$ satisfies $M:B \times C$ on the right so $M\theta$ either diverges or evaluates to a pair $(V,W)$ with $V \in \mng{B}$ and $W \in \mng{C}$.  In the former case, the let expression diverges under $\theta$ and we are done.  Otherwise $\theta \cup [V/x,W/y]$ satisfies $\Gamma,\,x:B,\,x:C$ on the left and it follows from the induction hypothesis that $\theta$ satisfies $N:A$ on the right, so $(N[V/x,W/y])\theta \in \mngTb{A}$.  The conclusion follows since $(\lettm{(x,y)}{M}{N})\theta \peds (N[V/x,W/y])\theta$.
    \item[\rlnm{AppR}] Suppose $\theta$ satisfies $\Gamma$ on the left.  Then it follows from the induction hypotheses that $\theta$ satisfies $M:B \to A,\,\Delta$ and $N:B,\,\Delta$ on the right.  Suppose $\theta$ does not satisfy $\Delta$.  Then $M\theta \in \mngTb{B \to A}$ and $N\theta \in \mngTb{B}$.  If either diverges, then so does $(M\,N)\theta$ and we are done.  \ch{Otherwise, $M\theta$ evaluates to some abstraction $\fixtm{f(x)}{P} \in \mng{B \to A}$ and $N\theta$ evaluates to a value $V$ in $\mng{B}$ and hence it follows from the definition that $(M\,N)\theta \peds (\fixtm{x}{P})\,V \ped P[V/x,\fixtm{f(x)}{P}/f] \in \mngTb{A}$, as required.}
    \item[\rlnm{PairR}] Suppose $\theta$ satisfies $\Gamma$ on the left, then it follows from the induction hypotheses that $\theta$ satisfies $M:A,\,\Delta$ and $N:B,\,\Delta$ on the right.  Suppose $\theta$ does not satisfy $\Delta$ on the right, then $M\theta \in \mngTb{A}$ and $N\theta \in \mngTb{B}$.  If either diverge, then so does $(M,N)\theta$ and the conclusion is immediate.  Otherwise, $M\theta$ evaluates to some $V \in \mng{A}$ and $N\theta$ evaluates to some $W \in \mng{B}$ and thus $(M,N)\theta$ evaluates to $(V,W) \in \mng{A \times B}$.
    \item[\rlnm{IfZR}] Suppose $\theta$ satisfies $\Gamma$ on the left.  Then it follows from the induction hypothesis that $\theta$ satisfies (i) $M:\intty,\,\Delta$, (ii) $N:A,\,\Delta$ and (iii) $P:A,\,\Delta$ on the right.  Suppose that $\theta$ does not satisfy $\Delta$.  Then $M\theta$ either diverges or evaluates to a numeral, say $\pn{n}$.  In the former case, the if expression diverges too and we conclude.  Otherwise, if $n=0$, it follows that $\iftm{M\theta}{N\theta}{P\theta} \peds N\theta$ and the result follows from (ii).  If $n>0$, then $\iftm{M\theta}{N\theta}{P\theta} \peds P\theta$ and the result follows from (iii).
    \item[\rlnm{AppL}] Suppose $\theta$ satisfies $\Gamma,\,(M\,N):A$ on the left.  Then $(M\,N)\theta \in \mngT{A}$.  It follows from the induction hypothesis that $\theta$ satisfies $M:B\from A,\,\Delta$ on the right.  If $\theta$ satisfies $\Delta$ on the right, then we are done.  If $\theta$ satisfies $M:B\from A$ on the right, then $M\theta \in \mngTb{B \from A}$.  \ch{Since $(M\,N)\theta$ is guaranteed to terminate, it follows that $M\theta$ and $N\theta$ are too, and so $M\theta$ evaluates to some fixpoint abstraction $\abs{f(x)}{P} \in \mng{B \from A}$ and $N\theta$ to some value $V$.  It follows from the definition that since, therefore, $(M\,N)\theta \peds P[V/x,\fixtm{f(x)}{P}/f] \in \mngT{A}$, $V \in \mng{B}$ and hence $N\theta \in \mngT{B}$.  Then $\theta$ satisfies $\Gamma,\,N:B$ on the left and the result follows from the induction hypothesis.}
    \item[\rlnm{LetL1}] Suppose $\theta$ satisfies $\Gamma,\,\lettm{(x,y)}{P}{N}:A$ on the left, so $\lettm{(x,y)}{P\theta}{N\theta} \in \mngT{A}$.  Therefore, it must be that $P\theta$ evaluates to a pair $(V,W)$ and $(N\theta)[V/x,W/y] \in \mngT{A}$.  Then $\theta \cup [V/x,W/y]$ satisfies $\Gamma,\,N:A$ on the left, and it follows from the induction hypothesis that it also satisfies $\Delta$ on the right.  Since $x$ and $y$ are bound in the let, we may assume they do not occur free in $\Delta$, so also $\theta$ satisfies $\Delta$ on the right.
    \item[\rlnm{LetL2}] Suppose $\theta$ satisfies $\Gamma,\,\lettm{(x,y)}{P}{N}:A$ on the left, so $\lettm{(x,y)}{P\theta}{N\theta} \in \mngT{A}$.  It must be that $P\theta$ evaluates to some pair $(V,W)$ and $(N\theta)[V/x,W/y] \in \mngT{A}$.  Hence, $\theta \cup [V/x,W/y]$ satisfies $\Gamma,\,N:A$ on the left and it follows from the induction hypotheses that it also satisfies $x:B,\,\Delta$ and $y:C,\,\Delta$ on the right.  If it satisfies $\Delta$ on the right, then $\theta$ satisfies $\Delta$ on the right since we mat assume the bound variables $x$ and $y$ do not occur in $\Delta$.  Otherwise, $V \in \mng{B}$ and $W \in \mng{C}$ and so we have that $P\theta \peds (V,w) \in \mng{B \times C}$.  Therefore, it follows from the induction hypothesis that $\theta$ satisfies $\Delta$ on the right.
    \item[\rlnm{SuccL}] Suppose $\theta$ satisfies $\Gamma,\,\succtm(P):\intty$ on the left, so $\succtm(P\theta) \in \mngT{\intty}$ and hence $P\theta$ must evaluate to a number.  Then the conclusion follows from the induction hypothesis.
    \item[\rlnm{PredL}] Analogous to the previous case.
    \item[\rlnm{IfZL1}] Suppose $\theta$ satisfies $\Gamma,\,\iftm{Q}{N}{P}:A$, so that $\iftm{Q\theta}{N\theta}{P\theta}$ evaluates to an $A$.  Then it must be that $Q\theta$ evaluates to a numeral, and then the result follows from the induction hypothesis.
    \item[\rlnm{IfZL2}] Suppose $\theta$ satisfies $\Gamma,\,\iftm{Q}{N}{P}:A$, so that $\iftm{Q\theta}{N\theta}{P\theta}$ evaluates to an $A$.  Then it must be that $Q\theta$ evaluates to a numeral, say $\pn{n}$.  If $n=0$, then the if expression reduces to $N\theta \in \mngT{A}$ and the result follows from the first induction hypothesis.  Otherwise, the if expression reduces to $P\theta$ and the result follows from the second induction hypothesis.
    \item[\rlnm{OkVarR}] It is immediate from the definition that $x\theta$ is a value.
    \item[\rlnm{OkL}] Suppose $\theta$ satisfies $\Gamma,\,M:A$ on the left.  Then $M\theta \in \mngT{A}$.  Since $\mngT{A} \subseteq \mngT{\okty}$, the conclusion follows from the induction hypothesis.
    \item[\rlnm{OkR}] Symmetrical to the previous case.
    \item[\rlnm{OkSL}] Suppose $\theta$ satisfies $\Gamma,\,\succtm(M):\okty$ on the left.  So $\succtm(M\theta)$ evaluates.  The only values headed by $\succtm$ are numerals, so it follows that $M\theta$ evaluates to a numeral and then the conclusion follows from the induction hypothesis.
    \item[\rlnm{OkPL}] Suppose $\theta$ satisfies $\Gamma,\,\predtm(M):\okty$ on the left.  So $\predtm(M\theta)$ evaluates.  Since $\predtm(M\theta)$ can only reduce to a numeral, and this requires that $M\theta$ evaluate to a numeral, the conclusion follows from the induction hypothesis.
    \item[\rlnm{OkPrL}] Suppose $\theta$ satisfies $\Gamma,\,(M_1,M_2) : \okty$.  Then $(M_1\theta,M_2\theta)$ evaluates and it must be that it evaluates to a pair of values $(V_1,V_2)$.  Then it follows that each $M_i$ evaluates and so the conclusion follows from the induction hypothesis.
    \item[\rlnm{OkApL1}] Suppose $\theta$ satisfies $\Gamma,\,M\,N:\okty$ on the left, so $M\theta\,N\theta$ evaluates.  Then it must be that $M\theta$ evaluates to a fixpoint abstraction.  Since, for any $A$, $\mng{\okty \from A}$ is the set of all such, the conclusion follows from the induction hypothesis.
    \item[\rlnm{OkApL2}] Suppose $\theta$ satisfies $\Gamma,\,M\,N : \okty$, so $M\theta\,N\theta$ evaluates.  Then, in particular, $N\theta$ evaluates, and the conclusion follows from the induction hypothesis. 
  \end{description}
\end{proof}

\subsection{Proof of Theorem~\ref{thm:safety-cexs}}

\begin{proof}
  We prove each separately.
\begin{itemize}
  \setlength\itemsep{1mm}
  \item The term $\abs{xy}{(\abs{z}{\predtm(z)})\,x} \in \mngTb{\intty \from \intty \to \intty}$ because, reasoning according to the contrapositive of the definition, if a given value $V$ is not a numeral, then it follows that $\abs{y}{(\abs{z}{\predtm(z)})\,V} \notin \mngT{\intty \to \intty}$, since the body will get stuck on any argument.  However, replacing $\abs{z}{\predtm(z)}$ by $\divtm$, we have that $\abs{xy}{\divtm\,x} \notin \mngTb{\intty \from \intty \to \intty}$. Even when a given value $V$ is not a numeral, we yet have $\abs{y}{\divtm\,V} \in \mngT{\intty \to \intty}$, since the body diverges on any input.
  \item First observe that \ch{$\abs{wxyz}{(\fixtm{f(u)}{\iftm{u}{\abbv{add}\,(y,z)}{f\,(\predtm(u))}})\,x}$} is not a value of the type $\intty \from \intty \to \intty \from \intty \to \intty$.  To see why, let us abbreviate the conditional subterm by $M$ and show that when applied to a value $\abbv{id}$ that is not a numeral, it may yet behave like a function of type $\intty \to \intty \from \intty \to \intty$.  To see that $\abs{xyz}{(\fixtm{f(u)}{M})\,x}$ is a value in $\mng{\intty \to \intty \from \intty \to \intty}$, we let $n \in \mathbb{N}$ and show that $\abs{yz}{(\fixtm{f(u)}{M})\,\pn{n}}$ is a value in $\mng{\intty \from \intty \to \intty}$.  To see this, we use the contrapositive of the definition and show that when applied to any non-numeral value, the function returned does not satisfy $\intty \to \intty$.  So let $V \notin \mng{\intty}$ be a non-numeral value.  Then the function $\abs{z}{(\fixtm{f(u)}{\iftm{u}{\abbv{add}(V,z)}{f\,(\predtm(u))}})\,\pn{n}}$ is not in $\mng{\intty \to \intty}$ because, when applied to $\pn{0}$ it will reduce to $\abbv{add}(V,\pn{0})$, but $V$ is not a numeral.\\[1mm]
  Next, observe that any term $\abs{wxyz}{(\fixapprx{k}{f(u)}{\iftm{u}{\abbv{add}\,(y,z)}{f\,(\predtm(u))}})\,x}$, in which we have replaced the fixpoint by a particular finite approximant, \emph{is} a value of the type $\intty \from \intty \to \intty \from \intty \to \intty$.  To see this, let us abbreviate the conditional subterm again by $M$ and let $W \notin \mng{\intty}$.  We show that, therefore, $\abs{xyz}{(\fixapprx{k}{f(u)}{M})\,x}$ is not in $\mng{\intty \to \intty \from \intty \to \intty}$.  Take $x \coloneqq \pn{k+1}$ as witness.  Then $\abs{yz}{(\fixapprx{k}{f(u)}{M})\,\pn{k+1}}$ is not in $\mng{\intty \from \intty \to \intty}$.  To see why, take $y \coloneqq \abbv{id} \notin \mng{\intty}$ as the witness, and we find that yet $\abs{z}{(\fixapprx{k}{f(u)}{\iftm{u}{\abbv{add}\,(\abbv{id},z)}{f\,(\predtm(u))}})\,\pn{k+1}}$ is actually in the type $\intty \to \intty$, because when given any numeral as input, it will diverge.
\end{itemize}
\end{proof}

\section{Additional Material in Support of Section~\ref{sec:complements}}

In this appendix, we give the proofs of Theorem~\ref{thm:two-sided-subsumed}, showing that two-sided judgements of the specified form are subsumed by one-sided judgements, and Theorem~\ref{thm:failures-soundness}, showing syntactic soundness of the one-sided system.  The latter requires first establishing a kind of inversion lemma for values, that shows that they can only be given the types one would expect, and then substitution lemmas, and preservation.

\subsection{Proof of Theorem~\ref{thm:two-sided-subsumed}}

\begin{proof}
  The proof is by induction on $\Gamma,\,M:A \types \Delta$.
  \begin{description}
    \item[\rlnm{Id}] In this case, there are two cases. If $M$ is some variable $x$ and $x:A \in \Delta$, then we may conclude $\Gamma\cup\Delta^c \types x:A^c$ by \rlnm{Id}.  Otherwise, $M$ is not a variable, but there is a variable typing $x:A$ in both $\Gamma$ and $\Delta$.  In that case, $\Gamma\cup\Delta^c$ will contain a contradiction, and the result follows by \rlnm{Contra}.
    \item[\rlnm{LetL1}] In this case $M$ is of shape $\lettm{(x,y)}{N}{P}$ and we may assume (i) $\Gamma\cup\Delta^c \types P:A^c$.  Then the result follows from \rlnm{Let3} with $A$ set to $A^c$
  \item[\rlnm{LetR}] In this case $M$ is $\lettm{(x_1,x_2)}{N}{P}$ and we can assume (i) $\Gamma\cup\Delta^c \types N : B \times C$ and (ii) $\Gamma\cup\Delta^c,\,x:B,\,y:C \types P:A$.  Then the result follows immediately from \rlnm{Let1}.
  \item[\rlnm{LetL2}] In this case $M$ is of shape $\lettm{(x,y)}{N}{P}$ and we can assume (i) $\Gamma,\,\Delta^c\,x:B^c \types P:A^c$, (ii) $\Gamma\cup\Delta^c,\,y:C^c \types P:A^c$ and (iii) $\Gamma\cup\Delta^c \types M : (B \times C)^c$.  The conclusion follows immediately by instantiating $A$ in \rlnm{Let2} by $A^c$.
  \item[\rlnm{IfZR}] In this case, $M$ is of shape $\iftm{N}{P_1}{P_2}$ and we may assume (i) $\Gamma\cup\Delta^c \types N:\intty$, (ii) $\Gamma\cup\Delta^c \types P_1:A$, (iii) $\Gamma\cup\Delta^c,\,x:\intty \types P_2:A$.  Then the desired conclusion follows immediately from (ii) and (iii) by \rlnm{IfZ1}.
  \item[\rlnm{IfZL1}] In this case, $M$ is of shape $\iftm{N}{P_1}{P_2}$ and we may assume (i) $\Gamma\cup\Delta^c \types N : \intty^c$.  Then the result follows immediately by \rlnm{IfZ2}.
  \item[\rlnm{IfZL2}] In this case, $M$ is of shape $\iftm{N}{P_1}{P_2}$ and we may assume (i) $\Gamma\cup\Delta^c \types P_1 : A^c$ and (ii) $\Gamma\cup\Delta^c,\,x:\intty \types P_2 : A^c$.  Then the result follows immediately by \rlnm{IfZ1}.
  \item[\rlnm{CompL}] In this case, we may assume $\Gamma\cup\Delta^c \types M:A$ and then the result is immediate.
  \item[\rlnm{CompR}] In this case, we may assume $\Gamma\cup\Delta^c \types M:A^c$ and again, the result is immediate.
  \item[\rlnm{Disj}] In this case, we may assume $\Gamma\cup\Delta^c \types M:B$ and $A \distype B$.  Then the result follows from \rlnm{Dis}.
  \item[\rlnm{AppL}] In this case, $M$ is of shape $P\,N$ and we may assume (i) $\Gamma\cup\Delta^c \types P : B \from A$ and (ii) $\Gamma\cup\Delta^c \types N:B^c$.  Recall that $B \from A$ is just an abbreviation for $B^c \to A^c$ in the one-sided system.  Hence, the result follows immediately from \rlnm{App}.
  \item[\rlnm{OkApL1}] In this case, $M$ is of shape $P\,N$ and we may assume (i) $\Gamma\cup\Delta^c \types P : (\okty \from A)^c = (\okty^c \to A^c)^c$.  Therefore, the result follows from \rlnm{App2}.
  \item[\rlnm{OkApL2}] In this case, $M$ is of shape $P\,N$ and we may assume (i) $\Gamma\cup\Delta^c \types N : \okty^c$.  Then the desired conclusion follows immediately by \rlnm{App3}. 
  \item[\rlnm{OkL}] In this case we may assume $\Gamma\cup\Delta^c \types M : \okty$.  Hence, the result follows from \rlnm{OkC2}.
  \item[\rlnm{OkR}] In this case the result follows immediately from \rlnm{Ok}.
  \item[\rlnm{OkSL},\rlnm{SuccL}] In these cases, $M$ is of shape $\succtm(N)$ and we may assume $\Gamma\cup\Delta^c \types N:\intty^c$.  Therefore, the results follow from \rlnm{Succ2}.
  \item[\rlnm{OkPrL}] In this case, $M$ is of shape $(N_1,N_2)$ and we may assume $\Gamma\cup\Delta^c \types M_i : \okty^c$.  Then the result follows immediately by \rlnm{Pair2}.
  \item[\rlnm{PairL}] In this case, $M$ is of shape $(N_1,N_2)$, $A$ is of shape $A_1 \times A_2$.  Let us consider the $i=1$ case, the other is symmetrical.  We may assume that $\Gamma\cup\Delta^c \types N_1 : A_1^c$.  Then the result follows from \rlnm{Pair3}.
  \item[\rlnm{FixsR}] The result follows from the induction hypothesis and \rlnm{Fix}.
  \item[\rlnm{FixnR}] In this case, $M$ is of shape $\abs{x}{P}$ and $A$ is of shape $B \from A$.  We may assume $\Gamma\cup\Delta^c,\,x:B^c \types M:A^c$.  Since, in the one-sided system, $B \from A$ is an abbreviation for $B^c \to A^c$, the result follows from \rlnm{Fix}.
  \item[\rlnm{AppR}] The result follows from the induction hypotheses and \rlnm{App1}.
  \item[\rlnm{SuccR}] The result follows from the induction hypotheses and \rlnm{Succ1}.
  \item[\rlnm{ZeroR}] The result follows from the induction hypotheses and \rlnm{Zero}.     
  \end{description}
\end{proof}

\subsection{Proof of Soundness}

\begin{lemma}[Value Inversion]
  Suppose $\types V : A$.
  \begin{itemize}
    \item If $V$ is of shape $\zerotm$ then $A = \intty$ or $A$ is of shape $B^c$ with $B \distype \intty$
    \item If $V$ is of shape $\succtm(W)$ then $A = \intty$ or $A$ is of shape $B^c$ with $B \distype \intty$.
    \item If $V$ is of shape $(W_1,W_2)$ then either:
    \begin{itemize}
      \item $A$ is of shape $B \times C$
      \item or, $A$ is of shape $B^c$ and $B$ is not a pair type nor $\okty$.
      \item or, $A$ is of shape $(B_1 \times B_2)^c$ and there is $i$ with $\types W_i : B_i^c$
    \end{itemize}
    \item If $V$ is of shape $\fixtm{f(x)}{M}$ then either:
    \begin{itemize}
      \item \ch{there are types $B_1$ and $B_2$ such that $A = B_1 \to B_2$ and $f:B_1 \to B_2,\,x:B_1 \types M:B_2$}
      \item or, $A$ is of shape $B^c$ with $B$ not an arrow type nor $\okty$.
    \end{itemize}
  \end{itemize}
\end{lemma}
\begin{proof}
  The proof is by induction on the derivation.  In cases \rlnm{OkC}, \rlnm{Contra} and \rlnm{Var}, the conclusion is immediate because the rules only apply to judgements with a non-empty environment.  In cases \rlnm{Let1}, \rlnm{Let2}, \rlnm{App1}, \rlnm{App2}, \rlnm{App3}, \rlnm{IfZ1} and \rlnm{IfZ2} the conclusion is immediate since the subject is not a value.
  \begin{description}
    \item[\rlnm{OkC2}] We suppose the induction hypothesis, but then we obtain a contradiction because, by case analysis on $V$, it follows that $\okty^c$ cannot be $\okty^c$.  Hence, the result follows vacuously.
    \item[\rlnm{Dis}] In this case, $A$ has shape $A'^c$ and we assume $B \distype A'$.  We proceed by case analysis on $V$.  Note that, $B$ cannot be of shape $B'^c$.
    \begin{itemize}
      \item If $V$ is a numeral $\pn{n}$, then it follows from the induction hypothesis that $B$ must be $\intty$.  Therefore, $A$ is of shape $A'^c$ with $A' \distype \intty$, as required.
      \item If $V$ is a pair $(W_1,W_2)$, then it follows from the induction hypothesis that $B$ must have shape $B_1 \times B_2$.  Therefore, $A'$ is not a pair type nor $\okty$.
      \item If $V$ is an abstraction $\abs{x}{P}$, then it follows from the induction hypothesis that $B$ must have shape $B_1 \to B_2$.  Therefore $A'$ is not an arrow type nor $\okty$.
    \end{itemize}
    \item[\rlnm{Zero},\rlnm{Succ1}]  In these cases, $V$ is a numeral and $A$ is $\intty$.
    \item[\rlnm{Succ2}] In this case, $V$ is a numeral.  We assume the induction hypothesis, but then we obtain a contradiction because $\intty^c$ is not of shape $B^c$ with $B \distype \intty$.
    \item[\rlnm{Fix}] In this case, $V$ is a fixpoint abstraction and $A$ has shape $B_1 \to B_2$.
    \item[\rlnm{Pair1}] In this case, $V$ is of shape $(W_1,W_2)$ and $A$ is of shape $A \times B$ and $\Gamma \types W_1 : A$ and $\Gamma \types W_2 : B$
    \item[\rlnm{Pair2}] In this case, we suppose the induction hypothesis, but then we obtain a contradiction because the type of $W_i$ cannot be $\okty^c$.
    \item[\rlnm{Pair3}] In this case, $V$ is of shape $(W_1,W_2)$ and $A$ is of shape $(B_1 \times B_2)$ and there is an $i$ such that $\Gamma \types W_i : B_i^c$.
  \end{description}
\end{proof}

\begin{lemma}
  Suppose $\Gamma$ is a type environment and $V$ a closed value and $\Gamma \types M : A$.
  \begin{itemize}
    \item If $x:B \in \Gamma$ and $\Gamma \setminus \{x:B\} \types V : B$ then $\Gamma \setminus \{x:B\} \types M[V/x] : A$.
    \item If $x \notin \subjects(\Gamma)$, then $\Gamma \types M[V/x] : A$.
  \end{itemize}
\end{lemma}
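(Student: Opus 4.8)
The plan is to prove this substitution lemma for the one-sided system by induction on the derivation of $\Gamma \types M : A$, establishing both clauses simultaneously and carrying the variable convention that any bound variable encountered is fresh for $x$ and for $V$. Since $V$ is closed, substitution then commutes with every term constructor, so in each inductive case the conclusion is obtained by applying the same rule to the substituted premises. I would set the two clauses up together because the case analysis on the last rule is identical, and the binder cases of each clause feed back into the same clause: extending the environment by a fresh typing $y:C$ neither makes $x$ a subject (when it was not one) nor disturbs the occurrence of $x:B$ (when it was present), so the hypotheses of the appropriate clause persist on the premise.

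For the purely compositional rules — \rlnm{Succ1}, \rlnm{Succ2}, \rlnm{Pred1}, \rlnm{Pred2}, \rlnm{App1}, \rlnm{App2}, \rlnm{App3}, \rlnm{Pair1}, \rlnm{Pair2}, \rlnm{Pair3}, \rlnm{IfZ1}, \rlnm{IfZ2}, \rlnm{OkC2}, \rlnm{Disj} — and for the binders \rlnm{Abs}, \rlnm{Fix}, \rlnm{Let1}, \rlnm{Let2}, \rlnm{Let3}, I would invoke the induction hypothesis on each premise (in the same clause, with the extended environment in the binder cases) and reassemble with the same rule, using $(\abs{y}{N})[V/x] = \abs{y}{N[V/x]}$ and its analogues. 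These cases are entirely routine bookkeeping.

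The base cases carry the real content. For \rlnm{Var} the subject is a variable $y$ with $y:A \in \Gamma$: in clause~2, $x \notin \subjects(\Gamma)$ forces $y \neq x$, so $y[V/x] = y$ and \rlnm{Var} still applies; in clause~1, if $y \neq x$ then $y:A$ survives in $\Gamma \setminus \{x:B\}$ and we reapply \rlnm{Var}, whereas if $y = x$ then $A = B$, $x[V/x] = V$, and the goal $\Gamma \setminus \{x:B\} \types V : B$ is exactly the value hypothesis. For the structural axioms \rlnm{Ok} and \rlnm{OkC1}, which type an arbitrary subject, I would re-apply the axiom to the substituted subject: \rlnm{Ok} is insensitive to the environment, and the witnessing typing $z:\okty^c$ used by \rlnm{OkC1} is unaffected by removing $x:B$ and by the substitution, so the conclusion re-derives at once.

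The main obstacle is the \rlnm{Contra} axiom, whose applicability turns on a complementary pair $z:C,\,z:C^c$ lying in the environment. When this pair survives the removal of $x:B$ — in particular whenever $\Gamma$ is consistent in the sense used throughout the preservation argument, so that $x:B$ is never one half of such a pair — we simply re-apply \rlnm{Contra} and conclude irrespective of what $M[V/x]$ is. The delicate situation is the ``crossed'' one, in which $x:B$ is itself the half that realises the contradiction; removing it destroys the syntactic contradiction, and one must recover the conclusion from the value hypothesis $\Gamma \setminus \{x:B\} \types V : B$ together with the residual typing of $x$, appealing to the Value Inversion lemma to see that a closed value cannot simultaneously witness a type and its complement. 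Discharging this case cleanly — and confirming that it does not arise for the environments actually produced by the preservation proof, which are assembled only from single binder typings and are therefore consistent — is the crux of the argument; everything else is mechanical.
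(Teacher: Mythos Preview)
Your overall plan---induction on the derivation, reassembling each case with the same rule after substituting into the premises, with the variable convention handling binders---matches the paper exactly, and your treatment of the compositional rules and of \rlnm{Var} is correct. The divergence is in the two axioms that are sensitive to the contents of $\Gamma$.

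For \rlnm{Contra} you have done too much. The paper dispatches this case in one line: a type environment assigns at most one type to each subject, so the pair $z{:}C,\ z{:}C^c$ demanded by \rlnm{Contra} cannot be present in $\Gamma$, and the rule simply cannot be the last step of any derivation whose left-hand side is a type environment. What you identify as the ``crossed'' subcase, and propose to discharge via Value Inversion, therefore never arises; your parenthetical that such environments ``do not arise'' in the preservation proof is precisely the content of the type-environment hypothesis you are already given, not an auxiliary fact to be confirmed.

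For \rlnm{OkC1} you have done too little. Your claim that the witnessing $z{:}\okty^c$ survives the removal of $x{:}B$ fails in the subcase $x = z$ (whence $B = \okty^c$ by the functionality of a type environment). The paper does split on $x = z$ here: in that subcase the clause-1 hypothesis reads $\Gamma \setminus \{x{:}\okty^c\} \types V : \okty^c$, which is impossible for a closed value $V$ (by Value Inversion no closed value is typable at $\okty^c$), and the conclusion then holds vacuously. You should add this split.
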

\begin{proof}
  The proof is by induction on the derivation.
  \begin{description}
    \item[\rlnm{Ok}] In this case, $\Gamma \types M[V/x] : \okty$ by \rlnm{Ok}.
    \item[\rlnm{OkC1}] In this case, $\Gamma$ contains $y:\okty^c$ and we consider two cases.  If $x=y$, then we may assume $\Gamma \setminus \{x:B\} \types V : \okty^c$ and this gives the result since $x[V/x] = V$.  Otherwise, the result follows from \rlnm{OkC1}.
    \item[\rlnm{Zero}] In this case, the result follows again from \rlnm{Zero}.
    \item[\rlnm{Contra}] This rule does not apply when $\Gamma$ is a type environment.
    \item[\rlnm{Var}] In this case, $\Gamma$ contains $y:A'$.  If $x=y$ then $A=A'$ and we may assume $\Gamma - \{x:B\} \types V : A$.  Then this gives the result since $x[V/x] = V$.  Otherwise, the result follows from \rlnm{Var} since $y[V/x] = y$.
  \end{description}
  In the remaining cases, the result follows from the induction hypotheses and the fact that bound variables in the conclusion may be assumed distinct from $x$.
\end{proof}

\begin{theorem}
  If $\types M : A$ and $M \ped N$ then $\types N : A$.
\end{theorem}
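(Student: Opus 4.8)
The plan is to prove preservation by induction on the decomposition $M = \calE[P] \ped \calE[Q] = N$, in which $P$ is a redex and $Q$ its contraction, following the same pattern as the preservation proofs for the constrained system (Theorems~\ref{thm:preservation-right} and~\ref{thm:preservation-left}). Three ingredients are needed: a right-inversion principle for one-sided typing, describing for each shape of subject which rules could have concluded $\types M : A$ once the structural rules \rlnm{Ok}, \rlnm{OkC2} and \rlnm{Disj} are taken into account; the Value Inversion Lemma stated above; and the value-substitution lemma stated above. Since $M$ is closed the ambient environment stays empty throughout, which streamlines the bookkeeping.

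For the context cases --- where $\calE$ has shape $\calE'\,N'$, $(\abs{x}{M'})\,\calE'$, $\succtm(\calE')$, $\predtm(\calE')$, $\iftm{\calE'}{\cdots}{\cdots}$, $\lettm{(x,y)}{\calE'}{\cdots}$, or a pair context --- I would invert the typing of $\calE[P]$, apply the induction hypothesis to the immediate subterm sitting in the hole, and reassemble with the same rule. The only wrinkle relative to the classical argument is the family of ``going wrong'' rules. For instance, if $\types \calE'[P]\,N' : A$ arose from \rlnm{App2} (via $\calE'[P] : (\okty^c \to A)^c$) the induction hypothesis re-establishes that premise for $\calE'[Q]$; if it arose from \rlnm{App3} (via $N' : \okty^c$) the argument $N'$ is untouched and the conclusion is immediate; and applications of \rlnm{Ok}, \rlnm{OkC2} and \rlnm{Disj} are dispatched by the induction hypothesis on their single premise followed by re-application of the rule.

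The base cases, where $\calE$ is empty and $P$ is a redex, carry the content. For \rlnm{Beta}, with $P = (\abs{x}{M'})\,V$, the Value Inversion Lemma forbids the abstraction from carrying a type of shape $(\okty^c \to A)^c$ and forbids the value $V$ from carrying type $\okty^c$; hence \rlnm{App2} and \rlnm{App3} are vacuous and only \rlnm{App1} applies, yielding $\types \abs{x}{M'} : B \to A$ and $\types V : B$, whence Value Inversion gives $x:B \types M' : A$ and the value-substitution lemma gives $\types M'[V/x] : A$. The reductions \rlnm{IfZ}/\rlnm{IfS} and \rlnm{PZ}/\rlnm{PS} are analogous: value inversion on the numeral subject rules out the complement-typing rules \rlnm{IfZ1} and \rlnm{Pred2} (a numeral cannot inhabit $\intty^c$, since $\intty$ is not disjoint from itself), forcing, via \rlnm{IfZ2} resp.\ \rlnm{Pred1}, the contractum to carry $A$ directly. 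The \rlnm{Let} case combines both ideas: value inversion on the scrutinee $(V,W)$ identifies which components carry the relevant (possibly complemented) component types demanded by \rlnm{Let1} or \rlnm{Let2}, and iterating the value-substitution lemma on $V$ and $W$ discharges the body.

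The main obstacle is the \rlnm{Fix} case, together with the complement-aware inversion it relies on. Contracting $\fixtm{x}{M'} \ped M'[\fixtm{x}{M'}/x]$ requires substituting the non-value fixpoint term for $x$, whereas the substitution lemma above is stated only for closed values. I would therefore first prove a term-level substitution lemma --- if $x:B \types M' : A$ and $\types N : B$ then $\types M'[N/x] : A$ --- in the style of Lemma~\ref{lem:subst-right}; then, since inversion via \rlnm{Fix} on the hypothesis $\types \fixtm{x}{M'} : A$ supplies $x:A \types M' : A$, instantiating the lemma with $N \coloneqq \fixtm{x}{M'}$ and $B \coloneqq A$ closes the case at once. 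The genuinely delicate point, pervading every case, is that under the success semantics a derivation may pass through $M : \okty^c$ and then \rlnm{OkC2} to reach an arbitrary $A$; the inversion principle must therefore show either that this route is vacuous for the redex at hand (again by Value Inversion, since a redex's active subterms are values and no value inhabits $\okty^c$) or that the weaker typing is preserved directly. Establishing the term-substitution lemma and this complement-aware inversion principle is where essentially all the work lies; the subsequent case analysis is then routine.
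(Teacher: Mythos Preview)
Your approach is workable but differs structurally from the paper's. The paper argues by induction on the \emph{typing derivation}, not on the evaluation context. For each rule that could have concluded $\types M : A$, it inspects which reductions $M \ped N$ are possible given the shape of $M$, and re-establishes the typing of $N$ from the induction hypothesis on the premises. This handles the non-syntax-directed rules \rlnm{Ok}, \rlnm{OkC2} and \rlnm{Disj} for free: their single premise types the \emph{same} term $M$ (at $\okty^c$ or at some $B$ with $B \distype A'$), so the induction hypothesis on that smaller derivation gives $\types N$ at the premise type, and one re-applies the rule. No separate inversion principle is needed.

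Your context-based induction, by contrast, does require the ``complement-aware inversion principle'' you mention, and that is where the extra work lives. Note also a tension in your description: you say \rlnm{OkC2} and \rlnm{Disj} are ``dispatched by the induction hypothesis on their single premise'', but in an induction on $\calE$ the premise of \rlnm{OkC2} has the \emph{same} context, so the context-induction hypothesis does not apply. Either you intend a secondary induction on the derivation at each context node --- in which case you are essentially doing the paper's proof, reorganised --- or your inversion lemma must first collapse chains of \rlnm{OkC2}/\rlnm{Disj} into a finite list of syntax-directed cases. Both routes work; the paper's direct derivation induction simply sidesteps the issue.

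On the \rlnm{Fix} case you are right, and in fact more careful than the paper: the substitution lemma is stated only for closed values, yet the paper invokes it for the non-value $\fixtm{x}{M'}$, remarking parenthetically that a ``strengthened induction hypothesis'' is intended. The lemma's proof does not actually use valuehood, so the generalisation to arbitrary closed typed terms that you propose is the honest fix, and is needed in both approaches.
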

\begin{proof}
  The proof is by induction on the derivation.  The cases \rlnm{OkC1}, \rlnm{Contra} and \rlnm{Var} do not apply since there are no assumptions.
  \begin{description}
    \item[\rlnm{Ok}] In this case, the result follows immediately from \rlnm{Ok}.
    \item[\rlnm{OkC2}] It follows from the induction hypothesis that $\types N : \okty^c$ and the result follows from \rlnm{OkC2}.
    \item[\rlnm{Dis}] In this case, $A$ is of shape $A'^c$.  Suppose $A \distype B$.  It follows from the induction hypothesis that $\types N : B$ and hence the result follows from \rlnm{Dis}.
    \item[\rlnm{Zero}] In this case, $M$ is $\zerotm$ and we obtain a contradiction from $M \ped N$.
    \item[\rlnm{Succ1}] In this case, $M$ is of shape $\succtm(P)$ and $A=\intty$.  Thus it must be that $P$ makes a step to some $Q$.  It follows from the induction hypothesis that $\types Q : \intty$ and the result follows from \rlnm{Succ1}.
    \item[\rlnm{Succ2}] In this case, $M$ is of shape $\succtm(P)$ and thus it must be that $P$ makes a step to some $Q$.  It follows from the induction hypothesis that $\types Q : \intty^c$ and thus the result follows from \rlnm{Succ2}.
    \item[\rlnm{Fix}] \ch{In this case, $M$ does not make a step.}
    \item[\rlnm{Let3}] In this case, $M$ has shape $\lettm{(x,y)}{Q}{P}$ and there are two cases.  If $Q$ makes a step to some $Q'$ then the result follows from the induction hypothesis.  Otherwise, it must be that $Q$ is a pair $(V,W)$ and $N=P[V/x,W/y]$.  Since $x$ and $y$ do not occur in $\Gamma$, the result follows from the substitution lemma.
    \item[\rlnm{Let2}] In this case, $M$ has shape $\lettm{(x,y)}{Q}{P}$ and there are two cases.  If $Q$ makes a step to some $Q'$ then the result follows from the induction hypothesis.  Otherwise, it must be that $Q$ is a pair $(V,W)$ and $N=P[V/x,W/y]$.  Hence, it follows by value inversion on the first premise that either $\Gamma \types V:B_1^c$ or $\Gamma \types W:B_2^c$.  In either case, we can use the corresponding second premise and the substitution lemma (both parts) to conclude.
    \item[\rlnm{Let1}] In this case, $M$ has shape $\lettm{(x,y)}{Q}{P}$ and there are two cases.  If $Q$ makes a step to some $Q'$ then the result follows from the induction hypothesis.  Otherwise, it must be that $Q$ is a pair $(V,W)$ and $N=P[V/x,W/y]$.  It follows from value inversion that $\Gamma \types V : B$ and $\Gamma \types W : C$ and so we can conclude using the substitution lemma.
    \item[\rlnm{App1}] In this case, $M$ has shape $PQ$.  There are two cases.  If $P$ or $Q$ makes a step, then the result follows from the induction hypothesis.  Otherwise, it must be that $P$ is an abstraction $\abs{x}{P'}$ and $Q$ is a value.  By value inversion, it must be that $x:B \types P':A$ and then the result follows from the substitution lemma.
    \item[\rlnm{App2}] In this case, $M$ has shape $PQ$.  There are three cases.  If $P$ makes a step, then the result follows from the induction hypothesis.  If $Q$ makes a step, the result follows immediately from \rlnm{App2}.  Otherwise $P$ must be an abstraction $\abs{x}{P'}$ but it follows from value inversion that this is impossible with type $(\okty^c \to A)^c$.
    \item[\rlnm{App3}] In this case, $M$ has shape $PQ$.  There are three cases.  If $Q$ makes a step, then the result follows from the induction hypothesis.  If $P$ makes a step, the result follows immediately from \rlnm{App3}.  Otherwise, it must be that $Q$ is a value, but it follows from value inversion that this is impossible with type $\okty^c$.
    \item[\rlnm{Pair1}] In this case, $M$ is a pair $(P,Q)$ and $A$ has shape $B_1 \times B_2$.  It can only be that either $P$ makes a step or $Q$ makes a step, and then the result follows from the induction hypothesis in each case.
    \item[\rlnm{Pair2}] In this case, $M$ is a pair $(P_1,P_2)$ and $i \in \{1,2\}$.  If $P_i$ makes a step, then the result follows from the induction hypothesis; otherwise the result follows immediately from \rlnm{Pair2}.
    \item[\rlnm{Pair3}] In this case, $M$ is a pair $(M_1,M_2)$ and $i \in \{1,2\}$ and $A$ has shape $(A_1 \times A_2)^c$.  If $M_i$ makes a step, then the result follows from the induction hypothesis; otherwise the result follows immediately from \rlnm{Pair3}.
    \item[\rlnm{IfZ1}] In this case, $M$ is of shape $\iftm{Q}{P_1}{P_2}$.  There are two cases.  If $Q$ makes a step, then the result follows from the induction hypothesis.  Otherwise, $M$ is a numeral.  However, it follows from value inversion that this is impossible with type $\intty^c$.
    \item[\rlnm{IfZ2}] In this case, $M$ is of shape $\iftm{Q}{P_1}{P_2}$.  There are two cases.  If $Q$ makes a step, then the result follows from the induction hypothesis.  Otherwise, $M$ is a numeral.  If $M = \pn{0}$, then $M \ped P_1$ and the result follows from the first premise.  Otherwise, $M \ped P_2$ and the result follows from the second premise.
  \end{description}
\end{proof}
  
\subsubsection{Proof of Theorem~\ref{thm:failures-soundness} (One-Sided Syntactic Soundness)}

\begin{proof}
  Suppose $\types M: \okty^c$.  For the purpose of obtaining a contradiction, suppose $M$ reaches a value $V$.  Then it follows from preservation that $\types V : \okty^c$, but this contradicts the value inversion lemma.
\end{proof}

\section{Additional Material in Support of Section~\ref{sec:constrained-typing}}\label{apx:algo}

This section contains: a statement (and proof) of (left) weakening for the constrained type system given in Section~\ref{sec:constrained-typing}, a definition for an algorithmic version of the constrained type system rules (where the sub-type rules have been distributed throughout the other rules in the standard way), the proof of its correctness in the form of a soundness and completeness proof (relative to the original constraint type system rules), a type inference algorithm, and its correctness in the form of a soundness and completeness proof (relative to the algorithmic typing rules).

\chB{
\begin{lemma}[Left Weakening]
    If $\Gamma \types \Delta$ then $\Gamma,\, M : A \types \Delta$.
\end{lemma}
\begin{proof}
    Let $\Gamma,\, \Delta$ be environments, $N$ be a term and $\tau$ be a type such that $\Gamma \types \Delta$. We now proceed by induction on $\types$: \\
    \begin{description}
        \item[$\rlnm{GVar}$] This case is trivial by the use of $\rlnm{GVar}$.
        \item[$\rlnm{Id}$] This case is trivial by the use of $\rlnm{Id}$.
        \item[$\rlnm{VarK}$] This case is trivial by the use of $\rlnm{VarK}$.
        \item[$\rlnm{SubL}$] {
            Let $\Gamma,\, M : A \types \Delta$,\enspace $\Gamma,\, M : B \types \Delta$, and $A \subtype B$. \\
            By the inductive hypothesis, $\Gamma,\, N : \tau,\, M : B \types \Delta$. \\
            As $A \subtype B$, by $\rlnm{SubL}$, we have $\Gamma,\, N : \tau,\, M : A \types \Delta$ as required.
            }
        \item[$\rlnm{SubR}$] {
            Let $\Gamma \types M : A$,\enspace $\Gamma \types M : B$, and $B \subtype A$. \\
            By the inductive hypothesis, we have $\Gamma,\, N : \tau \types M : B$. \\
            As $B \subtype A$, by $\rlnm{SubR}$, we have $\Gamma,\, N : \tau \types M : A$.
            }
        \item[$\rlnm{FixsR}$] \ch{
            Let $\Gamma \types (\fixtm{f(x)}{M}) : B \to A$ and $\Gamma,\, f : B \to A,\, x : B \types M : A$. \\
            By the inductive hypothesis, we have $\Gamma,\, N : \tau,\, f : B \to A,\, x : B \types M : A$.\\
            By $\rlnm{FixsR}$, we have $\Gamma,\, N : \tau \types (\fixtm{f(x)}{M}) : B \to A$ as required.
            }
        \item[$\rlnm{FixnR}$] \ch{
            Let $\Gamma \types (\fixtm{f(x)}{M}) : B \from A$ and $\Gamma,\, f : B \from A,\, M : A \types x : B$. \\
            By the inductive hypothesis, we have $\Gamma,\, N : \tau,\, f : B \from A,\, M : A \types x : B$.\\
            By $\rlnm{FixnR}$, we have $\Gamma,\, N : \tau \types (\fixtm{f(x)}{M}) : B \from A$ as required.
            }
        \item[$\rlnm{AppL}$] {
            Let $\Gamma,\, (M\ N') : A \types \Delta$,\enspace $\Gamma \types M : B \from A$, and $\Gamma,\, N' : B \types \Delta$. \\
            By the inductive hypothesis, we have $\Gamma,\, N : \tau \types M : B \from A$ and $\Gamma,\, N : \tau,\, N' : B \types \Delta$.\\
            By $\rlnm{AppL}$, we have $\Gamma,\, N : \tau,\, (M\ N') : A \types \Delta$ as required.
            }
        \item[$\rlnm{AppL}$] {
            Let $\Gamma \types (M\ N') : A$,\enspace $\Gamma \types M : B \to A$, and $\Gamma \types N' : B$. \\
            By the inductive hypothesis, $\Gamma,\, N : \tau \types M : B \to A$ and $\Gamma,\, N : \tau \types N' : B$.\\
            By $\rlnm{AppR}$, we have $\Gamma,\, N : \tau \types (M\ N') : A$ as required.
            }
        \item[$\rlnm{CnsL}$] {
            Let $\Gamma,\, c(M_1,\ldots,M_n) : c(A_1,\ldots,A_n) + \Sigma_{d\in\mathcal{C}\backslash\{c\}}(d(\vec{A_d})) \types \Delta$ and $\Gamma,\, M_i : A_i \types \Delta$ for some $i$.\\
            By the inductive hypothesis, we have $\Gamma,\, N : \tau,\, M_i : A_i \types \Delta$.\\
            By $rlnm{CnsL}$, we have $\Gamma,\, N : \tau,\, c(M_1,\ldots,M_n) : c(A_1,\ldots,A_n) + \Sigma_{d\in\mathcal{C}\backslash\{c\}}(d(\vec{A_d})) \types \Delta$ as required.
            }
        \item[$\rlnm{CnsR}$] {
            Let $\Gamma \types c(M_1,\ldots,M_n) : c(A_1,\ldots,A_n)$ and $\Gamma \types M_i : A_i$ for all $i$. \\
            By the inductive hypothesis, $\Gamma,\, N : \tau \types M_i : A_i (\forall i)$.\\
            By $\rlnm{CnsR}$, we have $\Gamma,\, N : \tau \types c(M_1,\ldots,M_n) : c(A_1,\ldots,A_n)$ as required.
            }
        \item[$\rlnm{MchL}$] {
            Let $\Gamma,\,\matchtm{M}{|_{i=1}^k (p_i \mapsto P_i)} : A \types \Delta$,\enspace $\Gamma,\,P_i : A \types x : B_x \; (\forall i.\forall x\in\fv{(p_i)})$, and $\Gamma,\,(M, P_i):(p_i[B_x/x \mid x\in\fv{(p_i)}], A) \types \Delta \; (\forall i)$. \\
            By the inductive hypothesis we have:
            \begin{enumerate}
                \item $\Gamma,\, N : \tau,\, P_i : A \types x : B_x \; (\forall i.\forall x\in\fv{(p_i)})$
                \item $\Gamma,\, N : \tau,\, (M, P_i):(p_i[B_x/x \mid x\in\fv{(p_i)}], A) \types \Delta \; (\forall i)$
            \end{enumerate}
            By $\rlnm{MchL}$, we have $\Gamma,\, N : \tau,\, \matchtm{M}{|_{i=1}^k (p_i \mapsto P_i)} : A \types \Delta$ as required.
            }
        \item[$\rlnm{MchR}$] {
            Let $\Gamma \types \matchtm{M}{|_{i=1}^k (p_i \mapsto P_i)} : A$,\enspace $\Gamma \types M : \Sigma_{i=1}^k p_i[B_x/x \mid x\in\fv(p_i)]$, and $\Gamma \cup \{x:B_x \mid x \in \fv(p_i)\} \types P_i : A (\forall i)$. \\
            By the inductive hypothesis we have:
            \begin{enumerate}
                \item $\Gamma,\, N : \tau \types M : \Sigma_{i=1}^k p_i[B_x/x \mid x\in\fv(p_i)]$
                \item $\Gamma,\, N : \tau \cup \{x:B_x \mid x \in \fv(p_i)\} \types P_i : A (\forall i)$
            \end{enumerate}
            By $\rlnm{MchR}$, we have $\Gamma,\, N : \tau \types \matchtm{M}{|_{i=1}^k (p_i \mapsto P_i)} : A$ as required.
            }
        \item[$\rlnm{CnsK}$] {
            Let $\Gamma,\, c(M_1, \cdot, M_n) : \okty \types \Delta$ and $\Gamma,\, M_i : \okty \types \Delta$ for some $i$.\\
            By the inductive hypothesis, we have $\Gamma,\, N : \tau,\, M_i : \okty \types \Delta$ for some $i$.\\
            By $\rlnm{CnsK}$, we have $\Gamma,\, N : \tau,\, c(M_1, \cdot, M_n) : \okty \types \Delta$ as required.
            }
        \item[$\rlnm{FunK}$] {
            Let $\Gamma,\, M\ N' : \okty \types \Delta$ and $\Gamma,\, M : \okty \from A \types \Delta$.\\
            By the inductive hypothesis, we have $\Gamma,\, N : \tau,\, M : \okty \from A \types \Delta$.\\
            By $\rlnm{FunK}$, we have $\Gamma,\, N : \tau,\, M\ N' : \okty \types \Delta$ as required.
            }
        \item[$\rlnm{CnsDl}$] This case is trivial by the use of $\rlnm{CnsDL}$.
        \item[$\rlnm{FixDL}$] This case is trivial by the use of $\rlnm{FixDL}$.
    \end{description}
\end{proof}
}

\begin{definition}
    Given a set of typing and subtype formulas $\Gamma$, write $\constraints(\Gamma)$ for the subset of subtype formulas, and $\typings(\Gamma)$ for the typing formulas.
    Given two sets of typing and subtype formulas $\Gamma$ and $\Gamma'$ write $\Gamma \subtype \Gamma'$ just if for all typings $M:A \in \Gamma$, there is a typing $M:B \in \Gamma'$ and $\Gamma \types A \subtype B$.
\end{definition}

\begin{figure}
  \begin{mdframed}[topline=false,innertopmargin=-.84ex,innerleftmargin=-.1ex,innerrightmargin=0ex]
  \rulediv{25pt}{Structural}
  \[
    \begin{array}{c}
      \prftree[l, r]{$C \types C'[\bar{B}/\bar{a}]$ and $C \types A [ \bar{B} / \bar{a} ] \subtype A'$}{\rlnm{Inst2}}{\Gamma,\, f:\forall \bar{a}. C' \Rightarrow A \mid C \types_2 f : A'}
      \\[5mm]
      \prftree[l,r]{$C \types A \subtype B$}{\rlnm{Var2}}{\Gamma,\, x:A \mid C \types_2 x:B}
      \qquad
      \prftree[l,r]{$C \types \okty \subtype A$}{\rlnm{VarK2}}{\Gamma \mid C \types_2 x : A}
    \end{array}
  \]
  \\[2mm]
  \rulediv{25pt}{Functions}\\[1ex]
    \[
    \begin{array}{c}
        \ch{
        \prftree[l,r]{$C \types B_1 \to B_2 \subtype A$}{\rlnm{FixsR2}}{\Gamma,\, f : A,\, x:B_1 \mid C \types_2 M : B_2}{\Gamma \mid C \types_2 (\fixtm{f(x)}{M}) : A}
        }
        \\[5mm]
        \ch{
        \prftree[l,r]{$C \types B_2 \from B_1 \subtype A$}{\rlnm{FixnR2}}{\Gamma,\, f : A,\, M:B_1 \mid C \types_2 x : B_2}{\Gamma \mid C \types_2 (\fixtm{f(x)}{M}) : A}
        }
        \\[5mm]
        \prftree[l,r]{$C \types B_1 \subtype B_2 \to A$}{\rlnm{AppR2}}{\Gamma \mid C \types_2 M : B_1}{\Gamma \mid C \types_2 N : B_2}{\Gamma \types_2 (M\ N) : A}
        \\[5mm]
        \prftree[l,r]{$C \types B_1 \subtype B_2 \from A$}{\rlnm{AppL2}}{\Gamma \mid C \types_2 M : B_1}{\Gamma,\, N : B_2 \types_2 \Delta}{\Gamma,\, (M\ N):A \mid C \types_2 \Delta}
    \end{array}
    \]
  \\[2mm]
  \rulediv{30pt}{Constructors}
  \[
    \begin{array}{c}
      \prftree[l,r]{$C \types A \subtype \Sigma_{d \in \calC \setminus \{c\}}\,d(\vv{A_d}) + c(A_1,\ldots,A_n)$}{\rlnm{CnsL2}}{\Gamma,\,M_i : A_i \mid C \types_2 \Delta}{\Gamma,\,c(M_1,\ldots,M_n) : A \mid C \types_2 \Delta}
    \\[5mm]
      \prftree[l,r]{$C \types c(A_1,\ldots,A_n) \subtype A$}{\rlnm{CnsR2}}{\Gamma \mid C \types_2 M_i : A_i \;(\forall i)}{\Gamma \mid C \types_2 c(M_1,\ldots,M_n) : A}
    \end{array}
  \]
  \\[2mm]
  \rulediv{30pt}{Pattern Matching}\\[1ex]
  \[
    \begin{array}{c}
      \prftree[l,r]{$\begin{array}{l}C \types A \subtype A_i \; (\forall i)\\C \types (B_i, A_i) \subtype A_i' \;(\forall i)\\C \types A \subtype A_x \; (\forall i.\forall x \in \fv(p_i))\\C \types p_i[B_x/x \mid x \in \fv(p_i)] \subtype B_i \; (\forall i)\end{array}$}{\rlnm{MchL2}}{\begin{array}{c}\Gamma,\,(M, P_i):A'_i \mid C \types_2 \Delta \; (\forall i)\\[1mm]\Gamma,\,P_i : A_x \mid C \types_2 x : B_x \; (\forall i. \forall x\in\fv{(p_i)})\end{array}}{\Gamma,\,\matchtm{M}{|_{i=1}^k (p_i \mapsto P_i)} : A \mid C \types_2 \Delta}
      \\[5mm]
      \prftree[l,r]{$\begin{array}{l}C \types B \subtype \Sigma_{i=1}^k p_i[B_x/x \mid x \in \fv(p_i)]\\C \types A_i \subtype A\;(\forall i)\end{array}$}{\rlnm{MchR2}}{\begin{array}{c}\Gamma \mid C \types_2 M : B\\[1mm]\Gamma \cup \{x:B_x \mid x \in \fv(p_i)\} \mid C \types_2 P_i : A_i \; (\forall i)\end{array}}{\Gamma \mid C \types_2 \matchtm{M}{|_{i=1}^k (p_i \mapsto P_i)} : A}
      \\[5mm]
    \end{array}
  \]
  \\[2mm]
\end{mdframed}
\caption{Algorithmic constrained type assignment}\label{fig:alg-con-ta}
\end{figure}

\begin{figure}
    \begin{mdframed}[topline=false,innertopmargin=-.84ex,innerleftmargin=-.1ex,innerrightmargin=0ex]
    \rulediv{30pt}{Evaluation}\\[1ex]
    \[
        \begin{array}{c}
            \prftree[l,r]{$C \types \okty \subtype B$}{\rlnm{CnsK2}}{\Gamma,\, M_i : B \mid C \types_2 \Delta}{\Gamma,\, c(M_1,\ldots,M_n) : A \mid C \types_2 \Delta}
        \\[5mm]
            \prftree[l,r]{$C \types \okty \from A \subtype B$}{\rlnm{FunK2}}{\Gamma,\, M : B \mid C \types_2 \Delta}{\Gamma,\, M\ N : A \mid C \types_2 \Delta}
    \end{array}
    \]
    \\[2mm]
    \rulediv{25pt}{Disjointness}\\[1ex]
    \[
        \begin{array}{c}
      \prftree[l,r]{$CA \subtype B_1 \to B_2$}{\rlnm{CnsDL21}}{\Gamma,\,c(M_1,\ldots,M_n) : A \mid C\types \Delta}
      \\[5mm]
      \prftree[l,r]{$C \types A \subtype B_1 \from B_2$}{\rlnm{CnsDL22}}{\Gamma,\,c(M_1,\ldots,M_n) : A \mid C \types \Delta}
      \\[5mm]
      \prftree[l,r]{$C \types A \subtype \Sigma_{d\in \calC \setminus \{c\}}\,d(\bar{A_d})$}{\rlnm{CnsDL23}}{\Gamma,\,c(M_1,\ldots,M_n) : A \mid C \types \Delta}
      \\[5mm]
      \ch{\prftree[l,r]{$C \types A \subtype \Sigma_{c \in \calC} c(\bar{A_c})$}{\rlnm{FixDL2}}{\Gamma,\,\fixtm{f(x)}{M} : A \mid C\types \Delta}}
        \end{array}
    \]\\
    \end{mdframed}
\caption{Algorithmic constrained type assignment (continued)}
\end{figure}

\begin{theorem}[Soundness of Algorithmic Type Assignment]
    For all $\Gamma, \Delta$, if $\typings(\Gamma) \mid \constraints(\Gamma) \types_2 \Delta$ then $\Gamma \cup C \types \Delta$.
\end{theorem}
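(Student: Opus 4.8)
The plan is to prove the statement by induction on the derivation of the algorithmic judgement, simulating each algorithmic rule of Figure~\ref{fig:alg-con-ta} by its counterpart in the original constrained system of Figure~\ref{fig:constrained-ta-terms} together with a bounded number of applications of the subtyping rules \rlnm{SubL} and \rlnm{SubR}. First I would generalise the statement to an arbitrary algorithmic judgement: if $\Gamma \mid C \types_2 \Delta$, with $\Gamma$ containing only typing formulas, then $\Gamma \cup C \types \Delta$ in the original system; the theorem is the instance $\Gamma \coloneqq \typings(\Gamma)$ and $C \coloneqq \constraints(\Gamma)$. The bookkeeping observation that makes the induction go through is that $\constraints(\Gamma \cup C) = C$ whenever the typing part $\Gamma$ carries no constraints, so that every subtyping side condition $C \types A \subtype B$ of an algorithmic rule is, verbatim, a subtyping premise $\Gamma \cup C \types A \subtype B$ licensing \rlnm{SubL} or \rlnm{SubR}; no weakening of the constraint context is needed. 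Since the left contexts of the premises only ever acquire further \emph{variable} typings, this identity is preserved throughout the induction.

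The uniform key step is that each algorithmic rule folds subtyping into a side condition of shape $C \types \cdots \subtype \cdots$, so to recover the original derivation I apply the matching original rule to the \emph{canonical} type named in that side condition and then correct the subject's type with \rlnm{SubR} (for right-hand typings) or \rlnm{SubL} (for left-hand typings). For example, from \rlnm{AbsR2} with $C \types B_1 \to B_2 \subtype A$ the induction hypothesis and \rlnm{AbsR} give $\Gamma \cup C \types \abs{x}{M} : B_1 \to B_2$, whence \rlnm{SubR} reaches $A$; dually, from \rlnm{CnsL2} with $C \types A \subtype c(A_1,\ldots,A_n)+K$ I apply \rlnm{CnsL} to obtain $\Gamma \cup C,\,c(M_1,\ldots,M_n) : c(A_1,\ldots,A_n)+K \types \Delta$ and then \rlnm{SubL}. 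The evaluation rules \rlnm{CnsK2} and \rlnm{FunK2} and the four disjointness axioms are handled identically, exploiting two facts: that $A \subtype \okty$ holds for \emph{every} $A$ by \rlnm{OkS}, which lets one slide any left typing $M:\okty$ down to $M:A$; and that the canonical types in the disjointness side conditions (an arrow, a necessity arrow, a $c$-free sum, a full sum of constructors) are precisely those to which the original axioms \rlnm{CnsDL} and \rlnm{AbsDL} directly apply, after which a final \rlnm{SubL} descends to the actual subject type.

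I expect the main obstacle to be the pattern-matching rules \rlnm{MchL2} and \rlnm{MchR2}, which carry several families of premises together with a bundle of subtyping side conditions relating the inferred pattern-variable types $B_x$, the branch types $A_i$, the scrutinee type $B$, and the paired types $A_i'$. The same "apply-original-then-subtype" recipe applies, but each family must be realigned individually. For \rlnm{MchR2} I would coerce each branch from $A_i$ up to $A$ by \rlnm{SubR} and coerce the scrutinee into $\Sigma_{i=1}^k p_i[\vv{B_x/x}]$ by \rlnm{SubR}, so that \rlnm{MchR} becomes immediately applicable. For \rlnm{MchL2} I would, for each branch, reconcile the paired typing $(M,P_i):A_i'$ with the shape $(p_i[\vv{B_x/x}],A)$ demanded by \rlnm{MchL} using \rlnm{SubL} on the pair type, and likewise coerce each premise $P_i : A_x$, before assembling the branches with \rlnm{MchL}. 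The delicate part is checking that the displayed side conditions of \rlnm{MchL2}/\rlnm{MchR2} supply exactly the subtyping facts needed to justify these coercions; the remaining structural, function, and constructor cases (\rlnm{Var2}, \rlnm{VarK2}, \rlnm{Inst2}, \rlnm{AbnR2}, \rlnm{AppR2}, \rlnm{AppL2}, \rlnm{CnsR2}, \rlnm{FixR2}) are routine instances of the recipe above.
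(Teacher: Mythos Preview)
Your proposal is correct and follows essentially the same approach as the paper: induction on the algorithmic derivation, simulating each rule by its counterpart in Figure~\ref{fig:constrained-ta-terms} together with applications of \rlnm{SubL} or \rlnm{SubR} justified by the side condition. Your account is, if anything, slightly more careful than the paper's in making explicit the trailing \rlnm{SubL} step needed for \rlnm{CnsK2} and the disjointness axioms (which the paper waves through as ``trivial''), and your bookkeeping observation about $\constraints(\Gamma\cup C)=C$ is exactly what keeps the side conditions aligned.
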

\begin{proof}
    We proceed by induction on the derivation of $\types_2$.\\
    \begin{description}
        \item[$\rlnm{Inst2}$] This case is trivial by the use of $\rlnm{GVar}$.
        \item[$\rlnm{Var2}$] Let $\Gamma,\, x : A$ be the context with $A \subtype B$. By $\rlnm{Id}$, we have $\Gamma,\, x : A \types x : A$. By $\rlnm{SubR}$, we have $\Gamma,\, x : A \types x : B$ as required.
        \item[$\rlnm{FixsR2}$]\ch{
            Let $\typings(\Gamma) \mid \constraints(\Gamma) \types_2 (\fixtm{f(x)}{M}) : A$ and $\typings(\Gamma),\, f : A,\, x : B_1 \mid \constraints(\Gamma) \types_2 M : B_2$ with $\Gamma \types B_1 \to B_2 \subtype A$. \\
            By the inductive hypothesis, $\Gamma,\, f : A,\, x : B_1 \types M : B_2$. \\
            By $\rlnm{SubL}$ we have $\Gamma,\, f : B_1 \to B_2,\, x : B_1 \types M : B_2$. \\
            By $\rlnm{FixsR}$ we have $\Gamma \types (\fixtm{f(x)}{M}) : B_1 \to B_2$. \\
            By $\rlnm{SubR}$ we have $\Gamma,\,\types \fixtm{f(x)}{M} : A$ as required.
            }
        \item[$\rlnm{FixnR2}$] \ch{
            Let $\typings(\Gamma) \mid \constraints(\Gamma) \types_2 (\fixtm{f(x)}{M}) : A$ and $\typings(\Gamma),\, f : A,\, M : B_2 \mid \constraints(\Gamma) \types_2 x : B_1$ with $B_1 \from B_2 \subtype A$. \\
            By the inductive hypothesis, $\Gamma,\, f : A,\, M : B_2 \types x : B_1$. \\
            By $\rlnm{SubL}$ we have $\Gamma,\, f : B_1 \from B_2,\, M : B_2 \types x : B_1$. \\
            By $\rlnm{FixnR}$ we have $\Gamma \types_2 (\fixtm{f(x)}{M}) : B_1 \from A_1$. \\
            By $\rlnm{SubR}$ we have $\Gamma \types \fixtm{f(x)}{M} : A$ as required.
            }
        \item[$\rlnm{AppR2}$] {
            Let $\typings(\Gamma) \mid \constraints(\Gamma) \types_2 (M\ N) : \tau_3$,\enspace $\typings(\Gamma) \mid \constraints(\Gamma) \types_2 M : \tau_1$, and $\typings(\Gamma) \mid \constraints(\Gamma) \types_2 N : \tau_2$ with $\tau_1\subtype\tau_2\to\tau_3$. \\
            By the inductive hypothesis, we have $\Gamma \types M : \tau_1$ and $\Gamma \types N : \tau_2$.
            By $\rlnm{SubR}$ on $M$, we have $\Gamma \types M : \tau_2 \to \tau_3$. \\
            By $\rlnm{AppR}$ we have $\Gamma \types (M\ N) : \tau_3$ as required.
            }
        \item[$\rlnm{AppL2}$] {
            Let $\typings(\Gamma),\, (M\ N) : \tau_3 \mid \constraints(\Gamma) \types_2 \Delta$,\enspace $\typings(\Gamma) \mid \constraints(\Gamma) \types_2 M : \tau_1$, and $\typings(\Gamma),\, N : \tau_2 \mid \constraints(\Gamma) \types_2 \Delta$ with $\tau_1 \subtype \tau_2 \from \tau_3$.\\
            By the inductive hypothesis, we have $\Gamma \types M : \tau_1$ and $\Gamma,\, N : \tau \types_2 \Delta$. \\
            By $\rlnm{SubR}$ on $M$ we have $\Gamma \types M : \tau_2 \from \tau_3$. \\
            By $\rlnm{AppL}$, we have $\Gamma,\, (M\ N) : \tau_3 \types \Delta$ as required.
            }
        \item[$\rlnm{ConsL2}$] {
            Let $\typings(\Gamma),\,c(M_1,\ldots,M_n) : \tau \mid \constraints(\Gamma) \types_2 \Delta$ and $\typings(\Gamma),\,M_i : A_i \mid \constraints(\Gamma) \types_2 \Delta$ with $\tau \subtype \Sigma_{d\in\mathcal{C}\backslash\{c\}}(d(\vec{A_d})) + c(A_1,\ldots,A_n)$. \\
            By the inductive hypothesis, $\Gamma,\,M_i : A_i \types \Delta$. \\
            By $\rlnm{CnsL}$, we have $\Gamma,\,c(M_1,\ldots,M_n) : c(A_1,\ldots,A_n) + \Sigma_{d\in\mathcal{C}\backslash\{c\}}(d(\vec{A_d})) \types \Delta$. \\
            By $\rlnm{SubL}$, we have $\Gamma,\,c(M_1,\ldots,M_n) : \tau \types \Delta$ as required.
            }
        \item[$\rlnm{ConsR2}$] {
            Let $\typings(\Gamma) \mid \constraints(\Gamma) \types_2 c(_1,\ldots,M_n) : \tau$ and $\typings(\Gamma) \mid \constraints(\Gamma) \types_2 M_i : A_i \;(\forall i)$ with $c(A_1,\ldots,A_n) \subtype \tau$. \\
            By the inductive hypothesis, we have $\Gamma \mid C \types M_i : A_i \;(\forall i)$. \\
            By $\rlnm{CnsR}$, we have $\Gamma \types_2 c(_1,\ldots,M_n) : c(A_1,\ldots,A_n)$. \\
            By $\rlnm{SubR}$, we have $\Gamma \types c(_1,\ldots,M_n) : \tau$ as required.
            }
        \item[$\rlnm{MchL2}$] {
            Let $\typings(\Gamma),\,\matchtm{M}{|_{i=1}^k (p_i \mapsto P_i)} : A \mid \constraints(\Gamma) \types_2 \Delta$,\enspace $\typings(\Gamma),\,(M, P_i):A'_i \mid \constraints(\Gamma) \types_2 \Delta \; (\forall i)$ and $\typings(\Gamma),\, P_i : A_x \mid \constraints(\Gamma) \types_2 x : B_x \; (\forall i.\forall x\in\fv{p_i})$ where $A \subtype A_x \; (\forall i.\forall x\in\fv{(p_i)})$,\enspace $A \subtype A_i \; (\forall i.)$,\enspace $(B_i, A_i) \subtype A'_i$, and $p_i[B_x/x \mid x \in \fv{(p_i)}]\subtype B_i \; (\forall i.)$. \\
            By the inductive hypothesis, $\Gamma,\,(M, P_i):A'_i \types \Delta \; (\forall i)$ and $\Gamma,\, P_i : A_x \types x : B_x \; (\forall i.\forall x\in\fv{p_i})$.\\
            By $\rlnm{SubL}$ (applied to $(M, P_i)$ twice and $P_i$ once), we have $\Gamma,\,(M, P_i):(p_i[B_x/x \mid x\in\fv{(p_i)}], A) \types \Delta \; (\forall i)$ and $\Gamma,\, P_i : A \types x : B_x \; (\forall i.\forall x\in\fv{p_i})$.\\
            By $\rlnm{MchL}$, we have $\Gamma,\,\matchtm{M}{|_{i=1}^k (p_i \mapsto P_i)} : A \types \Delta$ as required.
            }
        \item[$\rlnm{MchR2}$] {
            Let $\typings(\Gamma) \mid \constraints(\Gamma) \types_2 \matchtm{M}{|_{i=1}^k (p_i \mapsto P_i)} : A$,\enspace $\typings(\Gamma) \cup \{x:B_x \mid x \in \fv(p_i)\} \mid \constraints(\Gamma) \types_2 P_i : A_i \; (\forall i)$, and $\typings(\Gamma) \mid \constraints(\Gamma) \types_2 M : B$ where $B \subtype \Sigma_{i=1}^k p_i[B_x/x \mid x\in\fv(p_i)]$ and $A_i \subtype A \; (\forall i)$. \\
            By the inductive hypothesis we have $\Gamma \cup \{x:B_x \mid x \in \fv(p_i)\} \types P_i : A_i \; (\forall i)$ and $\Gamma \types M : B$. \\
            By $\rlnm{SubR}$ (applied to $M$ and $P_i$), we have $\Gamma \cup \{x:B_x \mid x \in \fv(p_i)\} \types P_i : A \; (\forall i)$ and $\Gamma \types M : p_i[B_x/x \mid x\in\fv(p_i)]$. \\
            By $\rlnm{MchR}$, we have $\Gamma \types \matchtm{M}{|_{i=1}^k (p_i \mapsto P_i)} : A$.
            }
        \item[$\rlnm{VarK2}$] This case is trivial by the use of $\rlnm{VarK}$.
        \item[$\rlnm{CnsK2}$] This case is trivial by the use of $\rlnm{CnsK}$.
        \item[$\rlnm{FunK2}$] {
            Let $\typings(\Gamma),\, M\ N : A \mid \constraints(\Gamma) \types_2 \Delta$,\enspace $\typings(\Gamma),\, M : B \types_2 \Delta$, and $\okty \from A \subtype B$. \\
            By the inductive hypothesis, we have $\Gamma,\, M : B \types \Delta$. \\
            By $\rlnm{SubL}$, we have $\Gamma,\, M : \okty \from A \types \Delta$. \\
            By $\rlnm{FunK}$, we have $\Gamma,\, M\ N : \okty \types \Delta$. \\
            By $\rlnm{SubL}$, we have $\Gamma,\, M\ N : A \types \Delta$ as required.
            }
        \item[$\rlnm{CnsDL21}$] This case is trivial by the use of $\rlnm{CnsDL}$ where the type $A$ is an arrow.
        \item[$\rlnm{CnsDL22}$] This case is trivial by the use of $\rlnm{CnsDL}$ where the type $A$ is an arrow.
        \item[$\rlnm{CnsDL23}$] This case is trivial by the use of $\rlnm{CnsDL}$ where the type $A$ is a sum of constructor types that does not include $c$.
        \item[$\rlnm{FixDL2}$] This case is trivial by the use of $\rlnm{FixDL}$ where the type $A$ is a sum of constructor types.
    \end{description}
\end{proof}

\begin{theorem}[Completeness of Algorithmic Type Assignment]
    For all $\Gamma, \Gamma', \Delta, \Delta'$, if the following hold:
    \begin{enumerate}
        \item $\Gamma \types \Delta$
        \item $\Delta \subtype \Delta'$
        \item $\Gamma' \subtype \Gamma$
    \end{enumerate}
    Then $\typings(\Gamma') \mid \constraints(\Gamma') \types_2 \Delta'$.
\end{theorem}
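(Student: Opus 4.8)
The plan is to proceed by induction on the derivation of the declarative judgement $\Gamma \types \Delta$. The algorithmic system $\types_2$ is obtained from the declarative system of Figure~\ref{fig:constrained-ta-terms} by deleting the explicit subsumption rules \rlnm{SubL} and \rlnm{SubR} and distributing subtyping into the remaining rules as side conditions (e.g. \rlnm{AbsR} becomes \rlnm{AbsR2} with the condition $C \types B_1 \to B_2 \subtype A$). Accordingly, the two hypotheses $\Delta \subtype \Delta'$ and $\Gamma' \subtype \Gamma$ record exactly the accumulated subtyping slack that the algorithmic system must re-absorb: $\Gamma'$ carries tighter (sub-)types on the left and $\Delta'$ looser (super-)types on the right. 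The argument then splits into two families of cases, the subsumption rules and the syntax-directed rules.

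First I would treat \rlnm{SubL} and \rlnm{SubR}, which have no algorithmic counterpart and so must be eliminated into the $\subtype$ relations. If the last rule is \rlnm{SubR}, deriving $\Gamma \types M:A$ from $\Gamma \types M:B$ with side condition $\Gamma \types B \subtype A$, then $\Delta = \{M:A\}$ and, by hypothesis, $\Delta'$ contains some $M:A'$ with $A \subtype A'$; transitivity \rlnm{TrS} gives $B \subtype A'$, hence $\{M:B\} \subtype \Delta'$, and the induction hypothesis applied to the premise (with $\Gamma'$ and $\Delta'$ unchanged) yields the goal directly. The case of \rlnm{SubL} is symmetric: from $\Gamma,\,M:B \types \Delta$ with $\Gamma \types A \subtype B$ we must verify $\Gamma' \subtype (\Gamma \cup \{M:B\})$, which follows by composing the match $C \subtype A$ supplied by $\Gamma' \subtype (\Gamma \cup \{M:A\})$ with $A \subtype B$ via \rlnm{TrS}. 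This is precisely the step that makes the syntax-directed presentation complete; the only delicacy is bookkeeping, namely keeping $\constraints(\Gamma')$ strong enough to justify each composed subtyping judgement, which we ensure through the accompanying entailment between the constraint parts of $\Gamma'$ and $\Gamma$.

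For every syntax-directed rule I would apply its algorithmic analogue and discharge the new side condition by a uniform recipe: take the types appearing in the declarative instance as the tight witnesses $B_1, B_2, \dots$ demanded by the algorithmic rule, apply the induction hypothesis to each premise (installing reflexive $\subtype$ relations on the unchanged components and pushing the freshly bound $x:B$-style typings onto both sides of the context relation), and then verify the subtyping side condition by composing the declarative type identities with the slack from $\Gamma' \subtype \Gamma$ and $\Delta \subtype \Delta'$ using \rlnm{TrS} together with the congruence rules \rlnm{ToS}, \rlnm{FrS} and \rlnm{SmS}. For the evaluation and disjointness rules \rlnm{CnsK}, \rlnm{FunK} and their kin, the side conditions are chains through $\okty$ (e.g. $C \types \okty \from A \subtype B$ for \rlnm{FunK2}) assembled the same way. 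I expect the main obstacle to be the pattern-matching rules \rlnm{MchL}/\rlnm{MchR}: the algorithmic \rlnm{MchL2} carries four simultaneous families of subtyping constraints and encodes the per-branch disjunction through the pair typing $(M,P_i):A_i'$, so relating the chosen branch witnesses $B_x$ of the declarative derivation to the constrained witnesses of \rlnm{MchL2} --- while threading the context relation through the distinct pattern-bound variables of every branch --- requires the most careful instantiation of the induction hypothesis. Once that is set up per branch, however, the side conditions again reduce to transitivity and the sum-subtyping rule \rlnm{SmS}.
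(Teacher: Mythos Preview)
Your proposal is correct and follows essentially the same route as the paper: induction on the declarative derivation, with \rlnm{SubL}/\rlnm{SubR} absorbed into the $\Gamma' \subtype \Gamma$ and $\Delta \subtype \Delta'$ hypotheses via \rlnm{TrS}, and each syntax-directed rule replayed as its algorithmic counterpart with the side condition discharged from the accumulated slack. Your identification of \rlnm{MchL} as the case requiring the most careful instantiation also matches the paper's treatment, where the per-branch witnesses $B_x$, $A_i$, $B_i$, $A_i'$ are all chosen from the declarative instance and the four families of side conditions are then checked componentwise.
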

\begin{proof}
    We proceed by induction on the derivation of $\types$.\\
    \begin{description}
        \item[$\rlnm{GVar}$] This case is trivial by the use of $\rlnm{Inst2}$.
        \item[$\rlnm{Id}$] {
            Let $\Gamma,\, x : A \types x : A$,\enspace $\Gamma',\, x:C \subtype \Gamma,\, x:A$, and $A \subtype B$.\\
            Then by $\rlnm{Var2}$, $\typings(\Gamma'),\, x : C \mid \constraints(\Gamma),\, C \subtype B \types_2 x : B$.
            }
        \item[$\rlnm{SubL}$] {
            Let $\Gamma,\, M : A \types \Delta$,\enspace $\Gamma,\, M : B \types \Delta$, and $A \subtype B$. \\
            By the inductive hypothesis, for all $\Gamma', \Delta'$ such that $\Gamma' \subtype \Gamma,\, M : A$ and $\Delta \subtype \Delta'$ we have $\typings(\Gamma') \mid \constraints(\Gamma') \types_2 \Delta'$. \\
            As $A \subtype B$, $M : A$ is an included entry in an instance of $\Gamma'$. \\
            Thus, for all $\Gamma'' \subtype \Gamma$, $A' \subtype A$, and $\Delta \subtype \Delta''$, we have $\typings(\Gamma''),\, M : A' \mid \constraints(\Gamma'') \types_2 \Delta''$.
            }
        \item[$\rlnm{SubR}$] {
            Let $\Gamma \types M : A$,\enspace $\Gamma \types M : B$, and $B \subtype A$. \\
            By the inductive hypothesis, for all $\Gamma', B'$ such that $\Gamma' \subtype \Gamma$ and $B \subtype B'$ we have $\typings(\Gamma') \mid \constraints(\Gamma') \types_2 M : B'$. \\
            Thus, as $B \subtype A$ we have for all $A'$ such that $A \subtype A'$, $B \subtype A'$ and so $\typings(\Gamma') \mid \constraints(\Gamma') \types_2 M : A'$ as required.
            }
        \item[$\rlnm{FixsR}$] \ch{
            Let $\Gamma \types (\fixtm{f(x)}{M}) : B \to A$ and $\Gamma,\, f : B \to A,\, x : B \types M : A$. \\
            By the inductive hypothesis, for all $\Gamma', B', A'$ such that$\Gamma' \subtype \Gamma$, $B' \subtype B$, and $A \subtype A'$, we have $\typings(\Gamma'),\, f : B' \to A',\, x : B' \mid \constraints(\Gamma') \types_2 M : A'$. \\
            Let $\Gamma'' \subtype \Gamma$, $B' \subtype B$, and $A \subtype A'$. Then $B \to A \subtype B' \to A'$ and $\typings(\Gamma''),\, f : B' \to A',\, x : B' \mid \constraints(\Gamma'') \types_2 M : A'$. \\
            Then, by $\rlnm{FixsR2}$, we have $\typings(\Gamma'') \mid \constraints(\Gamma'') \types_2 \fixtm{f(x)}{M} : B' \to A'$ as required.
            }
        \item[$\rlnm{FixnR}$] \ch{
            Let $\Gamma \types (\fixtm{f(x)}{M}) : B \from A$ and $\Gamma,\, f : B \from A,\, M : A \types x : B$. \\
            By the inductive hypothesis, for all $\Gamma', A', B'$ such that $\Gamma' \subtype \Gamma$, $B \subtype B'$ and $A' \subtype A$, we have $\typings(\Gamma'),\, f : B' \from A',\, M : A' \mid \constraints(\Gamma') \types_2 x : B'$. \\
            Let $\Gamma'' \subtype \Gamma$, $A' \subtype A$, $B \subtype B'$. Then $B \from A \subtype B' \from A'$ and $\typings(\Gamma''),\, f : B' \from A',\, M : A' \mid \constraints(\Gamma'') \types_2 x : B'$.\\
            Then, by $\rlnm{FixnR2}$, we have $\typings(\Gamma'') \mid \constraints(\Gamma'') \types_2 (\fixtm{f(x)}{M}) : B' \from A'$ as required.
            }
        \item[$\rlnm{AppL}$] {
            Let $\Gamma,\, (M\ N) : A \types \Delta$,\enspace $\Gamma \types M : B \from A$, and $\Gamma,\, N : B \types \Delta$. \\
            By the inductive hypothesis, for all $\Gamma', \Delta', A', B'$ such that $\Gamma' \subtype \Gamma$, $\Delta \subtype \Delta'$, $B \from A \subtype B' \from A'$, and $B \subtype B'$ we have $\typings(\Gamma') \mid \constraints(\Gamma') \types_2 M : B' \from A'$ and $\typings(\Gamma'),\, N : B' \mid \constraints(\Gamma') \types_2 \Delta'$. \\
            Let $\Gamma'' \subtype \Gamma$, $A' \subtype A$, and $\Delta \subtype \Delta'$. Then, for all $B'$ such that $B \from A \subtype B' \from A'$ and $B \subtype B'$, we have $\typings(\Gamma'') \mid \constraints(\Gamma'') \types_2 M : B' \from A'$ and $\typings(\Gamma''),\, N : B' \mid \constraints(\Gamma'') \types_2 \Delta''$.\\
            Then, by $\rlnm{AppL2}$, as the side condition is trivially satisfied, we have $\typings(\Gamma''),\, (M\ N) : A' \mid \constraints(\Gamma'') \types_2 \Delta''$.
            }
        \item[$\rlnm{AppR}$] {
            Let $\Gamma \types (M\ N) : A$,\enspace $\Gamma \types M : B \to A$, and $\Gamma \types N : B$. \\
            By the inductive hypothesis, for all $\Gamma', A', B'$ we have $\Gamma' \subtype \Gamma$, $B \subtype B'$, $B \to A \subtype B' \to A'$, $\typings(\Gamma') \mid \constraints(\Gamma') \types_2 M : B' \to A'$, and $\typings(\Gamma') \mid \constraints(\Gamma') \types_2 N : B'$. \\
            Let $\Gamma'' \subtype \Gamma$ and $A \subtype A'$. Then, for all $B'$ such that $B \subtype B'$ and $B \to A \subtype B' \to A'$, we have $\typings(\Gamma'') \mid \constraints(\Gamma'') \types_2 M : B' \to A'$, and $\typings(\Gamma'') \mid \constraints(\Gamma'') \types_2 N : B'$. \\
            Then, by $\rlnm{AppR2}$, as the side condition is trivially satisfied, we have $\typings(\Gamma'') \mid \constraints(\Gamma'') \types_2 (M\ N) : A'$ as required.
            }
        \item[$\rlnm{CnsL}$] {
            Let $\Gamma,\, c(M_1,\ldots,M_n) : c(A_1,\ldots,A_n) + \Sigma_{d\in\mathcal{C}\backslash\{c\}}(d(\vec{A_d})) \types \Delta$ and $\Gamma,\, M_i : A_i \types \Delta$ for some $i$.\\
            By the inductive hypothesis, for all $\Gamma', A'_i, \Delta'$ we have $\Gamma' \subtype \Gamma$, $\Delta \subtype \Delta'$, $A'_i \subtype A_i$, and $\typings(\Gamma'),\, M_i : A'_i \mid \constraints(\Gamma') \types_2 \Delta'$. \\
            Let $\Gamma'' \subtype \Gamma$, $\Delta \subtype \Delta''$ and $A'_j \subtype A_j (\forall j)$. Then $c(A'_1,\ldots,A'_n) + \Sigma_{d\in\mathcal{C}\backslash\{c\}}(d(\vec{A'_d})) \subtype c(A_1,\ldots,A_n) + \Sigma_{d\in\mathcal{C}\backslash\{c\}}(d(\vec{A_d}))$ and $\typings(\Gamma''),\, M_i : A'_i \mid \constraints(\Gamma'') \types_2 \Delta''$.\\
            Then, by $\rlnm{ConsL2}$, we have $\typings(\Gamma''),\, c(M_1,\ldots,M_n) : c(A'_i,\ldots,A'_n) + \Sigma_{d\in\mathcal{C}\backslash\{c\}}(d(\vec{A'_d})) \mid \constraints(\Gamma'') \types_2 \Delta''$.
            }
        \item[$\rlnm{CnsR}$] {
            Let $\Gamma \types c(M_1,\ldots,M_n) : c(A_1,\ldots,A_n)$ and $\Gamma \types M_i : A_i$ for all $i$. \\
            By the inductive hypothesis, for all $\Gamma', A'_i$ we have $\Gamma' \subtype \Gamma$, $A_i \subtype A'_i$, and $\typings(\Gamma') \mid \constraints(\Gamma') \types_2 M_i : A'_i$ for all $i$. \\
            Let $\Gamma'' \subtype \Gamma$ and $A_i \subtype A'_i \; (\forall i)$. Then $c(A_i,\ldots,A_n) \subtype c(A'_i,\ldots,A'_n)$ and $\typings(\Gamma'') \mid \constraints(\Gamma'') \types_2 M_i : A'_i \; (\forall i)$.\\
            Then, by $\rlnm{ConsR2}$, we have $\typings(\Gamma'') \mid \constraints(\Gamma'') \types_2 c(M_i,\ldots,M_n) : c(A'_i,\ldots,A'_n)$ as required.
            }
        \item[$\rlnm{MchL}$] {
            Let $\Gamma,\,\matchtm{M}{|_{i=1}^k (p_i \mapsto P_i)} : A \types \Delta$,\enspace $\Gamma,\,P_i : A \types x : B_x \; (\forall i.\forall x\in\fv{(p_i)})$, and $\Gamma,\,(M, P_i):(p_i[B_x/x \mid x\in\fv{(p_i)}], A) \types \Delta \; (\forall i)$. \\
            By the inductive hypothesis we have:
            \begin{enumerate}
                \item $\forall \Gamma', A_x, B'_x. \Gamma'\subtype\Gamma \wedge A_x\subtype A \wedge B_x \subtype B'_x \Rightarrow \typings(\Gamma'),\, P_i : A_x \mid \constraints(\Gamma') \types_2 x : B'_x$
                \item $\forall \Gamma', \tau, \Delta'. \Gamma'\subtype\Gamma \wedge \tau \subtype (p_i[B_x/x \mid x\in\fv{(p_i)}], A) \wedge \Delta\subtype\Delta' \Rightarrow \typings(\Gamma'),\, (M, P_i) : (p_i[B'_x/x \mid x\in\fv{(p_i)}], A') \mid \constraints(\Gamma') \types_2 \Delta'$
            \end{enumerate}
            Let $\Gamma'' \subtype \Gamma$,\enspace $A'' \subtype A$, and $\Delta \subtype \Delta''$.\\
            By instantiating (1): $\Gamma'$ with $\Gamma''$, $A_x$ with $A''$, and $B'_x$ with $B_x$ gives $\typings(\Gamma''),\, P_i : A'' \mid \constraints(\Gamma'') \types_2 x : B_x$.\\
            By instantiating (2): $\Gamma'$ with $\Gamma''$, $\tau$ with $(B_x, A'')$, and $\Delta'$ with $\Delta''$ gives $\typings(\Gamma''),\, (M, P_i) : (p_i[B_x/x \mid x\in\fv{(p_i)}], A'') \mid \constraints(\Gamma'') \types_2 \Delta''$.\\
            Finally, by $\rlnm{MchL2}$, as the side conditions are trivially satisfied, we have $\typings(\Gamma''),\, \matchtm{M}{|_{i=1}^k (p_i \mapsto P_i)} : A'' \mid \constraints(\Gamma'') \types_2 \Delta''$ as required.
            }
        \item[$\rlnm{MchR}$] {
            Let $\Gamma \mid C \types \matchtm{M}{|_{i=1}^k (p_i \mapsto P_i)} : A$,\enspace $\Gamma \types M : \Sigma_{i=1}^k p_i[B_x/x \mid x\in\fv(p_i)]$, and $\Gamma \cup \{x:B_x \mid x \in \fv(p_i)\} \types P_i : A (\forall i)$. \\
            By the inductive hypothesis, for all $\Gamma', A', B, B'_x$ such that $\Gamma' \subtype \Gamma$, $\Sigma_{i=1}^k p_i[B_x/x \mid x\in\fv(p_i)] \subtype B$, $(\forall i) B'_x \subtype B_x$, and $A \subtype A'$, we have
            \begin{enumerate}
                \item $\typings(\Gamma') \mid \constraints(\Gamma') \types_2 M : B$
                \item $\typings(\Gamma') \cup \{x:B'_x \mid x \in \fv(p_i)\} \mid \constraints(\Gamma') \types_2 P_i : A' (\forall i)$
            \end{enumerate}
            Let $\Gamma'' \subtype \Gamma$ and $A \subtype A''$. We want to show $\typings(\Gamma'') \mid \constraints(\Gamma'') \types_2 \matchtm{M}{|_{i=1}^k (p_i \mapsto P_i)} : A''$. \\
            Instantiating (1): $\Gamma'$ with $\Gamma''$ and $B$ with $\Sigma_{i=1}^k p_i[B_x/x \mid x\in\fv(p_i)]$ gives $\typings(\Gamma'') \mid \constraints(\Gamma'') \types_2 M : \Sigma_{i=1}^k p_i[B_x/x \mid x\in\fv(p_i)]$. \\
            Instantiating (2): $\Gamma'$ with $\Gamma''$, $B'_x$ with $B_x$, and $A'$ with $A''$, gives $\typings(\Gamma'') \cup \{x:B_x \mid x \in \fv(p_i)\} \mid \constraints(\Gamma'') \types_2 P_i : A'' (\forall i)$.\\
            Finally, by $\rlnm{MchR2}$, we have $\typings(\Gamma'') \mid \constraints(\Gamma'') \types_2 \matchtm{M}{|_{i=1}^k (p_i \mapsto P_i)} : A''$ as required.
            }
        \item[$\rlnm{VarK}$] This case is trivial by the use of $\rlnm{VarK2}$.
        \item[$\rlnm{CnsK}$] This case is trivial by the use of $\rlnm{CnsK2}$ with $A$ as $\okty$.
        \item[$\rlnm{FunK}$] {
            Let $\Gamma,\, M\ N : \okty \types \Delta$ and $\Gamma,\, M : \okty \from A \types \Delta$.\\
            By the inductive hypothesis, for all $\Gamma', B, \Delta'$ such that $\Gamma' \subtype \Gamma'$, $B \subtype \okty \from A$, and $\Delta \subtype \Delta'$, we have $\typings(\Gamma'),\, M : B \mid \constraints(\Gamma') \types_2 \Delta'$.\\
            Let $\Gamma'' \subtype \Gamma$ and $\Delta' \subtype \Delta''$. \\
            By instantiating $\Gamma'$ with $\Gamma''$, $\Delta'$ with $\Delta''$, and $B$ with $\okty \from \okty$, we have $\typings(\Gamma''),\, M : \okty \from \okty \mid \constraints(\Gamma'') \types_2 \Delta''$.\\
            By $\rlnm{FunK2}$, we have $\typings(\Gamma''),\, M\ N : \okty \mid \constraints(\Gamma'') \types_2 \Delta''$ as required.
            }
        \item[$\rlnm{CnsDL}$] {
            Let $\Gamma,\, c(M_1,\ldots,M_n) : A \types \Delta$ where $A$ is and arrow type or of the shape $\Sigma_{d\in I}$ with $c\notin I$.\\
            Let $\Gamma' \subtype \Gamma$, $A' \subtype A$, and $\Delta \subtype \Delta'$.\\
            If $A$ is a sum of constructor types that does not include $c$, then so is $A'$. Thus, by $\rlnm{CnsDL23}$, we have $\Gamma',\, c(M_1,\ldots,M_n) : A' \types \Delta'$.\\
            If $A$ is a sufficiency arrow ($\tau_1 \to \tau_2$), then so is $A'$. Thus, by $\rlnm{CnsDL21}$, we have $\Gamma',\, c(M_1,\ldots,M_n) : A' \types \Delta'$.\\
            If $A$ is a necessity arrow ($\tau_1 \from \tau_2$), then so is $A'$. Thus, by $\rlnm{CnsDL22}$, we have $\Gamma',\, c(M_1,\ldots,M_n) : A' \types \Delta'$.
            }
    \end{description}
\end{proof}

\subsection{Inference Algorithm}

This (sub-)section contains a definition for the constrained type system inference algorithm in Figures~\ref{fig:tyInferR} and \ref{fig:inferL}, and a proof of its correctness in the form of a soundness and correctness proof (relative to the algorithmic typing rules).


\begin{figure}
\begin{lstlisting}[language=Haskell]
    InferR($\Gamma$, M) = case M of
    Var x -> { ($\Gamma$ | {a $\subtype$ b} $\types$ M : b)  |  b = freshVar, 
            a = $\begin{cases}t & (x : t)\in \Gamma\\ \okty & \text{otherwise}\end{cases}$ 
        } $\cup$ { ($\Gamma$ | C[$\overline{b}/\overline{a}$] $\cup$ {a[$\overline{b}/\overline{a}$] $\subtype$ a'} $\types$ M : a')  | $\overline{b}$ = $\overline{\text{freshVar}}$, a' = freshVar,
            (x : $\forall$ $\overline{a}$. C $\Rightarrow$ a) $\in$ $\Gamma$
        }
    App P Q -> { ($\Gamma$ | c1 $\cup$ c2 $\cup$ {b $\subtype$ c -> a} $\types$ M : a)  |  a = freshVar,
            ($\Gamma$ | c1 $\types$ P : b) $\in$ inferR($\Gamma$, P),
            ($\Gamma$ | c2 $\types$ Q : c) $\in$ inferR($\Gamma$, Q)
        }
    Fix f x N -> { ($\Gamma$ | c $\cup$ {t -> b $\subtype$ a} $\types$ M : a)  |  a = freshVar, t = freshVar,
            ($\Gamma$ $\cup$ {x : t, f : a} | c $\types$ N : b) $\in$ inferR($\Gamma$ $\cup$ {x : t, f : a}, N)
        } $\cup$ { ($\Gamma$ | c $\cup$ {t $\from$ b $\subtype$ a} $\types$ M : a)  |  a = freshVar, t = freshVar,
            ($\Gamma$ $\cup$ {N : b, f : a} | c $\types$ x : t) $\in$ inferL($\Gamma$ $\cup$ {f : a}, N, {x : t})
        }
    Cons c ms -> { ($\Gamma$ | $\bigcup_{i=1}^n(C_i)$ $\cup$ {c($a_1,\ldots,a_n$) $\subtype$ a} $\types$ M : a)  |  a = freshVar,
            ($\Gamma$ | $C_i \types m_i : a_i$)$_{i=1}^n$ $\in \Pi_{m\in \text{ms}}$(inferR($\Gamma$, m))
        } $\cup$ { $\Gamma$ | {c $\subtype$ a} $\types$ M : a  | a = freshVar, ms = { } }
    Match Q {|$_{i=1}^k$ p$_i$ -> P$_i$} -> { ($\Gamma$ | $\bigcup_{i=1}^k(C_i \cup \{b \subtype p_i[b_x/x \mid x\in\fv(p_i)], a_i \subtype a\})$ $\cup$ C $\types$ M : a)  |
            a = freshVar, $\overline{b_x}$ = $\overline{\text{freshVar}}$,
            ($\Gamma$ | C $\types$ Q : b) $\in$ inferR($\Gamma$, Q),
            ($\Gamma$ $\cup$ {$\overline{(x : b_x)}$} | $C_i \types P_i : a_i$)$_{i=1}^k$ $\in \Pi_{i=1}^k$(inferR($\Gamma$ $\cup$ {$\overline{(x : b_x)}$}, $P_i$))
        }
\end{lstlisting}
   \caption{Type Inference Algorithm - InferR}
   \label{fig:tyInferR}
\end{figure}

\begin{figure}
\clearpage    
\begin{lstlisting}[language=Haskell]
    InferL($\Gamma$, M, $\Delta$) = { $\Gamma$ $\cup$ {M : a} | {Ok $\subtype$ b} $\types$ $\Delta$  | a = freshVar, (x : b) $\in$ d } $\cup$ case M of
            Var x -> { $\Gamma$ $\cup$ {M : a} | {a $\subtype$ b} $\types$ $\Delta$  | a = freshVar, (x : b) $\in$ $\Delta$ }
            App P Q -> { ($\Gamma$ $\cup$ {M : a} | c1 $\cup$ c2 $\cup$ {b $\subtype$ c $\from$ a} $\types$ d)  |  a = freshVar,
                    ($\Gamma$ | c1 $\types$ P : b) $\in$ inferR($\Gamma$, P),
                    ($\Gamma$ $\cup$ {Q : c} | c2 $\types$ $\Delta$) $\in$ inferL($\Gamma$, Q, $\Delta$)
                } $\cup$ { ($\Gamma$ $\cup$ {M : a} | c $\cup$ {Ok $\from$ a $\subtype$ b} $\types$ $\Delta$)  |  a = freshVar,
                    ($\Gamma$ $\cup$ {P : b} | c $\types$ $\Delta$) $\in$ inferL($\Gamma$, P, $\Delta$)
                }
            Abs x N -> { ($\Gamma$ $\cup$ {M : a} | {a $\subtype$ $\Sigma_{c\in\mathcal{C}}$(c($\overline{a}$))} $\types$ $\Delta$)  | a = freshVar, $\overline{a}$ = $\overline{\text{freshVar}}$ }
            Cons $\kappa$ ms -> { ($\Gamma$ $\cup$ {M : a} | c $\cup$ {a $\subtype$ $\Sigma_{\kappa'\in\mathcal{C}\backslash\{\kappa\}}(\kappa'(\overline{a_{\kappa'}})) + \kappa(a_1,\ldots,b,\ldots,a_n)$} $\types$ $\Delta$)  |
                    a = freshVar, $\overline{a_{\kappa'}}$ = $\overline{\text{freshVar}}$, $(a)_{i=1}^n$ = $\overline{\text{freshVar}}$,
                    $m_i \in$ ms,
                    ($\Gamma$ $\cup$ {$m_i$ : b} | c $\types$ $\Delta$) $\in$ inferL($\Gamma$, m, $\Delta$)
                } $\cup$ { ($\Gamma$ $\cup$ {M : a} | c $\cup$ {Ok $\subtype$ b} $\types$ $\Delta$)  |  a = freshVar,
                    $m_i \in$ ms,
                    ($\Gamma$ $\cup$ {$m_i$ : b} | c $\types$ d) $\in$ inferL($\Gamma$, m, $\Delta$)
                } $\cup$ { ($\Gamma$ $\cup$ {M : a} | {$a \subtype b_1 \to b_2$} $\types$ $\Delta$)  | a = freshVar, $b_1$ = freshVar, $b_2$ = freshVar } 
                $\cup$ { ($\Gamma$ $\cup$ {M : a} | {$a \subtype b_1 \from b_2$} $\types$ $\Delta$)  | a = freshVar, $b_1$ = freshVar, $b_2$ = freshVar } 
                $\cup$ { ($\Gamma$ $\cup$ {M : a} | {a $\subtype$ $\Sigma_{c\in\mathcal{C}\backslash \kappa}(c(\overline{a}))$} $\types$ $\Delta$)  | a = freshVar, $\overline{a}$ = $\overline{\text{freshVar}}$ }
            Match q {|$_{i=1}^k$ p$_i$ -> P$_i$} -> { ($\Gamma$ $\cup$ {M : a} | c $\types$ $\Delta$)  |
                    a = freshVar,  $(a_i, b_i)_{i=1}^k$ = $\overline{\text{freshVar}}$,  $(b_{(i,x)})_{(i, x)\in [1..k]\times\fv{(p_i)}}$ = $\overline{\text{freshVar}}$,
                    ($\Gamma$ $\cup$ {(q, $P_i$) : $a'_i$} | $c_i$ $\types$ $\Delta$)$_{i=1}^k$ $\in \Pi_{i=1}^k$(inferL($\Gamma$, (q, $P_i$), $\Delta$)),
                    ($\Gamma$ $\cup$ {$P_i$ : $a_{(i, x)}$} | $c'_{(i, x)}$ $\types$ $\Delta$)$_{(i, x)}$ $\in \Pi_{(i, x) \in [1..k]\times\fv{(p_i)}}$(inferL($\Gamma$, $P_i$, {x : $b_{(i,x)}$})),
                    c = $\bigcup_{i=1}^k(c_i \cup \{a \subtype a_i, (a_i, b_i) \subtype a'_i, p_i[b_{(i,x)}/x \mid x\in\fv{(p_i)}] \subtype b_i\})$ $\cup$ 
                        $\bigcup_{(i, x) \in [1..k]\times\fv{(p_i)}}(c'_{(i, x)} \cup \{a \subtype a_{(i,x)}\})$
                }
\end{lstlisting}
   \caption{Type Inference Algorithm - InferL}
   \label{fig:inferL}
\end{figure}

\begin{theorem}[Soundness of the Inference Algorithm]
    Let $\Gamma$ and $\Delta$ be strongly consistent variable environments, $M$ be a term, $A$ be a type, $\sigma$ be a substitution from type variables to types, and $C, C'$ be sets of constraints. Then:
    \begin{enumerate}
        \item $(\Gamma,\, M : A \mid C \types \Delta) \in$ InferL($\Gamma$, $M$, $\Delta$) $\Rightarrow$ $\forall \sigma. (\Gamma\sigma,\, M : A\sigma \mid C\sigma \types_2 \Delta\sigma)$
        \item $(\Gamma \mid C \types M : A) \in$ InferR($\Gamma$, $M$) $\Rightarrow$ $\forall \sigma. (\Gamma\sigma \mid C\sigma \types_2 M : A\sigma)$
    \end{enumerate}
\end{theorem}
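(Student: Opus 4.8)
The plan is to prove both clauses simultaneously by a single induction on the structure of the subject term $M$, mirroring the mutual recursion between $\mathsf{InferL}$ and $\mathsf{InferR}$. Every recursive call in the two algorithms is made on a term of strictly smaller size — including the call $\mathsf{InferL}(\Gamma,(q,P_i),\Delta)$ in the \rlnm{Match} clause, where the artificially paired subject $(q,P_i)$ is nonetheless smaller than $\matchtm{M}{\ldots}$ — so it suffices to take the induction measure to be the size of the term argument, allowing the two statements to appeal freely to each other's inductive hypotheses. I would keep the universal quantifier over $\sigma$ inside the induction: the inductive hypothesis then reads ``every judgement returned by the relevant call is $\types_2$-derivable after applying any substitution'', which is exactly the strengthening needed to instantiate coherently the fresh variables introduced at each level.

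For each syntactic case I would unfold the corresponding clause of $\mathsf{InferL}$ or $\mathsf{InferR}$, read off the recursive calls together with the constraints the algorithm freshly inserts into $C$, and then build the $\types_2$ derivation of the $\sigma$-substituted judgement by applying the matching algorithmic rule (\rlnm{AppR2} for $\mathsf{InferR}$ on an application, \rlnm{AppL2}/\rlnm{FunK2} for $\mathsf{InferL}$, \rlnm{AbsR2}/\rlnm{AbnR2} for abstractions, \rlnm{Var2}/\rlnm{VarK2}/\rlnm{Inst2} for variables, and so on). The premises are supplied directly by the inductive hypotheses instantiated at $\sigma$. The only genuine work is discharging each rule's side condition: because the algorithm is engineered so that each side condition $C \types X \subtype Y$ corresponds to a constraint $X \subtype Y$ that it has placed into $C$ (or, for the sum-shaped conditions in the pattern-matching rules, a family of such constraints), the side condition follows from \rlnm{IdS}, and since $X\sigma \subtype Y\sigma \in C\sigma$ it survives substitution. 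Two routine auxiliary lemmas support this: monotonicity of $\types_2$ in its constraint context (needed because each conclusion is threaded with a union $C_1 \cup C_2 \cup \cdots$ of sub-constraint-sets while each premise was derived under only its own component), and closure of subtyping under type substitution, $C \types A \subtype B \Rightarrow C\sigma \types A\sigma \subtype B\sigma$, for the compound conditions that require \rlnm{SmS} or \rlnm{TrS} rather than a bare hypothesis.

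I expect the pattern-matching cases to be the main obstacle. In $\mathsf{InferL}$ on a $\mathsf{match}$, the encoding of the left-hand disjunction via the pairs $(q,P_i)$ forces one both to justify the non-structural recursion (handled by the size measure above) and to reconcile the several families of side conditions of \rlnm{MchL2} — relating the scrutinee type, the induced pattern types $p_i[\overline{B_x/x}]$, and the per-branch types — against exactly the constraints the algorithm accumulates across its two product-indexed families of recursive calls. Assembling these, checking that the freshness discipline keeps the fresh variables of distinct branches disjoint so that $\sigma$ acts coherently, and verifying the analogous bookkeeping for \rlnm{MchR2}, is where the proof is most delicate; the remaining cases (constructors, fixpoints, and the disjointness clauses) are immediate once the rule-to-clause correspondence is set up.
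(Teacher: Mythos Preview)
Your proposal is correct and follows essentially the same approach as the paper: a mutual induction on the structure of $M$, unfolding each clause of $\mathsf{InferL}/\mathsf{InferR}$ and discharging the algorithmic rule's side condition via the constraint the algorithm inserted. You are slightly more explicit than the paper in naming the monotonicity-in-constraints and substitution-closure lemmas, which the paper uses tacitly, but the overall argument is the same.
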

\begin{proof}
    We proceed by induction on the shape of the term $M$.\\
    \begin{description}
        \item[Specially treated rule] {
            In \lstinline|InferL| there is a specially generated set that does not fall inline nicely with the others. The section of interest is:
            \begin{lstlisting}
                { $\Gamma$ $\cup$ {M : a} | {Ok $\subtype$ b} $\types$ $\Delta$  |
                    a = freshVar,
                    (x : b) $\in$ $\Delta$
                }
            \end{lstlisting}
            This set is the judgment(s) provable by $\rlnm{VarK2}$. This case is proven trivially but it is important to note that this judgment is produced by all terms $M$ if the delta of the judgment is a variable. As this case is trivially sound, it will be ignored in future cases despite being generated in all of them.
            }
        \item[$\rlnm{Var}$] {
            First, \lstinline|InferR|. The section of interest is:
            \begin{lstlisting}[language=Haskell]
                { $\Gamma$ | {a $\subtype$ b} $\types$ M : b  |
                    b = freshVar,
                    a = $\begin{cases}t & (x : t)\in \Gamma\\ \okty & \text{otherwise}\end{cases}$
                } $\cup$ { $\Gamma$ | C[$\overline{b}/\overline{a}$] $\cup$ {a[$\overline{b}/\overline{a}$] $\subtype$ a'} $\types$ M : a'  |
                    (x : $\forall$ $\overline{a}$. C $\Rightarrow$ a) $\in$ $\Gamma$,
                    $\overline{b}$ = $\overline{\text{freshVar}}$,
                    a' = freshVar
                }
            \end{lstlisting}
            The second set creates a judgment for every typing of the variable at a (constrained) type scheme.\\
            Fix a generated judgment, $\Gamma \mid C[\overline{b}/\overline{a}] \cup {a[\overline{b}/\overline{a}] \subtype a'} \types M : a'$.\\
            Let $\sigma$ be a type variable substitution and $C'$ be the inferred constraint set, which is defined by $C[\overline{b}/\overline{a}] \cup {a[\overline{b}/\overline{a}] \subtype a'}$.\\
            Then, by $\rlnm{Inst2}$, we have $\Gamma\sigma \mid C'\sigma \types_2 x : a'\sigma$.\\[1.5mm]
            The first set generates judgments for the $\rlnm{Var2}$ and $\rlnm{VarK2}$ rules.\\
            Let $\sigma$ be a type variable substitution.\\
            By $\Gamma$ being a strongly consistent variable environment, we have either $(x : t) \in \Gamma$ or $(x : t) \notin \Gamma$.\\
            If $(x : t) \in \Gamma$ then, by $\rlnm{Var2}$, we have $\Gamma\sigma,\, x : t\sigma \mid \{t\sigma \subtype b\sigma\} \types_2 x : b\sigma$.\\[1.25mm]
            If $(x : t) \notin \Gamma$ then, by $\rlnm{VarK2}$, we have $\Gamma\sigma \mid \{\okty \subtype b\sigma\} \types_2 x : b$.\\[2mm]
            Now, \lstinline|InferL|. The section of interest is:
            \begin{lstlisting}[language=Haskell]
                { $\Gamma$ $\cup$ {M : a} | {a $\subtype$ b} $\types$ $\Delta$  |
                    a = freshVar,
                    (x : b) $\in$ $\Delta$
                }
            \end{lstlisting}
            Let $\sigma$ be a type variable substitution. Then, by $\rlnm{Var2}$, we have that $\Gamma\sigma,\, x : a\sigma \mid \{a\sigma \subtype b\sigma\} \types_2 x : b \sigma$.
            }
        \item[$\rlnm{App}$] {
            First, \lstinline|InferR|. The section of interest is:
            \begin{lstlisting}[language=Haskell]
                { $\Gamma$ | c1 $\cup$ c2 $\cup$ {b $\subtype$ c -> a} $\types$ M : a  |
                    ($\Gamma$ | c1 $\types$ P : b) $\in$ inferR($\Gamma$, P),
                    ($\Gamma$ | c2 $\types$ Q : c) $\in$ inferR($\Gamma$, Q),
                    a = freshVar
                }
            \end{lstlisting}
            By the inductive hypothesis (applied to lines 2 and 3), we have:
            \begin{enumerate}
                \item $\forall \sigma. \Gamma\sigma \mid C_1\sigma \types_2 P : b\sigma$
                \item $\forall \sigma. \Gamma\sigma \mid C_2\sigma \types_2 Q : c\sigma$ 
            \end{enumerate}
            Let $\sigma$ be a type variable substitution and the constraint set $C$ be defined by $C_1\sigma \cup C_2\sigma \cup \{b\sigma \subtype c\sigma \to a\sigma\}$.\\
            Then, by $\rlnm{AppR2}$, we have $\Gamma\sigma \mid C' \types_2 (P\ Q) : a\sigma$.\\[1.5mm]
    
            Now, \lstinline|InferL|. The section of interest is:
            \begin{lstlisting}[language=Haskell]
                { $\Gamma$ $\cup$ {M : a} | c1 $\cup$ c2 $\cup$ {b $\subtype$ c $\from$ a} $\types$ $\Delta$  |
                    a = freshVar,
                    ($\Gamma$ | c1 $\types$ P : b) $\in$ inferR($\Gamma$, P),
                    ($\Gamma$ $\cup$ {Q : c} | c2 $\types$ d) $\in$ inferL($\Gamma$, Q, $\Delta$)
                } $\cup$ { $\Gamma$ $\cup$ {M : a} | c $\cup$ {Ok $\from$ a $\subtype$ b} $\types$ $\Delta$  |
                    a = freshVar,
                    ($\Gamma$ $\cup$ {P : b} | c $\types$ $\Delta$) $\in$ inferL($\Gamma$, P, $\Delta$)
                }
            \end{lstlisting}
            The first set is the judgments provable by the $\rlnm{AppL2}$ rule.\\
            By the inductive hypothesis (applied to lines 3 and 4), we have:
            \begin{enumerate}
                \item $\forall \sigma. \Gamma\sigma \mid C_1\sigma \types_2 P : b\sigma$
                \item $\forall \sigma. \Gamma\sigma,\, Q : c\sigma \mid C_2\sigma \types_2 \Delta\sigma$
            \end{enumerate}
            Let $\sigma$ be a type variable substitution and the constraint set $C$ be the inferred constraints defined by $C_1 \cup C_2 \cup \{b \subtype c \from a\}$.\\
            Then, by $\rlnm{AppL2}$, we have $\Gamma\sigma,\, (P\ Q) : a\sigma \mid C\sigma \types_2 \Delta\sigma$.\\[1.25mm]
            The second set is the judgments provable by the $\rlnm{FunK2}$ rule.\\
            By the inductive hypothesis (applied to line 7), we have: $\forall \sigma. \Gamma\sigma,\, P : b\sigma \mid C_1\sigma \types_2 \Delta\sigma$.\\
            Let $\sigma$ be a type variable substitution and constraint set $C$ be the inferred constraints defined by $C_1 \cup \{\okty \from a \subtype b\}$.\\
            Then, by $\rlnm{FunK2}$, we have $\Gamma\sigma,\, (P\ Q) : a\sigma \mid C\sigma \types_2 \Delta\sigma$.
            }
        \item[$\rlnm{Fix}$] {
            First, \lstinline|InferR|. The section of interest is:
            \begin{lstlisting}[language=Haskell]
                { ($\Gamma$ | c $\cup$ {t -> b $\subtype$ a} $\types$ M : a)  |  
                    a = freshVar, 
                    t = freshVar,
                    ($\Gamma$ $\cup$ {x : t, f : a} | c $\types$ N : b) $\in$ inferR($\Gamma$ $\cup$ {x : t, f : a}, N)
                } $\cup$ { ($\Gamma$ | c $\cup$ {t $\from$ b $\subtype$ a} $\types$ M : a)  |  
                    a = freshVar, 
                    t = freshVar,
                    ($\Gamma$ $\cup$ {N : b, f : a} | c $\types$ x : t) $\in$ inferL($\Gamma$ $\cup$ {f : a}, N, {x : t})
                }
            \end{lstlisting}
            \ch{
                The first set is the judgments provable by the $\rlnm{FixsR2}$ rule.\\
                By the inductive hypothesis (applied to line 4), we have $\forall \sigma. \Gamma\sigma,\, x : t\sigma,\, f : a\sigma \mid C_1\sigma \types_2 N : b\sigma$.\\
                Let $\sigma$ be a type variable substitution and constraint set $C$ be the inferred constraints defined by $C_1 \cup \{t \to b \subtype a\}$.\\
                Then, by $\rlnm{FixsR2}$, we have $\Gamma\sigma \mid C\sigma \types_2 \fixtm{f(x)}{n} : a\sigma$.\\[1.25mm]
                The second set is the judgments provable by the $\rlnm{FixnR2}$ rule.\\
                By the inductive hypothesis (applied to line 8), we have $\forall \sigma. \Gamma\sigma,\, N : b\sigma,\, f : a \mid C_1\sigma \types_2 x : t\sigma$.\\
                Let $\sigma$ be a type variable substitution and witness constraint set $C$ be the inferred constraints defined by $C_1 \cup \{t \from b \subtype a\}$.\\
                Then, by $\rlnm{FixnR2}$, we have $\Gamma\sigma \mid C\sigma \types_2 \fixtm{(x)}{N} : a\sigma$.\\[1.5mm]
            }
            Now, \lstinline|InferL|. The section of interest is:
            \begin{lstlisting}[language=Haskell]
                { $\Gamma$ $\cup$ {M : a} | {a $\subtype$ $\Sigma_{c\in\mathcal{C}}$(c($\overline{a}$))} $\types$ $\Delta$  |
                    a = freshVar,
                    $\overline{a}$ = $\overline{\text{freshVar}}$
                }
            \end{lstlisting}
            This case is trivially provable by $\rlnm{FixDL2}$.
            }
        \item[$\rlnm{Cons}$] {
            First, \lstinline|InferR|. The section of interest is:
            \begin{lstlisting}[language=Haskell]
                { $\Gamma$ | $\bigcup_{i=1}^n(c_i)$ $\cup$ {c($a_1,\ldots,a_n$) $\subtype$ a} $\types$ M : a  |
                    a = freshVar,
                    ($\Gamma$ | $c_i \types m_i : a_i$)$_{i=1}^n$ $\in \Pi_{m\in \text{ms}}$(inferR($\Gamma$, m))
                } \cup { $\Gamma$ | {c $\subtype$ a} $\types$ M : a  |
                    a = freshVar,
                    ms = { }
                }
            \end{lstlisting}
            This is the set of judgments provable with the $\rlnm{CnsR2}$ rule.\\
            The first set is for constructors with more than zero arguments (e.g. Cons and Succ), and the second set is for nullary constructors (e.g. Nil and Zero).\\
            By the inductive hypothesis (applied to line 3), we have 
            \[
                \forall 1 \leq i \leq n.\ \forall \sigma.\ \Gamma\sigma \mid C_i\sigma \types_2 m_i : a_i\sigma
            \]
            Let $\sigma$ be a type variable substitution and constraint set $C$ be the inferred constraints defined by $\bigcup_{i=1}^n(C_i) \cup \{c(a_1,\ldots,a_n) \subtype a\}$.\\
            Then, for every combination of typings for the constructor's arguments, by $\rlnm{CnsR2}$, we have $\Gamma\sigma \mid C\sigma \types_2 c(M_1,\ldots,M_n) : a\sigma$.\\[1.25mm]
            The second set is proven sound trivially by $\rlnm{CnsR2}$.\\[2mm]

            Now, \lstinline|InferL|. The section of interest is:
            \begin{lstlisting}[language=Haskell]
                { $\Gamma$ $\cup$ {M : a} | c $\cup$ {a $\subtype$ $\Sigma_{\kappa'\in\mathcal{C}\backslash\{\kappa\}}(\kappa'(\overline{a_{\kappa'}})) + \kappa(a_1,\ldots,b,\ldots,a_n)$} $\types$ $\Delta$  |
                    a = freshVar,  $\overline{a_{\kappa'}}$ = $\overline{\text{freshVar}}$,  $(a)_{i=1}^n$ = $\overline{\text{freshVar}}$,
                    $m_i \in$ ms,
                    ($\Gamma$ $\cup$ {$m_i$ : b} | c $\types$ $\Delta$) $\in$ inferL($\Gamma$, m, $\Delta$)
                } $\cup$ { $\Gamma$ $\cup$ {M : a} | c $\cup$ {Ok $\subtype$ b} $\types$ $\Delta$  |
                    a = freshVar,
                    $m_i \in$ ms,
                    ($\Gamma$ $\cup$ {$m_i$ : b} | c $\types$ $\Delta$) $\in$ inferL($\Gamma$, m, $\Delta$)
                } $\cup$ { ($\Gamma$ $\cup$ {M : a} | {$a \subtype b_1 \to b_2$} $\types$ $\Delta$)  | a = freshVar, $b_1$ = freshVar, $b_2$ = freshVar } 
                $\cup$ { ($\Gamma$ $\cup$ {M : a} | {$a \subtype b_1 \from b_2$} $\types$ $\Delta$)  | a = freshVar, $b_1$ = freshVar, $b_2$ = freshVar } 
                $\cup$ { ($\Gamma$ $\cup$ {M : a} | {a $\subtype$ $\Sigma_{c\in\mathcal{C}\backslash \kappa}(c(\overline{a}))$} $\types$ $\Delta$)  |  a = freshVar, $\overline{a}$ = $\overline{\text{freshVar}}$ }
            \end{lstlisting}
            These sets are the judgments provable by the: $\rlnm{CnsL2}$, $\rlnm{CnsK2}$, $\rlnm{CnsDL21}$, $\rlnm{CnsDL22}$, and $\rlnm{CnsDL23}$, respectively.\\
            The last three sets are proven to be sound trivially by their respective rules.\\[1.25mm]
            
            For the first set, fix an inferred judgment $\Gamma,\, M : a \mid C_i \cup \{a \subtype \Sigma_{\kappa'\in\mathcal{C}\backslash\{\kappa\}}(\kappa'(\overline{a_{\kappa'}})) + \kappa(a_1,\ldots,b,\ldots,a_n) \types \Delta\}$.\\
            Then there exists an argument to the constructor, $m_i$ such that $\Gamma,\, m_i : b \mid C_i \types \Delta$ is inferred by \lstinline|InferL|($\Gamma$, $m_i$, $\Delta$).\\
            By the inductive hypothesis, we have $\forall \sigma. \Gamma\sigma\,\, m_i : b\sigma \mid C_i\sigma \types_2 \Delta\sigma$.\\
            Let $\sigma$ be a type variable substitution and constraint set $C$ be the inferred constraints defined by $C_i \cup \{a \subtype \Sigma_{\kappa'\in\mathcal{C}\backslash\{\kappa\}}(\kappa'(\overline{a_{\kappa'}})) + \kappa(a_1,\ldots,b,\ldots,a_n)\}$.\\
            Then, by $\rlnm{CnsL2}$, we have that $\Gamma\sigma,\, M : a\sigma \mid C\sigma \types_2 \Delta\sigma$.\\[1.25mm]
            
            For the second set, fix an inferred judgment $\Gamma,\, M : a \mid C_i \cup \{\okty \subtype b\} \types \Delta$.\\
            Then there exists an argument to the constructor, $m_i$ such that $\Gamma,\, m_i : b \mid C_i \types \Delta$ is inferred by \lstinline|InferL|($\Gamma$, $m_i$, $\Delta$).\\
            By the inductive hypothesis, we have $\forall \sigma. \Gamma\sigma,\, m_i : b\sigma \mid C_i\sigma \types_2 \Delta\sigma$ and $C'_i \types C_i\sigma$.\\
            Let $\sigma$ be a type variable substitution and constraint set $C$ be the inferred constraints defined by $C_i \cup \{\okty \subtype b\}$.\\
            Then, by $\rlnm{CnsK2}$, we have $\Gamma\sigma,\, M : a\sigma \mid C\sigma \types_2 \Delta\sigma$.
            }
        \item[$\rlnm{Match}$] {
            First, \lstinline|InferR|. The section of interest is:
            \begin{lstlisting}[language=Haskell]
                { $\Gamma$ | $\bigcup_{i=1}^k(c_i \cup \{b \subtype p_i[b_x/x \mid x\in\fv(p_i)], a_i \subtype a\})$ $\cup$ c $\types$ M : a  |
                    a = freshVar,
                    $\overline{b_x}$ = $\overline{\text{freshVar}}$,
                    ($\Gamma$ | c $\types$ q : b) $\in$ inferR($\Gamma$, Q),
                    ($\Gamma$ $\cup$ {$\overline{(x : b_x)}$} | $c_i \types P_i : a_i$)$_{i=1}^k$ $\in \Pi_{i=1}^k$(inferR($\Gamma$ $\cup$ {$\overline{(x : b_x)}$}, $P_i$))
                }
            \end{lstlisting}
            By the inductive hypothesis (applied to lines 4 and 5), we have:
            \begin{enumerate}
                \item $\forall \sigma. \Gamma\sigma \mid C_0\sigma \types_2 q : b\sigma$
                \item $\forall 1 \leq i \leq n. \forall \sigma. \Gamma\sigma,\, \overline{(x : b')} \mid C_i\sigma \types_2 p_i : a_i\sigma$
            \end{enumerate}
            Let $\sigma$ be a type variable substitution and constraint set $C$ be the inferred constraints defined by $\bigcup_{i=1}^k(C_i \cup \{b \subtype p_i[b_x/x \mid x\in\fv(p_i)], a_i \subtype a\}) \cup C_0$.\\[0.5mm]
            Then, for all combinations of typings of case bodies, by $\rlnm{MchR2}$, we have $\Gamma\sigma \mid C\sigma \types_2 M : a\sigma$.\\[2mm]
            
            Now, \lstinline|InferL|. The section of interest is:
            \begin{lstlisting}[language=Haskell]
                { $\Gamma$ $\cup$ {M : a} | c $\types$ $\Delta$  |
                    a = freshVar,
                    $(a_i, b_i)_{i=1}^k$ = $\overline{\text{freshVar}}$,
                    $(b_{(i,x)})_{(i, x)\in [1..k]\times\fv{(p_i)}}$ = $\overline{\text{freshVar}}$,
                    ($\Gamma$ $\cup$ {(q, $P_i$) : $a'_i$} | $c_i$ $\types$ $\Delta$)$_{i=1}^k$ $\in \Pi_{i=1}^k$(inferL($\Gamma$, (Q, $P_i$), $\Delta$)),
                    ($\Gamma$ $\cup$ {$P_i$ : $a_{(i, x)}$} | $c'_{(i, x)}$ $\types$ $\Delta$)$_{(i, x)}$ $\in \Pi_{(i, x) \in [1..k]\times\fv{(p_i)}}$(inferL($\Gamma$, $P_i$, {x : $b_{(i,x)}$})),
                    c = $\bigcup_{i=1}^k(c_i \cup \{a \subtype a_i, (a_i, b_i) \subtype a'_i, p_i[b_{(i,x)}/x \mid x\in\fv{(p_i)}] \subtype b_i\})$ $\cup$ 
                        $\bigcup_{(i, x) \in [1..k]\times\fv{(p_i)}}(c'_{(i, x)} \cup \{a \subtype a_{(i,x)}\})$
                }
            \end{lstlisting}
            By the inductive hypothesis (applied to lines 5 and 6), we have:
            \begin{enumerate}
                \item $\forall 1 \leq i \leq k. \forall \sigma. \Gamma\sigma,\, (Q, P_i) : a'_i\sigma \mid C_i\sigma \types_2 \Delta\sigma$
                \item $\forall 1 \leq i \leq k. \forall x\in\fv(p_i). \forall \sigma. \Gamma\sigma,\, P_i : a_{(i, x)}\sigma \mid C'_{(i, x)}\sigma \types_2 \Delta\sigma$
            \end{enumerate}
            Let $\sigma$ be a type variable substitution and constraint set $C$ be defined by:
            \begin{align*}
                C := &\bigcup_{i=1}^k(C_i \cup \{a \subtype a_i, (a_i, b_i) \subtype a'_i, p_i[b_{(i,x)}/x \mid x\in\fv{(p_i)}] \subtype b_i\}) \cup\\
                      &\bigcup_{(i, x) \in [1..k]\times\fv{(p_i)}}(C'_{(i, x)} \cup \{a \subtype a_{(i,x)}\})
            \end{align*}
            Then, for all combinations of typings, by $\rlnm{MchL2}$, we have $\Gamma\sigma,\, M : a\sigma \mid C\sigma \types_2 \Delta\sigma$.
            }
    \end{description}
\end{proof}

\begin{theorem}[Completeness of the Inference Algorithm]
    Let $\tau, \sigma$ be substitutions from type variables to types such that $\sigma$ extends $\tau$, $\Gamma\tau, \Delta\tau$ be strongly consistent variable environments, $M$ be a term, $A$ be a type, and $C, C'$ be sets of constraints. Then:
    \begin{enumerate}
        \item $\Gamma\tau,\, M : A \mid C \types_2 \Delta\tau$ $\Rightarrow$ $\exists \sigma. \exists (\Gamma,\, M : A' \mid C' \types \Delta) \in$ InferL($\Gamma$, $M$, $\Delta$). $(A = A'\sigma) \wedge (C \types C'\sigma)$
        \item $\Gamma\tau \mid C \types_2 M : A$ $\Rightarrow$ $\exists \sigma. \exists (\Gamma \mid C' \types M : A') \in$ InferR($\Gamma$, $M$). $(A = A'\sigma) \wedge (C \types C'\sigma)$
    \end{enumerate}
\end{theorem}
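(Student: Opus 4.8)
The plan is to prove the two statements simultaneously by induction on the structure of the term $M$. This is the natural choice because the algorithm \textsf{InferL}/\textsf{InferR} is defined by structural recursion on $M$, and the algorithmic system $\types_2$ is syntax-directed: subtyping has been absorbed into the side conditions of each rule, so for each shape of $M$ only a small, fixed set of rules can conclude a $\types_2$ derivation. An induction on $M$ therefore lines up exactly with both the recursion of the algorithm and the case analysis on the last rule of the given derivation. The substitution discipline built into the statement --- that $\sigma$ must extend $\tau$ --- is precisely what makes the induction compose: at each node $\tau$ records the instantiation already forced by the ambient environment and the result type, while $\sigma$ additionally pins down the fresh type variables introduced at that node.

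For the base cases (a local or top-level variable) I would directly exhibit the appropriate member of the inferred set --- produced by \rlnm{Var2}, \rlnm{VarK2} or \rlnm{Inst2} --- and define $\sigma$ on the freshly generated variables by reading off the concrete types appearing in the $\types_2$ derivation (e.g.\ the instantiating vector $\vv{B}$ for \rlnm{Inst2}). The required entailment $C \types C'\sigma$ is then immediate from the side condition of the corresponding algorithmic rule, which is literally the $\sigma$-image of the generated constraint.

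For each inductive case (the term shapes \rlnm{App}, \rlnm{Abs}, \rlnm{Fix}, \rlnm{Cons}, \rlnm{Match}) I would invert the $\types_2$ derivation to expose its immediate premises, apply the induction hypothesis to the subderivations of the immediate subterms, and obtain inferred judgements together with witnessing substitutions $\sigma_1, \sigma_2, \ldots$. Because the algorithm draws \emph{disjoint} fresh variables for distinct recursive calls, these substitutions agree on their common domain (the domain of $\tau$, where each equals $\tau$) and can be merged into a single $\sigma_0$. I would then extend $\sigma_0$ over the fresh variables introduced at the current node --- for instance in the application case setting $\sigma(a)=A$, $\sigma(b)=B_1$, $\sigma(c)=B_2$ to match the \rlnm{AppR2} side condition --- so that the generated constraint $\{b \subtype c \to a\}$ maps under $\sigma$ to exactly the side condition $B_1 \subtype B_2 \to A$ carried by the derivation. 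Combining this with the entailments supplied by the induction hypothesis yields $C \types C'\sigma$. A recurring point demanding care is that inference returns a \emph{set} of judgements: for a left-facing constructor the algorithm offers the \rlnm{CnsL2}, \rlnm{CnsK2} and the disjointness branches \rlnm{CnsDL21}, \rlnm{CnsDL22}, \rlnm{CnsDL23}, and I must argue these exhaust the concrete rules, so that whichever rule ended the derivation a matching member of the set can be selected.

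The hard part will be the \rlnm{Match} case. The rules \rlnm{MchL2} and \rlnm{MchR2} carry several interlocking side conditions and, on the left, encode a disjunction through the pair typing $(M, P_i) : (p_i[\vv{B_x/x}], A)$; keeping the bookkeeping of the per-pattern and per-bound-variable families of fresh variables consistent, merging the many witnessing substitutions returned by the induction hypothesis, and reassembling the single constraint set $C$ so that its $\sigma$-image is entailed by the concrete constraints, is where the argument is most delicate. I expect the remaining inductive cases to follow the application template with only routine additional bookkeeping.
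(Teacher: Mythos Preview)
Your plan is essentially the paper's: case-analyse the last rule of the $\types_2$ derivation, apply the inductive hypothesis to the premises, merge the witnessing substitutions (which agree on the domain of $\tau$ because fresh variables from distinct recursive calls are disjoint), and extend over the current node's fresh variables to match the side condition. The paper, however, phrases the induction as one on the $\types_2$ \emph{derivation}, not on the structure of $M$. For almost all rules this is a distinction without a difference, but in the \rlnm{MchL2} case the premise has subject $(M, P_i)$, which is \emph{not} a structural subterm of the match expression; strict structural induction on $M$ fails there. You would need a size-based measure (or simply switch to induction on the derivation, which handles this cleanly since the premise is a genuine sub-derivation). Since you already flag \rlnm{Match} as the delicate case and mention the pair encoding, this is a small adjustment rather than a gap.
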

\begin{proof}
    We proceed by induction on the derivation of $\types_2$.\\
    \begin{description}
        \item[$\rlnm{Inst2}$] {
            For some $\tau$, we have $\Gamma\tau,\, f : \forall \vec{a}. C' \Rightarrow A \mid C \types_2 f : A'$, $C \types C'[\vec{B}/\vec{a}]$, and $C \types A[\vec{B}/\vec{a}] \subtype A'$.\\
            Observe:
            \begin{lstlisting}[language=Haskell]
                { g | c[$\overline{B}/\overline{a}$] $\cup$ {A[$\overline{B}/\overline{a}$] $\subtype$ a'} $\types$ M : a'  |
                    (x : $\forall$ $\overline{a}$. c $\Rightarrow$ A) $\in$ g,
                    $\overline{B}$ = $\overline{\text{freshVar}}$,
                    a' = freshVar
                }
            \end{lstlisting}
            Let $\sigma$ be a type variable substitution defined by $\tau \cup [\overline{B} \mapsto \vec{B}, a' \mapsto A'$. As the inferred set is a singleton, the witness judgment is clear. Then, by definition of $\sigma$, we have $A' = a'\sigma$ and $C \types (C'[\overline{B}/\overline{a}] \cup \{A[\overline{B}/\overline{a}] \subtype a'\})\sigma$, as required.
            }
        \item[$\rlnm{Var2}$] {
            For some $\tau$, we have $\Gamma\tau,\, x : A \mid C \types_2 x : B$.\\
            By the symmetry of the rule, we need to show both cases of the the rule (both \lstinline|InferL| and \lstinline|InferR|).\\
            In \lstinline|InferR|, observe:
            \begin{lstlisting}[language=Haskell]
                { g | {a $\subtype$ b} $\types$ M : b  |
                    b = freshVar,
                    a = $\begin{cases}t & (x : t)\in g\\ \okty & \text{otherwise}\end{cases}$
                }
            \end{lstlisting}
            As $x : A$ is in the context, the variable on line 3 is $a = A$. So the inferred judgment is: $\Gamma,\, x : A \mid \{A \subtype b\} \types x : b$. Thus, taking $\sigma := \tau \cup [b \mapsto B]$ trivially satisfies the requirements.\\
            In \lstinline|InferL|, observe:
            \begin{lstlisting}[language=Haskell]
                { g $\cup$ {M : a} | {a $\subtype$ b} $\types$ d  |
                    a = freshVar,
                    (x : b) $\in$ d
                }
            \end{lstlisting}
            As $x : B$ is in $\Delta$, the judgment inferred is: $\Gamma,\, x : a \mid \{a \subtype B\} \types x : B$. Thus, taking $\sigma := \tau \cup [a \mapsto A]$ trivially satisfies the requirements.
            }
        \item[$\rlnm{VarK2}$] {
            For some $\tau$, we have $\Gamma\tau \mid C \types_2 x : A$ and $C \types \okty \subtype A$.\\
            As this case only requires a variable typing in the delta position, the non-trivial term can be either on the left or the right of the judgment.\\
            In \lstinline|InferR|, observe:
            \begin{lstlisting}[language=Haskell]
                { g | {a $\subtype$ b} $\types$ M : b  |
                    b = freshVar,
                    a = $\begin{cases}t & (x : t)\in g\\ \okty & \text{otherwise}\end{cases}$
                }
            \end{lstlisting}
            If $(x : t) \in \Gamma$ then the inferred judgment is: $\Gamma \mid \{t \subtype b\} \types x : b$. Taking $\sigma := \tau \cup [b \mapsto A]$ trivially satisfies the requirements.\\
            If $(x : t) \notin \Gamma$ then the inferred judgment is: $\Gamma \mid \{\okty \subtype b\} \types x : b$. Taking $\sigma := \tau \cup [b \mapsto A]$ trivially satisfies the requirements.\\[1.5mm]
    
            In \lstinline|InferL|, observe:
            \begin{lstlisting}[language=Haskell]
                { g $\cup$ {M : a} | {Ok $\subtype$ b} $\types$ d  |
                    a = freshVar,
                    (x : b) $\in$ d
                }
            \end{lstlisting}
            As $x : A$ is the delta, the inferred judgment (for all terms $M$) is: $\Gamma,\, M : a \mid \{\okty \subtype b\} \types x : b$.\\
            As $M : a$ is included in the context of the judgment, $a\tau$ is a concrete type.\\
            Thus, taking $\sigma := \tau \cup [b \mapsto A]$ trivially satisfies the requirements.
            }
        \item[$\rlnm{FixsR2}$] {
            \ch{
                For some $\tau$, we have $\Gamma\tau \mid C \types_2 \fixtm{f(x)}{M} : A$,\enspace $\Gamma\tau,\, f : A,\, x : B_1 \mid C \types_2 M : B_2$ and $C \types B_1 \to B_2 \subtype A$.\\
                By the inductive hypothesis, there exists a $\sigma$ that extends $\tau$ and an inferred judgment $(\Gamma,\, f : a,\, x : b_1 \mid C'_0 \types_2 M : b_2)\in \text{InferR}(\Gamma \cup \{x : b_1\},\, M)$ such that $B_2 = b_2\sigma$ and $C \types C'_0\sigma$.\\
            }
            Observe this extract from \lstinline|InferR| in the abstraction case:
            \begin{lstlisting}[language=Haskell]
                { g | c $\cup$ {t -> b $\subtype$ a} $\types$ M : a  |
                    a = freshVar,
                    t = freshVar,
                    (g $\cup$ {x : t, f : a} | c $\types$ n : b) $\in$ inferR(g $\cup$ {x : t, f : a}, n)
                }
            \end{lstlisting}
            With this inferred judgment, extending $\sigma$ to the witness $\sigma' = \sigma \cup [b_1 \mapsto B_1, a \mapsto A]$ clearly satisfies the requirements of $A = a\sigma'$ and $C \types (C'_0 \cup \{b_1 \to b_2 \subtype a\})\sigma'$. 
            }
        \item[$\rlnm{FixnR2}$] {
            \ch{
                For some $\tau$, we have $\Gamma\tau \mid C \types_2 \fixtm{f(x)}{M} : A$,\enspace $\Gamma\tau,\, f : A,\, M : B_1 \mid C \types_2 x : B_2$ and $C \types B_2 \from B_1 \subtype A$.\\
                By the inductive hypothesis, there exists a $\sigma$ that extends $\tau$ and an inferred judgment $(\Gamma,\, f : a,\, M : b_1 \mid C'_0 \types_2 x : b_2) \in \text{InferL}(\Gamma,\, M,\, \{x : b_2\})$ such that $B_1 = b_1\sigma$ and $C \types C'_0\sigma$.\\
            }
            Observe this extract from \lstinline|InferR| in the abstraction case:
            \begin{lstlisting}[language=Haskell]
                { g | c $\cup$ {t $\from$ b $\subtype$ a} $\types$ M : a  |
                    a = freshVar,
                    t = freshVar,
                    (g $\cup$ {n : b, f : a} | c $\types$ x : t) $\in$ inferL(g $\cup$ {f : a}, n, {x : t})
                }
            \end{lstlisting}
            With this inferred judgment, extending $\sigma$ to the witness $\sigma' = \sigma \cup [b_2 \mapsto B_2, a \mapsto A]$ clearly satisfies the requirements of $A = a\sigma'$ and $C \types (C'_0 \cup \{b_2 \from b_1 \subtype a\})\sigma'$.
            }
        \item[$\rlnm{FixDL2}$] {
            For some $\tau$, we have $\Gamma\tau,\, \fixtm{f(x)}{M} : A \mid C \types_2 \Delta\tau$ and $C \types A \subtype \Sigma_{c\in\mathcal{C}}(c(\vec{A_c}))$.\\
            Observe this extract from \lstinline|InferL| in the abstraction case:
            \begin{lstlisting}[language=Haskell]
                { g $\cup$ {M : a} | {a $\subtype$ $\Sigma_{c\in\mathcal{C}}$(c($\overline{a}$))} $\types$ d  |
                    a = freshVar,
                    $\overline{a}$ = $\overline{\text{freshVar}}$
                }
            \end{lstlisting}
            Define the witness $\sigma$ by $\sigma := \tau \cup [\overline{a} \mapsto \vec{A_c}, a \mapsto A]$.\\
            Then, $A = a\sigma$ and $\{a \subtype \Sigma_{c\in\mathcal{C}}$(c($\overline{a}))\}\sigma = \{A \subtype \Sigma_{c\in\mathcal{C}}(c(\vec{A_c}))\}$ and so $C \types C'\sigma$ as required.
            }
        \item[$\rlnm{AppR2}$] {
            For some $\tau$, we have $\Gamma\tau \mid C \types_2 (l\ r) : A$,\enspace $\Gamma\tau \mid C \types_2 l : B_1$,\enspace $\Gamma\tau \mid C \types_2 r : B_2$, and $C \types B_1 \subtype B_2\to A$.\\
            By the inductive hypothesis, we have:
            \begin{enumerate}
                \item $\exists \sigma_1. \exists (\Gamma \mid C'_1 \types l : b_1)\in\text{InferR}(\Gamma,\, l). B_1 = b_1\sigma_1 \wedge C \types C'_1\sigma_1$
                \item $\exists \sigma_2. \exists (\Gamma \mid C'_2 \types r : b_2)\in\text{InferR}(\Gamma,\, r). B_2 = b_2\sigma_2 \wedge C \types C'_2\sigma_2$
            \end{enumerate}
            Observe this extract from \lstinline|InferR| in the application case:
            \begin{lstlisting}[language=Haskell]
                { g | c1 $\cup$ c2 $\cup$ {b $\subtype$ c -> a} $\types$ M : a  |
                    (g | c1 $\types$ l : b) $\in$ inferR(g, l),
                    (g | c2 $\types$ r : c) $\in$ inferR(g, r),
                    a = freshVar
                }
            \end{lstlisting}
            With this inferred judgment, and the observation that $\sigma_1$ and $\sigma_2$ only share variables given by $\tau$ (i.e. $\sigma_1\cap \sigma_2 = \tau$), we define a witness $\sigma = \sigma_1\cup\sigma_2\cup [a \mapsto A]$. Thus, $A = a\sigma$ and $C \types (C_1\cup C_2 \cup \{b_1 \subtype b_2 \to a\})\sigma$ as required.
            }
        \item[$\rlnm{AppL2}$] {
            For some $\tau$, we have $\Gamma\tau,\, (l\ r) : A \mid C \types_2 \Delta\tau$,\enspace $\Gamma\tau \mid C \types_2 l : B_1$,\enspace $\Gamma\tau,\, r : B_2 \mid C \types_2 \Delta\tau$, and $C \types B_1 \subtype B_2 \from A$.\\
            By the inductive hypothesis, we have:
            \begin{enumerate}
                \item $\exists \sigma_1. \exists (\Gamma \mid C'_1 \types l : b_1)\in\text{InferR}(\Gamma,\, l). B_1 = b_1\sigma_1 \wedge C \types C'_1\sigma_1$
                \item $\exists \sigma_2. \exists (\Gamma,\, r : b_2 \mid C'_2 \types \Delta)\in\text{InferL}(\Gamma,\, r,\, \Delta). B_2 = b_2\sigma_2 \wedge C \types C'_2\sigma_2$
            \end{enumerate}
            Observe this extract from \lstinline|InferL| in the application case:
            \begin{lstlisting}[language=Haskell]
                { g $\cup$ {M : a} | c1 $\cup$ c2 $\cup$ {b $\subtype$ c $\from$ a} $\types$ d  |
                    a = freshVar,
                    (g | c1 $\types$ l : b) $\in$ inferR(g, l),
                    (g $\cup$ {r : c} | c2 $\types$ d) $\in$ inferL(g, r, d)
                }
            \end{lstlisting}
            With this inferred judgment, and the observation that $\sigma_1$ and $\sigma_2$ only share variables given by $\tau$ (i.e. $\sigma_1\cap \sigma_2 = \tau$), we define a witness $\sigma = \sigma_1\cup\sigma_2\cup [a \mapsto A]$. Thus, $A = a\sigma$ and $C \types (C_1\cup C_2 \cup \{b_1 \subtype b_2 \from a\})\sigma$ as required.
            }
        \item[$\rlnm{AppK2}$] {
            For some $\tau$, we have $\Gamma\tau,\, (l\ r) : A \mid C \types_2 \Delta\tau$,\enspace $\Gamma\tau,\, l : B \mid C \types_2 \Delta\tau$, and $C \types \okty \from A \subtype B$.\\
            By the inductive hypothesis, there exists a $\sigma_1$ extending $\tau$ and an inferred judgment $(\Gamma,\, l : b \mid C'_1 \types \Delta)\in\text{InferL}(\Gamma,\, l,\, \Delta)$ such that $B = b\sigma_1$ and $C \types C'_1\sigma_1$.\\
            Observe this extract from \lstinline|InferL| in the application case:
            \begin{lstlisting}[language=Haskell]
                { g $\cup$ {M : a} | c $\cup$ {Ok $\from$ a $\subtype$ b} $\types$ d  |
                    a = freshVar,
                    (g $\cup$ {l : b} | c $\types$ d) $\in$ inferL(g, l, d)
                }
            \end{lstlisting}
            With this inferred judgment, we define a witness $\sigma = \sigma_1\cup [a \mapsto A]$. Thus, $A = a\sigma$ and $C \types (C_1\cup \{\okty \from a \subtype b\})\sigma$ as required.
            }
        \item[$\rlnm{CnsDL}$] {
            All of the $\rlnm{CnsDL21}$, $\rlnm{CnsDL22}$, and $\rlnm{CnsDL23}$ cases are all very similar.\\
            Observe this extract from \lstinline|InferL| in the constructor case:
            \begin{lstlisting}[language=Haskell]
                { g $\cup$ {M : a} | {$a \subtype b_1 \to b_2$} $\types$ d  |
                    a = freshVar,
                    $b_1$ = freshVar,
                    $b_2$ = freshVar
                } $\cup$ { g $\cup$ {M : a} | {$a \subtype b_1 \from b_2$} $\types$ d  |
                    a = freshVar,
                    $b_1$ = freshVar,
                    $b_2$ = freshVar
                } $\cup$ { g $\cup$ {M : a} | {a $\subtype$ $\Sigma_{c\in\mathcal{C}\backslash \kappa}(c(\overline{a}))$} $\types$ d  |
                    a = freshVar,
                    $\overline{a}$ = $\overline{\text{freshVar}}$
                }
            \end{lstlisting}
            Each of the inferred sets above are singletons and correspond to the $\rlnm{CnsDL21}$, $\rlnm{CnsDL22}$, and $\rlnm{CnsDL23}$ rules, respectively. Taking the respective witnesses to be defined by:
            \begin{enumerate}
                \item $\sigma := \tau \cup [b_1 \mapsto B_1, b_2 \mapsto B_2, a \mapsto A]$
                \item $\sigma := \tau \cup [b_1 \mapsto B_1, b_2 \mapsto B_2, a \mapsto A]$
                \item $\sigma := \tau \cup [\overline{a} \mapsto \vec{a}, a \mapsto A]$
            \end{enumerate}
            Each of the witness substitutions and judgments trivially satisfy the requirements.
            }
        \item[$\rlnm{CnsR2}$] {
            For some $\tau$, we have $\Gamma\tau \mid C \types_2 c(M_1,\ldots,M_n) : A$,\enspace $\forall 1\leq i \leq n. \Gamma\tau \mid C \types_2 M_i : A_i$, and $C \types c(A_1,\ldots,A_n) \subtype A$.\\
            By the inductive hypothesis, for all $1 \leq i \leq n$, there exists a $\sigma_i$ extending $\tau$ and an inferred judgment $(\Gamma \mid C'_i \types M_i : a_i)\in\text{InferR}(\Gamma,\, M_i)$ such that $A_i = a_i\sigma_i$ and $C \types C'_i\sigma_i$.\\
            Observe this extract from \lstinline|InferR| in the constructor case:
            \begin{lstlisting}[language=Haskell]
                { g | $\bigcup_{i=1}^n(c_i)$ $\cup$ {c($a_1,\ldots,a_n$) $\subtype$ a} $\types$ M : a  |
                    a = freshVar,
                    (g | $c_i \types m_i : a_i$)$_{i=1}^n$ $\in \Pi_{m\in \text{ms}}$(inferR(g, m))
                }
            \end{lstlisting}
            As there is clearly a tuple of length $n$ with the collection of inferred judgments that correspond to those from the inductive hypothesis on line 3, we take the inferred judgment $\Gamma \mid \bigcup_{i=1}^n(C'_i) \cup \{c(a_1,\ldots,a_n) \subtype a\} \types c(M_1,\ldots,M_n) : a$ as the required witness. We then define the witness $\sigma := \bigcup_{i=1}^n(\sigma_i) \cup [a \mapsto A]$, which is well defined as $\sigma_i \cap \sigma_j = \tau$ for all $i \neq j$ by each recursive call creating fresh variables. Thus, $A = a\sigma$ and $C \types (\bigcup_{i=1}^n(C'_i) \cup \{c(a_1,\ldots,a_n) \subtype a\})\sigma$ as required.
            }
        \item[$\rlnm{CnsL2}$] {
            For some $\tau$, we have $\Gamma\tau,\, c(M_1,\ldots,M_n) : A \mid C \types_2 \Delta\tau$,\enspace $\exists i. \Gamma\tau,\, M_i : A_i \mid C \types_2 \Delta\tau$, and $C \types A\subtype \Sigma_{d\in\mathcal{C}\backslash\{c\}} + c(A_1,\ldots,A_n)$.\\
            By the inductive hypothesis, there exists some $1 \leq i \leq n$ such that, there exists a $\sigma_i$ extending $\tau$ and an inferred judgment $(\Gamma,\, M_i : a_i \mid C'_i \types \Delta)\in\text{InferL}(\Gamma,\, M_i,\, \Delta)$ such that $A_i = a_i\sigma_i$ and $C \types C'_i\sigma_i$.\\
            Observe this extract from \lstinline|InferL| in the constructor case:
            \begin{lstlisting}[language=Haskell]
                { g $\cup$ {M : a} | c $\cup$ {a $\subtype$ $\Sigma_{\kappa'\in\mathcal{C}\backslash\{\kappa\}}(\kappa'(\overline{a_{\kappa'}})) + \kappa(a_1,\ldots,b,\ldots,a_n)$} $\types$ d  |
                    a = freshVar,
                    $\overline{a_{\kappa'}}$ = $\overline{\text{freshVar}}$,
                    $(a)_{i=1}^n$ = $\overline{\text{freshVar}}$,
                    $m_i \in$ ms,
                    (g $\cup$ {$m_i$ : b} | c $\types$ d) $\in$ inferL(g, m, d)
                }
            \end{lstlisting}
            Line 5 shows that an inferred judgment is generated for all $i$, thus it is possible to pick the correct $i$ and correct judgment in line 6 to produce the required witness judgment.\\
            Defining the witness $\sigma := \sigma_i \cup [\overline{a_\kappa'} \mapsto \vec{A_c}, (a_i)_{i\in [1..n]\backslash\{i\}} \mapsto (A_i)_{i\in [1..n]\backslash\{i\}}, b \mapsto A_i, a \mapsto A]$ satisfies the requirements of $A = a\sigma$ and $C \types (C'_i \cup \{a \subtype \Sigma_{\kappa'\in\mathcal{C}\backslash\{\kappa\}}(\kappa'(\overline{a_{\kappa'}})) + \kappa(a_1,\ldots,b,\ldots,a_n)\})\sigma$ as required.
            }
        \item[$\rlnm{CnsK2}$] {
            For some $\tau$, we have $\Gamma\tau,\, c(M_1,\ldots,M_n) : A \mid C \types_2 \Delta\tau$,\enspace $\exists i. \Gamma\tau,\, M_i : B \mid C \types_2 \Delta\tau$, and $C \types \okty \subtype B$.\\
            By the inductive hypothesis, there exists some $1 \leq i \leq n$ such that, there exists a $\sigma_i$ extending $\tau$ and an inferred judgment $(\Gamma,\, M_i : b \mid C'_i \types \Delta)\in\text{InferL}(\Gamma,\, M_i,\, \Delta)$ such that $B = b\sigma_i$ and $C \types C'_i\sigma_i$.\\
            Observe this extract from \lstinline|InferL| in the constructor case:
            \begin{lstlisting}[language=Haskell]
                { g $\cup$ {M : a} | c $\cup$ {Ok $\subtype$ b} $\types$ d  |
                    a = freshVar,
                    $m_i \in$ ms,
                    (g $\cup$ {$m_i$ : b} | c $\types$ d) $\in$ inferL(g, m, d)
                }
            \end{lstlisting}
            Line 3 shows that an inferred judgment is generated for all $i$, thus it is possible to pick the correct $i$ and correct judgment in line 4 to produce the required witness judgment.\\
            Defining the witness $\sigma := \sigma_i \cup [a \mapsto A]$ satisfies the requirements of $A = a\sigma$ and $C \types (C'_i \cup \{Ok \subtype b\})\sigma$.
            }
        \item[$\rlnm{MchR2}$] {
            For some $\tau$, we have:
            \begin{enumerate}
                \item $\Gamma\tau \mid C \types_2 \matchtm{Q}{|_{i=1}^k (p_i \mapsto P_i)} : A$
                \item $\gamma\tau \mid C \types_2 Q : B$
                \item $\Gamma\tau \cup \{x : B_x \mid x\in\fv{(p_i)}\} \mid C \types_2 P_i : A_i \; (\forall i)$
                \item $C \types B \subtype \Sigma_{i=1}^k(p_i[B_x/x \mid x\in\fv(p_i)])$
                \item $\forall i. C \types A_i \subtype A$
            \end{enumerate}
            By the inductive hypothesis, we have:
            \begin{enumerate}
                \item $\exists \sigma_0. \exists (\Gamma \mid C'_0 \types Q : b)\in\text{InferR}(\Gamma,\, Q). B = b\sigma_0 \wedge C \types C'_0 \sigma_0$
                \item $\forall 1\leq i \leq k. \exists \sigma_i. \exists (\Gamma \cup \{x : B_x \mid x\in\fv(p_i)\} \mid C'_i \types P_i : a_i)\in\text{InferR}(\Gamma\cup\{x : B_x \mid x\in\fv(p_i)\},\, P_i). A_i = a_i\sigma_i \wedge C \types C'_i\sigma_2$
            \end{enumerate}
            Observe this extract from \lstinline|InferR| in the match case:
            \begin{lstlisting}[language=Haskell]
                { g | $\bigcup_{i=1}^k(c_i \cup \{b \subtype p_i[b_x/x \mid x\in\fv(p_i)], a_i \subtype a\})$ $\cup$ c $\types$ M : a  |
                    a = freshVar,
                    $\overline{b_x}$ = $\overline{\text{freshVar}}$,
                    (g | c $\types$ q : b) $\in$ inferR(g, q),
                    (g $\cup$ {$\overline{(x : b_x)}$} | $c_i \types P_i : a_i$)$_{i=1}^k$ $\in \Pi_{i=1}^k$(inferR(g $\cup$ {$\overline{(x : b_x)}$}, $P_i$))
                }
            \end{lstlisting}
            By choosing the correct inferred judgment(s) in lines 4 and 5 makes it clear that the correct witness judgment can be inferred.\\
            Defining the witness $\sigma := \sigma_0 \cup \bigcup_{i=1}^k (\sigma_i \cup [(b_x)_{x\in\fv(p_i)} \mapsto (B_x)_{x\in\fv(p_i)}]) \cup [a \mapsto A]$ satisfies the requirements of $A = a\sigma$ and $C \types (\bigcup_{i=1}^k(C'_i \cup \{b \subtype p_i[b_x/x \mid x\in\fv(p_i)], a_i \subtype a\}) \cup C'_0)\sigma$.
            }
        \item[$\rlnm{MchL2}$] {
            For some $\tau$, we have:
            \begin{enumerate}
                \item $\Gamma\tau,\, \matchtm{Q}{|_{i=1}^k (p_i \mapsto P_i)} : A \mid C \types_2 \Delta\tau$
                \item $\forall i. \Gamma\tau,\, (M, P_i) : A'_i \mid C \types_2 \Delta\tau$
                \item $\forall i. \forall x\in\fv(p_i). \Gamma,\, P_i : A_x \mid C \types_2 x : B_x$
                \item $\forall i. C \types A \subtype A_i$
                \item $\forall i. C \types (B_i, A_i) \subtype A'_i$
                \item $\forall i. \forall x\in\fv(p_i). C \types A \subtype A_x$
                \item $\forall i. C \types p_i[B_x/x \mid x\in\fv(p_i)] \subtype B_i$
            \end{enumerate}
            By the inductive hypothesis, we have:
            \begin{enumerate}
                \item $\forall i. \exists \sigma'_i. \exists (\Gamma,\, (M, P_i) : a'_i \mid C'_i \types \Delta)\in\text{InferL}(\Gamma,\, (M, P_i),\, \Delta). A'_i = a'_i\sigma'_i \wedge C \types C'_i\sigma'_i$
                \item $\forall i. \forall x\in\fv(p_i). \exists \sigma_{(i, x)}. \exists (\Gamma,\, P_i : a_x \mid C_{(i, x)} \types x : b_x)\in\text{InferL}(\Gamma,\, P_i,\, \{x : b_x\}). A_x = a_x\sigma_{(i, x)} \wedge C \types C_{(i,x)}\sigma_{(i, x)}$
            \end{enumerate}
            Observe this extract from \lstinline|InferL| in the match case:
            \begin{lstlisting}[language=Haskell]
                { g $\cup$ {M : a} | c $\types$ d  |
                    a = freshVar,
                    $(a_i, b_i)_{i=1}^k$ = $\overline{\text{freshVar}}$,
                    $(b_{(i,x)})_{(i, x)\in [1..k]\times\fv{(p_i)}}$ = $\overline{\text{freshVar}}$,
                    (g $\cup$ {(q, $P_i$) : $a'_i$} | $c_i$ $\types$ d)$_{i=1}^k$ $\in \Pi_{i=1}^k$(inferL(g, (q, $P_i$), d)),
                    (g $\cup$ {$P_i$ : $a_{(i, x)}$} | $c'_{(i, x)}$ $\types$ d)$_{(i, x)}$ $\in \Pi_{(i, x) \in [1..k]\times\fv{(p_i)}}$(inferL(g, $P_i$, {x : $b_{(i,x)}$})),
                    c = $\bigcup_{i=1}^k(c_i \cup \{a \subtype a_i, (a_i, b_i) \subtype a'_i, p_i[b_{(i,x)}/x \mid x\in\fv{(p_i)}] \subtype b_i\})$ $\cup$ 
                        $\bigcup_{(i, x) \in [1..k]\times\fv{(p_i)}}(c'_{(i, x)} \cup \{a \subtype a_{(i,x)}\})$
                }
            \end{lstlisting}
            Line 5 generates a tuple of length $k$ by enumerating every arrangement of typings for the $k$ pairs $(Q, P_i)$. Hence, the correct sequence of types can be picked according to the inductive hypothesis (point 1).\\
            Line 6 generates a tuple by enumerating every arrangement of typings for each (left) typing of $P_i$ with delta $x : b_{(i, x)}$ (for each $i$ and $x$). Hence, the correct sequence of types can be picked according to the inductive hypothesis (point 2).\\
            Hence, the required witness judgment is generated.\\
            As each recursive call produces its own, independent, fresh variables, the pairwise intersection of any two distinct $\sigma$ given by the inductive hypotheses is $\tau$, we can define the witness:
            \[
                \sigma := \bigcup_{i=1}^k(\sigma_i \cup [a_i \mapsto A_i, b_i \mapsto B_i] \cup \bigcup_{x\in\fv(p_i)}(\sigma_{(i, x)} \cup [b_{(i, x)} \mapsto B_x])) \cup [a \mapsto A]
            \]
            This satisfies $A = a\sigma$ and $C \types C'\sigma$ (where $C'$ is the generated constraints on line 7/8 of the extract) as required.
            }
    \end{description}
\end{proof}

\section{Additional Material in Support of SubSection \ref{sec:constrained-soundness}}\label{apx:syntactic-soundness}

In this appendix, we give the definition of reduction for the programming language of the constrained system and we give the definition of consistency of a constraint set.  We then establish a number of useful lemmas regarding the type system:
\begin{itemize}
  \item left and right inversion lemmas
  \item typed substitution lemmas
\end{itemize}
This enables us to prove progress and preservation, and finally conclude syntactic soundness.

\begin{definition}[Reduction]
  The \emph{evaluation contexts} are defined by the following grammar:
  \[
    \calE \Coloneqq \square \mid c(\bar{V},\calE,\bar{M}) \mid \calE\,N \mid (\fixtm{f(x)}{M})\,\calE \mid \matchtm{\calE}{\mid_{i=1}^k (p_i \mapsto P_i)}
  \]
  Fix a module $\mathcal{M}$, then the one-step reduction relation wrt $\calM$, written $M \pedm N$, is the contextual closure of the following axiom schema:
  \[
  \begin{array}{rrcl}
    \rlnm{Delta} & \mathsf{f} &\pedm& \calM(\mathsf{f})\\[2mm]
    \rlnm{Fix$\beta$} & \ch{(\fixtm{f(x)}{M})\,V} &\pedm& M[V/x,\fixtm{f(x)}{M}/f]\\[2mm]
    \rlnm{Match} & \matchtm{c(\bar{V})}{|_{d \in I} (d(\bar{x_{d}}) \mapsto P_d)} &\pedm& P_c[\bar{V/x_{c}}] \qquad \text{(when $c \in I$)}
  \end{array}
  \]
  We will more often simply write $M \ped N$ whenever the context is unimportant or otherwise understood, and the reflexive-transitive closure by $M \peds N$.
\end{definition}

\begin{definition}
  A constraint set $C$ is said to be \emph{closed} just if the following conditions are met:
  \begin{enumerate}[(1)]
    \item If $A \subtype A' \in C$ and $A' \subtype B \in C$ then $A \subtype B \in C$.
    \item If $K \subtype K' \in C$, $c(A_1,\ldots,A_n) \in K$ and $c(B_1,\ldots,B_n) \in K'$, then $A_i \subtype B_i \in C$ for each $i \in [1..n]$.
    \item If $A \to B \subtype A' \to B' \in C$ then $A' \subtype A \in C$ and $B \subtype B' \in C$.
    \item If $A \from B \subtype A' \from B' \in C$ then $A \subtype A' \in C$ and $B' \subtype B \in C$.
  \end{enumerate}
\end{definition}

\begin{definition}
  We say that a constraint $A \subtype B$ is said to be \emph{syntactically consistent} just if either:
  \begin{itemize}
    \item At least one of $A$ or $B$ is a type variable
    \item or, $A$ and $B$ are both sufficiency arrows or are both necessity arrows 
    \item or, $B$ is $\okty$
    \item or, $A$ is of shape $\Sigma_{c \in I} c(\bar{A})$, $B$ is of shape $\Sigma_{d \in J} d(\bar{B})$ and $I \subseteq J$
  \end{itemize}
  We say that a constraint set $C$ is \emph{consistent} just if it is closed and contains only syntactically consistent constraints.  We say that a type environment $\Gamma$ is \emph{consistent} just if $\constraints(\Gamma)$ is consistent.
\end{definition}

\begin{definition}
  A judgement of the constrained type system is said to be a \emph{single-subject (1S)} just if it either has shape:
  \begin{enumerate}[(I)]
    \item $\calC \mid \Gamma,\,M:A \types \Delta$, $\Gamma$ and $\Delta$ are consistent variable environments with disjoint subjects
    \item or $\calC \mid \Gamma \types M:A$, with $\Gamma$ a consistent variable environment
  \end{enumerate}
\end{definition}

Note, a consequence of the 1S definition is that a 1S judgment has at most non-variable subject -- here we will consider top-level identifiers also as (global) variables, which is the $M:A$ above.  If a 1S judgment has a non-variable subject then it is either of type (I) or type (II), but cannot be both.  Otherwise, if a 1S judgement has two formulas on the left-hand side with the same variable subject, then it must be of type (I), with $M$ being one of the two.  Otherwise, a 1S judgement with all distinct variable subjects on the left-hand side can be seen as either type (I) or type (II).  1S variable judgments are important because well-typing and ill-typing are both 1S judgments, and a proof of a 1S judgment only involves 1S judgments.

\begin{lemma}
  A proof tree $\calT$ of a 1S conclusion contains only 1S judgments at each node.
\end{lemma}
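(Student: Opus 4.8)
The plan is to prove the statement by a routine structural induction on the proof tree $\calT$, which reduces at once to a purely local check: for every rule of Figure~\ref{fig:constrained-ta-terms}, if its conclusion is 1S then each of its premises is 1S as well. Granted this \emph{rule-wise} preservation property, the lemma follows by induction on the height of $\calT$, since the root is 1S by hypothesis and the children of any 1S node are again 1S, so every node carries a 1S judgement. Thus the entire content of the argument is the case analysis over the rules.

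The organising principle is the invariant recorded immediately after the definition of single-subject judgement: a 1S judgement has \emph{at most one} non-variable subject, appearing either as the distinguished left formula $M:A$ (type (I)) or as the unique right formula $M:A$ (type (II)). The key step is therefore to walk through the rules and confirm that none of them exposes a second non-variable typing in a premise. For the axioms \rlnm{VarK}, \rlnm{GVar}, \rlnm{LVar}, \rlnm{CnsDL}, \rlnm{AbsDL} there is nothing to verify. For the single-premise structural and congruence rules \rlnm{SubL}, \rlnm{SubR}, \rlnm{CnsL}, \rlnm{CnsK}, \rlnm{FunK}, the premise has the same subjects as the conclusion (only a type is adjusted or a subterm exposed in place of its enclosing term), so the 1S shape is preserved verbatim. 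For the introduction rules that relocate a subject across the turnstile, one checks that the premise stays within one of the two shapes: the \rlnm{AbnR} premise $\Gamma,\,M:A \types x:B$ is of type (I), since its sole left non-variable is $M$ and the right formula $x:B$ is a variable typing, whereas the \rlnm{AbsR} premise $\Gamma,\,x:B \types M:A$ and the \rlnm{FixR} premise are of type (II). The multi-premise elimination rules \rlnm{AppL}, \rlnm{AppR}, \rlnm{CnsR}, \rlnm{MchR}, \rlnm{MchL} deserve the most attention, but they are exactly where the invariant pays off: for instance a conclusion $\Gamma,\,M\,N:A \types \Delta$ of \rlnm{AppL} carries one non-variable subject $M\,N$, and the rule splits it into a right premise $\Gamma \types M : B\from A$ (type (II)) and a left premise $\Gamma,\,N:B \types \Delta$ (type (I)), each retaining a single non-variable subject; the two premise families of \rlnm{MchL} likewise carry one non-variable subject apiece, namely $(M,P_i)$ on the left, and $P_i$ on the left against a variable right formula $x:B_x$.

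Beyond counting non-variable subjects, the notion of 1S carries two side-conditions to be discharged: that the left and right variable environments are \emph{consistent} and have \emph{disjoint subjects}. Consistency is immediate, because no rule of Figure~\ref{fig:constrained-ta-terms} alters the subtype-constraint portion of the context (the side-condition $\Gamma \types C[\vv{B/a}]$ of \rlnm{GVar} is a derivability check, not an addition), so the constraint set of every premise coincides with that of the conclusion and inherits its consistency. Disjointness of subjects, where genuinely at issue, is secured by the freshness side-conditions attached to \rlnm{AbsR}, \rlnm{AbnR}, \rlnm{MchL}, \rlnm{MchR} and \rlnm{FixR}: these stipulate that the bound variables do not occur in $\Gamma$ or $\Delta$, which is precisely what keeps a newly exposed variable typing (such as $x:B$ in \rlnm{AbnR}, or the pattern-bound $x:B_x$ in \rlnm{MchL}) disjoint from the surrounding environment.

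I expect the main obstacle to be bookkeeping rather than conceptual. Two readings must be pinned down to make each rule a one-line verification. First, the degenerate case in which the distinguished subject $M$ is itself a variable, so that a premise has only variable subjects; by the convention stated after the 1S definition such a judgement may be read as either type (I) or type (II), and one must invoke the appropriate reading to close the relevant case. Second, the pair-encoding used by \rlnm{MchL}: the formula $(M,P_i):(p_i[\vv{B_x/x}],A)$ must be recognised as a \emph{single} non-variable typing, so that the premise counts as 1S rather than being misread as two separate typings. Once these two points are fixed, the induction closes with each rule contributing only a short check.
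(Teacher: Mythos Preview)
The proposal is correct and takes essentially the same approach as the paper: induction on the height of the proof tree, reducing to a rule-by-rule check that 1S conclusions yield 1S premises. Your exposition is, if anything, slightly more careful than the paper's in making explicit why the consistency and subject-disjointness side-conditions propagate (the paper largely leaves these implicit), and in flagging the pair-encoding subtlety in \rlnm{MchL}.
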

\begin{proof}
  By induction on the height of $\calT$.  When the tree is of height 0, i.e. it is only a \rlnm{GVar}, \rlnm{Id}, \rlnm{CnsDL}, \rlnm{FixDL} or \rlnm{VarK} and the result follows by assumption.  Otherwise, suppose the tree has height $k$ and we argue by case analysis on the root:
  \begin{itemize}
    \item If the root is concluded by \rlnm{SubL} or \rlnm{SubR}, then clearly the judgement in the premise is also 1S and the result follows from the induction hypothesis.
    \item \ch{If the root is concluded by \rlnm{AbsR} or \rlnm{AbnR}, then the 1S conclusion has shape $\Gamma \types \fixtm{f(x)}{M} : A$.  Thus $\Gamma$ is a consistent variable environment, and we assume by the side condition that the bound variables $f$ and $x$ do not occur in $\Gamma$.  Thus the premises can be seen as 1S type (II) and type (I) respectively, and the result follows from the induction hypothesis.}
    \item If the root is concluded by \rlnm{AppL} or \rlnm{AppR}, then the 1S conclusion has shape $\Gamma,\,M\,N:A \types \Delta$ or $\Gamma \types M\,N:A$ respectively.  In the former case, $\Gamma$ and $\Delta$ are consistent and disjoint variable environments, and therefore $\Gamma \types M : B \from A$ is of type (II) and $\Gamma,\,N:B \types \Delta$ is of type (I).  In the latter case, $\Gamma$ is a consistent variable environment and hence $\Gamma \types M : B \to A$ and $\Gamma \types N : B$ are both of type (II).  In both cases, the result follows from the induction hypothesis.
    \item If the root is concluded by \rlnm{CnsL}, then the 1S conclusion has shape $\Gamma,\,C(M_1,\ldots,M_m) : C(A_1,\ldots,A_m) \types \Delta$ and therefore $\Gamma$ and $\Delta$ are disjoint, consistent variable environments.  Hence, $\Gamma,\,M_i:A_i \types \Delta$ is a 1S judgment of type (I) and the result follows from the induction hypothesis.
    \item If the root is concluded by \rlnm{CnsR}, then the 1S conclusion has shape $\Gamma \types c(M_1,\ldots,M_m) : c(A_1,\ldots,A_m)$ and $\Gamma$ is a consistent variable environment.  Then $\Gamma \types M_i : A_i$ is of type (II) for each $i$ and the result follows from the induction hypothesis.
    \item If the root is concluded by \rlnm{MchL}, then the 1S conclusion is of shape $\Gamma,\,\matchtm{M}{\mid^k_{i=1} (p_i \mapsto P_i)} : A \types \Delta$ and so $\Gamma$ and $\Delta$ are consistent, disjoint variable environments.  We may assume by the implicit side condition that the variables bound in the patterns are fresh for the context.  Hence, it follows immediately that each $\Gamma,\,P_i : A \types x : B_x$ is type (I).  Similarly, $\Gamma,\,(M, P_i):(p_i[B_x/x \mid x \in \fv(p_i)], A) \types \Delta$ is type (I).  Hence, the result follows from the induction hypothesis.
    \item If the root is concluded by \rlnm{MchR}, then the 1S conclusion is of shape $\Gamma \types \matchtm{M}{|_{i=1}^k (p_i \mapsto P_i)} : A$ and $\Gamma$ is a consistent variable environment.  Moreover, we may assume by the implicit side condition that the variables bound by the patterns are fresh for the context.  Therefore, both of the premises are type (II) and the result follows from the induction hypothesis.
    \item If the root is concluded by \rlnm{CnsK}, then the 1S conclusion is of shape $\Gamma,\,c(M_1,\ldots,M_n): \okty \types \Delta$ and $\Gamma$ and $\Delta$ are disjoint, consistent variable environments.  Therefore, the premise is type (I) and the result follows from the induction hypothesis.
    \item If the root is concluded by \rlnm{FunK}, then the 1S conclusion has shape $\Gamma,\,M\,N:\okty \types \Delta$ and $\Gamma$ and $\Delta$ are disjoint, consistent variable environments.  Therefore, the premise is type (I) and the result follows from the induction hypothesis.
  \end{itemize}
\end{proof}

From now on we will consider only 1S judgements.

\begin{lemma}[Left Inversion]\label{lem:left-inversion}
  Suppose $\Gamma,\,M:A \types \Delta$ has $\Gamma$ and $\Delta$ consistent, disjoint variable environments.  Then either (a) $\Delta$ is of shape $x:\okty$, or (b) one of the following is true:
  \begin{itemize}
    \item $M$ is of shape $x$ and there is some $B$ such that $x:B = \Delta$ and $\Gamma \types A \subtype B$.
    \item $M$ is of shape $f$ and there is some type $B$ such that $x:B = \Delta$ and $\Gamma \types A \subtype B$
    \item $M$ is of shape $PQ$ and either:
      \begin{enumerate}[(i)]
        \item there is a type $B$ such that $\Gamma \types M : B \from A$ and $\Gamma,\,N:B \types \Delta$
        \item or, $\Gamma,\,M:\okty \from A \types \Delta$.
      \end{enumerate}
    \item $M$ is of shape $c(M_1,\ldots,M_m)$ and either:
    \begin{enumerate}[(i)]
      \item there are types $B_1,\ldots,B_m$ and $i$ such that $A \subtype c(B_1,\ldots,B_m)$ and $\Gamma,\,M_i:B_i \types \Delta$.
      \item or, $\Gamma,\,M_i : \okty \types \Delta$ for some $i$
      \item or, there is some type $A'$ which is either an arrow or sum type not including a $c$-headed case, such that $\Gamma \types A \subtype A'$ and $\Gamma,\,c(M_1,\ldots,M_n) : A' \types \Delta$
    \end{enumerate}
    \item $M$ is of shape $\matchtm{Q}{\mid_{i=1}^k (p_i \mapsto P_i)}$ and there is a family of types $B_x$ (one for each pattern-bound variable $x$) such that $\Gamma,\,P_i : A \types x : B_x$ (for each $i$ and each $x \in \fv(p_i)$), and either $\Gamma,\,Q : p_i[B_x/x \mid x\in\fv(p_i)] \types \Delta$ or $\Gamma,\,P_i:A \types \Delta$.
    \item \ch{$M$ is of shape $\fixtm{f(x)}{P}$ and there is some sum type $A'$ such that $\Gamma \types A \subtype A'$ and $\Gamma,\,\fixtm{f(x)}{P} : A' \types \Delta$}
  \end{itemize}
\end{lemma}
\begin{proof}
  The proof is by cases on the shape of $M$.
  \begin{itemize}
    \item If $M$ is a variable $x$ then, according to the assumptions, all the formulas have variables as subjects.  
    By the completeness of algorithmic type assignment, we have $\typings(\Gamma),\,x:A \mid \constraints(\Gamma) \types_2 \Delta$.  Given that $\Gamma$ and $\Delta$ are disjoint, the only rules that can have concluded such a judgement are \rlnm{VarK2}, \rlnm{LVar2}.  The former gives rise to (a).  In case \rlnm{LVar2}, since $\Gamma$ and $\Delta$ are disjoint, it must be that there is some $B$ such that $x:B \in \Delta$ and $\Gamma \types A \subtype B$, as required by (c).
    \item If $M$ is a top-level identifier, then the reasoning is as in the former case, since $A$ is a monotype.
    \item If $M$ is an application $PQ$, then we reason as follows.  By the completeness of algorithmic type assignment, $\typings(\Gamma),\,P\,Q:A \mid \constraints(\Gamma) \types_2 \Delta$ and the only rules that could have concluded this judgement are \rlnm{VarK2}, \rlnm{AppL2} or \rlnm{FunK2}.  The former case gives rise to (a).  In case \rlnm{AppL2}, it follows that there are types $B_1$ and $B$ such that $\constraints(\Gamma) \types B_1 \subtype B \from A$, $\typings(\Gamma) \mid \constraints(\Gamma) \types_2 M : B_1$ and $\typings(\Gamma),\,N : B \mid \typings(\Gamma) \types_2 \Delta$.  It follows from the soundness of algorithmic type assignment that $\Gamma \types M : B_1$ and hence, by \rlnm{SubR}, $\Gamma \types M : B \from A$.  Similarly, $\Gamma,\,N:B \types \Delta$.  This is what is required by (i).  In the latter case it follows that there is a type $B$ such that $\constraints(\Gamma) \types_2 \okty \from A \subtype B$ and $\typings(\Gamma),\,M : B \mid \constraints(\Gamma) \types_2 \Delta$.  It follows from the soundness of algorithmic type assignment that $\Gamma,\,M:B \types \Delta$ and hence, by \rlnm{SubR}, $\Gamma,\,M:\okty \from A \types \Delta$, as required by (ii).
    \item If $M$ is a constructor term $c(M_1,\ldots,M_n)$, then we reason as follows.  By the completeness of algorithmic type assignment, $\typings(\Gamma),\,c(M_1,\ldots,M_n) : A \mid \constraints(\Gamma) \types_2 \Delta$.  Since $\Gamma$ and $\Delta$ are disjoint, the only rules that can conclude such a judgement are \rlnm{VarK2}, \rlnm{ConsL2}, \rlnm{CnsK2} and \rlnm{CnsDL}.  The first possibility give rise to (a).  The second implies that there are types $A_1,\ldots,A_n$ and $\constraints(\Gamma) \types A \subtype c(A_1,\ldots,A_n)$ and $\typings(\Gamma),\,M_i : A_i \mid \constraints(\Gamma) \types_2 \Delta$ for some $i$.  By the soundness of algorithmic type assignment, $\Gamma,\,M_i : A_i \types \Delta$, as required by (i).  In the third, we have for some $i$, $\typings(\Gamma),\,M_i : \okty \mid \constraints(\Gamma) \types_2 \Delta$.  Then (ii) follows by the soundness of algorithmic type assignment.  Finally, when the conclusion is by \rlnm{CnsDL}, there is some type $A'$ which is either an arrow or a sum not including $c$, such that $\Gamma \types A \subtype A'$ and $\Gamma,\,M:A' \types \Delta$.
    \item If $M$ is a match expression $\matchtm{Q}{\mid_{i=1}^k (p_i \mapsto P_i)}$ then we reason as follows.  It follows from the completeness of algorithmic type assignment that $\typings(\Gamma),\,\matchtm{Q}{\mid_{i=1}^k (p_i \mapsto P_i)} : A \mid \constraints(\Gamma) \types_2 \Delta$.  Given the disjointness of the environments, the only rules that could conclude this judgement are \rlnm{VarK2} and \rlnm{MchL2}.  In the former case we obtain (a), so let us assume that $\Delta$ is not of this shape.  In the latter case, we have that there are families of types $B_i$, $A_i$, $A_i'$, $A_x$ and $B_x$, such that (1) $\typings(\Gamma),\,(Q,P_i) : A_i' \mid \constraints(\Gamma) \types_2 \Delta$ for all $i$, (2) $\typings(\Gamma),\,P_i : A_x \mid \constraints(\Gamma) \types_2 x : B_x$ for all $i$ and all $x \in \fv(p_i)$, $\constraints(\Gamma) \types A \subtype A_i$ for all $i$, $\constraints(\Gamma) \types A \subtype A_x$ for all $i$ and all $x \in \fv(p_i)$, $\constraints(\Gamma) \types p_i[B_x/x \mid x \in \fv(p_i)] \subtype B_i$ for all $i$, and $\constraints(\Gamma) \types (B_i,A) \subtype A_i'$ for all $i$.  It follows from the soundness of algorithmic typing that $\Gamma,\,(Q,P_i) : A_i' \types \Delta$ for all $i$, and $\Gamma,\,P_i : A_x \types x:B_x$ for all $i$ and all $x \in \fv(p_i)$.  Then it follows from the previous case that either (1) $A_i'=\okty$ and either $\Gamma,\,Q:\okty \types \Delta$ or $\Gamma,\,P_i:\okty \types \Delta$, or (2) $A_i'$ is an arrow type or (3) there are types $B''_1$ and $B''_2$ such that $\Gamma \types A_i' \subtype (B''_1,B''_2)$ and $\Gamma,\,Q : B''_1 \types \Delta$ or $\Gamma,\,P_i:B''_2 \types \Delta$.  In case (1), it follows from $\rlnm{SubL}$ that either $\Gamma,\,Q : p_i[B_x/x \mid x \in \fv(p_i)] \types \Delta$ or $\Gamma,\,P_i:A \types \Delta$, as required.  Case (2) is impossible by the consistency of $\Gamma$.  In case (3), we have $\Gamma \types (p_i[B_x/x \mid x \in \fv(p_i)],A) \subtype (B_i,A) \subtype A_i' \subtype (B''_1,B''_2)$ and thus, by the closedness of $\Gamma$, we have $\Gamma \types p_i[B_x/x \mid x \in \fv(p_i)] \subtype B''_1$ and $\Gamma \types A \subtype B''_2$.  Then it follows from \rlnm{SubL} that either $\Gamma,\,Q : p_i[B_x/x \mid x \in \fv(p_i)] \types \Delta$ or $\Gamma,\, P_i :A \types \Delta$
    \item \ch{When $M$ is a fixpoint abstraction $\fixtm{f(x)}{P}$, it follows from the completeness of algorithmic type assignment that $\Gamma,\,\fixtm{f(x)}{P} : A \types_2 \Delta$ and then the only applicable rules are \rlnm{VarK2} and \rlnm{FixDL2}.  The former case gives rise to (a).  In the latter case, it is immediate that there is some sum type $A'$ such that $\Gamma \types A \subtype A'$ and $\Gamma,\,M:A' \types \Delta$.}
  \end{itemize}
\end{proof}

\begin{lemma}[Right Inversion]\label{lem:right-inversion}
  Suppose $\Gamma \types M : A$ and $\Gamma$ a consistent variable environment.  Then one of the following is true:
  \begin{itemize}
    \item $M$ is of shape $x$ and $A=\okty$ or there is some $B$ such that $x:B \in \Gamma$ and $\Gamma \types B \subtype A$.
    \item $M$ is of shape $f$ and there are types $B$, $a_1,\ldots,a_n$ and $B_1,\ldots,B_n$, and constraints $D$ such that $f:\forall\bar{a}.\,D \Rightarrow B \in \Gamma$ and $\Gamma \types B[\bar{B}/\bar{a}] \subtype A$
    \item $M$ is of shape $PQ$ and there is a type $B$ such that $\Gamma \types P : B \to A$ and $\Gamma \types N : B$
    \item \ch{$M$ is of shape $\fixtm{f(x)}{P}$ and there are types $B_1$ and $B_2$ such that either: 
    \begin{enumerate}[(i)]
      \item $\Gamma,\,f:B_1 \to B_2,\,x:B_1 \types M : B_2$ and $\Gamma \types B_1 \to B_2 \subtype A$
      \item or, $\Gamma,\,f:B_1 \from B_2,\,M:B_2 \types x:B_1$ and $\Gamma \types B_1 \from B_2 \subtype A$
    \end{enumerate}}
    \item $M$ is of shape $c(M_1,\ldots,M_n)$ and there are types $B_1,\ldots,B_n$ such that $\Gamma \types M_i : B_i$ for each $i$, and $\Gamma \types c(B_1,\ldots,B_n) \subtype A$.
    \item $M$ is of shape $\matchtm{Q}{\mid_{i=1}^k (p_i \mapsto P_i)}$ and there is a family of types $B_x$ (for each $i$ and each $x \in \fv(p_i)$) such that $\Gamma \types Q : \Sigma_{i=1}^k\,p_i[B_x/x \mid x \in \fv(p_i)]$ and $\Gamma \cup \{x:B_x \mid x \in \fv(p_i)\} \types P_i : A$ (for each $i$).
  \end{itemize}
\end{lemma}
\begin{proof}
  By cases on the shape of $M$.
  \begin{itemize}
    \item When $M$ is a local variable $x$ we reason as follows.  By the completeness of algorithmic type assignment, we have $\typings(\Gamma) \mid \constraints(\Gamma) \types_2 x : A$.  Given that $\Gamma$ is consistent, the only rules that can conclude the judgement are \rlnm{Var2} and \rlnm{VarK2}.  In the latter case, $A = \okty$.  In the former case, there is some $B$ such that $x:B \in \typings(\Gamma)$ and $\constraints(\Gamma) \types B \subtype A$.
    \item When $M$ is an identifier $f$, by the completeness of type assignment we have that $\typings(\Gamma) \mid \constraints(\Gamma) \types_2 f : A$.  The only applicable rule was \rlnm{Inst2} and hence there are type variables $\bar{a}$ and types $\bar{B}$ and $B$, and constraints $D$ such that $f:\forall\bar{a}.\,D \Rightarrow B \in \typings(\Gamma)$ and $\constraints(\Gamma) \types D[\bar{B}/\bar{a}]$ and $\constraints(\Gamma) \types B[\bar{B}/\bar{a}] \subtype A$.
    \item When $M$ is an application $PQ$, by the completeness of type assignment, we have that $\typings(\Gamma) \mid \constraints(\Gamma) \types_2 PQ : A$.  The only applicable rule is \rlnm{AppR2} and so there are types $B_1$ and $B$ such that $\typings(\Gamma) \mid \constraints(\Gamma) \types_2 P : B_1$, $\typings(\Gamma) \mid \constraints(\Gamma) \types_2 Q : B$ and $\constraints(\Gamma) \types B_1 \subtype B \to A$.  It follows from the soundness of algorithmic type assignment that $\Gamma \types P : B_1$ and $\Gamma \types Q : B$ and by \rlnm(SubR) also $\Gamma \types P : B \to A$, as required.
    \item \ch{When $M$ is a fixpoint abstraction $\fixtm{f(x)}{P}$, by the completeness of algorithmic type assignment, we have that $\typings(\Gamma) \mid \constraints(\Gamma) \types_2 \fixtm{f(x)}{P} : A$.  Only the rules \rlnm{FixsR2} and \rlnm{FixnR2} could have concluded.  In the former case, we have that there are types $B_1$ and $B_2$ such that $\typings(\Gamma),\,f:B_1 \to B_2,\,x:B_1 \mid \constraints(\Gamma) \types M : B_2$ and $\constraints(\Gamma) \types B_1 \to B_2 \subtype A$, as required by (i).  The latter case is analogous.}
    \item When $M$ is a constructor term $c(M_1,\ldots,M_n)$, by the completeness of algorithmic type assignment, we have that $\typings(\Gamma) \mid \constraints(\Gamma) \types_2 c(M_1,\ldots,M_n) : A$.  The only rule that could conclude this is \rlnm{ConsR2}, and hence there must be types $B_1,\ldots,B_n$ such that $\typings(\Gamma) \mid \constraints(\Gamma) \types_2 M_i : B_i$ for each $i$ and $\constraints(\Gamma) \types c(B_1,\ldots,B_n) \subtype A$.  It follows from the soundness of algorithmic type assignment that $\Gamma \types M_i : B_i$ for each $i$, as required.
    \item When $M$ is of shape $\matchtm{Q}{\mid_{i=1}^k (p_i \mapsto P_i)}$, it follows from the completeness of algorithmic type assignment that $\typings(\Gamma) \mid \constraints(\Gamma) \types \matchtm{Q}{\mid_{i=1}^k (p_i \mapsto P_i)} : A$.  The only applicable rule is \rlnm{MchR} and so there is a type $B$ and families of types $A_i$ (for each $i$) and $B_x$ (for each $i$ and each $x \in \fv(p_i)$) such that $\typings(\Gamma) \mid \constraints(\Gamma) \types_2 Q : B$ and $\typings(\Gamma) \cup \{x:B_x \mid x \in \fv(p_i)\} \mid \constraints(\Gamma) \types_2 P_i : A_i$ (for each $i$) and $\constraints(\Gamma) \types B \subtype \Sigma_{i=1}^k p_i[B_x/x \mid x \in \fv(p_i)]$ and $\constraints(\Gamma) \types A_i \subtype A$ (for all $i$).  It follows from the soundness of algorithmic type assignment that $\Gamma \types Q : B$ and, by \rlnm{SubR}, $\Gamma \types \Sigma_{i=1}^k p_i[B_x/x \mid x \in \fv(p_i)]$.  Similarly, $\Gamma \cup \{x:B_x \mid x \in \fv(p_i)\} \types P_i : A_i$ and, by \rlnm{SubR}, $\Gamma \cup \{x:B_x \mid x \in \fv(p_i)\} \types P_i : A$.
    \item When $M$ is of shape $\fixtm{x}{P}$, it follows from the completeness of algorithmic type assignment that $\typings(\Gamma) \mid \constraints(\Gamma) \types_2 \fixtm{x}{P} : A$.  The only rule that could conclude this judgement is \rlnm{FixR2} and so there is some type $B$ such that $\typings(\Gamma),\,x:A \mid \constraints(\Gamma) \types P : B$ and $\constraints(\Gamma) \types B \subtype A$.  Hence, it follows from soundness of algorithmic type assignment and \rlnm{SubR} that $\Gamma,\,x:A \types M : A$, as required.
  \end{itemize}
\end{proof}

\begin{lemma}\label{lem:subtype-chain-in-closure}
  If $C \types A_0 \subtype A_n$ then there is a sequence of constraints:
  \[
    A_1 \subtype A_2, A_3 \subtype A_4, \ldots{} A_{n-2} \subtype A_{n-1}
  \]
  all of which are in $C$ and such that:
  \begin{enumerate}[(i)]
    \item for all $i \in [0..n/2]$ either $A_{2i+1}$ is $\okty$ or otherwise:
    \begin{enumerate}[(a)] 
      \item $A_{2i}$ is a type variable and $A_i = A_{2i+1}$
      \item or, $A_{2i}$ is a sufficiency arrow and so is $A_{2i+1}$
      \item or, $A_{2i}$ is a necessity arrow and so is $A_{2i+1}$
      \item or, $A_{2i}$ is of shape $\Sigma_{c \in I}\,c(\bar{A})$ and $A_{2i+1}$ is of shape $\Sigma_{d \in J}\,d(\bar{A})$ and $I \subseteq J$.
    \end{enumerate}
  \end{enumerate}
\end{lemma}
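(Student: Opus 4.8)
The plan is to prove the statement by first putting the subtyping derivation into a canonical shape and then reading the required sequence off it. I would begin by \emph{flattening transitivity}: a routine induction on the derivation of $C \types A_0 \subtype A_n$ shows that every derivation yields a chain $A_0 = B_0, B_1, \ldots, B_m = A_n$ in which each link $B_j \subtype B_{j+1}$ is justified by a single application of a rule other than \rlnm{TrS}. Indeed, each rule other than \rlnm{TrS} either is an axiom (\rlnm{IdS}, \rlnm{OkS}) or has premises concerning only the components of the two sides (\rlnm{ToS}, \rlnm{FrS}, \rlnm{SmS}), so such a rule contributes exactly one link, whereas \rlnm{TrS} simply concatenates the two chains obtained from its premises.

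Next I would \emph{classify} the links. Call $B_j \subtype B_{j+1}$ a \emph{$C$-link} if it was concluded by \rlnm{IdS}, so that $B_j \subtype B_{j+1} \in C$, and an \emph{$S$-link} otherwise, i.e. when it was concluded by \rlnm{OkS} (so $B_{j+1} = \okty$) or by one of \rlnm{ToS}, \rlnm{FrS}, \rlnm{SmS} (so $B_j$ and $B_{j+1}$ are respectively both sufficiency arrows, both necessity arrows, or both sum types with index inclusion $I \subseteq J$). In every case an $S$-link already satisfies the shape condition (i) for the pair $(B_j, B_{j+1})$, and this holds unconditionally, without any assumption on $C$.

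Then comes the main work: \emph{merging each maximal run of consecutive $S$-links into a single $S$-link.} Here I would track the top-level shape along such a run. The key observation is that the interior type shared by two adjacent $S$-links pins down the kind: if it is a sufficiency (resp.\ necessity) arrow, then both adjacent links must be \rlnm{ToS} (resp.\ \rlnm{FrS}), since \rlnm{OkS} would force that interior type to be $\okty$; similarly for sums via \rlnm{SmS}, where the index sets compose as $I_0 \subseteq I_1 \subseteq \cdots$, giving $I_0 \subseteq I_{\mathrm{last}}$. Hence a run has a constant kind until, possibly, a single \rlnm{OkS} step sends its right endpoint to $\okty$, after which every subsequent type is $\okty$. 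Either way the two endpoints of the run satisfy condition (i): they share the same kind (cases (b)--(d), the sum case needing only the index inclusion just noted), or the right endpoint is $\okty$. I expect this shape-tracking, together with the bookkeeping of the $\okty$-absorption, to be the only genuinely delicate part of the argument; it is nonetheless a finite case analysis over the four non-transitivity rules.

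Finally I would \emph{enforce the strict alternation and parity} demanded by the indexing. After the merge no two $S$-links are adjacent, but consecutive $C$-links may remain; between any two of these I insert a \emph{reflexive} pair $A \mathrel{\sim} A$, and I pad the two ends with reflexive pairs whenever the chain would otherwise begin or end with a $C$-link. A reflexive pair $(A,A)$ always satisfies condition (i) --- by (a) if $A$ is a variable, by (b)/(c)/(d) (taking $I = J$) if $A$ is an arrow or a sum, and by the $\okty$ clause if $A = \okty$ --- so these inserted pairs are legitimate $S$-pairs, and in particular the construction does not even require $C$ to be closed. Reindexing the resulting alternating chain $S, C, S, C, \ldots, S$ as $A_0, A_1, \ldots, A_n$ then places the $C$-links exactly at the pairs $A_1 \subtype A_2, A_3 \subtype A_4, \ldots, A_{n-2} \subtype A_{n-1}$ (all of which lie in $C$) and the $S$-pairs exactly at the positions $(A_{2i}, A_{2i+1})$, which is precisely the required form.
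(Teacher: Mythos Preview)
Your proposal is correct. The paper takes a more direct route: a single structural induction on the derivation of $C \types A_0 \subtype A_n$, handling \rlnm{IdS} as a length-one sequence (one $C$-constraint) padded by reflexive shape pairs, handling each of \rlnm{ToS}/\rlnm{FrS}/\rlnm{SmS}/\rlnm{OkS} as a length-zero sequence (a single shape pair with no $C$-constraints), and handling \rlnm{TrS} by concatenating the two inductively obtained sequences and merging the two adjacent shape pairs at the join via a case analysis on the shared middle type $A_n = B_0$. Your decomposition into ``flatten transitivity, then merge maximal $S$-runs, then pad with reflexive pairs'' is the same argument unpacked: your merge-runs step carries out exactly the shape-tracking case analysis that the paper does in its \rlnm{TrS} case, and your padding step reproduces the reflexive pairs the paper inserts in its \rlnm{IdS} case. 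Working with a pre-flattened chain buys you the pleasant observation that interior types of an $S$-run are never type variables (no structural rule produces a variable on the right), which prunes one branch of the case analysis; the paper's direct induction is shorter but must also treat a type-variable middle, where clause (a) or the $\okty$ clause is forced.
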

\begin{proof}
  The proof is by induction on the derivation of $C \types A \subtype B$.
  \begin{description}
    \item[\rlnm{IdS}] Clearly this satisfies the conclusion with a sequence of length 1.
    \item[\rlnm{TrS}] It follows from the induction hypotheses that there are sequences
    \[
      A_1 \subtype A_2, A_3 \subtype A_4, \ldots{} A_{n-2} \subtype A_{n-1} \qquad\text{and}\qquad B_1 \subtype B_2, B_3 \subtype B_4, \ldots{} B_{m-2} \subtype B_{m-1}
    \]
    satisfying the appropriate properties.  The middle witness is $A_n = B_0$.  If it is a type variable, then $A_{n-1} = A_n = B_0 = B_1$.  If it is an arrow, then $A_{n-1}$ and $B_1$ are arrows of the same kind too.  If it is a sum $\Sigma_{d \in J}\,d(\ldots)$, then $A_{n-1}$ is a sum $\Sigma_{c \in I}\,c(\ldots)$ with $I \subseteq J$ and $B_1$ is a sum $\Sigma_{k \in K}\,k(\ldots)$ with $J \subseteq K$. If it is $\okty$, then there is no requirement on $A_{n-1}$ but $B_1$ must be $\okty$.  Hence, in each case, $A_{n-1}$ and $B_1$ satisfy the required relationship and thus the witness is just the concatenation of the sequences.
    \item[\rlnm{ToS},\rlnm{FrS},\rlnm{SmS},\rlnm{CnS},\rlnm{OkS}] The conclusion is obtained using a sequence of length 0 because $A_0$ and $A_1$ already satisfy the required relationship.
  \end{description}
\end{proof}

Consistent constraint sets do not derive inconsistencies.
\begin{lemma}\label{lem:consistent-cons-no-incons}
  Suppose $C$ is syntactically consistent.  
  \begin{itemize}
    \item If $C \types A \subtype B$ then $A \subtype B$ is syntactically consistent.
    \item If $C \types A \distype B$ then $A \distype B$ is syntactically consistent.
  \end{itemize}
\end{lemma}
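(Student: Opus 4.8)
The plan is to treat the two bullets with a common template, doing the real work on the subtyping statement and then transcribing it to disjointness. Throughout I read ``$C$ is syntactically consistent'' as: $C$ is \emph{consistent} in the sense of the definition, i.e.\ it is closed and each of its members is a syntactically consistent constraint. For the first bullet I would start from Lemma~\ref{lem:subtype-chain-in-closure}: a derivation of $C \types A \subtype B$ (write $A = A_0$ and $B = A_n$) yields a chain $A_0, A_1, \dots, A_n$ whose \emph{constraint steps} $A_{2i-1} \subtype A_{2i}$ lie in $C$, and are therefore syntactically consistent by hypothesis, and whose \emph{gap steps} $A_{2i} \sim A_{2i+1}$ satisfy clause (i): either the right side is $\okty$, or the two nodes are one and the same type variable, or they agree in top-level shape (both sufficiency arrows, both necessity arrows, or both sums with $\Sigma_{c \in I} \subseteq \Sigma_{d \in J}$). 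The first observation is that each individual step preserves top-level shape; the whole content of the result is that this survives composition along the chain.

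The proof then proceeds by induction on the chain length $n$, and its sole purpose is to excise the two kinds of node at which the shape could otherwise jump: internal type variables and internal occurrences of $\okty$. A preliminary remark from clause (i) is that a gap can carry a type variable on either side only when \emph{both} sides are that same variable; hence an internal type variable $a$ is flanked by a constraint $\cdots \subtype a$ and, across the copying gap, by a constraint $a \subtype \cdots$, both in $C$. Closedness of $C$ then delivers the shortcut $A_{2i-1} \subtype A_{2i+2} \in C$, which splices out the pivot and shortens the chain, so the induction hypothesis applies. An internal $\okty$ is handled the same way: a gap out of $\okty$ forces its successor to be $\okty$, and $\okty \subtype A_{2i+2} \in C$ being syntactically consistent forces $A_{2i+2}$ to be a type variable or $\okty$, so $\okty$ simply propagates to the endpoint $A_n$ (making $A_n = \okty$, already fine for $A_0 \subtype A_n$) or meets a type variable already covered above. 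In the terminal case no internal pivots survive: every node has shape among sufficiency arrow, necessity arrow, and sum, and both gaps and constraints now preserve shape exactly (for sums the index inclusions compose to $I_0 \subseteq I_n$), so $A_0 \subtype A_n$ is syntactically consistent. The base case ($A_0 \sim A_n$ by a single gap) is immediate from clause (i).

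I expect the pivot elimination to be the one genuine obstacle, and it is exactly where closedness is essential rather than cosmetic: absent it, a chain $B \to B \;\subtype\; a \;\subtype\; B' \from B''$ would be a perfectly valid composition of syntactically consistent links whose endpoints $B \to B \subtype B' \from B''$ are \emph{not} syntactically consistent, so Lemma~\ref{lem:subtype-chain-in-closure} on its own cannot close the argument. Closedness is what rules such chains out, by insisting that the shortcut $B \to B \subtype B' \from B''$ already sit in $C$ and hence already be syntactically consistent — which it is not — so no consistent $C$ admits the offending chain. This is the crux I would write out carefully.

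For the second bullet I would mirror the argument for disjointness derivations: using the inversion/chain characterisation of $C \types A \distype B$ (dual to Lemma~\ref{lem:subtype-chain-in-closure}) together with the same two facts — that the atomic disjointness facts drawn from $C$ are syntactically consistent and that $C$ is closed — one tracks top-level shapes and contracts type-variable and $\okty$ pivots exactly as above. The only new ingredient is the dual reading of ``syntactically consistent'' for a $\distype$ constraint; the combinatorics are identical, so the disjointness case should be routine once the subtype case is in place.
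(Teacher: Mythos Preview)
Your argument for the first bullet is correct and matches the paper's approach in substance: both rely on Lemma~\ref{lem:subtype-chain-in-closure} to obtain the chain, and both use closedness of $C$ to collapse the type-variable pivots where a shape jump could occur. The paper frames this as a direct contradiction (if $A \subtype B$ were inconsistent, locate the maximal run of type-variable links, observe that closedness places the inconsistent endpoint constraint into $C$), whereas you frame it as a forward induction on chain length that splices out pivots one at a time. Your treatment of the $\okty$ case is more explicit than the paper's, which simply folds it into the same ``find the bad subsequence'' claim without comment; this is a point in your favour for clarity, though not a substantive difference.

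For the second bullet your plan diverges from the paper. You posit a chain characterisation of $C \types A \distype B$ dual to Lemma~\ref{lem:subtype-chain-in-closure} and propose to rerun the pivot-elimination argument on it. No such lemma is available, and the paper does not attempt one. Instead, the paper proceeds by induction on the \emph{derivation} of $C \types A \distype B$, dispatching the axiomatic cases \rlnm{ConD}, \rlnm{ToD}, \rlnm{FromD} immediately, handling \rlnm{RfId} by the hypothesis on $C$, and \rlnm{SymD} by symmetry of the consistency condition. Only in the \rlnm{SubD} case does the chain-lemma machinery reappear: there the paper extends Lemma~\ref{lem:subtype-chain-in-closure} to produce two subtype chains (one per side of the disjointness) and argues, as in the first bullet, that an inconsistent conclusion would force an inconsistent constraint into the closure. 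Your approach could be made to work if you first proved the dual chain lemma you invoke, but that is additional work the paper avoids by inducting on the derivation directly; as written, your second-bullet sketch rests on a lemma that has not been established.
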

\begin{proof}
  \begin{itemize}
    \item If $C \types A \subtype B$, then by the forgoing lemma there is a sequence of inequalities in $C$ satisfying the required conditions:
    \[
      A_1 \subtype A_2, A_3 \subtype A_4, \ldots{} A_{n-2} \subtype A_{n-1}
    \]  
    Suppose for the purpose of obtaining a contradiction that $A \subtype B$ is not syntactically consistent.  Then there is a consecutive subsequence of the above, of shape:
    \[
      B_{i} \subtype a_1, a_1 \subtype a_2, a_2 \subtype a_3 \ldots, a_k \subtype B_{2k+1}
    \]
    (i.e. intermediate types are all type variables) of constraints in $C$ in which $B_i \subtype B_{2k+1}$ is already not syntactically consistent.  However, it follows that $B_i \subtype B_{2k+1}$ is contained in the closure of $C$, which is assumed consistent.
    \item The proof is by induction on the derivation of $C \types A \distype B$.  In cases \rlnm{ConD}, \rlnm{ToD} and \rlnm{FromD} the result is immediate.  In case \rlnm{RfId} the result follows from the assumption.  In case \rlnm{SymD} the result follows from the induction hypothesis since the characterisation of syntactic consistency is symmetrical.  Finally, in case \rlnm{SubD} we can extend Lemma~\ref{lem:subtype-chain-in-closure} to show that there are two sequences of subtype constraints in $C$ each of which gives rise to one side of the disjointness constraint.  Then it follows similarly, that if a disjointness constraint is concluded that is not syntactically consistent, then already there would be a non-syntactically consistent constraint in the closure.
  \end{itemize}
\end{proof}

\begin{lemma}\label{lem:values-have-correct-types-on-right}
  Suppose $\constraints(\Gamma)$ is closed and syntactically consistent. If $\Gamma \types V : A$ with $V$ a closed value and $A$ not a type variable, then either:
  \begin{itemize}
    \item $A$ is a sum type containing a type of shape $c(A_1,\ldots,A_n)$ and $V$ a constructor term of shape $c(W_1,\ldots,W_n)$ and $\Gamma \types W_i : A_i$ for each $i$,
    \item \ch{or, $A$ is an arrow and $V$ is a fixpoint abstraction}
  \end{itemize}
\end{lemma}
\begin{proof}
  We proceed by cases on the shape of $V$.  
  \begin{itemize}
    \item If $V$ is of shape $c(W_1,\ldots,W_n)$ then, by inversion, there are types $B_1,\ldots,B_n$ such that $\Gamma \types M_i : B_i$ for each $i$, and $\Gamma \types c(B_1,\ldots,B_n) \subtype A$.  Therefore, it follows from consistency of subtyping (Lemma~\ref{lem:consistent-cons-no-incons}) that either $A$ is a sum type including a $c$-headed type or $A$ is a type variable.
    \item If $V$ is of shape $\fixtm{f(x)}{M}$ then, by inversion, there are types $B_1$ and $B_2$ such that either $\Gamma \types B_1 \to B_2 \subtype A$ or $\Gamma \types B_1 \from B_2 \subtype A$.
    By consistency of subtyping, in both cases, $A$ is either a type variable or an arrow.
  \end{itemize}
\end{proof}

\subsubsection{Proof of Theorem~\ref{thm:progress} (Progress)}
\begin{proof}
  The first is by induction on the derivation.  Due to the form of the judgement, every subject in $\Gamma$ can make a step, so we focus on the principal formula in each derivation step and exclude cases that cannot occur.
  \begin{description}
    \item[\rlnm{GVar}] in this case $M$ is a top-level identifier, so $M$ can make a step.
    \item[\rlnm{SubR}] It follows from the induction hypothesis that $M$ can make a step, or $M$ is a value.
    \item[\rlnm{FixsR}] It is immediate that $M$ is a value.
    \item[\rlnm{FixnR}] It is immediate that $M$ is a value.
    \item[\rlnm{AppR}] In this case, $M$ has shape $PQ$ and it follows from the induction hypothesis that either $P$ can make a step or is a value and $Q$ can make a step or is a value.  If $P$ can make a step, then $M$ can make a step, so assume to the contrary.  Then $P$ is a value.  It follows that from Lemma~\ref{lem:values-have-correct-types-on-right} that $P$ must be an abstraction.  Now, if $Q$ can make a step, then $M$ can make a step and if, not, then it is a value and $M$ can make a step by contracting the redex.
    \item[\rlnm{CnsR}] In this case, $M$ is of shape $c(P_1,\ldots,P_n)$ and it follows from the induction hypothesis that, for each $i$, $P_i$ can either make a step or is a value.  From any configuration it follows that $M$ can either make a step or is a value.
    \item[\rlnm{MchR}] In this case, $M$ is of shape $\matchtm{Q}{\mid^k_{i=1} (p_i \mapsto P_i)}$.  It follows from the induction hypothesis that $Q$ can either make a step or is a value.  If $Q$ can make a step, then $M$ can make a step.  Otherwise, $Q$ is a value of a sum type and it follows from Lemma~\ref{lem:values-have-correct-types-on-right} that $Q$ must be a constructor term $c(W_1,\ldots,W_n)$ and there is a $c$-headed type in the sum.  Hence, there is a pattern $p_i$ of shape $c(x_1,\ldots,x_n)$ and so $M$ can make a step.
  \end{description}
  The second is also by induction on the derivation.  Similar remarks apply.  Note that, if $M$ is stuck, then it follows from the previous result that $\Gamma \not\types M : A$ for any $A$.
  \begin{description}
    \item[\rlnm{SubL}] In this case we assume $\Gamma \types A \subtype B$.  It follows from the induction hypothesis that either $M$ makes a step or $\Gamma \not\types M : B$.  In the former case the result is immediate, in the latter case, we must have $\Gamma \not\types M : A$ too by the contrapositive of \rlnm{SubR}.  
    \item[\rlnm{AppL}] In this case, $M$ is of shape $PQ$.  Since $M$ is not a value, either $M$ can make a step or $M$ is stuck and hence the desired conclusion follows in both cases.
    \item[\rlnm{CnsL}] In this case, $M$ is of shape $c(M_1,\ldots,M_n)$ and $A$ is of shape $c(A_1,\ldots,A_n)$.  It follows from the induction hypothesis that either $M_i$ can make a step or $\Gamma \not\types M_i : A_i$.  In the former case, either $M$ can make a step, or $M$ is stuck.  In either case, the result follows.  If $\Gamma \not\types M_i : A_i$ then we have that $\Gamma \not\types M : A$.  To see this, suppose the contrary.  Then it follows from inversion that there are types $B_1,\ldots,B_n$ such that $\Gamma \types M_i:B_i$ and $\Gamma \types c(B_1,\ldots,B_n) \subtype c(A_1,\ldots,A_n)$.  Since $\constraints(\Gamma)$ is closed, it follows that $\Gamma \types B_i \subtype A_i$ and so $\Gamma \types M_i:A_i$ follows by \rlnm{SubR}, which contradicts our supposition.
    \item[\rlnm{MchL}] In this case, $M$ is of shape $\matchtm{Q}{\mid^k_{i=1} (p_i \mapsto P_i)}$.  Since $M$ is not a value, either $M$ can make a step or is stuck.  The desired conclusion follows in both cases.
    \item[\rlnm{CnsK}] In this case, $M$ is of shape $c(M_1,\ldots,M_n)$ and $A$ is $\okty$.  It follows from the induction hypothesis that there is some $M_i$ that can either make a step or for which $\Gamma \not\types M_i : \okty$.  In the former case, either $M$ can make a step or $M$ is stuck and the result follows.  In the latter case, since all values are well-typed, it follows that $M_i$ is stuck and so the result follows.
    \item[\rlnm{FunK}] In this case, $M$ is of shape $M\,N$ and $A$ is $\okty$.  Since $M$ is not a value, it can either make a step or is stuck and in both cases the desired conclusion follows.
    \item[\rlnm{CnsDL}, \rlnm{FixDL}] In these cases, suppose $\Gamma \types M : A$, and then we obtain a contradiction from inversion and subtype consistency.
  \end{description}
\end{proof}

\begin{lemma}[Substitution on the right]\label{lem:subst-right}
  Assume $\Gamma$ is a consistent variable environment, disjoint from $\Delta$ and that does not contain $x$.
  \begin{enumerate}[(i)]
    \item If $\Gamma,\,x:B \types M : A$ and $\Gamma \types N : B$, then $\Gamma \types M[N/x] : A$.
    \item If $\Gamma,\,x:B,\,M:A \types \Delta$ and $\Gamma \types N : B$, then $\Gamma,\, M[N/x] : A \types \Delta$.
  \end{enumerate}
\end{lemma}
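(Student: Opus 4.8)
The plan is to prove (i) and (ii) \emph{simultaneously}, by induction on the height of the derivation, since in any system with the necessity arrow the two statements are mutually dependent. The dependency is visible already in two rules. To re-type an abstraction on the right that was introduced by \rlnm{AbnR}, whose conclusion $\Gamma,x:B \types \abs{y}{P} : B'' \from A''$ rests on the \emph{left} premise $\Gamma,x:B,\,P:A'' \types y:B''$, I would appeal to part (ii) on that shorter derivation; conversely, to re-type an application on the left introduced by \rlnm{AppL}, whose conclusion $\Gamma,x:B,\,(P_1\,P_2):A \types \Delta$ rests on the \emph{right} premise $\Gamma,x:B \types P_1 : C \from A$, I would appeal to part (i). In every such cross-call the premise has a strictly smaller derivation, so a single induction on derivation height suffices. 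Throughout I would use the variable convention to assume each bound variable is distinct from $x$ and not free in $N$, so that substitution commutes with the binders (e.g.\ $(\abs{y}{P})[N/x] = \abs{y}{P[N/x]}$) and the side conditions on \rlnm{AbsR}, \rlnm{AbnR}, \rlnm{MchL}, \rlnm{MchR} and \rlnm{FixR} continue to hold after substitution.

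For the structural cases I would proceed by the last rule of the derivation. Each right or left rule has premises (if any) of one of the two 1S shapes, so the appropriate part of the induction hypothesis applies to each premise (with the same $N$ and $\Gamma \types N : B$), and I would reassemble the conclusion with the very same rule, observing that the side constraints $\Gamma \types \cdots \subtype \cdots$ are unaffected by discarding $x:B$ from the environment, since subtyping depends only on $\constraints(\Gamma)$ and $x$ is a term variable. The disjointness axioms \rlnm{CnsDL} and \rlnm{AbsDL} are handled directly, being premise-free. The base cases are the variable and identifier axioms: if the subject is a variable $y \neq x$ or a top-level identifier, substitution leaves it unchanged and I re-apply \rlnm{LVar}, \rlnm{GVar} or \rlnm{VarK}; if the subject is $x$ itself in part (i), then $x[N/x]=N$ and Right Inversion (Lemma~\ref{lem:right-inversion}) gives either $A=\okty$ or $\Gamma \types B \subtype A$, whence $\Gamma \types N : A$ follows from $\Gamma \types N : B$ by \rlnm{SubR} (using \rlnm{OkS} in the former case).

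The step I expect to be the main obstacle is part (ii) when the distinguished subject is the substituted variable $x$ itself, i.e.\ a provable $\Gamma,x:B,\,x:A \types \Delta$ from which I must produce $\Gamma,\,N:A \types \Delta$ given only $\Gamma \types N : B$. Here $N$ need not be refutable at $A$, so one cannot hope to re-derive a genuine refutation; the point is rather that a variable subject can never be \emph{principal} in a left rule, so any derivation of $\Gamma,x:B,\,x:A \types \Delta$ must in fact conclude $\Delta$ either by \rlnm{VarK} (so $\Delta$ is some $z:\okty$) or from a variable typing already present in $\Gamma$, in both cases \emph{without using} the formula $x:A$. Consequently the same derivation, with the inert left formula $x:A$ replaced by the equally inert $N:A$, still establishes $\Delta$; formally I would read this off the variable clause of Left Inversion (Lemma~\ref{lem:left-inversion}), noting that the disjointness of $\Gamma,x:B$ from $\Delta$ excludes the subcase in which the subject reappears on the right. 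The remaining work is bookkeeping: verifying that the consistency and disjointness invariants of the 1S discipline survive each step, so that the inversion lemmas stay applicable to the premises, and that the freshness assumptions prevent variable capture.
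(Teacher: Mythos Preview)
Your proposal captures the essential content correctly—the mutual dependency of (i) and (ii) through \rlnm{AbnR} and \rlnm{AppL}, and the handling of the $M=x$ case in (ii) via disjointness—but the organisation differs from the paper's and this difference is not merely cosmetic. The paper proceeds by induction on the \emph{term} $M$, not on the derivation: for each syntactic shape of $M$ it invokes the Left or Right Inversion lemma to extract canonical premises, applies the induction hypothesis to the strict subterms appearing in those premises, and reassembles. Because the inversion lemmas are stated with the original $\Gamma$ on the left (they absorb \rlnm{SubL}/\rlnm{SubR} via the algorithmic system), the environment never changes shape between conclusion and premises, and so the side hypothesis $\Gamma \types N : B$ carries over unchanged.

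Your induction-on-derivation approach is not wrong in spirit, but your claim that the induction hypothesis applies to each premise ``with the same $N$ and $\Gamma \types N : B$'' has a gap. If the last rule is \rlnm{SubL} applied to a variable typing $y{:}C$ that lives in $\Gamma$ (not to $x{:}B$ or to the distinguished $M{:}A$), then the premise carries a modified environment $\Gamma'$ in which $y$ has a supertype $C'$, and you would need $\Gamma' \types N : B$ to invoke the hypothesis—yet this does not follow from $\Gamma \types N : B$, since $\Gamma'$ is a strictly weaker assumption set on $y$. (The analogous case where \rlnm{SubL} targets $x{:}B$ is fine, because $\Gamma \types N : B'$ does follow from $\Gamma \types N : B$ by \rlnm{SubR}; but you should say so rather than assert the hypothesis is literally unchanged.) The paper sidesteps this entirely by inducting on $M$ and letting inversion do the work of normalising away subsumption; if you prefer the derivation-height route you would need either a narrowing lemma or to pass to the algorithmic presentation first.
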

\begin{proof}
The proof is by induction on $M$.  Note that, in (ii), it is always possible that $\Delta$ is of shape $z:\okty$ for some variable $z$.  Then the conclusion follows immediately by \rlnm{VarK}, so we exclude this from the case analysis below.
\begin{itemize}
  \item When $M$ is $x$, we reason as follows.  In (i), it follows by inversion that $\Gamma \types B \subtype A$. Since $M[N/x] = N$, we have $\Gamma \types M[N/x] : B$ and by \rlnm{SubR}, therefore $\Gamma \types M[N/x] : A$.  In (ii) it follows from inversion that $\Delta$ must contain $y:\okty$ for some variable $y$, which we have dealt with above.
  \item When $M$ is some variable $y$ distinct from $x$, we reason as follows.  In (i), it follows from the assumptions and inversion that $y:A' \in \Gamma$ with $\Gamma \types A' \subtype A$.  Since $M[N/x] = M = y$, we have by \rlnm{LVar} that $\Gamma \types M[N/x] : A'$ and the conclusion follows from \rlnm{SubR}.  In (ii), $\Delta$ is $y:A'$ with $A \subtype A'$.  Then the conclusion follows from \rlnm{LVar} and \rlnm{SubL}.
  \item When $M$ is an application $P\,Q$, we reason as follows.  In (i), by inversion we have that there is some $B'$ such that $\Gamma,\,x:B \types P : B' \to A$ and $\Gamma,\,x:B \types Q : B'$.  It follows from the induction hypotheses that $\Gamma \types P[N/x] : B' \to A$ and $\Gamma \types Q[N/x] : B'$ and so the conclusion follows from \rlnm{AppR}.  In (ii), inversion gives us that either: (1) there is a type $B'$ and $\Gamma,\,x:B \types P : B' \from A$ and $\Gamma,\,x:B,\,Q:B' \types \Delta$, or (2) $\Gamma,\,x:B,\,P:\okty \from A \types \Delta$.  Suppose the former, then it follows from the induction hypothesis part (i) that $\Gamma \types P[N/x] : B' \from A$ and from part (ii) that $\Gamma,\,Q[N/x] :B' \types \Delta$.  Hence the conclusion follows from \rlnm{AppL}.  Suppose the latter, then it follows from the induction hypothesis part (ii) that $\Gamma,\,P[N/x] : \okty \from A$, and the conclusion follows from \rlnm{FunK}.  
  \item \ch{When $M$ is a fixpoint abstraction $\fixtm{f(y)}{P}$, we may assume by the variable convention that $y$ and $f$ do not occur outside of $P$.  In (i), by inversion we have that there are types $B_1$ and $B_2$ such that either (1) $\Gamma,\,x:B,\,f:B_1 \to B_2,\,y:B_1 \types P:B_2$ and $\Gamma \types B_1 \to B_2 \subtype A$ or (2) $\Gamma,\,x:B,\,f:B_1 \from B_2,\,P:B_2 \types y:B_1$ and $\Gamma \types B_1 \from B_2 \subtype A$.  In case (1), it follows from the induction hypothesis part (i) that $\Gamma,\,y:B_1 \types P[N/x] : B_2$ and thus the conclusion follows from \rlnm{FixsR} and \rlnm{SubR}.  In case (2) it follows from the induction hypothesis part (ii) that $\Gamma,\,P[N/x]:B_2 \types y:B_1$ and hence the conclusion follows from \rlnm{FixnR} and \rlnm{SubR}.  In (ii), by inversion we have that there is some arrow type or sum type $A'$ and $\Gamma \types A \subtype A'$ and $\Gamma,\,x:B,\,\fixtm{f(y)}{P} : A' \types \Delta$.  In either case, the conclusion follows from \rlnm{FixDL} and \rlnm{SubL}.}
  \item When $M$ is a constructor term $c(P_1,\ldots,P_n)$, we reason as follows.  In (i) it follows from inversion that there are types $B_1,\ldots,B_n$ and $\Gamma,\,x:B \types P_i:B_i$ for each $i$ and $\Gamma \types c(B_1,\ldots,B_n) \subtype A$.  It follows from the induction hypothesis that $\Gamma \types c(P_1[N/x],\ldots,P_n[N/x]) : c(B_1,\ldots,B_n)$ and then the conclusion follows from \rlnm{SubR}.  In (ii), it follows from inversion that either (1) there are types $B_1,\ldots,B_n$ and $i$ such that $\Gamma \types A \subtype c(B_1,\ldots,B_n)$ and $\Gamma,\,x:B,\,P_i:B_i \types \Delta$, or (2) there is an arrow type $A'$ such that $\Gamma \types A \subtype A'$ and $\Gamma,\,x:B,\,c(P_1,\ldots,P_n) : A' \types \Delta$.  In the former case, the conclusion follows from the induction hypothesis and \rlnm{SubL}, in the latter it follows from \rlnm{CnsDL} and \rlnm{SubL}.
  \item When $M$ is a match expression $\matchtm{Q}{\mid_{i=1}^k (p_i \mapsto P_i)}$, we reason as follows.  In (i), by inversion we have a family of types $B_x$ (for each $i$ and $x\in\fv(p_i)$) such that $\Gamma,\,x:B \types Q : \Sigma_{i=1}^k\,p_i[B_x/x\mid x \in \fv(p_i)]$ and $\Gamma,\,x:B \cup \{x:B_x \mid x \in \fv(p_i)\} \types P_i : A$ for each $i$.  The conclusion follows from the induction hypothesis, part (i).  In (ii), by inversion we have a family of types $B_y$ (for each pattern-bound variable $y$) such that $\Gamma,\,x:B,\,P_i:A \types y:B_y$ and either $\Gamma,\,Q : p_i[B_y/y \mid y\in\fv(p_i)] \types \Delta$ or $\Gamma,\,P_i : A \types \Delta$.  In both cases, it follows from the induction hypothesis and \rlnm{CnsL} that $\Gamma,\,(Q[N/x],P_i[N/x]) : (p_i[B_y/y \mid y\in\fv(p_i)],A) \types \Delta$.  Then we can obtain the conclusion by applying the induction hypothesis part (ii) to each of the former judgements and concluding by \rlnm{MchL}.
\end{itemize}
\end{proof}

\begin{lemma}[Trivial Substitution]\label{lem:trivial-subst}
  Assume $\Gamma$ is a consistent variable environment disjoint from $\Delta$ and neither contain $x$.  Assume $\Gamma \types V : B$ is a closed value.  Then: 
  \begin{enumerate}[(i)]
    \item if $\Gamma \types M : A$ then $\Gamma \types M[V/x] : A$
    \item if $\Gamma,\,M:A \types \Delta$ then $\Gamma,\,M[V/x] : A \types \Delta$
  \end{enumerate}
\end{lemma}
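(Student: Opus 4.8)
The plan is to prove both parts simultaneously by induction on the structure of the subject $M$, closely mirroring the proof of Lemma~\ref{lem:subst-right}. The essential simplification here is that $x$ occurs in neither $\Gamma$ nor $\Delta$, so there is no hypothesis $x:B$ in the left environment that a substituted term must discharge; the content of the lemma is merely that a free variable carrying no typing assumption may be replaced by any closed value. As in Lemma~\ref{lem:subst-right}, I would first dispose of the case in part (ii) where $\Delta$ has shape $z:\okty$ for some variable $z$: there the conclusion $\Gamma,\,M[V/x]:A \types z:\okty$ is immediate from \rlnm{VarK}, so this case can be excluded from the subsequent analysis.

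For the base cases, consider $M = x$. In part (i), since $x \notin \Gamma$, Right Inversion (Lemma~\ref{lem:right-inversion}) forces $A = \okty$; and since $M[V/x] = V$ is a closed value, it is typable on the right with $\okty$ (by \rlnm{CnsR}, or by \rlnm{AbnR} and \rlnm{VarK}, followed in each case by \rlnm{SubR} using \rlnm{OkS}), giving $\Gamma \types V : \okty = A$. In part (ii) we have $\Gamma,\,x:A \types \Delta$; by Left Inversion (Lemma~\ref{lem:left-inversion}), since $\Delta$ cannot contain $x$, the only possibility is $\Delta$ of shape $z:\okty$, which has already been handled. When $M$ is any variable $y$ distinct from $x$, or a top-level identifier $\mathsf{f}$, the substitution acts as the identity, $M[V/x] = M$, so both parts hold by assumption.

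For the compound cases $M = P\,Q$, $M = \abs{y}{P}$, $M = c(P_1,\ldots,P_n)$, $M = \matchtm{Q}{\mid_{i=1}^k (p_i \mapsto P_i)}$ and $M = \fixtm{y}{P}$, I would proceed exactly as in Lemma~\ref{lem:subst-right}: apply the appropriate inversion lemma to decompose the judgement, invoke the induction hypotheses on the immediate subterms, and reassemble using the corresponding typing rule (\rlnm{AppR}/\rlnm{AppL}/\rlnm{FunK}, \rlnm{AbsR}/\rlnm{AbnR}/\rlnm{AbsDL}, \rlnm{CnsR}/\rlnm{CnsL}/\rlnm{CnsK}/\rlnm{CnsDL}, \rlnm{MchR}/\rlnm{MchL}, \rlnm{FixR}), possibly composed with \rlnm{SubR} or \rlnm{SubL}. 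When a subterm is typed under an environment extended by a fresh bound variable (e.g. $\Gamma,\,y:B_1 \types P : B_2$ in the abstraction case), the side conditions of the induction hypothesis remain satisfied: the bound variable is distinct from $x$ by the variable convention, so $x$ still does not occur in the extended environment, and the hypothesis $\Gamma \types V : B$ weakens to the extended environment because $V$ is closed and the fresh binder occurs nowhere in its derivation.

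The main obstacle is concentrated in the base case $M = x$ of part (i): its soundness rests on the two structural facts that (a) a variable absent from $\Gamma$ can only be typed on the right at $\okty$, and (b) every closed value is typable on the right at $\okty$. Given these, the remainder is a routine structural recursion essentially isomorphic to Lemma~\ref{lem:subst-right}; the only bookkeeping to check carefully is the weakening of $\Gamma \types V : B$ across the freshly introduced binders in the abstraction, match and fixpoint cases, and the upfront treatment of the $\Delta = z:\okty$ case, which keeps every intermediate judgement single-subject.
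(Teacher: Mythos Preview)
The proposal is correct and follows essentially the same approach as the paper: induction on the structure of $M$, with the $\Delta = z{:}\okty$ case disposed of up front by \rlnm{VarK}, the crucial $M=x$ case in part (i) handled via inversion (forcing $A=\okty$) together with the fact that closed values are always typable at $\okty$, and the compound cases discharged by inversion plus the induction hypotheses and reassembly via the relevant rules. The paper additionally short-circuits by noting that whenever $x \notin \fv(M)$ the conclusion is immediate, whereas you treat the distinct-variable base case explicitly; and you are slightly more careful than the paper in flagging the weakening of $\Gamma \types V:B$ across fresh binders, but this is a presentational rather than a substantive difference.
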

\begin{proof}
  The proof is by induction on $M$.  Note, in case (ii), there is always the possibility that $\Delta$ has shape $z:\okty$ for some variable $z$ and then the conclusion follows immediately from \rlnm{VarK}.  Moreover, if $M$ does not contain $x$, then the conclusion is immediate.  Hence, we will exclude these cases from consideration below.
  \begin{itemize}
    \item If $M$ is $x$, we reason as follows.  In (i), by inversion, it must be that $A=\okty$ and the conclusion follows because all closed values are well-typed.  In (ii), by inversion, it must be that $\Delta$ is of shape $z:\okty$ as above.
    \item If $M$ is of shape $PQ$ we reason as follows.  In (i) by inversion it must be that there is some type $B$ such that $\Gamma \types P : B \to A$ and $\Gamma \types Q : B$.  The conclusion follows from the induction hypotheses, part (i).  In (ii) by inversion it must be that either (1) there is a type $B$ such that $\Gamma \types P : B \from A$ and $\Gamma,\,Q:B \types \Delta$, or (2) $\Gamma,\,P:\okty \from A \types \Delta$.  In the former case, the result follows from the induction hypotheses and \rlnm{AppL}.  In the latter case, the result follows from the induction hypothesis and \rlnm{FunK}.
    \item If $M$ is of shape $c(P_1,\ldots,P_n)$ we reason as follows.  In (i), by inversion, it must be that there are types $B_1,\ldots,B_n$ and $\Gamma \types c(B_1,\ldots,B_n) \subtype A$ and $\Gamma \types P_i : B_i$ for each $i$.  The result follows from the induction hypothesis, \rlnm{CnsR} and \rlnm{SubR}.  In (ii), by inversion, it follows that either (1) there are types $B_1,\ldots,B_n$ and $i$ such that $\Gamma \types A \subtype c(B_1,\ldots,B_n)$ and $\Gamma,\,P_i:B_i \types \Delta$, or (2) there is an arrow type $A'$ such that $\Gamma \types A \subtype A'$ and $\Gamma,\,c(P_1,\ldots,P_n) : A' \types \Delta$.  In the former case, the result follows from the induction hypotheses, \rlnm{CnsL} and \rlnm{SubL}.  In the latter case, the result follows from \rlnm{CnsDL}.  
    \item \ch{If $M$ is of shape $\fixtm{f(y)}{P}$, we may assume that $y$ and $f$ do not occur outside of $P$.  In (i), by inversion, it must be that there are types $B_1$ and $B_2$ such that, either: (1) $\Gamma,\,f:B_1 \to B_2,\,y:B_1 \types P:B_2$ and $\Gamma \types B_1 \to B_2 \subtype A$ or (2) $\Gamma,\,f:B_1 \from B_2,\,P:B_2 \types y:B_1$ and $\Gamma \types B_1 \from B_2 \subtype A$.  In (1), the result follows from the induction hypothesis, \rlnm{FixsR} and \rlnm{SubR}.  In (2), the result follows from the induction hypothesis, \rlnm{FixnR} and \rlnm{SubR}.  In (ii), by inversion, it must be that there is some sum type $A'$ such that $\Gamma \types A \subtype A'$ and $\Gamma \types \fixtm{f(y)}{P} : A' \types \Delta$.  Then the result follows from \rlnm{FixDL}.}
    \item If $M$ is of shape $\matchtm{Q}{\mid_{i=1}^k (p_i \mapsto P_i)}$, we reason as follows.  In (i), it follows from inversion that there is a family of types $B_y$ (for each pattern-bound variable $y$) and $\Gamma \types Q : \Sigma_{i=1}^k p_i[B_y/y \mid x\in\fv(p_i)]$ and $\Gamma \cup \{y:B_y \mid y\in\fv(p_i)\} \types P_i : A$ for each $i$.  Then the conclusion follows from the induction hypothesis and \rlnm{MchR}.  In (ii), by inversion we have a family of types $B_y$ (one for each pattern-bound variable $y$) such that $\Gamma,\,P_i:A \types y:B_y$ (for each $i$ and each $y\in\fv(p_i)$) and either $\Gamma,\,Q : p_i[B_y/y \mid y\in\fv(p_i)] \types \Delta$ or $\Gamma,\,P_i : A \types \Delta$.  In both cases, it follows from the induction hypothesis and \rlnm{CnsL} that $\Gamma,\,(Q[V/x],P_i[V/x]) : (p_i[B_y/y \mid y\in\fv(p_i)],A) \types \Delta$.  Then the conclusion follows from the induction hypothesis and \rlnm{MchL}.
  \end{itemize}
\end{proof}

\begin{lemma}[Substitution on the Left]\label{lem:subst-left}
  Assume $\Gamma$ is a consistent identifier environment disjoint from $\Delta$ and neither contains $x$.  If $\Gamma,\,M:A \types x:B$ and $\Gamma,\,V:B \types \Delta$ then $\Gamma,\,M[V/x] : A \types \Delta$.
\end{lemma}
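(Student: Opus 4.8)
The plan is to argue by induction on the structure of the left subject $M$ (equivalently, on the derivation of the first hypothesis $\Gamma,\,M:A \types x:B$), inverting that hypothesis with Left Inversion (Lemma~\ref{lem:left-inversion}) and then rebuilding the conclusion $\Gamma,\,M[V/x]:A \types \Delta$ by re-applying the corresponding left rule after performing the substitution. Throughout I would use that $V$ is a closed value, hence typable on the right ($\Gamma \types V:\okty$); this licenses appeals to Trivial Substitution (Lemma~\ref{lem:trivial-subst}) on any auxiliary premise whose principal subject on the right is some variable other than $x$. The only induction hypothesis that matters is the one for a proper subterm $M'$ of $M$ with the same ambient data $x$, $B$, $V$, $\Gamma$, $\Delta$: from $\Gamma,\,M':A' \types x:B$ and $\Gamma,\,V:B \types \Delta$ we may conclude $\Gamma,\,M'[V/x]:A' \types \Delta$, where the type $A'$ is allowed to vary.

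The base case $M = x$ gives $M[V/x]=V$, and inversion yields $\Gamma \types A \subtype B$ (or $B=\okty$, whence $A \subtype \okty$ holds by \rlnm{OkS}); applying \rlnm{SubL} to the second hypothesis $\Gamma,\,V:B \types \Delta$ then produces $\Gamma,\,V:A \types \Delta$, so $M=x$ is settled entirely by \rlnm{SubL}. The compound cases follow a uniform recipe. For $M=P\,Q$, inversion gives either (i) a type $B'$ with $\Gamma \types P:B' \from A$ and $\Gamma,\,Q:B' \types x:B$, or (ii) $\Gamma,\,P:\okty \from A \types x:B$. In (i) I substitute into the function premise by Trivial Substitution ($\Gamma \types P[V/x]:B' \from A$) and into the argument premise by the induction hypothesis ($\Gamma,\,Q[V/x]:B' \types \Delta$), closing with \rlnm{AppL}; in (ii) the induction hypothesis gives $\Gamma,\,P[V/x]:\okty \from A \types \Delta$, \rlnm{FunK} yields the application (with argument $Q[V/x]$) at type $\okty$, and \rlnm{SubL} with $A \subtype \okty$ lowers it to $A$. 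The constructor case is analogous, using \rlnm{CnsL} (choosing the free component types to match the inverted type) or \rlnm{CnsK}, each followed by \rlnm{SubL}; and the disjointness branches for $M=c(\ldots)$ and $M=\abs{y}{P}$ need no induction at all, since the substituted term is still constructor-headed (resp.\ abstraction-headed), so \rlnm{CnsDL} (resp.\ \rlnm{AbsDL}) re-fires verbatim and composes with \rlnm{SubL}.

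The match case $M = \matchtm{Q}{|_{i=1}^k (p_i \mapsto P_i)}$ carries the main bookkeeping. Inversion supplies pattern-variable types $B_z$ with $\Gamma,\,P_i:A \types z:B_z$ for each $i$ and each $z \in \fv(p_i)$, together with, for each $i$, either $\Gamma,\,Q:p_i[B_z/z \mid z \in \fv(p_i)] \types x:B$ or $\Gamma,\,P_i:A \types x:B$. Since the pattern variables $z$ are fresh (hence distinct from $x$ and absent from the closed $V$), each $\Gamma,\,P_i:A \types z:B_z$ is handled by Trivial Substitution to give $\Gamma,\,P_i[V/x]:A \types z:B_z$. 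For the second family I apply the induction hypothesis to the subterm in play ($Q$ or $P_i$, both proper subterms whose right-hand side is $x:B$), obtaining $\Gamma,\,Q[V/x]:p_i[B_z/z \mid z \in \fv(p_i)] \types \Delta$ or $\Gamma,\,P_i[V/x]:A \types \Delta$, and then use \rlnm{CnsL} on the appropriate component of the pair to rebuild $\Gamma,\,(Q[V/x],P_i[V/x]):(p_i[B_z/z \mid z \in \fv(p_i)],A) \types \Delta$. Feeding both substituted families into \rlnm{MchL} yields the conclusion, because $M[V/x] = \matchtm{Q[V/x]}{|_{i=1}^k (p_i \mapsto P_i[V/x])}$.

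The delicate point, and the step I expect to be the real obstacle, is the branch of Left Inversion in which the derivation ends in \rlnm{VarK}, forcing $B=\okty$ and leaving the first hypothesis with no structural content about $M$ (the case~(a) alternative of Lemma~\ref{lem:left-inversion}). When $M$ is not the variable $x$ itself, $M[V/x]$ is not the bare value $V$, so we cannot simply transport the second hypothesis by \rlnm{SubL}, and since the conclusion's right-hand side $\Delta$ need not be an $\okty$-typing we cannot re-fire \rlnm{VarK} either. I would discharge this with an auxiliary observation: a left hypothesis $V:\okty$ on a \emph{closed value} $V$ is redundant, so $\Gamma,\,V:\okty \types \Delta$ entails $\Gamma \types \Delta$. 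This redundancy is proved by a short side-induction on $V$: inverting $\Gamma,\,V:\okty \types \Delta$ with Lemma~\ref{lem:left-inversion}, the non-trivial branches require $\Gamma \types \okty \subtype A'$ with $A'$ an arrow or a sum type, which subtype consistency (Lemma~\ref{lem:consistent-cons-no-incons}) excludes, leaving only the \rlnm{VarK} branch (which gives $\Gamma \types \Delta$ directly) or a sub-value hypothesis to which the side-induction applies. A routine (admissible) left-weakening then reinstates the hypothesis $M[V/x]:A$, completing the case.
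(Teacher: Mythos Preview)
Your proof is correct and follows the same route as the paper: induction on $M$, Left Inversion on the first hypothesis, Trivial Substitution on the auxiliary right-side premises, the induction hypothesis on sub-premises with right side $x:B$, and then the matching left rule to rebuild. The paper treats the $B=\okty$ case upfront exactly as you do (its terse ``by inversion'' is your side-induction on $V$), but concludes more directly: the side-induction already reveals that $\Delta$ has shape $z:\okty$, so \rlnm{VarK} fires with $M[V/x]:A$ already in the context---your left-weakening step is unnecessary.
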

\begin{proof}
  The proof is by induction on $M$.  Note, there is always the possibility that $B=\okty$, in which case we have $\Gamma,\,V:\okty \types \Delta$ for closed value $V$ and so it follows from inversion that $\Delta$ must be of shape $z:\okty$ for some variable $z$.  Hence, the desired conclusion follows in this case immediately from \rlnm{VarK}.
  \begin{itemize}
    \item If $M$ is $x$, then it follows from inversion that $\Gamma \types A \subtype B$.  Hence, the result follows from the assumed $\Gamma,\,V:B \types \Delta$ and \rlnm{SubL}, since $M[V/x] = V$.
    \item If $M$ is $f$, then it follows from inversion that $B$ must be $\okty$, and we have proven this case above. 
    \item If $M$ is a variable $y$ distinct from $x$, the reasoning is as in the previous case.
    \item If $M$ is an application $PQ$, then it follows from inversion that either (i) there is a type $B'$ and $\Gamma \types P : B' \from A$ and $\Gamma,\,Q:B' \types x:B$, or (ii) $\Gamma,\,P:\okty \from A \types x:B$.  In (i), it follows from Lemma~\ref{lem:trivial-subst} that $\Gamma \types P[V/x] : B' \from A$ and it follows from the induction hypothesis that $\Gamma,\,Q[V/x] : B' \types \Delta$.  Therefore, the conclusion follows from \rlnm{AppL}. In (ii), the result follows from the induction hypothesis and \rlnm{FunK}.
    \item \ch{If $M$ is a fixpoint abstraction $\fixtm{f(y)}{P}$ we may assume $y$ and $f$ are fresh for the context.  Then it follows from inversion that there is some sum type $A'$ such that $\Gamma \types A \subtype A'$ and $\Gamma,\,\fixtm{f(y)}{P}:A' \types x:B$.  Then the result follows from \rlnm{FixDL}.  }
    \item If $M$ is a constructor $c(P_1,\ldots,P_n)$ then it follows from inversion that either (1) there are types $B_1,\ldots,B_n$ and $\Gamma \types A \subtype c(B_1,\ldots,B_n)$ and $\Gamma,\,P_i:B_i \types x:B$ for some $i$, or (2) there is some arrow type $A'$ such that $\Gamma \types A \subtype A'$ and $\Gamma,\,c(P_1,\ldots,P_n) : A' \types x:B$.  In case (1), the result follows from the induction hypothesis and \rlnm{CnsL} and \rlnm{SubL}.  In case (2), the result follows immediately from \rlnm{CnsDL} and \rlnm{SubL}. 
    \item If $M$ is a match expression $\matchtm{Q}{\mid_{i=1}^k (p_i \mapsto P_i)}$ then it follows from inversion that there is some family $B_y$ (for each pattern-bound variable $y$) and $\Gamma,\,P_i : A \types y: B_y$ (for each $i$ and $y \in \fv(p_i)$) and either $\Gamma,\,Q : p_i[B_y/y \mid y\in\fv(p_i)] \types x:B$ or $\Gamma,\,P_i : A \types x:B$.  In both cases, it follows from the induction hypothesis and \rlnm{CnsL} that $\Gamma,\,(Q[V/x],P_i[V/x]) : (p_i[B_y/y \mid y\in\fv(p_i)],A) \types x:B$.  Then the result follows from the induction hypothesis and \rlnm{MchL}.
  \end{itemize}
\end{proof}

\begin{theorem}[Preservation on the Right]\label{thm:preservation-right}
  Suppose $\Gamma$ and $\Delta$ are disjoint, consistent variable environments with $\types \mathcal{M} : \Gamma$.
    If $\Gamma \types M : A$ and $M \ped N$ then $\Gamma \types N : A$.
\end{theorem}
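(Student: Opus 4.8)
The plan is to prove subject reduction by induction on the structure of the evaluation context $\calE$ in the canonical decomposition $M = \calE[R]$, where $R$ is a redex, $R \pedm R'$, and $N = \calE[R']$; equivalently, this is an induction on the shape of $M$ together with the position of the contracted redex. The base case is $\calE = \square$, where $M$ is itself one of the four redexes (\rlnm{Delta}, \rlnm{Beta}, \rlnm{Fix}, \rlnm{Match}); the inductive cases follow the five context formers. Throughout I would rely on Right Inversion (Lemma~\ref{lem:right-inversion}) to decompose the typing of the redex or of the subterm being reduced, on the substitution lemma on the right (Lemma~\ref{lem:subst-right}(i)) to retype contracta, and on consistency of subtyping (Lemma~\ref{lem:consistent-cons-no-incons}) together with closedness of $\constraints(\Gamma)$ to align types across subtype steps. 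The hypothesis $\types \calM : \Gamma$ is used only in the \rlnm{Delta} case.

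\textbf{Base cases.} For \rlnm{Beta}, with $M = (\abs{x}{P})\,V$ and $N = P[V/x]$, right inversion on the application yields a type $B$ with $\Gamma \types (\abs{x}{P}) : B \to A$ and $\Gamma \types V : B$; a second inversion on the abstraction gives $B_1, B_2$ with (i) $\Gamma, x:B_1 \types P : B_2$ and $\Gamma \types B_1 \to B_2 \subtype B \to A$, since the alternative (ii) $\Gamma \types B_1 \from B_2 \subtype B \to A$ is ruled out as inconsistent by Lemma~\ref{lem:consistent-cons-no-incons}. Closedness of $\constraints(\Gamma)$ extracts $\Gamma \types B \subtype B_1$ and $\Gamma \types B_2 \subtype A$, so $\Gamma \types V : B_1$ by \rlnm{SubR}, whence Lemma~\ref{lem:subst-right}(i) gives $\Gamma \types P[V/x] : B_2$ and \rlnm{SubR} closes the case. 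The \rlnm{Fix} case is immediate: inversion gives $\Gamma, x:A \types P : A$, and applying Lemma~\ref{lem:subst-right}(i) with the hypothesis $\Gamma \types \fixtm{x}{P} : A$ in the role of the substituted term yields $\Gamma \types P[\fixtm{x}{P}/x] : A$. For \rlnm{Match}, with scrutinee $c(\bar{V})$ and contractum $P_c[\bar{V}/\bar{x_c}]$, the match case of inversion provides the family $B_x$ with $\Gamma \types c(\bar V) : \Sigma_i\, p_i[B_x/x]$ and $\Gamma \cup \{x:B_x\} \types P_c : A$; inverting the constructor typing and again using closedness (the orthogonal $c$-headed summand must be present with covariant arguments) gives $\Gamma \types V_j : B_{x_j}$ for each argument, so iterating Lemma~\ref{lem:subst-right}(i) over the distinct pattern variables delivers $\Gamma \types P_c[\bar{V}/\bar{x_c}] : A$.

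\textbf{The main obstacle.} The delicate base case is \rlnm{Delta}, $\mathsf{f} \pedm \calM(\mathsf{f})$. Inversion on $\Gamma \types \mathsf{f} : A$ yields $\mathsf{f} : \forall \bar{a}.\,D \Rightarrow B \in \Gamma$ with $\Gamma \types D[\bar{B}/\bar{a}]$ and $\Gamma \types B[\bar{B}/\bar{a}] \subtype A$, while $\types \calM : \Gamma$ supplies $\Gamma \cup D \types \calM(\mathsf{f}) : B$ for $\bar{a} = \fv(D) \cup \fv(B)$. To conclude $\Gamma \types \calM(\mathsf{f}) : A$ I would first instantiate: because the schemes in $\Gamma$ are closed and the free type variables $\bar{a}$ occur only in $D$ and $B$, applying the substitution $[\bar{B}/\bar{a}]$ to the derivation gives $\Gamma \cup D[\bar{B}/\bar{a}] \types \calM(\mathsf{f}) : B[\bar{B}/\bar{a}]$; then, since $\Gamma \types D[\bar{B}/\bar{a}]$, the instantiated constraints may be discharged, leaving $\Gamma \types \calM(\mathsf{f}) : B[\bar{B}/\bar{a}]$, and \rlnm{SubR} finishes. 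This step is the one not directly supported by the excerpt: it requires a type-substitution lemma (stability of derivability under instantiation of type variables) and a constraint-cut lemma (entailed constraints can be removed from the left), both of which I would establish by routine inductions on derivations via the algorithmic system, but which deserve explicit statement.

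\textbf{Contextual cases.} These are uniform and routine. For $\calE\,N$ and $(\abs{x}{Q})\,\calE$, right inversion gives $\Gamma \types P : B \to A$ and $\Gamma \types N : B$ (resp.\ the symmetric split); the induction hypothesis retypes the reduced component, and \rlnm{AppR} reassembles. For $c(\bar{V}, \calE, \bar{M})$, inversion provides componentwise types $B_i$ with $\Gamma \types c(\bar{B}) \subtype A$, the induction hypothesis retypes the active component at its $B_i$, and \rlnm{CnsR} followed by \rlnm{SubR} rebuilds the term. For $\matchtm{\calE}{\dots}$, inversion types the scrutinee at $\Sigma_i\,p_i[B_x/x]$, the induction hypothesis preserves this type, and \rlnm{MchR} reconstructs the match with the unchanged branch typings. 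In every case $\Gamma$ remains a consistent variable environment, so the hypotheses of the inversion and substitution lemmas stay in force, and the induction goes through.
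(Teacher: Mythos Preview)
Your proposal is correct and follows essentially the same approach as the paper: induction on the evaluation context, with the base cases handled via Right Inversion, consistency/closedness of the constraint set, and the substitution lemma on the right, and the inductive cases rebuilt by the corresponding right rules. Your treatment is in fact slightly more complete than the paper's written proof—you include the constructor context $c(\bar{V},\calE,\bar{M})$, which the paper omits—and you are more explicit about the two auxiliary facts (stability of typing under type-variable substitution and discharge of entailed constraints) that the \rlnm{Delta} case silently relies on; the paper invokes these in exactly the same way but does not isolate them as lemmas.
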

\begin{proof}
  We prove the result for all $\calE$, redexes $P$ and contractions $Q$ such that $M = \calE[P] \ped \calE[Q] = N$, by induction on $\calE$.
  \begin{itemize}
    \item If the context is just a hole, then we proceed by cases on the redex.
    \begin{description}
      \item[\rlnm{Fix$\beta$}] \ch{In this case, $P$ has shape $(\fixtm{f(x)}{P'})\,V$ and $Q = P'[V/x,\fixtm{f(x)}{P'}/f]$.  It follows from inversion that there is a type $B$ such that $\Gamma \types \fixtm{f(x)}{P'} : B \to A$ and $\Gamma \types V : B$.  By inversion and the consistency of $\Gamma$, we have some $B_1$ and $B_2$ such that $\Gamma,\,f:B_1 \to B_2,\,x:B_1 \types P' : B_2$ and $\Gamma \types B_1 \to B_2 \subtype B \to A$ (the other possibility would involve a type constraint that is not syntactically consistent). By the closedness of the constraints in $\Gamma$, we have $\Gamma \types B_2 \subtype A$.  Then it follows from the substitution lemma (twice) that $\Gamma \types P'[V/x][\fixtm{f(x)}{P'}/f] : B_2$ and hence $P'[V/x,\fixtm{f(x)}{P'}/f] : A$ by \rlnm{SubR}.}
      \item[\rlnm{Delta}] In this case, it follows from inversion that there is some type $B$ and family $\bar{B}$ such that $f:\forall \bar{a}.\,D \Rightarrow B$ and $\Gamma \types B[\bar{B/a}] \subtype A$ and $\Gamma \types D[\bar{B/a}]$.  Then it follows from $\types \calM : \Gamma$ that $\Gamma \cup D \types M : B$ and therefore $\Gamma[\bar{B/a}] \cup D[\bar{B/a}] \types M : B[\bar{B/a}]$.  However, $\Gamma$ is guaranteed to be closed with respect to type variables, and so we have $\Gamma \cup D[\bar{B/a}] \types M: B[\bar{B/a}]$.  Since also $\Gamma \types D[\bar{B/a}]$, it follows that $\Gamma \types M : B[\bar{B/a}]$, as required.
      \item[\rlnm{Match}] In this case, $P$ has shape $\matchtm{c(V_1,\ldots,V_n)}{|_{i=1}^n (p_i \mapsto P_i)}$ and one of the alternatives is of shape $c(x_1,\ldots,x_n) \mapsto P$ and $Q=P[V_i/x_i \mid 1 \leq i \leq n]$  It follows from inversion that there is a family of types $B_x$ (one for each pattern-bound variable $x$) such that $\Gamma \types c(V_1,\ldots,V_n) : \Sigma_{i=1}^k p_i[B_x/x \mid x \in \fv(p_i)]$ and $\Gamma \cup \{x:B_x \mid x \in \fv(p_i)\} \types P : A$.  Therefore, it follows by value inversion that $\Gamma \types V_i : B_i$.  Moreover, it follows by induction on $n$ and the substitution lemma (since the $x_i$ are disjoint and the $V_i$ closed) that $\Gamma \types P[V_i/x_i \mid 1 \leq i \leq n] : A$, as required.
    \end{description}
  \item If the context is of shape $\calE\,M'$ then it follows by inversion that there is some $B$ such that $\Gamma \types \calE[P] : B \to A$ and $\Gamma \types M':B$.  Then it follows from the induction hypothesis that $\Gamma \types \calE[Q] : B \to A$ hence the result follows by \rlnm{AppR}.
  \item If the context is of shape $(\abs{x}{M'})\,\calE$ then it follows by inversion that there is some $B$ such that $\Gamma \types \abs{x}{M'} : B \to A$ and $\Gamma \types \calE[P] : B$.  Then the result follows from the induction hypothesis and \rlnm{AppR}.
  \item If the context is of shape $\matchtm{\calE}{|_{i=1}^k (p_i \mapsto P_i)}$ then it follows from inversion that there is some family $B_x$ and $\Gamma \types \calE[P] : \Sigma_{i=1}^k p_i[B_x/x \mid x \in \fv(p_i)]$ for each $i$ and $\Gamma \cup \{x:B_x \mid x\in\fv(p_i)\} \types P_i : A$ for all $i$.  Then the result follows from the induction hypothesis and \rlnm{MchR}.
  \end{itemize} 
\end{proof}

\begin{theorem}[Preservation on the Left]\label{thm:preservation-left}
  Suppose $\Gamma$ and $\Delta$ are disjoint, consistent variable environments.  If $M \ped N$ and $\Gamma,\,M:A \types \Delta$ then $\Gamma,\,N:A \types \Delta$.
\end{theorem}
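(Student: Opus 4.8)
The plan is to mirror the proof of Preservation on the Right (Theorem~\ref{thm:preservation-right}): I would establish the statement for all evaluation contexts $\calE$, redexes $P$ and contracta $Q$ with $M = \calE[P] \ped \calE[Q] = N$, by induction on $\calE$. A uniform simplification runs through the whole argument: whenever $\Delta$ has shape $z:\okty$, the goal $\Gamma,\,N:A \types z:\okty$ holds outright by \rlnm{VarK}, so in each case I may assume $\Delta$ is not of this form and therefore invoke the non-degenerate alternatives of Left Inversion (Lemma~\ref{lem:left-inversion}) on the principal left formula.

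For the base case $\calE = \Box$ I would analyse the redex. The \rlnm{Beta} case is the representative one: from $\Gamma,\,(\abs{x}{P'})\,V:A \types \Delta$, Left Inversion on the application gives a type $B$ with $\Gamma \types (\abs{x}{P'}):B \from A$ and $\Gamma,\,V:B \types \Delta$ (the alternative $\Gamma,\,(\abs{x}{P'}):\okty \from A \types \Delta$ collapses to $\Delta = z:\okty$ by abstraction inversion and consistency). Right Inversion (Lemma~\ref{lem:right-inversion}) on the abstraction, together with Lemma~\ref{lem:consistent-cons-no-incons} to discard the sufficiency-arrow alternative, yields $\Gamma,\,P':B_2 \types x:B_1$ with $\Gamma \types B_1 \from B_2 \subtype B \from A$; closedness of the constraints then gives $B_1 \subtype B$ and $A \subtype B_2$. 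Two applications of \rlnm{SubL} produce $\Gamma,\,P':A \types x:B_1$ and $\Gamma,\,V:B_1 \types \Delta$, after which Substitution on the Left (Lemma~\ref{lem:subst-left}) delivers $\Gamma,\,P'[V/x]:A \types \Delta$. For \rlnm{Match}, Left Inversion supplies a type family $B_x$ with $\Gamma,\,P_i:A \types x:B_x$ and, for the selected branch $c$, either $\Gamma,\,c(\vv{V}) : c(\vv{B_x}) \types \Delta$ or $\Gamma,\,P_c:A \types \Delta$; in the former I peel off one component $\Gamma,\,V_{j_0}:B_{x_{j_0}} \types \Delta$ by constructor inversion and eliminate it with Lemma~\ref{lem:subst-left}, then clear the remaining pattern variables with Trivial Substitution (Lemma~\ref{lem:trivial-subst}), and the latter is pure Trivial Substitution. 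The \rlnm{Fix} case is immediate, since the absence of a left rule for fixpoints forces $\Delta = z:\okty$; and \rlnm{Delta} is handled exactly as on the right, using the module justification $\types \calM : \Gamma$ to transfer the type of $f$ to $\calM(f)$.

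For the inductive cases the essential move is to track which side of the turnstile the reducing subterm sits on. In the application context $\calE'\,N'$, Left Inversion branches into (i) $\Gamma \types \calE'[P]:B \from A$ with $\Gamma,\,N':B \types \Delta$, where $\calE'[P]$ is typed on the \emph{right} and I reduce it using Theorem~\ref{thm:preservation-right} before reassembling with \rlnm{AppL}; and (ii) $\Gamma,\,\calE'[P]:\okty \from A \types \Delta$, where $\calE'[P]$ is on the left and I apply the induction hypothesis, then recompose with \rlnm{FunK} (which yields type $\okty$) followed by \rlnm{SubL} against $A \subtype \okty$ (rule \rlnm{OkS}) to restore the required type $A$. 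The argument context $(\abs{x}{M'})\,\calE'$ is symmetric, with the abstraction kept on the right. The constructor and match contexts are analogous: Left Inversion exposes a component typed on the left (reduced by the induction hypothesis) or a disjointness/$\okty$ refutation independent of the reduction, and reassembly proceeds by \rlnm{CnsL}, \rlnm{CnsK} or \rlnm{CnsDL} (for constructors) and by repackaging the branch disjunction into the pair formula $(M,P_i)$ via \rlnm{CnsL} before \rlnm{MchL} (for matches), each time closing a residual type gap with \rlnm{SubL}.

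I expect the main obstacle to be disciplined bookkeeping of the left/right asymmetry rather than any single hard lemma: the contextual cases genuinely depend on Right Preservation whenever a subterm is typed on the right, so the two preservation results must be read as a single mutually-supporting argument, and the repeated passage $\okty$-type $\rightsquigarrow A$ via \rlnm{OkS} together with the repackaging of the match-pair must be applied consistently. A secondary delicate point is the \rlnm{Delta} base case, where the reduced identifier can appear as the right-hand subject as well; here I would lean on $\types \calM : \Gamma$ exactly as in the right-sided proof, treating it as an implicit hypothesis of the theorem.
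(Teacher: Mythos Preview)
Your proposal follows the paper's proof almost exactly: induction on the evaluation context, Left Inversion at each step, Right Preservation for subterms typed on the right in the $\calE'\,N'$ context, and the substitution lemmas for the base redexes. You are even more careful than the paper in two places: you explicitly note the \rlnm{SubL} step needed after \rlnm{FunK} to recover the type $A$ from $\okty$, and you cover the constructor context $c(\bar{V},\calE,\bar{M})$, which the paper's appendix proof silently omits.

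The one place you go wrong is the \rlnm{Delta} base case. There is no way to ``handle it exactly as on the right'': the module justification $\types \calM : \Gamma$ yields only a \emph{right}-sided typing $\Gamma \types \calM(f) : A$, and nothing in the system converts that into the left-sided refutation $\Gamma,\,\calM(f):A \types \Delta$ you need. The paper instead groups \rlnm{Delta} with \rlnm{Fix}: Left Inversion on a top-level identifier (as on a fixpoint) offers no alternative beyond $\Delta = z{:}\okty$, since the algorithmic system has no left rule for identifiers other than \rlnm{VarK2}. So the case is discharged by \rlnm{VarK}, and the module justification is never invoked in the base case---though, as you correctly anticipate, it \emph{is} needed implicitly via the appeal to Right Preservation in the contextual cases.
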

\begin{proof}
  We prove the result for all $\calE$, redexes $P$ and contractions $Q$ such that $M = \calE[P] \ped \calE[Q] = N$, by induction on $\calE$.
  \begin{itemize}
    \item When the context is just a hole, we proceed to analyse the redex.
      \begin{description}
        \item[\rlnm{Fix$\beta$}] \ch{In this case, $P$ has shape $(\fixtm{f(x)}{P'})\,V$ and $Q=P'[V/x,\fixtm{f(x)}{P'}/f]$.  It follows from inversion that either (1) there is a type $B$ such that $\Gamma \types \abs{x}{P'} : B \from A$ and $\Gamma,\,V:B \types \Delta$, or (2) $\Gamma,\,\abs{x}{P'} : \okty \from A \types \Delta$.  In case (1), it follows by inversion that there are types $B_1$ and $B_2$ such that $\Gamma,\,f:B_1 \from B_2,\,P':B_2 \types x:B_1$ and $\Gamma \types B_1 \from B_2 \subtype B \from A$ (the other possibility is excluded by the consistency of the constraints).  Then it follows from the substitution lemma (twice) that $\Gamma,\,P'[V/x,\fixtm{f(x)}{P'}/f]:B_2 \types \Delta$.  By the closedness of the constraints, $\Gamma \types A \subtype B_2$ and so the desired conclusion follows by \rlnm{SubL}.  By the Progress theorem, case (2) is impossible unless $\Delta$ is of shape $z:\okty$ for some $z$.  In this case, the conclusion follows immediately by \rlnm{VarK}.}
        \item[\rlnm{Delta}] By inversion, these cases are impossible unless $\Delta$ has shape $z:\okty$ for some variable $z$ and then the conclusion follows immediately by \rlnm{VarK}.
        \item[\rlnm{Match}] In this case, $P$ has shape $\matchtm{c(V_1,\ldots,V_n)}{|_{i=1}^k (p_i \mapsto P_i)}$ and one of the alternatives has shape $c(x_1,\ldots,x_n) \mapsto P$ and $Q=P[V_i/x_i \mid 1 \leq i \leq k]$.  It follows from inversion that either $\Delta$ is of shape $z:\okty$, in which case the conclusion is immediate by \rlnm{VarK}, or there are a family of types $B_x$, one for each pattern-bound variable $x$ such that $\Gamma,\,P : A \types x:B_x$ (for each $x \in \{x_1,\ldots,x_n\}$) and, either $\Gamma,\,c(V_1,\ldots,V_n) : c(B_{x_1},\ldots,B_{x_n}) \types \Delta$ or $\Gamma,\,P:A \types \Delta$.  In the former case, it follows from inversion and the consistency of the constraints in $\Gamma$ that there is $j$ such that $\Gamma,\,V_j : B_{x_j} \types \Delta$.  Then is follows from the substitution lemma that $\Gamma,\,P[V_j/x_j] \types \Delta$.  Then it follows from the trivial substitution lemma, by induction on $n$, that $\Gamma,\,P[V_i/x_i \mid 1 \leq i \leq n] : A \types \Delta$ as required.
      \end{description}
    \item When the context is of shape $\calE\,M'$, it follows from inversion that either (0) $\Delta$ has shape $z:\okty$ for some variable $z$, (1) there is a type $B$ and $\Gamma \types \calE[P] : B \from A$ and $\Gamma,\,M':B \types \Delta$, or (2) $\Gamma,\,\calE[P] : \okty \from A \types \Delta$. In case (0) the result is immediate by \rlnm{VarK}.  In case (1), we obtain $\Gamma \types \calE[Q] : B \from A$ by preservation on the right and the result follows by \rlnm{AppL}.  In case (2), we obtain the result from the induction hypothesis and \rlnm{FunK}.
    \item \ch{When the context has shape $(\fixtm{f(x)}{M'})\,\calE$, it follows from inversion that either (0) $\Delta$ has shape $z:\okty$, or (1) there is a type $B$ and $\Gamma \types \abs{f(x)}{M'} : B \from A$ and $\Gamma,\,\calE[P]:B \types \Delta$, or (2) $\Gamma \types \fixtm{f(x)}{M'} : \okty \from A \types \Delta$. In case (0) the result is immediate by \rlnm{VarK}.  In case (2), the result follows by the induction hypothesis and \rlnm{AppL}.  In case (3), the result follows from the induction hypothesis and \rlnm{FunK}.}
    \item When the context has shape $\matchtm{\calE}{|_{i=1}^k (p_i \mapsto P_i)}$ it follows from inversion that either $\Delta$ has shape $z:\okty$ or there is a family of types $B_x$ (one for each pattern-bound variable $x$) such that $\Gamma,\,P_i:A \types x:B_x$ and, for each $i$, either $\Gamma,\,\calE[P] : p_i[B_x/x \mid x \in \fv(p_i)] \types \Delta$ or $\Gamma,\,P_i:A \types \Delta$.  Then the result follows from the induction hypothesis and \rlnm{MchL}.
  \end{itemize}
\end{proof}

Finally, we can prove Theorem~\ref{thm:weak-soundness}.

\begin{proof}
  Suppose $\Gamma \types M : \okty$ converges to a normal form $N$.  By preservation, $\Gamma \types N : \okty$ and, by progress, $N$ is a value.  Suppose $\Gamma,\,M:\okty \types$ converges to a normal form $N$, then, by preservation, $\Gamma,\,N:\okty \types$ and, by progress, $\Gamma \not\types N : \okty$.  Since closed values are well typed, $N$ is not a value.
\end{proof}
}{%
}

\end{document}
\endinput